\documentclass[11pt]{amsart}

\usepackage[T1]{fontenc}
\usepackage{lmodern}
\usepackage{amsmath,amssymb,amsthm,mathtools,mathrsfs}
\numberwithin{equation}{section}
\usepackage{microtype}
\usepackage{enumitem}
\usepackage{xcolor}
\usepackage{hyperref}
\usepackage[margin=1.15in]{geometry}

\definecolor{LinkNavy}{RGB}{0,0,128}
\hypersetup{colorlinks=true,linkcolor=LinkNavy,citecolor=LinkNavy,urlcolor=LinkNavy,breaklinks=true,
 pdftitle={Deep holes of full-length Reed-Solomon codes: uniform bounds and exact classifications},
 pdfauthor={Xiang Fan},
 pdfsubject={Deep holes of full-length Reed-Solomon codes at prescribed interpolation degrees},
 pdfkeywords={Reed-Solomon codes, deep holes, symmetric functions, permutation polynomials, covering radius, computer-assisted proof}}

\newtheorem{theorem}{Theorem}[section]
\newtheorem{lemma}[theorem]{Lemma}
\newtheorem{proposition}[theorem]{Proposition}
\newtheorem{corollary}[theorem]{Corollary}
\newtheorem{remark}[theorem]{Remark}
\theoremstyle{definition}

\title[Uniform bounds and exact classifications]{Deep holes of full-length Reed--Solomon codes: uniform bounds and exact classifications}
\author{Xiang Fan}
\address{School of Mathematics, Sun Yat-sen University,
Guangzhou 510275, China}
\email{fanx8@mail.sysu.edu.cn}
\date{}
\keywords{Reed--Solomon codes, deep holes, symmetric functions, permutation polynomials, covering radius, computer-assisted proof}
\subjclass[2020]{Primary 94B05; Secondary 94B27, 94B35, 11T71}

\begin{document}
\begin{abstract}
Let $u$ be a received word for the full-length Reed--Solomon code
$\operatorname{RS}_q(k)$, with interpolation polynomial $f_u$ of
degree $k+m$.  For each $m\in\{3,4,5,6,7\}$, we classify all such deep holes
over every finite field, without a field-size hypothesis.  None occur for odd
$m$ in this range.  At $m=4$ and $m=6$, the only exceptions
are, respectively, the Segre sextic and Frobenius-power families;
these occur only in dimension two over fields of characteristic two.  For arbitrary $m\geqslant3$, we also establish two
uniform bounds.  If $\sqrt q\geqslant12(m+1)\log(2m+2)$, no word of
interpolation degree $k+m$ is a deep hole when $q$ is odd, $k\geqslant3$,
or $m$ is odd.  If $q>16(m+2)^4$, we obtain the corresponding
complete classification at every admissible $k$, including the
explicit dimension-two exceptions.  Our proofs use an agreement criterion expressed in complete
homogeneous symmetric functions, agreement-count estimates,
finite geometry, permutation-polynomial restrictions, and
subset constructions for solving the resulting symmetric-function equations.
The remaining finite cases are settled by finite computations
with supplied certificates and deterministic verification programs.
\end{abstract}
\maketitle

\section{Introduction}\label{sec:intro}
Let $\mathbb F_q$ be the field of $q$ elements.  We consider the
full-length Reed--Solomon code
\[
 \operatorname{RS}_q(k)
 :=\{(g(x))_{x\in\mathbb F_q}:g\in\mathbb F_q[X],\ \deg g<k\}.
\]
Its covering radius is $q-k$.  A received word $u$ is a
\emph{deep hole} if its distance from the code is $q-k$.
Write $f_u$ for its unique interpolation polynomial of degree
at most $q-1$.  We consider $\deg f_u=k+m$ with $m\geqslant3$ and
$k+m\leqslant q-1$.

Our main theorem gives two field-size bounds and complete
classifications for $m=3,\ldots,7$.
All logarithms are natural.

\begin{theorem}[Uniform bounds and exact classifications]\label{thm:main}
Let $q$ be a prime power, $m\geqslant3$,
$2\leqslant k\leqslant q-m-1$, and $\deg f_u=k+m$.
\begin{enumerate}[label=(\arabic*),ref=\arabic*,leftmargin=2.2em]
\item\label{it:uniform-exclusion}\emph{Uniform exclusion.}
If $\sqrt q\geqslant12(m+1)\log(2m+2)$, then $u$ is not a deep hole
whenever $q$ is odd, $k\geqslant3$, or $m$ is odd.
\item\label{it:uniform-classification}\emph{Uniform classification.}
If $q>16(m+2)^4$, then $u$ is a deep hole if and only if its
interpolation polynomial belongs to one of the two families
\eqref{eq:segrefamily} and \eqref{eq:frobfamily} below.
\item\label{it:small-excess-classification}\emph{Unconditional classification.}
If $3\leqslant m\leqslant7$, the same two-exception classification holds
without any field-size hypothesis.
\end{enumerate}
Both exceptional families have $q=2^e$, $k=2$, and
$A\in\mathbb F_q^\times$:
\begin{align}
 f_u(X)&=A(X+b)^6+BX+C
 &&(m=4,\ e\text{ odd}),\label{eq:segrefamily}\\
 f_u(X)&=AX^{2^r}+BX+C
 &&(m=2^r-2,\ r\geqslant3,\ \gcd(r,e)=1).\label{eq:frobfamily}
\end{align}
Here $b,B,C\in\mathbb F_q$.  Whenever $u$ is not a deep hole in the cases above,
\begin{equation}\label{eq:mainbound}
 \operatorname{dist}(u,\operatorname{RS}_q(k))\leqslant q-k-1.
\end{equation}
\end{theorem}

The bounds in Theorem~\ref{thm:main}~(\ref{it:uniform-exclusion})
and~(\ref{it:uniform-classification}) serve different purposes.
The first requires a field size of order $m^2\log^2(m+1)$ and leaves
open the case $k=2$ in characteristic two when $m$ is even; the second
gives an exact classification for every admissible $k$ under a quartic bound.
Theorem~\ref{thm:main}~(\ref{it:small-excess-classification}) removes the size condition for $m=3,\ldots,7$.
In particular, no deep holes occur when $m=3,5,7$, whereas
$m=4,6$ allow only the respective Segre and Frobenius cases.
The theorem concerns specified interpolation degrees, not
all deep holes of a fixed code.

Deep holes attain the covering radius and are related
to maximum distance separable (MDS) codes: their syndromes yield
one-coordinate MDS extensions of dual Reed--Solomon
codes~\cite{Kaipa2017}.
The deep-hole problem for Reed--Solomon codes goes back to Cheng and
Murray~\cite{ChengMurray2007}.  Subsequent formulations and extensions depend on the evaluation set;
throughout this paper, $D=\mathbb F_q$~\cite{Kaipa2017}.  For $m=1$, the agreement
condition reduces to a distinct-element subset-sum problem, as in
Li--Wan~\cite{LiWan2008Error}.  Li--Zhu determined the corresponding
error distance, including the exceptional characteristic-two
endpoint $k=q-3$~\cite{LiZhu2016}.  For $m=2$, the odd-characteristic
exclusion was studied by Zhang--Liao--Fu~\cite{ZhangFuLiao2013};
characteristic two has genuine exceptions, already at
$\operatorname{RS}_8(2)$ with interpolation polynomial $X^4$
\cite{LiZhu2016}.  Consequently, these small-degree results cannot be
summarized as an unrestricted nonexistence theorem.

The known classification in certain dimension ranges is substantial.
Zhuang--Cheng--Li obtained results for prime-field and related
low-dimensional regimes~\cite{ChengLiZhuang2016}; Kaipa, using MDS
extensions, classified the high-dimensional range, including
$k\geqslant\lfloor(q-1)/2\rfloor$, with the corresponding characteristic-two
qualification~\cite{Kaipa2017}.  For the remaining dimensions,
Gao--Li developed a sharp estimate for polynomials with prescribed
leading coefficients and linear factors~\cite{GaoLi2023}.  We use that
estimate to count agreement sets; it does not itself
classify the deep holes for $m=3,\ldots,7$.

This paper grew out of an attempt to obtain an unconditional
classification of deep holes of full-length Reed--Solomon codes,
allowing for the known exceptional families in characteristic two.
I had hoped to settle the problem in full and had therefore
hesitated to circulate the partial results presented here, which
address only specified values of $m$ or impose field-size
conditions.  While preparing the manuscript, I learned of the
recent preprint of Chen and Huang~\cite{ChenHuang2026}, whose
results are substantially broader in the large-field range.  This
development prompted me to make the present results and their
independent proof methods publicly available.

More precisely, Chen and Huang~\cite[Theorem~1.1]{ChenHuang2026}
classify all deep holes for $q>19683$ and
$2\leqslant k\leqslant q-2$, with the characteristic-two cases
treated separately.  For $m\geqslant3$,
their result includes the nonexistence conclusions proved here
when $q>19683$ and either $q$ is odd or $k\geqslant3$.
The additional family at $k=q-3$ in characteristic two has
interpolation degree $k+1$, so it lies outside $m\geqslant3$.
At $k=2$ in characteristic two, they give a parametrization by
normalized $o$-polynomials for every even $q\geqslant4$;
this is not an explicit classification of all $o$-polynomials.
By contrast, Theorem~\ref{thm:main}~(\ref{it:small-excess-classification})
gives exact results for $m=3,\ldots,7$ over every finite field,
including smaller fields, and the exceptional polynomials are
explicit in the degree ranges of
Theorem~\ref{thm:main}~(\ref{it:uniform-classification})
and~(\ref{it:small-excess-classification}).

The two proofs proceed along substantially different routes.
For intermediate dimensions, Chen and Huang use quadratic quotient
algebras, quartic contractions, projective involutions, and
weighted independent-set estimates~\cite[Sections~3--5]{ChenHuang2026}.
Our argument uses the complete homogeneous symmetric functions
arising from polynomial remainders, Gao--Li's agreement-count
estimate, planar-arc projections, and value-set and
permutation-polynomial restrictions.  For $m=3,\ldots,7$,
subset constructions that solve the required symmetric-function equations
and exhaustive finite computations establish the remaining cases.  Both approaches begin with finite geometry and agreement conditions
(compare \cite[Corollary~1.5]{ChenHuang2026}), but their principal
estimates and constructions differ.  No result or computation of Chen and
Huang is used in any proof in this paper.  I hope that these
complementary approaches will help advance the full-length
deep-hole problem toward an unconditional classification without
a field-size restriction.

The two field-size bounds use different ingredients.  For the
first bound we combine a direct application of
Gao--Li's counting estimate with bounds in small dimensions and
Kaipa's high-dimensional result.  For $k=2$, we use a classification of
$o$-polynomials under the quartic condition, which also implies
the first field-size bound.
For odd $q$, the dimension-two nonexistence conclusion is in fact
unconditional in $m\geqslant3$; it is likewise unconditional for odd $m$
in characteristic two.  These stronger conclusions for $k=2$ are
recorded in Proposition~\ref{thm:dimtwo-general} below.

A complementary low-dimensional obstruction applies for arbitrary
$m$.  Value-set bounds and the classification of exceptional
permutation polynomials force agreement with a codeword at
$k+1$ distinct coordinates even in some ranges not governed by
either of the two size conditions; see
Section~\ref{sec:permutation-bridge}.

The proofs rely on one common agreement criterion.  If $\lvert S\rvert=k+1$ and a monic interpolation
polynomial has, modulo terms of degree below $k$, the form
$f=X^{k+m}+a_1X^{k+m-1}+\cdots+a_mX^k$, then the coefficient of
$X^k$ in its remainder modulo $\prod_{x\in S}(X-x)$ is
\[
 h_m(S)+a_1h_{m-1}(S)+\cdots+a_m,
 \qquad \prod_{x\in S}(1-xZ)^{-1}=\sum_{j\geqslant0}h_j(S)Z^j.
\]
Its vanishing is equivalent to agreement with a codeword at $k+1$
distinct coordinates.  It suffices to find a single such set $S$; we do not need to
prescribe all the values $h_1(S),\ldots,h_m(S)$ simultaneously.  This permits a
direct application of Gao--Li's published coefficient estimate
\cite[Proposition~2]{GaoLi2023}, while projection to planar arcs,
value-set bounds, and restrictions on permutation polynomials handle smaller
dimensions~\cite{BallLavrauw2018,CaullerySchmidt2015,Kaipa2017}.

For $m=3,\ldots,7$, these analytic and geometric results
reduce the infinite collection of field and dimension parameters to
finitely many remaining cases.  Configurations of distinct field elements express the
remaining agreement conditions in terms of elementary or complete
homogeneous symmetric functions of subsets.  For $m=6,7$, in
particular, the proof uses subset families on which every required
affine combination of the first three complete homogeneous symmetric
functions attains every field value, together with a limited number
of exhaustive checks of normalized coefficient vectors.  These are
computer-assisted parts of the proof, not numerical experiments:
the supporting certificates specify field models, point sets, lists of parameter pairs, and deterministic verification
programs.  Their scope is explicitly restricted to the
finite cases established by the preceding reductions.

Section~\ref{sec:criterion} gives the agreement criterion;
Section~\ref{sec:uniform-proof} establishes the general bounds and
dimension-two classification; and Section~\ref{sec:permutation-bridge}
develops the permutation-polynomial obstructions.
Section~\ref{sec:odd-short} records the odd-characteristic
reductions and the finite-field verification.  Sections~\ref{sec:excess-three},
\ref{sec:excess-four}, and~\ref{sec:excess-five} then prove the exact
classifications for $m=3,4,5$, respectively;
Section~\ref{sec:six-seven-exact} handles $m=6,7$.
Section~\ref{sec:conclusion} describes the limitations of the
results and the accompanying verification programs.  No classification for arbitrary $m$
without a field-size condition, or for arbitrary evaluation sets,
is claimed.

\section{An agreement criterion from polynomial remainders}\label{sec:criterion}

Recall that, for a finite set $S$, the functions $h_j(S)$ are defined by
\[
\prod_{x\in S}(1-xz)^{-1}=\sum_{j\geqslant0}h_j(S)z^j.
\]
When elementary symmetric functions are needed, write
\[
\prod_{x\in S}(1+xz)=\sum_{j\geqslant0}e_j(S)z^j.
\]

\begin{lemma}[Remainder coefficient]\label{lem:remainder}
Let $S\subseteq\mathbb{F}_q$ have size $k+1$ and put
\[
P_S(X):=\prod_{x\in S}(X-x).
\]
For $j\geqslant0$, the coefficient of $X^k$ in the remainder of $X^{k+j}$ modulo $P_S$ is $h_j(S)$.  Consequently, for
\begin{equation}\label{eq:normalized-f}
f(X)=X^{k+3}+aX^{k+2}+bX^{k+1}+cX^k,
\end{equation}
the coefficient of $X^k$ in $f\bmod P_S$ is
\begin{equation}\label{eq:Hf}
H_f(S):=h_3(S)+a h_2(S)+b h_1(S)+c.
\end{equation}
\end{lemma}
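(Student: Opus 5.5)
The plan is to prove the first assertion by induction on $j$ via a recurrence, and then read off the consequence by linearity of reduction modulo $P_S$. Write $P_S(X)=\prod_{x\in S}(X-x)=X^{k+1}-e_1X^k+e_2X^{k-1}-\cdots+(-1)^{k+1}e_{k+1}$, with $e_i=e_i(S)$. Let $r_j(X)=X^{k+j}\bmod P_S$ and let $c_j$ denote its coefficient of $X^k$. For $j=0$ the remainder is $X^k$ itself, so $c_0=1=h_0(S)$. For $0\leqslant j<k$... rather than tracking every coefficient, I will use a generating-function identity that handles all $j$ at once.

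\emph{Key step.} Consider the formal series $\sum_{j\geqslant0}X^{k+j}Z^j = X^k/(1-XZ)$. Modulo $P_S$, the reductions can be computed through the Lagrange representation: for any polynomial $g$, the remainder $g\bmod P_S$ is the unique polynomial of degree $\leqslant k$ agreeing with $g$ on $S$. Its $X^k$-coefficient is therefore the leading coefficient of the Lagrange interpolant:
\[
[X^k]\,(g\bmod P_S)=\sum_{x\in S}\frac{g(x)}{P_S'(x)}.
\]
Applying this with $g=X^{k+j}$ gives $c_j=\sum_{x\in S}x^{k+j}/P_S'(x)$. Summing against $Z^j$ yields
\[
\sum_{j\geqslant0}c_jZ^j=\sum_{x\in S}\frac{x^k}{P_S'(x)(1-xZ)}.
\]
I then need to identify this with $\prod_{x\in S}(1-xZ)^{-1}$. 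This is a partial-fraction decomposition: $\prod_{x\in S}(1-xZ)^{-1}$ has simple poles at $Z=1/x$ (for $x\neq0$), and its residue expansion gives exactly coefficients $x^{k}/\prod_{y\neq x}(x-y)=x^k/P_S'(x)$. This follows from writing $\prod_{y\in S}(1-yZ)=Z^{k+1}P_S(1/Z)$ and using the standard partial fractions of $1/P_S(W)$ with $W=1/Z$. Together with a degree check (the left side is a proper rational function in the appropriate sense), this gives the identity and hence $c_j=h_j(S)$. The case $0\in S$ needs a moment's care, since the term for $x=0$ vanishes when $k\geqslant1$; the identity still holds because the factor $1-0\cdot Z=1$ contributes no pole. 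Alternatively, I can avoid that subtlety by proving the identity for indeterminates and then specializing.

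\emph{Consequence.} Reduction modulo $P_S$ is $\mathbb F_q$-linear. Hence the $X^k$-coefficient of $f\bmod P_S$ for $f$ as in \eqref{eq:normalized-f} is $h_3+ah_2+bh_1+c\,h_0$. Since $h_0=1$, this is $H_f(S)$.

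The main obstacle is the partial-fraction identity $\sum_{x}x^{k+j}/P_S'(x)=h_j(S)$. As a fallback I would prove it directly by induction on $j$. Using $X^{k+1}\equiv e_1X^k-e_2X^{k-1}+\cdots$, the remainders satisfy the recurrence $c_j=e_1c_{j-1}-e_2c_{j-2}+\cdots$ (with appropriate initial values $c_0=1$ and $c_{-i}=0$ for $1\leqslant i\leqslant k$). This is exactly the Newton-type recurrence $\sum_i(-1)^ie_ih_{j-i}=0$ coming from $\prod(1+xz)\big|_{z\to -Z}\cdot\prod(1-xZ)^{-1}=1$.
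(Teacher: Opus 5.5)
Your proof is correct, but your main argument takes a different route from the paper's. The paper's proof is essentially your fallback. Writing $X^{k+j}=X^{j-1}\cdot X^{k+1}$ and reducing with $P_S(X)\equiv0$ shows that the $X^k$-coefficients of the remainders obey the linear recurrence with characteristic polynomial $P_S$. This is the same recurrence $\sum_i(-1)^ie_ih_{j-i}=0$ satisfied by the $h_j(S)$, and the initial values match. The paper states only $c_0=1$; your explicit $c_{-i}=0$ for $1\leqslant i\leqslant k$, which holds because $X^i$ is its own remainder, is exactly what the induction needs.

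Your primary route instead identifies $X^{k+j}\bmod P_S$ with the Lagrange interpolant on $S$, so that $c_j=\sum_{x\in S}x^{k+j}/P_S'(x)$, and then matches generating functions by partial fractions. That identity does hold. With $W=1/Z$ one has $\prod_{y\in S}(1-yZ)^{-1}=W^{k+1}/P_S(W)$. Partial fractions of the proper fraction $W^k/P_S(W)$ then give $W^{k+1}/P_S(W)=\sum_{x\in S}x^k/\bigl(P_S'(x)(1-xZ)\bigr)$. Decomposing $1/P_S(W)$ instead, as you wrote, would also require the values of $\sum_{x\in S}x^i/P_S'(x)$ for $i\leqslant k$. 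Your remark about $0\in S$ is also right.

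Your approach yields the closed formula showing that $h_j(S)$ is the divided difference of $X^{k+j}$ on the points of $S$. This ties the agreement criterion directly to the divided-difference quotients the paper uses later. The recurrence is shorter and uses only the coefficients of $P_S$, so it needs neither distinct points nor division by $P_S'(x)$.
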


\begin{proof}
Write
\[
P_S(X)=X^{k+1}-e_1X^k+e_2X^{k-1}-\cdots,
\]
where $e_i=e_i(S)$.  For $j\geqslant1$, reducing $X^{k+j}$ by the relation $P_S(X)=0$ shows that its $X^k$ coefficient satisfies the same recurrence as the complete homogeneous symmetric functions.  At $j=0$ the coefficient is $1$, so it equals $h_j(S)$ for every $j\geqslant0$.  Linearity gives \eqref{eq:Hf}.
\end{proof}

\begin{proposition}[Agreement criterion for general $m$]
\label{prop:agreement}
Let $m\geqslant1$, $2\leqslant k\leqslant q-m-1$, and let $u$ have interpolation
polynomial of degree $k+m$.  Write
\[
 f_u(X)=A f(X)+g(X),\qquad
 f(X)=X^{k+m}+a_1X^{k+m-1}+\cdots+a_mX^k,
 \quad A\in\mathbb F_q^\times,\quad \deg g<k.
\]
For $\lvert S\rvert=k+1$, put
\[
 \mathcal H_{m,f}(S):=h_m(S)+a_1h_{m-1}(S)+\cdots+a_m.
\]
Then $u$ is not a deep hole of $\operatorname{RS}_q(k)$ if and only if
$\mathcal H_{m,f}(S)=0$ for some $(k+1)$-subset
$S\subseteq\mathbb F_q$.  In that case,
$\operatorname{dist}(u,\operatorname{RS}_q(k))\leqslant q-k-1$.
\end{proposition}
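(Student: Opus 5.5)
The plan is to reduce the deep-hole condition to an interpolation statement on $(k+1)$-subsets, and then to apply Lemma~\ref{lem:remainder} with general $m$.

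First, since the covering radius is $q-k$, the word $u$ fails to be a deep hole exactly when $\operatorname{dist}(u,\operatorname{RS}_q(k))\leqslant q-k-1$. Equivalently, some codeword $(c(x))_x$ with $\deg c<k$ agrees with $u$ in at least $k+1$ coordinates. Given such agreement, I will choose a set $S$ of exactly $k+1$ agreeing coordinates. Conversely, if $S$ is a $(k+1)$-set and some $c$ with $\deg c<k$ satisfies $c(x)=f_u(x)$ on $S$, then the distance is at most $q-k-1$. This also gives the final sentence of the proposition.

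The second step is to characterize, for a fixed $(k+1)$-set $S$, when such a $c$ exists. Let $r_S=f_u\bmod P_S$, so $\deg r_S\leqslant k$ and $r_S$ is the unique polynomial of degree at most $k$ agreeing with $f_u$ on $S$. A polynomial $c$ of degree less than $k$ agrees with $f_u$ on $S$ if and only if $c=r_S$. Indeed, $c-r_S$ has degree at most $k$ and vanishes at the $k+1$ points of $S$, so it is zero. Hence such a $c$ exists if and only if the coefficient of $X^k$ in $r_S$ is zero.

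Third, I will compute this coefficient. Write $f_u=Af+g$. Since $\deg g<k<k+1=\deg P_S$, we have $g\bmod P_S=g$, which contributes nothing to the $X^k$ coefficient. By linearity of reduction modulo $P_S$ and Lemma~\ref{lem:remainder}, which holds for every $j\geqslant0$, the monomial $X^{k+j}$ contributes $h_j(S)$. So the coefficient of $X^k$ in $r_S$ is
\[
A\bigl(h_m(S)+a_1h_{m-1}(S)+\cdots+a_{m-1}h_1(S)+a_mh_0(S)\bigr)=A\,\mathcal H_{m,f}(S),
\]
using $h_0=1$. Since $A\neq0$, this coefficient vanishes exactly when $\mathcal H_{m,f}(S)=0$.

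Combining the three steps proves the stated equivalence. The hypothesis $k+m\leqslant q-1$ guarantees that $f_u$ is the genuine interpolation polynomial of $u$, so the decomposition $f_u=Af+g$ is well defined. The bound $k+1\leqslant q$ guarantees that $(k+1)$-subsets exist. There is no real obstacle. The only point needing care is the uniqueness argument identifying the agreeing codeword with the remainder $r_S$, since that is what turns "agreement in $k+1$ coordinates" into the vanishing of a single coefficient.
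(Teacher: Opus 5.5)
Your proposal is correct and follows essentially the same route as the paper. Both compute the $X^k$ coefficient of $f_u \bmod P_S$ via Lemma~\ref{lem:remainder} and identify any agreeing codeword with that remainder by uniqueness of interpolation in degree at most $k$; the only cosmetic difference is that you reduce $f_u$ directly and obtain $A\,\mathcal H_{m,f}(S)$, whereas the paper reduces $A^{-1}(f_u-g)$.
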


\begin{proof}
By Lemma~\ref{lem:remainder}, the coefficient of $X^k$ in the
remainder of $A^{-1}(f_u-g)$ modulo $P_S$ is
$\mathcal H_{m,f}(S)$.  This coefficient vanishes exactly when
the remainder has degree less than $k$.  Then the resulting
polynomial of degree less than $k$, after restoring $A$ and $g$,
agrees with $f_u$ at every point of $S$.  Conversely, any
codeword agreeing with $f_u$ at $k+1$ distinct points is the
remainder of $f_u$ modulo the corresponding $P_S$, and forces
$\mathcal H_{m,f}(S)=0$.  Since the covering radius is $q-k$,
this agreement condition is equivalent to not being a deep hole.
\end{proof}

For $m=3$, this is precisely $H_f(S)=0$ in
\eqref{eq:Hf}; for general $m$ we use $\mathcal H_{m,f}(S)=0$.

\begin{lemma}[Remainder coefficient for $m=4$]\label{lem:fourth-remainder}
Let $\lvert S\rvert=k+1$.  Suppose that, modulo polynomials of degree below $k$,
$F=X^{k+4}+aX^{k+3}+bX^{k+2}+cX^{k+1}+dX^k$.  Then the coefficient
of $X^k$ in $F\bmod P_S$ is
\[
 h_4(S)+a h_3(S)+b h_2(S)+c h_1(S)+d.
\]
Consequently $F$ is not a deep hole if and only if there is such a
set $S$ for which this coefficient vanishes.
\end{lemma}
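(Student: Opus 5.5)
This is the case $m=4$ of the general machinery already in place, so the plan is to specialize Lemma~\ref{lem:remainder} and Proposition~\ref{prop:agreement} rather than argue afresh.

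\emph{Step 1 (the coefficient).} Let $F_0=X^{k+4}+aX^{k+3}+bX^{k+2}+cX^{k+1}+dX^k$, so that $F=F_0+g$ with $\deg g<k$. Reduction modulo $P_S$ is linear, and $g$ is its own remainder. Since $g$ has no $X^k$ term, the $X^k$ coefficient of $F\bmod P_S$ equals that of $F_0\bmod P_S$. By Lemma~\ref{lem:remainder}, the $X^k$ coefficient of $X^{k+j}\bmod P_S$ is $h_j(S)$ for $j=0,\dots,4$. Summing with the coefficients $1,a,b,c,d$ gives $h_4(S)+ah_3(S)+bh_2(S)+ch_1(S)+d$, using $h_0=1$.

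\emph{Step 2 (the equivalence).} Read $F$ as $f_u$ normalized by its leading coefficient, exactly as in the setup of Proposition~\ref{prop:agreement}. The hypothesis that $F$ is monic of degree $k+4$ modulo lower terms means $\deg f_u=k+4$. Dividing $f_u$ by $A\in\mathbb F_q^\times$ does not change the deep-hole property, because $\operatorname{RS}_q(k)$ is a linear space and scaling preserves distances. Applying Proposition~\ref{prop:agreement} with $m=4$, $a_1=a$, $a_2=b$, $a_3=c$, $a_4=d$, the quantity $\mathcal H_{4,f}(S)$ is precisely the coefficient from Step 1. Hence $F$ is not a deep hole if and only if this coefficient vanishes for some $(k+1)$-subset $S$.

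\emph{Possible obstacle.} The only point needing care is the implicit range hypothesis $2\le k\le q-5$ inherited from Proposition~\ref{prop:agreement}. One also needs that a vanishing coefficient yields a remainder of degree $<k$, which is then a codeword agreeing with $F$ on $S$. Both points are already handled in that proposition, so no step is a genuine obstacle here.
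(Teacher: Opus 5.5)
Your proposal is correct and takes essentially the same approach as the paper. The paper likewise reads off the coefficient from Lemma~\ref{lem:remainder} for $j=0,\dots,4$. The one difference is that it re-derives the equivalence directly instead of citing Proposition~\ref{prop:agreement}: a vanishing coefficient gives a remainder of degree below $k$ agreeing with $F$ on $S$, and conversely any agreeing codeword is that remainder. That direct argument does not need the range hypothesis on $k$ that you flagged.
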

\begin{proof}
The coefficient identity is Lemma~\ref{lem:remainder} for
$j=0,1,2,3,4$.  Vanishing means the remainder has degree below $k$
and hence agrees with $F$ on $S$.  Conversely, any codeword agreeing
with $F$ at $k+1$ distinct field elements is necessarily the
remainder modulo the corresponding $P_S$.
\end{proof}

\section{Uniform bounds and the case \texorpdfstring{$k=2$}{k=2}}\label{sec:uniform-proof}

We prove the uniform exclusion in Theorem~\ref{thm:main}~(\ref{it:uniform-exclusion})
and the dimension-two endpoint needed for part~(\ref{it:uniform-classification}).
The proof of part~(\ref{it:uniform-classification}) is completed at the end of this section.
These arguments are independent of the finite case-by-case
constructions used for part~(\ref{it:small-excess-classification}).  We write
\[
 t:=m+1,\qquad s:=\sqrt q,\qquad a:=m-1=t-2,
 \qquad L:=\log(2t)
\]
and assume the field-size bound in Theorem~\ref{thm:main}~(\ref{it:uniform-exclusion}), except where a lemma has
explicitly weaker hypotheses.

\subsection{Direct agreement counts from Gao--Li}

The middle-dimensional argument needs only one set of $k+1$ agreement
positions; it does not require every possible list of symmetric-function
values of that set to occur.
This permits a direct application of Gao--Li's estimate for prescribed
leading coefficients and linear factors~\cite[Proposition~2]{GaoLi2023}.
We give the parameter translation explicitly, since the code dimension
in that estimate must not be confused with the number of prescribed
coefficients.

For a monic polynomial $P$ of degree $d$, let
\[
 \vartheta_m(P):=z^dP(z^{-1})\pmod{z^{m+1}}.
\]
Fix a monic received polynomial $f$ of degree $k+m$, set $n=k+1$,
and write $\varepsilon=\vartheta_m(f)$.  Let $W_n(\varepsilon)$ count
the pairs $(S,H)$ in which $S\subseteq\mathbb F_q$ has size $n$,
$H$ is monic of degree $m-1$, and
\[
 \vartheta_m\bigl(H\prod_{x\in S}(X-x)\bigr)=\varepsilon.
\]
The last equality means that $f-H\prod_{x\in S}(X-x)$ has degree
less than $k$.  For each $S$ there is at most one such $H$: it is
the quotient in the Euclidean division of $f$ by
$P_S=\prod_{x\in S}(X-x)$.  Consequently, $W_n(\varepsilon)$
counts exactly the $(k+1)$-subsets on which $f$ agrees with a
code polynomial, with no multiplicity from $H$.

Put
\[
 M:=\min\{q,(m-1)\sqrt q\},\qquad
 A_n:=[u^n](1-u)^{-M}(1-u^p)^{-(q-M)/p},
 \qquad p:=\operatorname{char}\mathbb F_q.
\]
In Gao--Li's notation take $\ell=m$, their code dimension equal to
our $k$, and their $j=k+1$.  The remaining monic factor has degree
$k+\ell-j=m-1$, and the binomial factor in their estimate is $1$.
Their Proposition~2 therefore gives
\begin{equation}\label{eq:GL-direct}
 \left\lvert W_n(\varepsilon)-q^{-1}\binom{q}{n}\right\rvert 
 \leqslant(1-q^{-m})q^{(m-1)/2}A_n.
\end{equation}
In particular, we obtain a useful effective criterion, with no
asymptotic hypothesis on $q$:
\begin{equation}\label{eq:GL-direct-criterion}
 \frac{A_{k+1}}{\binom{q}{k+1}}<q^{-(m+1)/2}
 \quad\Longrightarrow\quad
 \operatorname{dist}(u,\operatorname{RS}_q(k))\leqslant q-k-1.
\end{equation}
Only the estimate \eqref{eq:GL-direct} is taken from the cited paper;
the explicit threshold below is an elementary sufficient consequence.

\begin{lemma}[A common odd-characteristic middle range]
\label{lem:odd-middle-count}
Let $q$ be an odd prime power, $m\in\{3,4,5,6,7\}$, and put
\[
 (Q_3,Q_4,Q_5,Q_6,Q_7)=(121,211,289,512,1331).
\]
If $q\geqslant Q_m$ and $q/8\leqslant n\leqslant q/2$, then
\begin{equation}\label{eq:odd-middle-count}
 q^{(m+1)/2}A_n<\binom qn,
\end{equation}
where $A_n$ is defined above with the specified value of $m$.
\end{lemma}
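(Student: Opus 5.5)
The plan is to bound $A_n$ from above by a saddle-point (Chernoff) estimate, bound $\binom qn$ from below by an entropy estimate, and compare exponents. Write $\lambda:=n/q\in[1/8,1/2]$ and
\[
 F(x):=(1-x)^{-M}(1-x^p)^{-(q-M)/p}.
\]
All coefficients of $F$ are nonnegative, so $A_n\leqslant F(x)x^{-n}$ for every $0<x<1$. For the binomial coefficient I would use the standard bound $\binom qn\geqslant (q+1)^{-1}e^{qH(\lambda)}$, where $H(\lambda)=-\lambda\log\lambda-(1-\lambda)\log(1-\lambda)$; the sharper bound $e^{qH}/\sqrt{8q\lambda(1-\lambda)}$ is available if the constants are tight. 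It therefore suffices to prove
\[
 -M\log(1-x)-\frac{q-M}{p}\log(1-x^p)-n\log x
 <qH(\lambda)-\tfrac{m+1}{2}\log q-\log(q+1).
\]

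For the choice of $x$, the binomial side is optimized at $x_0=\lambda/(1-\lambda)$. With this choice $qH(\lambda)+n\log x_0=q\log(1+x_0)$, and the condition becomes
\[
 q\Bigl[\log(1+x)+\tfrac1p\log(1-x^p)\Bigr]
 >M\Bigl[-\log(1-x)-\tfrac1p\log(1-x^p)\Bigr]+\tfrac{m+3}{2}\log q+O(1).
\]
Since $q$ is odd, $p\geqslant3$, and the bracket on the left, $\phi_p(x):=\log(1+x)+p^{-1}\log(1-x^p)$, is bounded below by a positive constant uniformly on $[1/7,1-\delta]$ and in $p\geqslant3$. The worst case is $p=3$, where $\phi_3(x)\geqslant \log(1+x)-x^3/(3(1-x^3))$. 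The right side is $O\bigl((m-1)\sqrt q\bigr)+O(m\log q)$, because $M\leqslant(m-1)\sqrt q$. Hence the inequality holds once $q$ exceeds an explicit threshold depending on $m$.

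The endpoint $\lambda\to1/2$ needs separate care, since there $x_0\to1$ and $\log(1-x^p)$ blows up. I would cap $x$ at a fixed value $x_1<1$, say $x_1=0.9$, whenever $x_0>x_1$. The binomial side then loses the term $q\bigl[\log(1+x_0)-\log(1+x_1)-\lambda\log(x_0/x_1)\bigr]$. By concavity in $\log x$ this loss is small, and it must be shown to stay below $q\,\phi_p(x_1)$ minus the $M$ term. Alternatively, one can split $[1/8,1/2]$ into a few subintervals with a fixed $x$ on each, so that every comparison reduces to a one-variable numerical inequality in $q$.

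I expect the main obstacle to be reaching the stated thresholds $Q_m$ rather than merely some threshold. The analytic inequality will likely give a cutoff $q\geqslant Q_m'$ that is larger than $Q_m$. For the finitely many odd prime powers $Q_m\leqslant q<Q_m'$ and the integers $n\in[q/8,q/2]$, I would evaluate $A_n$ exactly from its finite convolution formula
\[
 A_n=\sum_{j\geqslant0}\binom{M+n-pj-1}{n-pj}\binom{(q-M)/p+j-1}{j}
\]
and compare it directly with $q^{-(m+1)/2}\binom qn$, in the style of the paper's deterministic verification programs. Here $M$ is generally not an integer, so I would read these as generalized binomial coefficients or bound them monotonically by rounding $M$ up. The risk is that the crude entropy bound leaves too large a gap, pushing $Q_m'$ far out. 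In that case I would first try optimizing $x$ numerically per $(q,\lambda)$ and using the sharper $\sqrt{8q\lambda(1-\lambda)}$ lower bound before falling back on exhaustive exact checks.
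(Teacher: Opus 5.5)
Your framework is the paper's: coefficientwise positivity gives $A_n\leqslant F(x)x^{-n}$, and $\binom qn\geqslant e^{qH(c)}/(q+1)$. The gap is in the choice of $x$, which is the step carrying all the quantitative content. As proposed, the argument does not reach the stated thresholds $Q_m$.

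The entropy bound does not depend on $x$, so there is no reason to tie $x$ to the binomial saddle point $x_0=\lambda/(1-\lambda)$. Instead $x$ should make $F(x)x^{-n}$ small. For $p=3$, large $x$ is heavily penalized by both $(1-x)^{-M}$ and $(1-x^3)^{-(q-M)/3}$.

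Your cap $x_1=0.9$ fails outright. At $\lambda\approx 1/2$ and $p=3$ one gets $qH(\lambda)+n\log x_1-\log F(x_1)\approx 0.205\,q-1.87\,(m-1)\sqrt q$. This is negative unless $q\gtrsim 83(m-1)^2$, and that is before the $\frac{m+1}{2}\log q+\log(q+1)$ terms are subtracted. For example, it is about $-8.3$ at $m=3$, $q=243$, $n=121$, and it is also negative at $m=7$, $q=2187$.

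(Separately, your $M$-bracket should read $-\log(1-x)+p^{-1}\log(1-x^p)$. Your version is only an overestimate, so this slip is harmless.)

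The proposal then defers everything between $Q_m$ and an unspecified $Q_m'$ to exhaustive evaluation of $A_n$. But that finite range is exactly what the lemma asserts, so as written nothing is proved at the stated $Q_m$.

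The missing idea is that one moderate point, $x=1/2$, works on the whole interval $[1/8,1/2]$. This is your ``fixed $x$ on subintervals'' option with a single subinterval.

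1. Since $p\geqslant3$ and $-\log(1-2^{-d})/d$ decreases in $d$, one gets $A_n\leqslant 2^{\,n+(m-1)\sqrt q}(8/7)^{q/3}$.
2. The function $H(c)-c\log2-\frac13\log(8/7)$ is concave on $[1/8,1/2]$. Its minimum is therefore at an endpoint, namely $c=1/8$, with value $\gamma=\frac14\log2+\frac{13}{24}\log(8/7)$.
3. Hence $\log\bigl(\binom qn/(q^{(m+1)/2}A_n)\bigr)\geqslant E_m(q)=\gamma q-(m-1)\sqrt q\log2-\frac{m+1}{2}\log q-\log(q+1)$.
4. Since $E_m$ is convex in $q$, checking $E_m(Q_m)>0$ and $E_m'(Q_m)>0$ finishes the proof.

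The thresholds $Q_m$ are calibrated to exactly this bound; at $m=3$, $q=121$ the margin is only about $0.07$. No exact coefficient computations are needed.
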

\begin{proof}
All five thresholds satisfy $Q_m>(m-1)^2$, so
$M=(m-1)\sqrt q<q$.  As $p\geqslant3$, coefficientwise positivity
and evaluation at $z=1/2$ give
\[
 A_n\leqslant2^{n+(m-1)\sqrt q}(8/7)^{q/3}.
\]
Put $c=n/q$ and $H(c)=-c\log c-(1-c)\log(1-c)$.
The entropy inequality $\binom qn\geqslant e^{qH(c)}/(q+1)$ and
concavity on $[1/8,1/2]$ give
\[
 H(c)-c\log2-\tfrac13\log(8/7)
 \geqslant\gamma=\tfrac14\log2+\tfrac{13}{24}\log(8/7).
\]
The minimum is at $c=1/8$, by comparison of the two endpoints.
It follows that
\begin{equation*}
 \log\frac{\binom qn}{q^{(m+1)/2}A_n}
 \geqslant E_m(q)=\gamma q-(m-1)\sqrt q\log2
 -\frac{m+1}{2}\log q-\log(q+1).
\end{equation*}
For real $q>0$ we have
\[
 E_m'(q)=\gamma-\frac{(m-1)\log2}{2\sqrt q}
 -\frac{m+1}{2q}-\frac1{q+1},\qquad E_m''(q)>0.
\]
The following strict bounds, obtained using rational arithmetic, hold at
the five initial values of $q$:
\[
\begin{array}{c|c|c|c}
m&Q_m&E_m(Q_m)&E_m'(Q_m)\\\hline
3&121&>0.07&>0.15\\
4&211&>2.88&>0.15\\
5&289&>1.17&>0.15\\
6&512&>18.8&>0.15\\
7&1331&>138&>0.18
\end{array}
\]
They follow, for example, by bounding square roots between
consecutive rational decimals and using
\[
 \log x=2\sum_{j=0}^{15}\frac{v^{2j+1}}{2j+1}+R,
 \qquad v=\frac{x-1}{x+1},\qquad
 0\leqslant R\leqslant\frac{2v^{33}}{33(1-v^2)}
\]
for $1\leqslant x\leqslant2$, after extracting integral powers of $2$.
All five rows are checked with exact rational arithmetic by the
ancillary program \nolinkurl{core/verify_small_cases.py}.
Since $E_m'$ is increasing,
$E_m(q)>0$ for $q\geqslant Q_m$, proving \eqref{eq:odd-middle-count}.
\end{proof}

\begin{proposition}[Agreement from Gao--Li's count]
\label{prop:uniform-direct}
Assume the field-size bound in Theorem~\ref{thm:main}~(\ref{it:uniform-exclusion}), write
\[
 t=m+1,\quad L=\log(2t),\quad s=\sqrt q,\quad
 N:=\left\lceil\frac32tL\right\rceil.
\]
For every $N\leqslant k+1\leqslant\lfloor q/2\rfloor$, every monic
received polynomial of degree $k+m$ satisfies
\eqref{eq:mainbound}, in every characteristic.
\end{proposition}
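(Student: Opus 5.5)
By the criterion \eqref{eq:GL-direct-criterion}, it suffices to show, for $n=k+1$ with $N\leqslant n\leqslant\lfloor q/2\rfloor$,
\[
 q^{t/2}A_n<\binom qn .
\]
The hypothesis $s\geqslant 12tL$ gives $q\geqslant 144t^2L^2>(m-1)^2$, so $M=(m-1)s=as$.

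We bound $A_n$ coefficientwise, and we do not evaluate at a fixed point. Since $(1-u^p)^{-(q-M)/p}$ has nonnegative coefficients supported on multiples of $p$, we split
\[
 A_n=\sum_{i}\binom{M+n-pi-1}{n-pi}\binom{(q-M)/p+i-1}{i}.
\]
Alternatively, for any $0<z<1$,
\[
 A_n\leqslant z^{-n}(1-z)^{-M}(1-z^p)^{-(q-M)/p}.
\]
We use the second form and choose $z$ depending on $n$, roughly $z\approx n/q$ when $n$ is small. Then $(1-z^p)^{-q/p}\leqslant\exp(qz^p/(p(1-z^p)))$, which for $p\geqslant2$ is at most $\exp(qz^2)$ up to constants. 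Also
\[
 (1-z)^{-as}\leqslant\exp\bigl(as z/(1-z)\bigr).
\]
On the other side, $\binom qn\geqslant (q/n)^n$. With $z=cn/q$, or simply $z=n/q$, we get
\[
 \log\frac{\binom qn}{A_n}\geqslant -\frac{2as\,n}{q}-O\!\left(\frac{n^2}{q}\right),
\]
which is too weak in the small-$n$ regime. The cleaner route is to compare $A_n$ with $\binom{M+n-1}{n}$ plus the $p$-lacunary contributions. So we write
\[
 A_n/\binom qn=\sum_i \frac{\binom{M+n-pi-1}{n-pi}\binom{(q-M)/p+i-1}{i}}{\binom qn}
\]
and bound each term separately.

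In the range $N\leqslant n\leqslant s$ or so, the term $i=0$ is
\[
 \binom{as+n-1}{n}\Big/\binom qn\leqslant\left(\frac{as+n}{q-n}\right)^n\approx\bigl((a+1)/s\bigr)^n .
\]
Since $s\geqslant 12tL$, this is at most $(1/(12L))^n$, and we need it to be below $q^{-t/2}=s^{-t}$. This is where $N\geqslant\frac32 tL$ enters. Writing $(a+1)/s=e^{-\log s+\log t}$, the required inequality is
\[
 n(\log s-\log t)\geqslant t\log s .
\]
This holds once $n\geqslant t\log s/(\log s-\log t)$. Because $\log s\geqslant\log(12tL)$, the ratio $\log s/(\log s-\log t)$ is at most about $\frac32 L$ uniformly; this needs a short check that $\log s/\log(s/t)\leqslant\frac32\log(2t)$ for $s\geqslant12tL$, with slack to absorb the other terms. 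The terms $i\geqslant1$ carry the extra factor
\[
 \binom{(q-M)/p+i-1}{i}\leqslant\bigl(e q/(pi)\bigr)^i
\]
while losing $pi$ from the first factor, which costs roughly $(q/as)^{pi}$. Since $p\geqslant2$ and $q/(as)\gg1$, each increment of $i$ gains a factor of order at most $(a/s)^{p}q/(pi)\leqslant a^2/2$. This needs care, and I expect to refine it using the ratio of consecutive terms of $\binom qn$ (also $\approx q/n$). The $i$-sum is then geometric and contributes a bounded factor, at most $2$, which is absorbed into the slack.

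In the range $s\leqslant n\leqslant q/2$ we use the saddle-point bound with $z$ constant (for instance $z=1/2$, as in Lemma~\ref{lem:odd-middle-count}, but keeping $p=2$ allowed). This gives
\[
 A_n\leqslant 2^{n+as}(4/3)^{q/2}.
\]
Combined with the entropy bound for $\binom qn$, this settles $n\geqslant \delta q$ for a small absolute $\delta$ provided $q\gg a s$, i.e.\ $s\gg a$, which holds. For the intermediate range $s\leqslant n\leqslant\delta q$ we take $z=n/q$ and use $\binom qn\geqslant (q/n)^n e^{-n^2/q}$ roughly. The exponent becomes
\[
 -\frac{as\,n}{q}\cdot O(1)-O\!\left(\frac{n^2}{q}\right)-\frac{n^2}{q}\cdot O(1)
\]
against a target of $-t\log s$. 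To see that the target is met, rewrite the bound as $\binom{as+n}{n}/\binom qn$-type expressions, which decrease in $n$ like $\bigl((as+n)/(q-n)\bigr)^n$. This is tiny for $n\leqslant\delta q$ with $n\geqslant s$.

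The main obstacle is the uniform handling of the $p$-lacunary factor when $p=2$, where the $u^2$-series is not negligible. I would first try to show that the ratio $A_{n}/\binom qn$ is monotone decreasing in $n$ on $[N,q/2]$ by comparing consecutive coefficients, which would reduce everything to the single endpoint $n=N$. If that fails, I would fall back on the three-range split above, choosing the breakpoints explicitly.
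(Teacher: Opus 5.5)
Your reduction to $s^{t}A_n<\binom qn$ and your $i=0$ computation are fine, and you correctly name the obstacle at the end. However, the proposal never overcomes it, and the steps meant to handle it are false.

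On $N\leqslant n\leqslant s$, the ratio of consecutive terms of your $i$-sum is about $\bigl(\frac{n-pi}{as+n-pi}\bigr)^{p}\frac{q}{p(i+1)}$; you wrote $(a/s)^p$ where $(n/(as+n))^p$ belongs. For $p=2$ this is roughly $n^2/(2a^2(i+1))$, which exceeds $1$ for many $i$. Even your own estimate $a^2/2\geqslant2$ would not give a convergent geometric series. Concretely, take $t=4$ and $q=2^{14}$, so $s=128$, $M=256$, $N=13$. At $n=13$ the $i$-sum is about $10^3$ times its $i=0$ term, and at $n=s$ it is about $e^{150}$ times. Once $n\gg a^2$ the lacunary terms dominate, and $A_n/\binom qn$ is genuinely of order $(C\sqrt n/s)^n$, not $((a+1)/s)^n$. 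So this sum must be estimated, not absorbed into slack as a bounded factor.

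Your intermediate range has the same defect. $\binom{as+n}n/\binom qn$ is not an upper bound for $A_n/\binom qn$, and the choice $z=n/q$ gains nothing against your bound $\binom qn\geqslant(q/n)^n$. The monotonicity shortcut also fails. Since $\frac{M+i}{i+1}=1+\frac{M-1}{i+1}$, one has $A_{n+1}\geqslant\bigl(1+\frac{M-1}{n+1}\bigr)A_n$, so $A_n/\binom qn$ strictly increases for $(q-M)/2<n<\lfloor q/2\rfloor$.

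The missing idea is a saddle point that balances both factors at once. Because $-\log(1-r^d)/d$ decreases in $d$, every characteristic reduces to the majorant $A_n\leqslant r^{-n}(1-r)^{-as}(1-r^2)^{-q/2}$. The choice $r=n/[s(a+\sqrt n)]$ makes $asr=\frac{an}{a+\sqrt n}$ and $qr^2=\frac{n^2}{(a+\sqrt n)^2}$ both at most $n$. Since $r\leqslant\sqrt{n/q}<1/10$ for $n\leqslant q/128$, this gives $\log(A_nr^n)\leqslant10n/9$, hence $A_n/\binom qn\leqslant\bigl(4(a+\sqrt n)/s\bigr)^n=e^{-F(n)}$. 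This one bound interpolates between your $((a+1)/s)^n$ regime and the $\sqrt n/s$ regime.

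Because $F(x)=x\log\frac{s}{4(a+\sqrt x)}$ is concave, it suffices to verify $F>t\log s$ at $x=N$ and at $x=q/128$. The first check is where $N=\lceil\frac32tL\rceil$ and $s\geqslant12tL$ are used. For $q/128\leqslant n\leqslant q/2$, take $r=\sqrt{c/(1+c)}$ with $c=n/q$ and use the entropy bound. This gives $\log\bigl(A_n/\binom qn\bigr)\leqslant-\frac n2\log\frac qn+3a\sqrt n+\log(q+1)$, which is below $-t\log s$. Your constant-$z$ treatment of the top range is in this spirit, but its constants would still have to be checked at the threshold $s=12tL$.
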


\begin{proof}
Set $a=m-1=t-2$ and $n=k+1$.  Under
the field-size bound in Theorem~\ref{thm:main}~(\ref{it:uniform-exclusion}), $M=as<q$.  The nonnegative
coefficient formula for $A_n$, and the fact that
$-\log(1-r^d)/d$ decreases with $d>0$, imply
\begin{equation}\label{eq:GL-majorant}
 A_n\leqslant r^{-n}(1-r)^{-as}(1-r^2)^{-q/2}
 \qquad(0<r<1).
\end{equation}

First let $N\leqslant n\leqslant q/128$.  Choose
$r=n/[s(a+\sqrt n)]$, so
$r\leqslant\sqrt{n/q}\leqslant1/\sqrt{128}<1/10$.
Put $u=a/(a+\sqrt n)$ and $v=\sqrt n/(a+\sqrt n)$.
Then $u+v=1$, $asr=nu$, and $qr^2=nv^2$.
Using $-\log(1-r)\leqslant r/(1-r)$ and
$-\log(1-r^2)\leqslant r^2/(1-r^2)$ gives
\[
 \log(A_nr^n)
 \leqslant n\left(\frac{10u}{9}+\frac{50(1-u)^2}{99}\right)
 \leqslant\frac{10n}{9}.
\]
The last expression inside the parentheses is convex
on $[0,1]$ and at either endpoint is at most $10/9$.
Since $e^{10/9}<4$ and $\binom{q}{n}\geqslant(q/n)^n$,
\begin{equation*}
 \frac{A_n}{\binom{q}{n}}
 \leqslant\left(\frac{4(a+\sqrt n)}s\right)^n
 =e^{-F(n)},\qquad
 F(x)=x\log\frac{s}{4(a+\sqrt x)}.
\end{equation*}
A direct calculation yields
\[
 F''(x)=-\frac{3a+2\sqrt x}
                {4\sqrt x(a+\sqrt x)^2}<0.
\]
We verify that both endpoints of $[N,q/128]$ have
$F(x)>t\log s$.

For $t\geqslant4$ one has
$\frac32t\log(2t)+1<(t/2+2)^2$.
Indeed, the difference
$G(t)=t^2/4+2t+3-\frac32t\log(2t)$ is strictly
convex on $[4,\infty)$, has $G'(8)>0$, and its tangent
at $8$ yields
$G(t)\geqslant G(8)+G'(8)(t-8)\geqslant17-6\log16>0$.
Therefore $a+\sqrt N<3t/2$.
At the smallest permitted $s_0=12tL$, suppose first
$t\geqslant5$ (so $L\geqslant\log10$).  Then
\[
 F(N;s_0)>\frac32tL\log(2L)>t\log s_0.
\]
For the second inequality, the function
$\frac32L\log(2L)-L-\log(6L)$ is positive at
$L=\log10$ and strictly increasing thereafter.
For $t=4$, one has $N=13$ and $s_0=48\log8$.
The elementary estimates
$99<s_0<100$ and $\sqrt{13}<15/4$, together with
$99^{13}>100^4 23^{13}$, give
\[
 F(13;s_0)>13\log(99/23)>4\log100>4\log s_0.
\]
For fixed $N>t$, the function $F(N;s)-t\log s$
increases with $s$, hence $F(N)>t\log s$ for every
$s\geqslant s_0$.

At $x=q/128$, we have
\[
 \frac{4(a+\sqrt{q/128})}{s}
 <\frac1{3L}+\frac1{\sqrt8}<\frac{11}{20}.
\]
Consequently $F(q/128)>(q/128)\log(20/11)>q/256$.
Since $t\leqslant s/24$ and $\log s<s/20$ for $s\geqslant s_0>99$,
we have $t\log s<q/480$, proving this endpoint as well.
Concavity of $F$ now yields
$A_n/\binom{q}{n}<s^{-t}$ for every $N\leqslant n\leqslant q/128$.

For $q/128\leqslant n\leqslant q/2$, put $c=n/q$ and take
$r=\sqrt{c/(1+c)}$ in \eqref{eq:GL-majorant}.
The binomial entropy bound
$\binom{q}{n}\geqslant\exp(qH(c))/(q+1)$,
$-\log(1-r)\leqslant3\sqrt c$, and the concave nonnegative
function
\[
 -(1-c)\log(1-c)-\frac{1+c}{2}\log(1+c)
 \qquad(0\leqslant c\leqslant1/2)
\]
give
\[
 \log\frac{A_n}{\binom{q}{n}}
 \leqslant-\frac n2\log\frac qn+3a\sqrt n+\log(q+1).
\]
On $[1/128,1/2]$, $\sqrt c\log(1/c)>2/5$;
this follows by evaluating the endpoints, the only
possible minima.  Consequently
\[
 \frac n2\log\frac qn-3a\sqrt n
 >\left(\frac s5-3t\right)\sqrt n
 \geqslant\frac{3s\sqrt n}{40}
 >\frac{s\sqrt n}{14}\geqslant\frac q{14\sqrt{128}}>\frac q{160},
\]
where $t\leqslant s/24$.
On the other hand,
\[
 t\log s+\log(q+1)
 \leqslant(t+3)\log s
 \leqslant\frac{7t}{4}\log s
 <\frac{7q}{1920}<\frac q{160}.
\]
It follows that the preceding logarithmic estimate is
strictly less than $-t\log s$.
Thus the direct positivity criterion
\eqref{eq:GL-direct-criterion} holds across the entire
asserted range.
\end{proof}

\begin{corollary}\label{cor:uniform-middle}
Under the field-size bound in Theorem~\ref{thm:main}~(\ref{it:uniform-exclusion}), for every
$\lceil\tfrac32(m+1)\log(2m+2)\rceil\leqslant k+1\leqslant\lfloor q/2\rfloor$,
\eqref{eq:mainbound} holds in every characteristic.
\end{corollary}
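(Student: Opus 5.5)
This is essentially a restatement of Proposition~\ref{prop:uniform-direct} in terms of $k$. The plan is to reduce the general received word to the monic normalized form and then invoke that proposition directly.

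Let $u$ have interpolation polynomial $f_u$ of degree $k+m$. As in Proposition~\ref{prop:agreement}, write $f_u=Af+g$ with $A\in\mathbb F_q^\times$, $f$ monic of degree $k+m$, and $\deg g<k$. The distance from $u$ to $\operatorname{RS}_q(k)$ is unchanged by this normalization:
\begin{itemize}
\item subtracting the codeword given by $g$ is a translation by an element of the code;
\item scaling by $A^{-1}$ maps the code to itself.
\end{itemize}
Hence $\operatorname{dist}(u,\operatorname{RS}_q(k))$ equals the distance of the word defined by $f$. This is also the form in which Gao--Li's count $W_n(\varepsilon)$ was set up, since $\varepsilon=\vartheta_m(f)$ depends only on the monic part.

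With $t=m+1$ we have $\frac32(m+1)\log(2m+2)=\frac32tL$. So the lower bound $\lceil\tfrac32(m+1)\log(2m+2)\rceil\leqslant k+1$ is exactly $N\leqslant k+1$, and the upper bound $k+1\leqslant\lfloor q/2\rfloor$ is the same in both statements. The field-size hypothesis is identical as well. Proposition~\ref{prop:uniform-direct} therefore applies to $f$ and gives \eqref{eq:mainbound} in every characteristic.

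The only point to check is that the standing constraint $2\leqslant k\leqslant q-m-1$ is compatible with this range. For $m\geqslant3$ we have $N\geqslant\lceil 6\log 8\rceil=13$, so $k\geqslant2$ holds automatically. Also $k+1\leqslant q/2$ implies $k\leqslant q-m-1$, because $q\geqslant(12\cdot4\log 8)^2$ is far larger than $2m$. So the hypotheses of Proposition~\ref{prop:agreement} are met, and no real obstacle arises beyond this bookkeeping.
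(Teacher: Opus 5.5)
Your proposal is correct and matches the paper: the corollary is just Proposition~\ref{prop:uniform-direct} with $N=\lceil\frac32(m+1)\log(2m+2)\rceil$ written out, applied after the harmless normalization to a monic $f$. One small tidy-up: to get $k\leqslant q-m-1$, use the $m$-dependent hypothesis $\sqrt q\geqslant 12(m+1)\log(2m+2)$, which gives $m\leqslant q/2$ for every $m$; the fixed constant $(12\cdot4\log 8)^2$ does not do this for all $m$.
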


\begin{corollary}[A refinement of the counting range]
\label{cor:uniform-adaptive}
Keep the hypotheses of Proposition~\ref{prop:uniform-direct} and set
\[
 R:=\log\frac{s}{4(t-2+\sqrt N)},\qquad
 n_{\mathrm{ad}}:=
 \min\left\{N,\left\lfloor\frac{t\log s}{R}\right\rfloor+1\right\}.
\]
Then $R>0$, and the conclusion of
Proposition~\ref{prop:uniform-direct} remains valid whenever
$n_{\mathrm{ad}}\leqslant k+1\leqslant\lfloor q/2\rfloor$.
\end{corollary}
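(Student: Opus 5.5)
The plan is to rerun the first half of the proof of Proposition~\ref{prop:uniform-direct}, keeping the function
\[
 F(x)=x\log\frac{s}{4(a+\sqrt x)},\qquad a=t-2,
\]
but taking a smaller left endpoint. The proof there established $A_n/\binom qn\leqslant e^{-F(n)}$ for $n\leqslant q/128$, using the majorant \eqref{eq:GL-majorant} with $r=n/[s(a+\sqrt n)]$. That bound is valid for every $n\geqslant1$ with $n\leqslant q/128$; the restriction $n\geqslant N$ entered only when we checked the left endpoint. The range $q/128\leqslant n\leqslant q/2$ was handled separately, with no reference to $N$. So it suffices to show $F(n)>t\log s$ for every $n_{\mathrm{ad}}\leqslant n<N$, since \eqref{eq:GL-direct-criterion} then applies. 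If $n_{\mathrm{ad}}=N$ there is nothing new to prove.

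First, $R>0$. The proof of Proposition~\ref{prop:uniform-direct} showed $a+\sqrt N<3t/2$. Hence $4(a+\sqrt N)<6t\leqslant s/(2L)<s$ under $s\geqslant12tL$, which gives $R>\log(2L)>0$.

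Next, for $n\leqslant N$ the function $x\mapsto\log\bigl(s/(4(a+\sqrt x))\bigr)$ is decreasing. Therefore
\[
 F(n)=n\log\frac{s}{4(a+\sqrt n)}\geqslant n\log\frac{s}{4(a+\sqrt N)}=nR.
\]
Take $n\geqslant n_{\mathrm{ad}}$. When $n_{\mathrm{ad}}=\lfloor t\log s/R\rfloor+1$, we have $n>t\log s/R$. Then $nR>t\log s$, and so $A_n/\binom qn<s^{-t}=q^{-(m+1)/2}$.

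For $n\geqslant N$, the bound already comes from Proposition~\ref{prop:uniform-direct}. Alternatively, concavity of $F$ on $[n_{\mathrm{ad}},q/128]$ together with the two endpoint checks gives it directly. Combining the two ranges, \eqref{eq:GL-direct-criterion} holds for all $n_{\mathrm{ad}}\leqslant k+1\leqslant\lfloor q/2\rfloor$, and hence \eqref{eq:mainbound} holds.

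The only real point is noticing that the estimate $e^{-F(n)}$ needs no lower bound on $n$. Its derivation used only $r\leqslant\sqrt{n/q}<1/10$, and this holds for all $n\leqslant q/128$. The monotonicity $F(n)\geqslant nR$ for $n\leqslant N$ then replaces the endpoint verification. The rest is bookkeeping.
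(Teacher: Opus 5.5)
Your proof is correct and follows the paper's argument: for $n_{\mathrm{ad}}\leqslant n<N$, monotonicity of the logarithmic factor gives $F(n)\geqslant nR>t\log s$, and $n\geqslant N$ is left to Proposition~\ref{prop:uniform-direct}. The differences are cosmetic. You derive $R>0$ directly from $a+\sqrt N<3t/2$ and $s\geqslant12tL$, whereas the paper reads it off from $F(N)=NR>t\log s>0$. You also state explicitly that the bound $A_n/\binom qn\leqslant e^{-F(n)}$ needs no lower bound on $n$, which the paper uses tacitly.
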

\begin{proof}
The proof of Proposition~\ref{prop:uniform-direct}
shows that $F(N)=NR>t\log s>0$, so $R>0$.
If $n_{\mathrm{ad}}<N$ and $n_{\mathrm{ad}}\leqslant n<N$,
then $n\leqslant N$ and
\[
 F(n)\geqslant nR>t\log s.
\]
The remaining $n\geqslant N$ are covered by the proposition.
\end{proof}

\subsection{Projection and the small dimensions}

For the geometric argument put
\[
 n_0:=\left\lceil\tfrac32(m+1)\log(2m+2)\right\rceil=N.
\]
The direct count supplies agreement sets when $k+1\geqslant n_0$.
Smaller dimensions are treated geometrically: the counting estimate
need not be positive in this range, and we do not require all
leading-coefficient classes to occur.

\begin{lemma}[The planar projection]\label{lem:uniform-projection}
Let $f$ be a received polynomial of degree $k+m$ and suppose
its word is a deep hole.  Fix any set
$T\subseteq\mathbb F_q$ of size $k-2$, let $L_T$ be the
polynomial of degree below $k-2$ interpolating $f$ on $T$,
and put
\[
 P_T(X)=\prod_{x\in T}(X-x),\qquad
 R_T(X)=\frac{f(X)-L_T(X)}{P_T(X)}.
\]
For $T=\varnothing$ use $L_T=0$ and $P_T=1$.
Then $\deg R_T=m+2$, and
\[
 \mathcal A_T=\{(x,R_T(x)):x\in\mathbb F_q\setminus T\}
                \cup\{V\},\qquad V=(0:1:0),
\]
is a planar arc of cardinality $q-k+3$.
For every $u\notin T$, the polynomial
\begin{equation}\label{eq:uniform-quotient}
 Q_{T,u}(X):=\frac{R_T(X)-R_T(u)}{X-u},
 \qquad \deg Q_{T,u}=m+1,
\end{equation}
is injective on $\mathbb F_q\setminus(T\cup\{u\})$.
If
\begin{equation}\label{eq:uniform-valueset-condition}
 q>(m+1)(k-1)+1,
\end{equation}
then $Q_{T,u}$ permutes the whole of $\mathbb F_q$.
\end{lemma}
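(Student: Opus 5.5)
Write $n=k+1$ and let $f$ be the received polynomial of degree $k+m$. The plan is to translate the deep-hole condition, via Proposition~\ref{prop:agreement}, into a statement about collinear points on the graph of $R_T$.

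\textbf{Degree and the basic division.} Since $L_T$ has degree below $k-2$ and $f-L_T$ vanishes on $T$, the quotient $R_T=(f-L_T)/P_T$ is a polynomial. Its degree is $(k+m)-(k-2)=m+2$, because $\deg f=k+m>\deg L_T$. For $T=\varnothing$ this reads $R_T=f$ with $k=2$, which is consistent.

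\textbf{Key reduction.} The deep-hole hypothesis says that no polynomial $g$ with $\deg g<k$ agrees with $f$ at $k+1$ distinct points. Suppose three distinct points $x_1,x_2,x_3\notin T$ satisfy $R_T(x_i)=\alpha x_i+\beta$. Put
\[
 g=L_T+P_T\cdot(\alpha X+\beta).
\]
Then $\deg g\leqslant k-1<k$, and $g$ agrees with $f$ on $T\cup\{x_1,x_2,x_3\}$, which has $k+1$ points. This contradicts the hypothesis, so no three affine points of $\mathcal A_T$ are collinear.

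\textbf{The point $V$.} Lines through $V=(0:1:0)$ are the vertical lines $x=c$. Each such line meets the affine part of $\mathcal A_T$ in at most one point, since $R_T$ is a function. The line at infinity contains only $V$ from $\mathcal A_T$. Hence $\mathcal A_T$ is an arc, and its cardinality is $(q-(k-2))+1=q-k+3$.

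\textbf{Injectivity of $Q_{T,u}$.} Fix $u\notin T$. For $x\notin T\cup\{u\}$, the value $Q_{T,u}(x)$ is the slope of the line through $(u,R_T(u))$ and $(x,R_T(x))$. If $Q_{T,u}(x)=Q_{T,u}(y)$ with $x\neq y$, then the three points at $u,x,y$ are collinear, contradicting the arc property. Also $\deg Q_{T,u}=m+1$, since $R_T(X)-R_T(u)$ has degree $m+2$ and vanishes at $u$.

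\textbf{The permutation statement.} This is the only nontrivial step. The set $\mathbb F_q\setminus(T\cup\{u\})$ has $q-k+1$ elements, and $Q:=Q_{T,u}$ is injective on it. So the value set $V_Q$ satisfies $\lvert V_Q\rvert\geqslant q-k+1$. I would invoke the standard Wan-type value-set bound for a non-permutation polynomial of degree $d=m+1\geqslant2$:
\[
 \lvert V_Q\rvert\leqslant q-\left\lceil\frac{q-1}{d}\right\rceil.
\]
If $Q$ were not a permutation, this gives $q-k+1\leqslant q-(q-1)/(m+1)$, that is, $q-1\leqslant (m+1)(k-1)$. This contradicts \eqref{eq:uniform-valueset-condition}, so $Q_{T,u}$ permutes $\mathbb F_q$.

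I expect the main obstacle to be citing the correct form of the value-set bound, including the case $\deg Q\geqslant q$. That case is excluded here because $m+1\leqslant k+m<q$. The bound is exactly the ingredient the paper attributes to value-set results in Section~\ref{sec:permutation-bridge}.
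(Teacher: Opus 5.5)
Your proof is correct and takes essentially the same route as the paper. You translate a collinear triple into $k+1$ agreements with $L_T+P_T\ell$, derive injectivity of the divided difference, and apply Wan's value-set bound $q-\lceil (q-1)/d\rceil$ with $d=m+1$. The only cosmetic difference is the injectivity step: you obtain it from the arc property via slopes through $(u,R_T(u))$, whereas the paper identifies $Q_{T,u}$ with the quotient of $f-L_{T\cup\{u\}}$ by $P_{T\cup\{u\}}$ and reads off the $k+1$ agreements directly.
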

\begin{proof}
If three points of the displayed affine graph were
collinear on $Y=\ell(X)$, then $L_T+P_T\ell$ would have
degree at most $k-1$ and agree with $f$ at $k+1$
coordinates, a contradiction.  No two graph points lie on
a line through $V$, since their first coordinates differ.
This proves the arc claim.

The quotient \eqref{eq:uniform-quotient} is the polynomial
obtained after dividing $f-L_{T\cup\{u\}}$ by
$P_{T\cup\{u\}}$.  Equality of its values at two
positions outside $T\cup\{u\}$ would again give $k+1$
agreements with a polynomial of degree below $k$.
Its value set has at least $q-k+1$ elements.
The univariate value-set bound of Wan, as quoted in
\cite[Theorem~2.1]{MullenWanWangValues2014}, states that
for a nonpermutation of degree $d$, the value set has
cardinality at most $q-\lceil(q-1)/d\rceil$.
Under \eqref{eq:uniform-valueset-condition} this is less
than $q-k+1$.  Hence $Q_{T,u}$ is a permutation.
\end{proof}

\begin{lemma}[Large arcs in characteristic two and unique hyperoval extensions]
\label{lem:binary-large-arc-unique}
Let $q>4$ be an even prime power.  Every arc
$\mathcal A\subseteq\mathrm{PG}(2,q)$ with
$\lvert \mathcal A\rvert>q-\sqrt q+1$ is contained in a unique
hyperoval.
\end{lemma}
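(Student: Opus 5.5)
This is Segre's theorem on complete arcs in even characteristic, in the sharpened form due to Segre and Thas. The plan is to cite it rather than reprove it, and to add the short uniqueness argument. The relevant statement (see e.g.\ \cite{BallLavrauw2018}) is as follows: for $q$ even, a complete arc in $\mathrm{PG}(2,q)$ that is not a hyperoval has at most $q-\sqrt q+1$ points. The argument therefore has two steps.

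\emph{Existence.} Start with the arc $\mathcal A$ and add points one at a time while the set remains an arc. The result is a complete arc $\mathcal K\supseteq\mathcal A$ with $\lvert\mathcal K\rvert\geqslant\lvert\mathcal A\rvert>q-\sqrt q+1$. By the Segre bound, $\mathcal K$ cannot be a complete arc other than a hyperoval. Hence $\mathcal K$ is a hyperoval of size $q+2$.

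If one prefers to explain where the bound comes from, the source is Segre's lemma of tangents. Let $\mathcal A$ have $q+2-\delta$ points. In even characteristic, the tangent lines to $\mathcal A$ form a dual arc-like configuration: through each point of $\mathcal A$ pass exactly $\delta$ tangents. These $\delta(q+2-\delta)$ tangents lie on a dual algebraic curve of degree $\delta$ (for $q$ even). When $\delta<\sqrt q+1$, the Hasse--Weil/St\"ohr--Voloch count forces this curve to contain a linear component, i.e.\ a pencil of lines through a single point. That point extends $\mathcal A$. This is the step where the hypothesis $\lvert\mathcal A\rvert>q-\sqrt q+1$ enters, and it is the only deep input. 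I would cite it rather than reprove it.

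\emph{Uniqueness.} Suppose $\mathcal H_1\neq\mathcal H_2$ are hyperovals both containing $\mathcal A$. Each line meets a hyperoval in $0$ or $2$ points. Choose $P\in\mathcal H_2\setminus\mathcal H_1$. The lines through $P$ and the points of $\mathcal A$ are distinct, because $\mathcal A\cup\{P\}\subseteq\mathcal H_2$ is an arc. Each such line meets $\mathcal H_1$ in a point of $\mathcal A$, and therefore in exactly two points of $\mathcal H_1$. Since $\mathcal H_2$ is an arc, the second point lies in $\mathcal H_1\setminus\mathcal A$, and distinct lines through $P$ give distinct second points. This yields an injection $\mathcal A\hookrightarrow\mathcal H_1\setminus\mathcal A$, so
\[
\lvert\mathcal A\rvert\leqslant q+2-\lvert\mathcal A\rvert,
\qquad\text{that is,}\qquad \lvert\mathcal A\rvert\leqslant q/2+1.
\]
On the other hand, for $q>4$ even we have $q\geqslant8$. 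Then $q-\sqrt q+1>q/2+1$, because $\sqrt q<q/2$ once $q>4$. This contradiction gives uniqueness.

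The main obstacle is the Segre--Thas bound itself. Here I rely on the literature, checking that the cited version covers all even $q>4$, including non-square $q$.
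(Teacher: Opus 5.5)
Your proposal is correct and follows the paper's proof: existence comes from citing Segre's bound on complete non-hyperoval arcs in even characteristic (the paper cites it via Ball--Lavrauw). Uniqueness comes from taking a point in one hyperoval but not the other and using that every line meets a hyperoval in $0$ or $2$ points, which gives $\lvert\mathcal A\rvert\leqslant(q+2)/2<q-\sqrt q+1$ for $q>4$. Your injection $\mathcal A\hookrightarrow\mathcal H_1\setminus\mathcal A$ via lines through $P\in\mathcal H_2\setminus\mathcal H_1$ is just the mirror image of the paper's count of the $(q+2)/2$ secants of $\mathcal H_2$ through $P\in\mathcal H_1\setminus\mathcal H_2$.
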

\begin{proof}
Segre's large-arc theorem, recorded in
\cite[Section~2]{BallLavrauw2018}, supplies an extension.
For uniqueness, suppose that distinct hyperovals
$\mathcal H_1$ and $\mathcal H_2$ both contain $\mathcal A$,
and choose $P\in\mathcal H_1\setminus\mathcal H_2$.
Every line meets $\mathcal H_2$ in zero or two points, so
exactly $(q+2)/2$ secants of $\mathcal H_2$ pass through $P$.
Each contains at most one point of $\mathcal H_1$, as
$\mathcal H_1$ is an arc containing $P$.  Hence
$\lvert \mathcal H_1\cap\mathcal H_2\rvert \leqslant(q+2)/2$.
But $q-\sqrt q+1>(q+2)/2$ for $q>4$, a contradiction.
\end{proof}

\begin{proposition}[Odd characteristic, small dimensions]
\label{prop:uniform-low-odd}
Assume the field-size bound in Theorem~\ref{thm:main}~(\ref{it:uniform-exclusion}), let $q$ be odd, and
suppose $2\leqslant k$ and $k+1<n_0$.
Then no degree-$(k+m)$ received word is a deep hole.
\end{proposition}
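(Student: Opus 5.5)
We argue by contradiction and assume that $u$ is a deep hole with $q$ odd, $2\leqslant k$ and $k+1<n_0$. The plan is to use the planar projection of Lemma~\ref{lem:uniform-projection}, with $T$ any $(k-2)$-subset of $\mathbb F_q$, to obtain an arc $\mathcal A_T$ of size $q-k+3$ in $\mathrm{PG}(2,q)$. Since $k<n_0\leqslant\frac32 tL+1$ and $\sqrt q\geqslant 12tL$, we have $k\ll\sqrt q$. Hence $\lvert\mathcal A_T\rvert>q-\tfrac14\sqrt q+\tfrac74$, and Segre's theorem for odd $q$ (as recorded in \cite[Section~2]{BallLavrauw2018}) applies: such an arc is contained in a unique conic $\mathcal C$.

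Next we translate membership in $\mathcal C$ into algebra. The point $V=(0:1:0)$ lies on $\mathcal C$, and the affine points $(x,R_T(x))$ lie on $\mathcal C$ for all $x\notin T$. A conic through $V$ has an affine equation of the form $\alpha Y^2+\beta XY+\gamma X^2+\delta X+\epsilon Y+\zeta=0$ in which the $Y^2$-coefficient vanishes, because $V$ is the point at infinity on vertical lines. The equation therefore reads $Y(\beta X+\epsilon)=-(\gamma X^2+\delta X+\zeta)$. Substituting $Y=R_T(x)$ gives a polynomial identity of degree at most $m+3$ in $x$. This identity holds at $q-k+2>m+3$ points, so it holds identically in $\mathbb F_q[X]$, where we use $m+3+k<q$. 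Since $\deg R_T=m+2\geqslant5$, the left-hand side has degree $m+2$ or $m+3$ unless $\beta X+\epsilon\equiv0$, while the right-hand side has degree at most $2$. Thus $\beta=\epsilon=0$, and then $\gamma X^2+\delta X+\zeta=0$, which is a degenerate equation rather than a conic. This contradicts the irreducibility of $\mathcal C$.

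An alternative route, which we keep as a backup in case the degree bookkeeping fails, uses the permutation branch of Lemma~\ref{lem:uniform-projection}. Condition \eqref{eq:uniform-valueset-condition} holds because $(m+1)(k-1)+1<(m+1)n_0<q$. Hence $Q_{T,u}$, of degree $m+1$ and coprime to $p$ when $p>m+1$, would be a permutation polynomial of degree less than $q^{1/4}$. Such polynomials are exceptional, and exceptional polynomials have degree coprime to $q-1$ (see \cite{CaullerySchmidt2015}). This can then be combined across different choices of $u$ and $T$.

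The main obstacle is Segre's step: we must check that $q-k+3$ exceeds the odd-$q$ threshold $q-\frac14\sqrt q+\frac74$ in the precise form available to us, namely $k<\tfrac14\sqrt q+1$. This follows from $n_0\leqslant\frac32tL+1\ll 12tL\leqslant\sqrt q$. The remaining steps are routine degree counting.
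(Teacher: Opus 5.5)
Your proof is correct, but its key step uses a different theorem from the paper's.

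To place $\mathcal A_T$ on a conic, the paper applies Ball--Lavrauw's Theorem~1. It first pads $h=k-1$ to $h'=\max\{h,3a_0-1\}\leqslant 3h$, deleting $h'-h$ affine points so that the theorem's numerical hypothesis holds. If the subarc were not on a conic, the theorem would give two curves of degree at most $4h$ with no common component through it, and B\'ezout rules this out via $16h^2<q/4$. You instead quote Segre's classical odd-$q$ bound. It needs only $k<\tfrac14\sqrt q+\tfrac54$, which holds with room to spare because $k<\tfrac32 tL\leqslant\tfrac18\sqrt q$.

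Your route is shorter and avoids the padding, the hypothesis check, and B\'ezout. The paper's route rests on a stronger modern theorem, but in the crude form it is used (degree $4h$, requirement $h<\sqrt q/8$) it gives nothing here that Segre's bound does not. The concluding substitution and degree comparison coincide with the paper's.

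Two small remarks:
\begin{itemize}
\item After $\beta=\epsilon=0$, the identity forces $\gamma=\delta=\zeta=0$. The contradiction is that the conic's equation vanishes identically, not a failure of irreducibility.
\item Discard the backup paragraph. Carlitz--Wan gives no contradiction when $\gcd(m+1,q-1)=1$, since then $X^{m+1}$ permutes $\mathbb F_q$. This is exactly why Section~\ref{sec:permutation-bridge} needs additional secant-quotient arguments.
\end{itemize}
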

\begin{proof}
Assume the contrary and apply
Lemma~\ref{lem:uniform-projection}.  Put $h=k-1<s/8$
and write $p=\operatorname{char}(\mathbb F_q)$.
Set
\[
 a_0=p^{\lfloor\log_p h\rfloor},\qquad
 h'=\max\{h,3a_0-1\}.
\]
Since $p\geqslant3$, one has
$h'\leqslant3h$ and $p^{\lfloor\log_p h'\rfloor}=a_0$.
The additional numerical hypothesis in
Ball--Lavrauw~\cite[Theorem~1]{BallLavrauw2018} is
\[
 a_0\Bigl(h'+\frac{a_0}{2}+\frac32\Bigr)
 \leqslant\frac12(h'+2)(h'+1).
\]
Indeed, the right side minus the left side equals
$a_0^2+a_0>0$ at $h'=3a_0-1$ and has derivative
$h'+\tfrac32-a_0>0$ for $h'\geqslant3a_0-1$.
Together with $p^{\lfloor\log_p h'\rfloor}=a_0$,
this verifies the precise hypothesis of that theorem.

Retain $V$ and delete $h'-h$ further affine points
from $\mathcal A_T$.  We obtain an arc of size
$q+2-h'$.  If it is not contained in a conic,
Ball--Lavrauw produces two curves having no common component,
of degrees at most $h'+a_0\leqslant4h$, containing the arc.
B\'ezout's theorem would then imply
\[
 q+2-h'\leqslant16h^2.
\]
But $h<s/8$ gives
\[
 16h^2<q/4<q+2-3s/8\leqslant q+2-h',
\]
a contradiction.  Thus the retained subarc lies on a
conic.  Because it has more than four points, that conic
is nondegenerate and contains $V$.

Its affine equation is
\[
 AX^2+BXY+CX+DY+E=0.
\]
Substituting $Y=R_T(X)$ gives a polynomial of degree
at most $m+3$ with $q+1-h'>m+3$ distinct roots;
it therefore vanishes identically:
\[
 (BX+D)R_T(X)+AX^2+CX+E=0.
\]
Since $\deg R_T=m+2\geqslant5$, comparing its two
highest coefficients first gives $B=D=0$, and then
$A=C=E=0$.  This contradicts the existence of a
nonzero conic equation.
\end{proof}

\begin{proposition}[Characteristic two, small dimensions]
\label{prop:uniform-low-even}
Assume the field-size bound in Theorem~\ref{thm:main}~(\ref{it:uniform-exclusion}), let $q$ be even,
and suppose $3\leqslant k$ and $k+1<n_0$.
Then no degree-$(k+m)$ received word is a deep hole.
\end{proposition}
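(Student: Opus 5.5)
Assume a deep hole exists, with $q$ even, $3\leqslant k$ and $k+1<n_0$, and apply Lemma~\ref{lem:uniform-projection} to any $(k-2)$-set $T$. This gives the arc $\mathcal A_T$ of size $q-k+3$. Since $k-1<n_0\leqslant s/8$, we have $q-k+3>q-\sqrt q+1$, so Lemma~\ref{lem:binary-large-arc-unique} embeds $\mathcal A_T$ in a unique hyperoval $\mathcal H_T$. The hyperoval contains $V=(0:1:0)$ and $q+1$ affine points. Because $\mathcal A_T$ already has one affine point on each vertical line $X=x$ with $x\notin T$, and each such line through $V$ meets $\mathcal H_T$ in exactly two points, the hyperoval has no further points there. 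So the $k-2$ missing points lie on the vertical lines $X=x$ with $x\in T$, one on each, plus possibly a point on the line at infinity. Counting points of $\mathcal H_T$ ($q+2=1+(q-k+2)+(k-2)+1$) shows there is exactly one extra point at infinity, the nucleus direction. Choose coordinates so that $\mathcal H_T$ is the graph of a function $\phi\colon\mathbb F_q\to\mathbb F_q$ together with $V$ and $(1:0:0)$ after an affine shear. Then $\phi$ is an $o$-polynomial, up to normalisation, that coincides with $R_T$ off $T$.

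Next, compare the interpolation polynomial of $\phi$ with $R_T$. The function $\phi$ is given by a reduced polynomial of degree at most $q-2$. It agrees with $R_T$, which has degree $m+2$, on $q-k+2$ points, so $\phi-R_T$ vanishes off $T$. Hence $\phi=R_T+P_{T}^{\ast}\cdot\psi$, where $P_T^\ast=\prod_{x\notin T}(X-x)$ has degree $q-k+2$. Reducing modulo $X^q-X$ shows that $\phi$ equals $R_T+\lambda(X)\cdot\prod_{x\notin T}(X-x)$, with $\lambda$ of degree at most $k-3$. Then use the standard fact that an $o$-polynomial, with its directional conditions, makes $(\phi(X+c)+\phi(c))/X$ a permutation for every $c$. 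This is the characteristic-two analogue of the quotients $Q_{T,u}$. Lemma~\ref{lem:uniform-projection} already shows that $Q_{T,u}$, of degree $m+1$, permutes $\mathbb F_q$ whenever $q>(m+1)(k-1)+1$, which holds here because $k-1<s/8$ and $m+1\leqslant s/(12L)$.

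I would obtain the contradiction through low-degree permutation behaviour under varying $T$. Since $k\geqslant3$, $T$ is nonempty and we may vary it. Take two sets $T,T'$ differing in one element. The hyperovals $\mathcal H_T$ and $\mathcal H_{T'}$ share the $q-k+2$ affine points off $T\cup T'$, and that number still exceeds $(q+2)/2$. The uniqueness argument of Lemma~\ref{lem:binary-large-arc-unique} therefore relates them, and pulling back through the projection shows that $f$ itself is determined by one hyperoval on $k+1$ extra directions. Concretely, I expect the $k=3$ case to force $Q_{T,u}$, of degree $m+1\geqslant4$ with $m+1<s^{1/2}$-scale, to be an \emph{exceptional} permutation polynomial for all $u$. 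Here I would use the Carlitz--Wan conjecture / Fried--Guralnick--Saxl classification \cite{CaullerySchmidt2015}: over large $q$ a permutation polynomial of degree $d$ with $d^4<q$ (or the Weil bound $d<q^{1/4}$) is exceptional. In characteristic two, exceptional polynomials of degree coprime to... are compositions of additive (linearised) polynomials and Dickson-type ones. This leaves only $Q_{T,u}$ of the shape $L(X)+c$ with $L$ additive, or $X^{2^r}$-type monomials composed with linear maps.

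The last step, and the main obstacle, is to show that having $Q_{T,u}$ of that restricted shape for every $u$ and every $T$ of size $k-2\geqslant1$ is impossible. My plan is to write $Q_{T,u}$ as the difference quotient of $R_T$ and observe that additivity of all its difference quotients forces $R_T(X)=\alpha X^{2^r}+\beta X+\gamma$ plus lower-degree adjustments. Then I would show that $R_T$ for two different $T$ cannot both have this form unless the degree-$(k+m)$ polynomial $f$ is degenerate. Here I would use $R_T=(f-L_T)/P_T$ with $\deg P_T=k-2\geqslant1$: dividing by a nontrivial $P_T$ destroys the binomial sparsity for at least one choice of $T$, and this contradiction finishes the proof. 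If the Weil-range hypothesis $d^4<q$ is not available from $\sqrt q\geqslant12(m+1)\log(2m+2)$, I would instead use the Wan value-set bound directly on second differences, which gives degree $m$ permutations. That is the step where I am least confident.
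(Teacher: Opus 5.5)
There is a genuine gap. You correctly build the unique hyperoval $\mathcal H_T\supseteq\mathcal A_T$ and note that every $Q_{T,u}$ permutes all of $\mathbb F_q$. But you never use that permutation property to determine $\phi$ on $T$, and that is the missing idea.

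Since $Q_{T,u}$ is injective on the whole field for every $u\notin T$, adjoining $(b,R_T(b))$ for any $b\in T$ keeps $\mathcal A_T$ an arc. There is no collinear triple with two graph points, and distinct first coordinates keep it off the lines through $V$. Lemma~\ref{lem:binary-large-arc-unique} gives the enlarged arc a hyperoval extension, and uniqueness for $\mathcal A_T$ forces that extension to be $\mathcal H_T$. Hence $\phi=R_T$ everywhere, so your $\lambda$ is $0$.

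Then $R_T$ itself, of degree $m+2$, becomes an $o$-polynomial after an output shear, input translation and scaling. The elementary parity property \cite[Lemma~2.1]{CaullerySchmidt2015} (no odd-degree terms above degree one) finishes at once. For odd $m$, the leading degree $m+2\geqslant5$ is odd. For even $m$, the coefficient $[X^{m+1}]R_T=a_1+\sum_{x\in T}x$ would have to vanish for every $(k-2)$-set $T$. This fails when $k\geqslant3$, because swapping one element of $T$ changes the sum. This is the paper's proof, and no permutation-polynomial classification is needed.

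The route you actually propose does not work as written, for three reasons.
\begin{enumerate}
\item $\mathcal H_T$ and $\mathcal H_{T'}$ do not share affine points. They come from the graphs of the different polynomials $R_T$ and $R_{T'}$, so comparing hyperovals across different $T$ has no content.
\item Forcing $Q_{T,u}$ to be exceptional via Carlitz--Wan/Weil-type bounds needs $q$ roughly of order $d^4$ with $d=m+1$. The hypothesis $\sqrt q\geqslant12(m+1)\log(2m+2)$ only gives $q$ of order $d^2\log^2 d$, so this is unavailable for large $m$.
\item Your description of characteristic-two exceptional polynomials is wrong. For example, $X^d$ with $\gcd(d,q-1)=1$ and Dickson polynomials are exceptional but neither additive nor of the form $X^{2^r}$. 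So the ``binomial sparsity'' conclusion and the final contradiction are unsupported, and the fallback via second differences is not an argument.
\end{enumerate}
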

\begin{proof}
Suppose otherwise, and fix $T$ as in
Lemma~\ref{lem:uniform-projection}.  Put
$h=k-1<s/8$.  Since $s\geqslant12tL>24t$ and $h<s/8$,
we have $q=s^2>th+1$.  Thus every $Q_{T,u}$,
$u\notin T$, is a permutation of $\mathbb F_q$.

The arc $\mathcal A_T$ has size $q+2-h>q-s+1$,
so Lemma~\ref{lem:binary-large-arc-unique} places it
in a unique hyperoval $\mathcal H$.  For each $b\in T$,
the augmented set $\mathcal A_T\cup\{(b,R_T(b))\}$
is also an arc: collinearity with two existing affine
points contradicts injectivity of $Q_{T,u}$ on the
whole field, and collinearity with $V$ is impossible
because the first coordinates are distinct.
By the same lemma, this augmented arc has a hyperoval
extension; uniqueness for $\mathcal A_T$ forces it
to be $\mathcal H$.  Consequently
\begin{equation*}
 \{(x,R_T(x)):x\in\mathbb F_q\}\cup\{V\}
 \subseteq\mathcal H.
\end{equation*}
The remaining point of $\mathcal H$ lies at infinity:
a second affine point with an existing first coordinate
would be collinear with $V$ and a graph point.
After an output shear, an input translation, and a
nonzero output scaling, $R_T$ therefore becomes an
o-polynomial.  The elementary parity property
\cite[Lemma~2.1]{CaullerySchmidt2015} implies that
all coefficients of positive odd powers greater than
one in $R_T$ vanish (the output shear affects only
its linear term).

If $m$ is odd, the leading term of $R_T$ has odd
degree $m+2\geqslant5$, a contradiction.  If $m$ is even,
comparison of leading coefficients in
$f-L_T=P_TR_T$ gives
\[
 [X^{m+1}]R_T=a_1+\sum_{x\in T}x,
\]
where $a_1$ is the coefficient of $X^{k+m-1}$
in the monic $f$.  The left side must vanish for
every $(k-2)$-set $T$.  Since $k\geqslant3$, replacing
one point of $T$ by a different point of
$\mathbb F_q$ changes this sum, which is impossible.
\end{proof}

\begin{proof}[Proof of Theorem~\ref{thm:main}~(\ref{it:uniform-exclusion})]
Kaipa's high-dimensional classification
\cite[Theorem~1]{Kaipa2017} handles
$k\geqslant\lfloor(q-1)/2\rfloor$.  At full length $n=q$,
the only rational-normal-curve point outside the
evaluation set $\mathbb F_q$ is the point at infinity,
which represents interpolation degree $k$.
Its additional nucleus case in characteristic two requires $n=k+3$,
hence $q=k+3$, incompatible with
$k+m\leqslant q-1$ and $m\geqslant3$.
For the remaining dimensions, $k+1\leqslant\lfloor q/2\rfloor$.
Either $k+1\geqslant n_0$, when
Corollary~\ref{cor:uniform-middle} applies, or
$k+1<n_0$, when Propositions~\ref{prop:uniform-low-odd}
and~\ref{prop:uniform-low-even} apply in their
respective characteristics.  The remaining case allowed by Theorem~\ref{thm:main}~(\ref{it:uniform-exclusion})
is $q$ even, $k=2$ and $m$ odd.  It is excluded without
any field-size condition by Proposition~\ref{thm:dimtwo-general}~(\ref{it:dimtwo-binary-odd-excess}),
whose proof below is independent of the present argument.
This establishes \eqref{eq:mainbound} in every asserted case.
\end{proof}

\subsection{The case \texorpdfstring{$k=2$}{k=2}}

\begin{proposition}[Classification for $k=2$]\label{thm:dimtwo-general}
Suppose $k=2$, $3\leqslant m\leqslant q-3$, and $\deg f_u=m+2$.
\begin{enumerate}[label=(\arabic*),ref=\arabic*,leftmargin=2.2em]
\item\label{it:dimtwo-odd-field} If $q$ is odd, then $u$ is not a deep hole, without a size
restriction on $q$.
\item\label{it:dimtwo-binary-odd-excess} If $q$ is even and $m$ is odd, then $u$ is not a deep hole,
again without a size restriction.
\item\label{it:dimtwo-quartic-classification} If $q=2^e$ and $q>16(m+2)^4$, then $u$ is a deep hole
if and only if its interpolation polynomial belongs to
\eqref{eq:segrefamily} or \eqref{eq:frobfamily}.
\end{enumerate}
\end{proposition}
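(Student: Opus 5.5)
For $k=2$ I would reduce everything to finite geometry via Lemma~\ref{lem:uniform-projection} with $T=\varnothing$. If $u$ is a deep hole, then $R_T=f$ (monic after scaling, $\deg f=m+2$), the set $\mathcal A=\{(x,f(x)):x\in\mathbb F_q\}\cup\{V\}$ is an arc of size $q+1$, and every $Q_u(X)=(f(X)-f(u))/(X-u)$ is injective on $\mathbb F_q\setminus\{u\}$. Conversely, a deep hole for $k=2$ means exactly that no three graph points are collinear. So deep holes at $k=2$ are precisely polynomials $f$ of degree $m+2$ whose graph together with $V$ is a $(q+1)$-arc. The three parts then amount to statements about such arcs.

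For part~(\ref{it:dimtwo-odd-field}), with $q$ odd, Segre's theorem says that every $(q+1)$-arc is a conic. The conic contains $V$ and more than four points, so it is nondegenerate. Substituting $Y=f(X)$ into $AX^2+BXY+CX+DY+E=0$ gives a polynomial of degree at most $m+3<q$ vanishing on all of $\mathbb F_q$, hence identically zero. The coefficient comparison at the end of Proposition~\ref{prop:uniform-low-odd}, using $\deg f\geqslant5$, then forces all of $A,B,C,D,E$ to vanish, a contradiction. This needs no size hypothesis, because Segre's theorem is unconditional and $m+3\leqslant q$ is guaranteed by $m\leqslant q-3$.

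For part~(\ref{it:dimtwo-binary-odd-excess}), with $q$ even, every $(q+1)$-arc extends to a hyperoval by adding its nucleus. If the nucleus were affine, some vertical line would contain two points of the hyperoval, one graph point and the nucleus, together with $V$, which is three points. So the nucleus is $(\infty)$-type, namely the point $(1:c:0)$. An output shear $f\mapsto f-cX$ moves it to $(1:0:0)$. After a translation and scaling, $f$ becomes an o-polynomial. The parity property \cite[Lemma~2.1]{CaullerySchmidt2015} then kills every odd exponent greater than one. Since the leading exponent $m+2$ is odd and at least $5$, this is a contradiction, again with no size restriction. This is the argument of Proposition~\ref{prop:uniform-low-even} at $T=\varnothing$; the permutation property comes directly from the arc rather than from the value-set bound.

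Part~(\ref{it:dimtwo-quartic-classification}) is the main obstacle. By the above, deep holes correspond to o-polynomials $g$ of degree $d=m+2$ (up to affine normalisation, with $B,C$ absorbed into the linear and constant terms). I would invoke the Caullery--Schmidt classification of o-polynomials of degree $d$ with $q>16d^4$, i.e.\ $q>16(m+2)^4$. It says that such o-polynomials are, up to equivalence, $X^{2^r}$ with $\gcd(r,e)=1$, or $X^6$ with $e$ odd. For the converse direction I would check directly that both families give arcs: $X^{2^r}$ is the translation hyperoval condition, and $X^6$ is the Segre o-polynomial, which requires $e$ odd. The delicate point is bookkeeping. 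The equivalences allowed in that classification must match exactly the transformations preserving deep-hole status, namely $f\mapsto Af(X+b)+BX+C$. Projective equivalences of hyperovals that do not fix $V$ and the nucleus could change the degree, so I would restrict to the normalised form. I would also note that expanding $(X+b)^{2^r}$ gives only additive terms, so its translates are absorbed into $BX+C$, whereas $(X+b)^6$ keeps $b$ as a genuine parameter. This yields exactly \eqref{eq:segrefamily} and \eqref{eq:frobfamily}, with $m=4$ and $m=2^r-2$ respectively.
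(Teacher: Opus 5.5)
Parts (2) and (3) of your proposal follow the paper's proof step for step: the nucleus lies at infinity, an output shear gives the $o$-polynomial form, and the parity lemma excludes odd $m$; then Caullery--Schmidt below $q^{1/4}/2$ and the classical $o$-polynomial criteria give the converse. The difference is in part (1), and there your version has a boundary gap.

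You assert that the substituted polynomial $(BX+D)f(X)+AX^2+CX+E$ has degree at most $m+3<q$. The hypothesis $m\leqslant q-3$ only gives $m+3\leqslant q$, as you note yourself a few lines later. At $m=q-3$ with $B\neq0$, this polynomial has degree exactly $q$. An example is $q=7$, $m=4$, $\deg f=6$. Vanishing on all of $\mathbb F_q$ then only makes it a scalar multiple of $X^q-X$, not zero. The coefficient comparison you borrow from Proposition~\ref{prop:uniform-low-odd} needs strictly more roots than the degree. That is available there (the roots outnumber $m+3$), but not at this boundary.

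Degree considerations alone cannot close the case. For instance, $f=X^{q-1}-1$ satisfies $Xf(X)=X^q-X$, so its graph lies on the conic $XY=0$. That conic is degenerate, which shows nondegeneracy must actually be used, not merely asserted.

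The repair is the paper's route: use nondegeneracy before counting degrees. If $B\neq0$, put $x_0=-D/B$. The graph point above $x_0$ forces $Ax_0^2+Cx_0+E=0$, so the whole vertical line $X=x_0$ lies on the conic, which is impossible for a nondegenerate conic. Hence $B=0$. The polynomial $Df+AX^2+CX+E$ then has degree at most $m+2\leqslant q-1$ and vanishes on $\mathbb F_q$, so it is identically zero. Comparing coefficients, using $\deg f\geqslant5$, gives $D=0$ and then $A=C=E=0$, which is the required contradiction.

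Equivalently, the paper notes that $D\neq0$, so the graph is that of a polynomial of degree at most two. This contradicts $\deg f=m+2\geqslant5$, since $f$ is the interpolation representative of degree below $q$.
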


\begin{proof}[Proof of Proposition~\ref{thm:dimtwo-general}]
Let $k=2$.  If $f$ is a deep hole, the
$q$ graph points $(x,f(x))$, together with the
vertical point $V$, form an oval in $\mathrm{PG}(2,q)$.

If $q$ is odd, Segre's oval theorem says that this
oval is a nondegenerate conic.  Write its affine
equation as
\[
 AX^2+BXY+CX+DY+E=0.
\]
If $B\ne0$, at $X=-D/B$ the equation becomes
independent of $Y$.  Since the graph contains a point
above each $X$, the entire vertical line would be a
component of the conic, contradicting nondegeneracy.
Thus $B=0$; nondegeneracy forces $D\ne0$.
The graph is that of a polynomial of degree at most
two.  Since its interpolation representative has
degree below $q$, this contradicts $\deg f=m+2\geqslant5$.

Now let $q=2^e$.  By the standard nucleus theorem
for even-order ovals, the oval extends to a hyperoval.
Its extra point lies at infinity since a second
affine point above a fixed $X$ would give a triple
collinear with $V$.  An output shear and scaling
put its graph in the standard o-polynomial form.
The parity property of o-polynomials
\cite[Lemma~2.1]{CaullerySchmidt2015} excludes
odd $m$, because then $m+2$ is odd and at least five.

Suppose further that $q>16(m+2)^4$.
The o-polynomial has degree $m+2<q^{1/4}/2$.
By Caullery--Schmidt~\cite[Theorem~1.2]{CaullerySchmidt2015}
it is equivalent, by an input translation followed
by normalization, to $X^6$ or to $X^{2^r}$.
Undoing this translation, the output shear, and the
normalization gives precisely
\eqref{eq:segrefamily} and \eqref{eq:frobfamily}.
The known o-polynomial criteria are $e$ odd for
$X^6$ and $\gcd(r,e)=1$ for $X^{2^r}$.
Conversely, these o-polynomial families, together
with affine output terms, define ovals and hence
deep holes at dimension two.  The two conditions
are also necessary by the cited classification.
\end{proof}

\begin{proof}[Proof of Theorem~\ref{thm:main}~(\ref{it:uniform-classification})]
Put $x=m+1\geqslant4$ and define
\[
 G(x)=(x+1)^2-3x\log(2x).
\]
Then $G(4)=25-12\log8>0$,
$G'(4)=7-3\log8>0$, and
$G''(x)=2-3/x>0$ for $x\geqslant4$.
Consequently $G(x)>0$ throughout this interval, or equivalently,
\[
 4(m+2)^2>12(m+1)\log(2m+2).
\]
Under the field-size bound in Theorem~\ref{thm:main}~(\ref{it:uniform-classification}), we thus have
$\sqrt q>4(m+2)^2>12(m+1)\log(2m+2)$.
Theorem~\ref{thm:main}~(\ref{it:uniform-exclusion}) excludes deep holes for all
odd $q$ and, in characteristic two, for $k\geqslant3$ or odd $m$.
For the remaining case, $q$ even, $k=2$, and $m$ even,
Proposition~\ref{thm:dimtwo-general}~(\ref{it:dimtwo-quartic-classification}) gives precisely
\eqref{eq:segrefamily} and \eqref{eq:frobfamily}, including the
converses.  This proves part~(\ref{it:uniform-classification}).
\end{proof}

\section{Permutation-polynomial obstructions in small dimensions}
\label{sec:permutation-bridge}

We give a complementary argument in small dimensions which applies to
any $m\geqslant3$.  It complements the uniform bounds and classifications for $m=3,\ldots,7$ in Theorem~\ref{thm:main}, making explicit how
value-set estimates and the structure of exceptional polynomials yield
sets on which agreement is attained, even when exceptional permutation
polynomials of the requisite degree exist.  We use the
Carlitz--Wan theorem in the form proved by Lenstra
\cite{CohenFried1995}, Fan's effective bound
\cite{FanNonExceptional2018}, and the prime-degree classification
\cite{GuralnickRosenbergZieve2010}.

Put $d:=m+1$ and define
\begin{equation}\label{eq:fan-condition}
\mathsf F_d(q):\quad
q+1>(d-2)(d-3)\frac{\lfloor2\sqrt q\rfloor}{2}+2(d-1),
\qquad c_d(q):=\left\lceil\frac{q-1}{d}\right\rceil.
\end{equation}
Fan's theorem says that, under $\mathsf F_d(q)$, every degree-$d$
permutation polynomial over $\mathbb F_q$ is exceptional.
Carlitz--Wan says that an exceptional polynomial has degree prime
to $q-1$.  We also use Wan's value-set bound
\cite{MullenWanWangValues2014}: if a degree-$d$ polynomial $G$
does not permute $\mathbb F_q$, then
\begin{equation}\label{eq:wan-values}
 \lvert G(\mathbb F_q)\rvert \leqslant q-c_d(q).
\end{equation}
For a received polynomial $f$, let $\mathcal W_{k+1}(f)$ be the
family of $(k+1)$-element sets on which $f$ agrees with a polynomial
of degree less than $k$.

\begin{lemma}[Collision quotients]\label{lem:collision-quotients}
Let $f\in\mathbb F_q[X]$ have degree $k+m<q$.
For $T\subseteq\mathbb F_q$ with $\lvert T\rvert=k-1$, write
\[
 f=P_TG_T+L_T,\qquad
 P_T(X)=\prod_{t\in T}(X-t),\qquad \deg L_T<k-1.
\]
Then $\deg G_T=d$.  Every pair of distinct $x,y\notin T$ with
$G_T(x)=G_T(y)$ produces a set $T\cup\{x,y\}\in\mathcal W_{k+1}(f)$.
Moreover, the number of such collision incidences, summed over all
$T$ of size $k-1$, equals
\[
 \binom{k+1}{2}\lvert \mathcal W_{k+1}(f)\rvert.
\]
\end{lemma}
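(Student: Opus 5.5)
The proof has three parts: a degree count, a collision-to-agreement argument, and a double count. Throughout, the one fact used is the following. A set $S$ of size $k+1$ lies in $\mathcal W_{k+1}(f)$ exactly when the unique polynomial of degree at most $k$ interpolating $f$ on $S$ has degree below $k$.

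\emph{Degree.} Division of $f$ by $P_T$, which is monic of degree $k-1$, gives $\deg G_T=(k+m)-(k-1)=m+1=d$. The leading coefficient of $G_T$ equals that of $f$.

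\emph{Collisions give agreement sets.} Let $x,y\notin T$ be distinct with $G_T(x)=G_T(y)=c$. Put $\ell:=L_T+cP_T$. Then $\deg\ell\leqslant k-1<k$, because $\deg P_T=k-1$ and $\deg L_T<k-1$. On $T$ we have $f=L_T=\ell$, since $P_T$ vanishes there. At $z\in\{x,y\}$ we have $f(z)=P_T(z)c+L_T(z)=\ell(z)$. Hence $f$ agrees with a polynomial of degree below $k$ on the $k+1$ distinct points of $T\cup\{x,y\}$, so this set lies in $\mathcal W_{k+1}(f)$.

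\emph{Double count.} Define a collision incidence as a pair $(T,\{x,y\})$ with $\lvert T\rvert=k-1$, $\{x,y\}$ disjoint from $T$, and $G_T(x)=G_T(y)$. The map $(T,\{x,y\})\mapsto S=T\cup\{x,y\}$ lands in $\mathcal W_{k+1}(f)$ by the previous step. It is injective on incidences once $S$ and the choice of the $2$-subset $\{x,y\}\subseteq S$ are fixed. It therefore suffices to show that, for each $S\in\mathcal W_{k+1}(f)$, every one of the $\binom{k+1}{2}$ splittings $S=T\sqcup\{x,y\}$ is a collision.

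So fix $S\in\mathcal W_{k+1}(f)$ and a polynomial $\ell$ with $\deg\ell<k$ and $f=\ell$ on $S$. Take any splitting $S=T\sqcup\{x,y\}$. Since $f-\ell$ vanishes on $T$, it is divisible by $P_T$.

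\emph{Main obstacle.} The only real work is to identify $G_T(x)$ from this divisibility. I would divide $\ell$ by $P_T$: $\ell=cP_T+L'$ with $c$ a constant (as $\deg\ell\leqslant k-1=\deg P_T$) and $\deg L'<k-1$. Then
\[
 f=P_T(G_T-c)+(L_T-L').
\]
Since $P_T\mid f-\ell$, it follows that $P_T\mid L_T-L'$. This difference has degree below $k-1=\deg P_T$, so $L'=L_T$. Hence $f-\ell=P_T(G_T-c)$. Evaluating at $x$ and at $y$, where $f-\ell$ vanishes but $P_T$ does not, gives $G_T(x)=c=G_T(y)$.

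Thus every splitting of every $S\in\mathcal W_{k+1}(f)$ is a collision incidence, and summing over $\mathcal W_{k+1}(f)$ yields exactly $\binom{k+1}{2}\lvert\mathcal W_{k+1}(f)\rvert$ incidences.
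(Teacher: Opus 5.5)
Your proof is correct and follows the paper's argument. A collision yields the code polynomial $L_T+cP_T$; conversely, any agreeing $\ell$ of degree below $k$ must equal $L_T+cP_T$, so each of the $\binom{k+1}{2}$ splittings of an agreement set is a collision. One small slip: the displayed identity in your ``main obstacle'' step should read $f-\ell=P_T(G_T-c)+(L_T-L')$, not $f=\cdots$, and that corrected form is what your next sentence actually uses.
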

\begin{proof}
If $G_T(x)=G_T(y)=\lambda$, the code polynomial
$L_T+\lambda P_T$ agrees with $f$ at all points of
$T\cup\{x,y\}$.  Conversely, if a polynomial $g$ of degree
less than $k$ agrees with $f$ on a $(k+1)$-set $S$, then
for every $(k-1)$-set $T\subseteq S$ the polynomial $g-L_T$ is
a scalar multiple of $P_T$; thus the other two points form a
$G_T$-collision.  There are exactly $\binom{k+1}{2}$ choices
of $T$ in each agreement set $S$.
\end{proof}

\begin{theorem}[A uniform low-dimensional obstruction]
\label{thm:low-dimension-gcd}
Let $m\geqslant3$, $d=m+1$, $q$ be a prime power, and
$2\leqslant k\leqslant q-m-1$.  If
\[
 \mathsf F_d(q),\qquad \gcd(d,q-1)>1,
 \qquad q>d(k-1)+1,
\]
then no received word of interpolation degree $k+m$ is a deep
hole of $\operatorname{RS}_q(k)$.  More precisely,
\begin{equation*}
 \lvert \mathcal W_{k+1}(f)\rvert \geqslant
 \frac{\binom q{k-1}}{\binom{k+1}{2}}
 \bigl(c_d(q)-k+1\bigr)>0.
\end{equation*}
\end{theorem}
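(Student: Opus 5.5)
The plan is to combine Lemma~\ref{lem:collision-quotients} with a lower bound on the number of collisions of each quotient $G_T$. Fix a $(k-1)$-set $T$, so that $\deg G_T=d=m+1$. The first step shows that $G_T$ is not a permutation of $\mathbb F_q$. If it were, then $\mathsf F_d(q)$ and Fan's theorem would make it exceptional. Carlitz--Wan (in Lenstra's form) would then give $\gcd(d,q-1)=1$, contradicting the hypothesis $\gcd(d,q-1)>1$. One small point to check is that Fan's theorem applies in the form quoted: it requires a degree-$d$ polynomial with $d<q$, which holds since $d=m+1\leqslant q-k$. If $p\mid d$, the Carlitz--Wan statement still forbids exceptional polynomials with $\gcd(d,q-1)>1$, so no separate treatment is needed.

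Next, Wan's value-set bound \eqref{eq:wan-values} gives $\lvert G_T(\mathbb F_q)\rvert\leqslant q-c_d(q)$. Restricting to $\mathbb F_q\setminus T$, a set of $q-k+1$ points, the image has size at most $q-c_d(q)$. The number of unordered collision pairs $\{x,y\}\subseteq\mathbb F_q\setminus T$ with $G_T(x)=G_T(y)$ is at least (number of points) minus (number of values). Indeed, a fibre of size $r\geqslant1$ contributes $\binom r2\geqslant r-1$ pairs. Hence each $T$ yields at least
\[
(q-k+1)-(q-c_d(q))=c_d(q)-k+1
\]
collision incidences. This number is positive because $q>d(k-1)+1$ gives $(q-1)/d>k-1$, so $c_d(q)\geqslant k$.

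Finally, sum over all $\binom q{k-1}$ choices of $T$. By the counting identity in Lemma~\ref{lem:collision-quotients}, the total equals $\binom{k+1}2\lvert\mathcal W_{k+1}(f)\rvert$, which gives the displayed inequality. Positivity means some $(k+1)$-set lies in $\mathcal W_{k+1}(f)$, so there are $k+1$ agreements with a codeword and $u$ is not a deep hole (as in Proposition~\ref{prop:agreement}).

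The only step needing care is the bookkeeping that each collision pair for $T$ gives a distinct incidence counted once in the lemma's identity. Those pairs are unordered and disjoint from $T$, which matches the lemma's convention, so I expect no real obstacle.
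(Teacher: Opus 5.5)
Your proposal is correct and matches the paper's proof step for step. Fan's bound together with Carlitz--Wan makes every quotient $G_T$ nonpermuting. Wan's value-set bound and the fiber inequality $\binom r2\geqslant r-1$ then give at least $c_d(q)-k+1$ collisions for each $T$, which is positive because $q>d(k-1)+1$ forces $c_d(q)\geqslant k$. Summing over the $\binom q{k-1}$ sets $T$ with the exact multiplicity $\binom{k+1}{2}$ from Lemma~\ref{lem:collision-quotients} yields the stated bound. Your two extra checks, that $d\leqslant q-k<q$ and that the collision incidences match the lemma's bijection, are sound and only spell out what the paper leaves implicit.
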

\begin{proof}
Under the first two hypotheses Fan's bound and Carlitz--Wan
exclude degree-$d$ permutation polynomials.  Thus every $G_T$ in
Lemma~\ref{lem:collision-quotients} is nonpermuting.  Its restriction
to $\mathbb F_q\setminus T$, which has $q-k+1$ elements, has at
least
\[
 (q-k+1)-\bigl(q-c_d(q)\bigr)=c_d(q)-k+1
\]
colliding unordered pairs: for fibers of sizes $r_i$ one has
$\sum_i\binom{r_i}{2}\geqslant\sum_i(r_i-1)$.
The strict inequality $q>d(k-1)+1$ gives $c_d(q)\geqslant k$.
Summing over all $T$ and applying the exact multiplicity
from Lemma~\ref{lem:collision-quotients} proves the claim.
\end{proof}

For odd $q$, every odd $m$ has $d=m+1$ even, so that the gcd
condition is automatic.  This uniform argument includes the
low-dimensional arguments for $m=5$ and $m=7$,
without invoking degree-specific permutation-polynomial lists.
For odd $m\geqslant7$, an explicit sufficient field-size condition for
$\mathsf F_{m+1}(q)$ is
\begin{equation}\label{eq:fan-simple-bound}
 q\geqslant (m-1)^2(m-2)^2+2m.
\end{equation}
Indeed, writing $A=(d-2)(d-3)\geqslant4(d-1)$ and
$\lfloor2\sqrt q\rfloor=2A+j$, the case $j=0$ follows
from $q\geqslant A^2+2(d-1)$, whereas $j\geqslant1$ gives
\[
 q+1-A\frac{\lfloor2\sqrt q\rfloor}{2}-2(d-1)
 \geqslant \frac{Aj}{2}+\frac{j^2}{4}+1-2(d-1)>0.
\]
This is an elementary consequence of Fan's published estimate;
it does not improve the known degree-specific classifications.

The preceding theorem only excludes degree-$d$ permutations by a
numerical and congruence obstruction.  For $d$ an odd prime,
permutation polynomials can indeed exist, but a \emph{family}
of divided differences of one fixed polynomial satisfies a stronger restriction.

\begin{lemma}[Bounds for exceptional prime-degree secant quotients]
\label{lem:prime-secant-rigidity}
Let $q=p^e$ be odd, and let $d\geqslant5$ be an odd prime with
$p\nmid d$.  Put $C_d:=(d+1)(d+2)/3\in\mathbb Z$.
For a monic polynomial $R\in\mathbb F_q[X]$ of degree $d+1$, put
\[
 Q_v(X):=\frac{R(X)-R(v)}{X-v} \qquad(v\in\mathbb F_q).
\]
If $p\nmid(d+1)$, at most four parameters $v$ give an
exceptional $Q_v$.  If $p\nmid C_d$, at most three do.
In particular, the latter assertion covers $p=3$ and
$3\parallel(d+1)$, even though $p\mid(d+1)$.
\end{lemma}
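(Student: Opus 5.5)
The plan is to combine the prime-degree classification of exceptional polynomials \cite{GuralnickRosenbergZieve2010} with an explicit computation of the low-order coefficients of $Q_v$ as polynomials in $v$. Since $d$ is prime and $p\nmid d$, an exceptional polynomial of degree $d$ is tame. By the cited classification it is then, up to composition with linear polynomials on both sides, either $X^d$ or a Dickson polynomial $D_d(X,\alpha)$ with $\alpha\in\mathbb F_q$; here $\alpha=0$ recovers $X^d$. Composition on the left with a linear polynomial does not affect the conditions below, so I will normalize on the right only. Because $R$ is monic, $Q_v$ is monic of degree $d$. I translate by $\beta_v:=-(\text{coefficient of }X^{d-1}\text{ in }Q_v)/d$, which is legitimate since $p\nmid d$, to remove the $X^{d-1}$ term. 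Call the result $\widetilde Q_v(Y)=Q_v(Y+\beta_v)=Y^d+\sum_{j\leqslant d-2}c_j(v)Y^j$.

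A monic depressed polynomial of the form $\lambda^{-d}D_d(\lambda Y,\alpha)+\text{const}$ must satisfy two conditions, because $D_d(Y,\alpha)=Y^d-d\alpha Y^{d-2}+\tfrac{d(d-3)}{2}\alpha^2Y^{d-4}-\cdots$. First,
\[
 c_{d-3}(v)=0 .
\]
Second, since $d\geqslant5$ so that $Y^{d-4}$ is a genuine coefficient,
\[
 2d\,c_{d-4}(v)-(d-3)\,c_{d-2}(v)^2=0 ,
\]
an identity that is homogeneous under $Y\mapsto\lambda Y$. So every exceptional parameter $v$ is a common root of these two polynomials in $v$, and it suffices to show that one of them is a nonzero polynomial of small degree.

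To compute them, write $R=X^{d+1}+\sum_{i\leqslant d}r_iX^i$. Then
\[
 Q_v(X)=\sum_{j=0}^{d}X^j\Bigl(v^{d-j}+\sum_{i\leqslant d}r_iv^{i-1-j}\Bigr),
 \qquad \beta_v=-(v+r_d)/d .
\]
A weight count, with $v$ and $X$ of weight $1$ and $r_i$ of weight $d+1-i$, shows that $c_{d-\ell}(v)$ is a polynomial in $v$ of degree at most $\ell$. Its top coefficient comes only from the pure part $(X^{d+1}-v^{d+1})/(X-v)$ evaluated at $X=Y-v/d$. Expanding $\sum_j\binom{d+1}{j+1}(-v/d)^{\,\cdot}$-type sums, I expect the leading coefficient of $c_{d-3}$ to be a nonzero rational multiple of $(d+1)(d+2)/d^2$, and more precisely to have numerator $C_d=(d+1)(d+2)/3$ up to a unit at $p$. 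In that case, when $p\nmid C_d$, the cubic $c_{d-3}(v)$ is nonzero and has at most three roots, which gives the second assertion. Similarly, the quartic in $v$ from the second condition should have leading coefficient equal to $(d+1)$ times a $p$-unit; this needs checking. When $p\nmid(d+1)$ it is then a nonzero polynomial of degree $4$, giving at most four exceptional $v$. The remark about $p=3$, $3\parallel(d+1)$ is then just the observation that $3\nmid(d+1)(d+2)/3$ in that case, because $3\nmid d+2$.

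The main obstacle is this explicit leading-coefficient computation, together with confirming that no spurious factor of $p$ appears. I would first verify the formulas by evaluating the top-weight parts of $c_{d-2},c_{d-3},c_{d-4}$ in closed form, using $\sum_j\binom{j}{\ell}=\binom{d+1}{\ell+1}$-type identities. I would then check them symbolically for small $d$ ($5,7,11$) against the claimed $C_d$ and $d+1$ factors. A secondary point to check is that the Guralnick--Rosenberg--Zieve list in the tame prime-degree case really consists only of the cyclic and dihedral types, namely $X^d$ and Dickson polynomials, over $\mathbb F_q$, with no twisted forms escaping the normalization above.
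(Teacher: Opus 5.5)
\textbf{Gap in the four-parameter assertion.} Your strategy is the paper's: center $Q_v$ using $p\nmid d$, use that the centered forms of $X^d$ and $D_d(X,\alpha)$ are odd, and bound the roots of the cubic $c_{d-3}(v)$ and the quartic $2d\,c_{d-4}(v)-(d-3)c_{d-2}(v)^2$.

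Your prediction for the cubic is correct. Putting $v=1$ and expanding $\sum_{j=0}^{d}(Y-1/d)^j$ gives exactly $[v^3]c_{d-3}=C_d/d^2$. So the three-parameter assertion and the remark on $p=3$, $3\parallel(d+1)$ go through as you describe.

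The gap is your expectation for the quartic. The same expansion gives $[v^2]c_{d-2}=(d+1)/(2d)$ and $[v^4]c_{d-4}=(d+1)(3d^2+7d+6)/(8d^3)$, hence
\[
 [v^4]\bigl(2d\,c_{d-4}-(d-3)c_{d-2}^2\bigr)=\frac{(d+1)(d+3)(2d+3)}{4d^2}.
\]
This is not $(d+1)$ times a $p$-unit. It vanishes whenever $p\mid(d+3)(2d+3)$, which is compatible with $p\nmid d(d+1)$; for instance $d=7$, $p=5$, or $d=5$, $p=13$. In that situation your proof of the first assertion collapses. For $R=X^{d+1}$ each $c_{d-\ell}(v)$ is a constant multiple of $v^\ell$, so the quartic condition holds identically and gives no bound at all.

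\textbf{The fix.} You need a case split on whether $p$ divides $C_d$.
\begin{itemize}
\item If $p\nmid C_d$, the cubic already gives at most three exceptional parameters.
\item If $p\mid C_d$ but $p\nmid(d+1)$, then $p\nmid(d+3)(2d+3)$. For $p\ne3$: $p\mid d+3$ would force $(d+1)(d+2)\equiv2 \pmod p$, and $p\mid 2d+3$ would force $(d+1)(d+2)\equiv-1/4 \pmod p$, each contradicting $p\mid C_d$. For $p=3$: neither factor is divisible by $3$, since $3\nmid d$.
\end{itemize}
Since $4d^2$ is a $p$-unit, the quartic then has exact degree four. This is how the paper obtains the four-parameter bound.

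Both branches are genuinely needed for the first assertion. At $d=7$, $p=5$ the quartic degenerates and only the cubic works. At $d=7$, $p=3$ we have $3\mid C_7=24$, so the cubic degenerates, while the quartic's leading coefficient is $\equiv1\pmod 3$.
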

\begin{proof}
Write
\[
 R=X^{d+1}+aX^d+bX^{d-1}+cX^{d-2}+eX^{d-3}+\cdots.
\]
The coefficients of $X^{d-1},X^{d-2},X^{d-3},X^{d-4}$ in $Q_v$
are respectively
\[
 A=v+a,\quad B=v^2+av+b,\quad
 C=v^3+av^2+bv+c,\quad D=v^4+av^3+bv^2+cv+e.
\]
Since $p\nmid d$, the translated polynomial $Q_v(Y-A/d)$ has
zero $Y^{d-1}$ coefficient; denote its coefficients of
$Y^{d-2},Y^{d-3},Y^{d-4}$ by $b_2(v),b_3(v),b_4(v)$.
The tame prime-degree exceptional classification
\cite[Corollary~1.3(i)--(ii)]{GuralnickRosenbergZieve2010}
identifies such a polynomial, up to degree-one changes of
variables, with $X^d$ or a Dickson polynomial $D_d(X,\alpha)$.
The centered form of either is odd, and
\[
 b_3(v)=0,\qquad
 \Delta(v)=2d\,b_4(v)-(d-3)b_2(v)^2=0
\]
are necessary.  Here the second relation follows from
$D_d(X,\alpha)=X^d-d\alpha X^{d-2}+
\frac{d(d-3)}2\alpha^2X^{d-4}+\cdots$ and also holds
for $\alpha=0$.
Direct expansion of the translation gives
\begin{equation*}
 [v^3]b_3(v)=\frac{C_d}{d^2},\qquad
 [v^4]\Delta(v)=\frac{(d+1)(d+3)(2d+3)}{4d^2}.
\end{equation*}
These identities are valid in odd characteristic whenever
$p\nmid d$: indeed, the intermediate expression for $b_3$
uses the integer coefficient $(d-1)(d-2)/3$, so reducing
modulo $3$ introduces no division by $3$.
If $p\nmid C_d$, the polynomial $b_3(v)$ has degree three,
and therefore at most three exceptional parameters exist.
For the remaining four-parameter assertion, assume
$p\nmid(d+1)$ and $p\mid C_d$.  The leading coefficient of
$\Delta(v)$ is nonzero unless $p\mid(d+3)(2d+3)$.
If $p\ne3$ divides either factor, then $(d+1)(d+2)$ is
respectively $2$ or $-1/4$ modulo $p$, so $p\nmid C_d$,
a contradiction.  If $p=3$, the condition
$p\nmid d(d+1)$ forces $d\equiv1\pmod3$, whence neither
$d+3$ nor $2d+3$ vanishes modulo $3$.
Thus $\Delta$ is a nonzero polynomial of degree four,
giving at most four exceptional parameters.  Finally,
$p=3$ and $3\parallel(d+1)$ imply $p\nmid C_d$,
as claimed.
\end{proof}

\begin{theorem}[Prime-degree small-dimensional exclusion]
\label{thm:prime-secant-exclusion}
Let $m\geqslant4$ be even, $d=m+1\geqslant5$ an odd prime, and
$q=p^e$ odd.  Suppose
\[
 2\leqslant k\leqslant q-m-1,\quad \mathsf F_d(q),\quad
 q>d(k-1)+1,\quad p\nmid d,\quad
 \bigl(p\nmid(d+1)\ \text{or}\ p\nmid C_d\bigr).
\]
Then every received polynomial of degree $k+m$ is not a deep hole,
without any condition on $\gcd(d,q-1)$.  Moreover,
\begin{equation}\label{eq:prime-secant-abundance}
 \lvert \mathcal W_{k+1}(f)\rvert \geqslant
 \frac{\binom q{k-2}(q-k-2)(c_d(q)-k+1)}
 {3\binom{k+1}{3}}>0.
\end{equation}
When $p\nmid C_d$, the factor $q-k-2$ can be replaced by
$q-k-1$.  In particular, the hypotheses permit
$p=3$ and $3\parallel(d+1)$.
\end{theorem}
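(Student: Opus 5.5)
The plan is to run the collision-counting argument of Theorem~\ref{thm:low-dimension-gcd} again. The difference is that permutation quotients can no longer be excluded outright. Instead, Lemma~\ref{lem:prime-secant-rigidity} will bound how many of them are exceptional, using a two-level decomposition $T=T'\cup\{v\}$.

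Fix a $(k-2)$-set $T'\subseteq\mathbb F_q$ and divide: $f=P_{T'}R_{T'}+L_{T'}$ with $\deg L_{T'}<k-2$. Then $R_{T'}$ has degree $d+1=m+2$, and after scaling by the leading coefficient of $f$ we may take it monic. For each $v\notin T'$, set $T=T'\cup\{v\}$. Exactly as in the proof of Lemma~\ref{lem:uniform-projection}, the quotient $G_T$ of Lemma~\ref{lem:collision-quotients} equals the divided difference $Q_v(X)=(R_{T'}(X)-R_{T'}(v))/(X-v)$ up to the leading constant, since the lower-degree correction terms are absorbed into $L_T$. Hence collisions of $G_T$ on $\mathbb F_q\setminus T$ coincide with collisions of $Q_v$ there.

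Under $\mathsf F_d(q)$, Fan's theorem says every degree-$d$ permutation polynomial is exceptional. So $Q_v$ is either nonpermuting or exceptional. The hypotheses on $p$ are exactly those of Lemma~\ref{lem:prime-secant-rigidity}, which therefore allows at most four exceptional $v$, and at most three when $p\nmid C_d$. This leaves at least $q-(k-2)-4=q-k-2$ parameters $v\notin T'$ with $Q_v$ nonpermuting, or $q-k-1$ in the refined case.

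For each such $v$, Wan's bound \eqref{eq:wan-values} applied to $Q_v$ on the $q-k+1$ points of $\mathbb F_q\setminus T$ gives at least $c_d(q)-k+1$ colliding unordered pairs, by the fiber inequality used earlier. The hypothesis $q>d(k-1)+1$ makes this count positive. Each such pair $\{x,y\}$ gives $T'\cup\{v,x,y\}\in\mathcal W_{k+1}(f)$.

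The main obstacle is the multiplicity count, which I would do carefully. The objects counted are triples $(T',v,\{x,y\})$. Given a set $S\in\mathcal W_{k+1}(f)$, there are $\binom{k+1}{3}$ ways to choose the three-element set $\{v,x,y\}\subseteq S$, which determines $T'$. Within that three-element set there are three ways to pick $v$. This gives multiplicity at most $3\binom{k+1}{3}$.

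I would verify the converse direction as in Lemma~\ref{lem:collision-quotients}: any agreement polynomial $g$ on $S$ yields a collision of $Q_v$ for every such choice. This bounds the total count by $3\binom{k+1}{3}\lvert\mathcal W_{k+1}(f)\rvert$. Summing the lower bound over the $\binom{q}{k-2}$ choices of $T'$ then gives \eqref{eq:prime-secant-abundance}, and positivity excludes deep holes through Proposition~\ref{prop:agreement}.

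The degenerate case $k=2$ ($T'=\varnothing$) needs no separate treatment. The remaining care points are these:
\begin{itemize}
\item $Q_v$ must have degree exactly $d$, which holds because $R_{T'}$ is monic of degree $d+1$.
\item The lemma's requirement $p\nmid d$ is part of the hypotheses.
\item No gcd condition is needed, because permutation quotients are handled by Lemma~\ref{lem:prime-secant-rigidity} rather than by Carlitz--Wan.
\end{itemize}
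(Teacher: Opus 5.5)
Your proposal is correct and is essentially the paper's own proof. You fix a $(k-2)$-set, pass to the monic degree-$(d+1)$ quotient $R_{T'}$, and use Lemma~\ref{lem:prime-secant-rigidity} together with Fan's theorem to make all but at most four (or three) secant quotients $Q_v$ nonpermuting; you then apply Wan's bound to get $c_d(q)-k+1$ collisions for each such $v$ and divide by the multiplicity $3\binom{k+1}{3}$. The only cosmetic difference is that you identify $Q_v$ with $G_{T'\cup\{v\}}$ from Lemma~\ref{lem:collision-quotients}, whereas the paper phrases the same step as collinearity of three points on the graph of $R_{T'}$.
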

\begin{proof}
After a nonzero scalar normalization assume that $f$ is monic.
For each $T\subseteq\mathbb F_q$ with $\lvert T\rvert=k-2$, divide
\[
 f=P_TR_T+L_T,\qquad \deg L_T<k-2,\quad\deg R_T=d+1.
\]
For each $v\notin T$ the polynomial
$Q_{T,v}=(R_T(X)-R_T(v))/(X-v)$ has degree $d$.
If distinct $x,y\notin T\cup\{v\}$ take the same value under
$Q_{T,v}$, then the points $(v,R_T(v))$, $(x,R_T(x))$, and
$(y,R_T(y))$ are collinear.
Consequently $L_T+P_T\ell$ is a code polynomial agreeing with $f$
on the $(k+1)$-set $T\cup\{v,x,y\}$.
By Lemma~\ref{lem:prime-secant-rigidity}, at most four $v\notin T$
can give an exceptional quotient.  Under $\mathsf F_d(q)$ all other
$Q_{T,v}$ are nonpermuting.  For each of these $v$, the fibers on
$\mathbb F_q\setminus(T\cup\{v\})$ give at least
$c_d(q)-k+1$ collisions by~\eqref{eq:wan-values}.
There are at least $q-k-2$ such parameters $v$.
Each agreement set is counted at most $3\binom{k+1}{3}$ times:
choose $T\subseteq S$ of size $k-2$, then choose one of the
three remaining points as $v$.  This proves
\eqref{eq:prime-secant-abundance}.  The strict inequality
$q>d(k-1)+1$ gives $c_d(q)\geqslant k$ and also guarantees that
$q-k-2>0$.  The three-exceptional-parameter case is identical.
\end{proof}

For instance, taking $d=11$ yields the following explicit
consequence, valid even when $\gcd(11,q-1)=1$:
\begin{corollary}\label{cor:eleven-secant}
Let $q\geqslant5204$ be an odd prime power of characteristic different
from $11$.  If
$2\leqslant k\leqslant\min\{q-11,\lceil(q-1)/11\rceil\}$, then no received
word of interpolation degree $k+10$ is a deep hole.
\end{corollary}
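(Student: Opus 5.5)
The corollary should follow by specializing Theorem~\ref{thm:prime-secant-exclusion} to $d=11$, that is, $m=10$. That value of $m$ is even and $d=11\geqslant5$ is an odd prime, as required. The plan is to check each hypothesis of that theorem in turn, using $q\geqslant5204$, odd $q$, $p\ne11$, and the stated range of $k$.

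\emph{Congruence conditions.} Since $p\neq11$, we have $p\nmid d$. Next, $d+1=12$ and $C_{11}=12\cdot13/3=52=4\cdot13$. As $p$ is odd, $p\nmid(d+1)$ unless $p=3$. If $p=3$, then $p\nmid C_{11}$ since $3\nmid52$. So the disjunction $p\nmid(d+1)$ or $p\nmid C_d$ always holds. This is the case $p=3$, $3\parallel12$ explicitly allowed by the theorem.

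\emph{Dimension conditions.} The range $2\leqslant k\leqslant q-m-1=q-11$ is built into the hypothesis. For $q>11(k-1)+1$, put $k\leqslant c:=\lceil(q-1)/11\rceil$. Then $11(k-1)+1\leqslant11(c-1)+1$, and $11(c-1)<q-1$ because $c-1<(q-1)/11$. Hence $11(k-1)+1<q$.

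\emph{Fan's condition $\mathsf F_{11}(q)$.} Here $(d-2)(d-3)=72$ and $2(d-1)=20$, so the condition reads
\[
q+1>36\lfloor2\sqrt q\rfloor+20.
\]
It suffices to have $q-19>72\sqrt q$, that is, $\sqrt q>36+\sqrt{1315}\approx72.26$, or $q\gtrsim5221.6$. This is slightly above $5204$, so the floor must be used. For $5204\leqslant q\leqslant5221$, we have $2\sqrt q<2\sqrt{5222}<144.53$, so $\lfloor2\sqrt q\rfloor\leqslant144$. The right side is then at most $36\cdot144+20=5204$, while the left side is at least $5205$. For $q\geqslant5222$, the continuous estimate applies. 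Alternatively, one can argue by intervals of $\lfloor2\sqrt q\rfloor=j$ for $j\geqslant144$, with the same tangent-style estimate used after \eqref{eq:fan-simple-bound}.

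This floor-level check near the threshold $5204$ is the only delicate step; everything else is immediate. Once all hypotheses hold, Theorem~\ref{thm:prime-secant-exclusion} gives \eqref{eq:prime-secant-abundance}, so $\mathcal W_{k+1}(f)\neq\varnothing$. By Proposition~\ref{prop:agreement}, no received word of interpolation degree $k+10$ is a deep hole.
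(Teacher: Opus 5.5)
Your proposal is correct and follows the paper's proof: you specialize Theorem~\ref{thm:prime-secant-exclusion} to $d=11$, use $p\nmid 11\cdot12$ for $p\ne3,11$ and $C_{11}=52\not\equiv0\pmod 3$ for $p=3$, and derive $\mathsf F_{11}(q)$ from $q\geqslant72^2+20=5204$ by a floor estimate. The paper instead writes $\lfloor2\sqrt q\rfloor=144+j$ with $j\geqslant0$ rather than splitting at $q=5222$, and it leaves the check $q>11(k-1)+1$ implicit. One harmless slip: the continuous threshold is $(36+\sqrt{1315})^2\approx5221.93$, not $5221.6$, but you only invoke the continuous estimate for $q\geqslant5222$, so the argument is unaffected.
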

\begin{proof}
For $d=11$, condition~\eqref{eq:fan-condition} follows from
$q\geqslant[(d-2)(d-3)]^2+2(d-1)=5204$ by the same floor
estimate as~\eqref{eq:fan-simple-bound}.  All odd characteristics other than $3$ and $11$ satisfy
$p\nmid11\cdot12$.  In characteristic $3$, we have
$C_{11}=52\not\equiv0\pmod3$; hence the second alternative
of Theorem~\ref{thm:prime-secant-exclusion} applies.
\end{proof}

\subsection{Two effective bounds for \texorpdfstring{$m=6,7$}{m=6,7}}
\label{subsec:next-excess-bounds}

For $m=6,7$, the
same principle can be combined with the published degree-seven
and degree-eight permutation-polynomial classifications
\cite{FanDegreeSeven2019,FanDegreeEight2020}.
\begin{proposition}\label{prop:six-seven-large-odd}
Let $q$ be an odd prime power, $m\in\{6,7\}$, and
$2\leqslant k\leqslant q-m-1$.  Suppose either $m=6$ and $q\geqslant512$,
or $m=7$ and $q\geqslant1331$.  Then every received polynomial
of degree $k+m$ is not a deep hole.
\end{proposition}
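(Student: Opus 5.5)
We prove Proposition~\ref{prop:six-seven-large-odd} by splitting the dimension range into the four regimes already prepared in the paper: high dimensions, the middle band $q/8\leqslant k+1\leqslant q/2$, small dimensions, and $k=2$.

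\emph{$k=2$.} This case is settled by Proposition~\ref{thm:dimtwo-general}~(\ref{it:dimtwo-odd-field}), which holds without any size condition.

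\emph{High dimensions, $k\geqslant\lfloor(q-1)/2\rfloor$.} Here we invoke Kaipa's classification exactly as in the proof of Theorem~\ref{thm:main}~(\ref{it:uniform-exclusion}). In odd characteristic, at full length, no deep hole has interpolation degree $k+m$ with $m\geqslant3$. This leaves $k+1\leqslant\lfloor q/2\rfloor$.

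\emph{Middle band, $q/8\leqslant k+1\leqslant q/2$.} Lemma~\ref{lem:odd-middle-count} applies with $Q_6=512$ and $Q_7=1331$, which are precisely the thresholds in the statement. Combined with the criterion \eqref{eq:GL-direct-criterion}, it gives an agreement set $S$ and hence \eqref{eq:mainbound}. This explains the chosen values of $q$.

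\emph{Small dimensions, $k+1<q/8$.} We use a permutation-polynomial argument via the planar projection. Suppose $f$ were a deep hole with $3\leqslant k$ and $k+1<q/8$. Then $q>(m+1)(k-1)+1$, since $8(k-1)<q$ and $m+1\leqslant8$. Lemma~\ref{lem:uniform-projection} then shows that every divided difference $Q_{T,u}$ is a permutation polynomial of $\mathbb F_q$ of degree $d=m+1\in\{7,8\}$, for all $(k-2)$-sets $T$ and all $u\notin T$.

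\emph{Case $m=7$, $d=8$.} Since $q$ is odd, $\gcd(8,q-1)>1$. We want Theorem~\ref{thm:low-dimension-gcd}, which needs $\mathsf F_8(q)$. By the simple bound \eqref{eq:fan-simple-bound} this holds for $q\geqslant36\cdot25+14=914$, so $q\geqslant1331$ suffices. Alternatively, the degree-eight classification \cite{FanDegreeEight2020} shows directly that no degree-$8$ permutation polynomial exists in odd characteristic for $q$ this large. Either route finishes this case, since the hypothesis $q>d(k-1)+1$ was verified above.

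\emph{Case $m=6$, $d=7$.} Degree-$7$ permutation polynomials can exist, so a family argument is needed. Here we use \cite{FanDegreeSeven2019}: for odd $q\geqslant512$ (in fact beyond a small threshold), every degree-$7$ permutation polynomial is, up to linear changes, $X^7$ (when $7\nmid q-1$, $p\ne7$), a Dickson polynomial $D_7(X,\alpha)$, or, in characteristic $7$, an additive-type polynomial $X^7-aX$ (with $a$ a non-sixth power) plus linear terms.

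\begin{itemize}
\item[$\bullet$] \emph{If $p\ne7$.} Mimic Lemma~\ref{lem:prime-secant-rigidity} with $d=7$, where $C_7=24$. It shows that the centered form of $Q_{T,v}$ has $b_3(v)=0$ or $\Delta(v)=0$ for only at most four values of $v$, provided $p\nmid 8$ or $p\nmid C_7$. Since $p$ is odd, $p\nmid8=d+1$ always holds. So at most four $v$ give permutations, contradicting that all $q-k+2$ of them do. Equivalently, the counting of Theorem~\ref{thm:prime-secant-exclusion} applies once $\mathsf F_7(q)$ is checked; $\mathsf F_7(q)$ needs roughly $q\gtrsim 400$, and for $q\geqslant512$ it is verified directly from \eqref{eq:fan-condition}.
\item[$\bullet$] \emph{If $p=7$.} Theorem~\ref{thm:prime-secant-exclusion} does not apply, since $p\mid d$. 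Instead we use the classification directly: a degree-$7$ permutation $Q_v$ must, up to linear substitutions, be additive, so its non-$p$-power structure is forced. Comparing the coefficient of $X^{5}$ (or $X^{4}$) of $Q_v$ as a polynomial in $v$, as in the proof of Lemma~\ref{lem:prime-secant-rigidity}, yields a nonzero polynomial in $v$ of bounded degree that must vanish for all $v\notin T$. This is impossible.
\end{itemize}

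The main obstacle is the $m=6$, characteristic-$7$ subcase. The precise form of the degree-seven classification matters there, and one must check that the translation-invariant coefficient condition is nontrivial in $v$.
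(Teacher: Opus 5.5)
Your overall decomposition is the paper's: Kaipa for $k\geqslant\lfloor(q-1)/2\rfloor$, Lemma~\ref{lem:odd-middle-count} for $q/8\leqslant k+1\leqslant q/2$, and secant-quotient permutation obstructions below $q/8$. Several parts are fine:
\begin{itemize}
\item For $m=7$, $\mathsf F_8(q)$ holds for $q\geqslant914$ and $\gcd(8,q-1)>1$.
\item For $m=6$ with $p\ne7$, Theorem~\ref{thm:prime-secant-exclusion} applies directly with $d=7$, since $p\nmid8$ and $\mathsf F_7(q)$ holds for $q\geqslant512$. There, the rigidity lemma bounds the $v$ with $b_3(v)=0$ \emph{and} $\Delta(v)=0$, not ``or''.
\end{itemize}

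The genuine gap is the characteristic-$7$ subcase of $m=6$, which does occur ($q=7^4,7^5,\dots$). Your classification claim is false: degree-$7$ permutation polynomials in characteristic $7$ are not all additive up to linear substitutions. Take the sub-additive polynomials $f=X(X^{(p-1)/e}-a)^e$ with $e\mid p-1$. They satisfy
\[
f(cx^e)=c^p\bigl(x^p-ac^{-(p-1)/e}x\bigr)^e,
\]
so $f$ permutes $\mathbb F_q$ whenever $a$ is not a $((p-1)/e)$-th power. For $p=7$ this gives two families:
\begin{itemize}
\item $X(X^2-a)^3=X^7-3aX^5+3a^2X^3-a^3X$, with $a$ a nonsquare; for instance $X(X^2-3)^3$ permutes $\mathbb F_7$.
\item $X(X^3-a)^2=X^7-2aX^4+a^2X$, with $a$ a noncube.
\end{itemize}
In characteristic $7$ we have $(\alpha X+\beta)^7=\alpha^7X^7+\beta^7$, so affine substitutions keep these $X^5$ and $X^4$ coefficients nonzero (up to nonzero scalars). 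Hence neither coefficient is forced to vanish for a permuting $Q_v$, and the comparison you propose gives no constraint on $v$.

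The repair, which is exactly the paper's argument, is to use the $X^6$ coefficient. In characteristic $7$ this coefficient changes only by a nonzero scalar under $X\mapsto\alpha X+\beta$, because $\binom76\equiv0$. Fan's coefficient restriction \cite[Lemma~9]{FanDegreeSeven2019} says it vanishes for every degree-$7$ permutation polynomial, which is consistent with the families above. For the secant quotient of a monic $R_T=X^8+aX^7+\cdots$, this coefficient is $v+a$. So at most one $v\notin T$ can give a permutation, whereas the deep-hole hypothesis together with $q>7(k-1)+1$ forces all $q-k+2\geqslant9$ of them to. With that replacement your proof is complete. Treating $k=2$ separately via Proposition~\ref{thm:dimtwo-general} is correct but unnecessary, since the secant argument with $T=\varnothing$ already covers it.
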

\begin{proof}
For $m=6$, the degree-seven classification implies that for odd
$q>49$ every degree-seven permutation polynomial is exceptional.
If the characteristic differs from $7$, the centered-coefficient
argument of Lemma~\ref{lem:prime-secant-rigidity}, with $d=7$,
shows that at most four secant quotients of any fixed monic degree-eight
polynomial can be permutation polynomials.  In characteristic $7$,
Fan's coefficient restriction \cite[Lemma~9]{FanDegreeSeven2019}
requires the $X^6$ coefficient of any degree-seven permutation
polynomial to vanish.  Since it is $v+a$ for the secant quotient
of $X^8+aX^7+\cdots$, at most one parameter $v$ is possible.
The argument of Theorem~\ref{thm:prime-secant-exclusion} therefore
excludes $q>7(k-1)+1$.
For $m=7$, Fan's degree-eight classification
\cite[Theorem~1.1]{FanDegreeEight2020} rules out degree-eight
permutation polynomials over odd $q>31$, so
the direct quotient argument in
Theorem~\ref{thm:low-dimension-gcd} excludes $q>8(k-1)+1$
without the numerical condition in Fan's theorem.

For the remaining dimensions, Kaipa's theorem
\cite{Kaipa2017} covers $k\geqslant(q-1)/2$.
If the low-dimensional inequality fails and $k$ is below
Kaipa's range, then $n=k+1$ satisfies $q/8<n<q/2$.
For the stated values of $q$, Lemma~\ref{lem:odd-middle-count}
applies with $m=6$ or $7$, respectively.  Its inequality
\eqref{eq:odd-middle-count} is precisely the sufficient
positivity criterion \eqref{eq:GL-direct-criterion}.
This establishes the assertion in the remaining dimensions.
\end{proof}

Proposition~\ref{prop:six-seven-large-odd} treats large odd
fields; the remaining cases for $m=6,7$ are
treated in Section~\ref{sec:six-seven-exact}.

\section{Odd-characteristic reductions and finite certificates}
\label{sec:odd-short}

We use the common counting criterion
\eqref{eq:GL-direct-criterion} and the explicit ranges in
Lemma~\ref{lem:odd-middle-count} to settle middle dimensions
for $m=3,4,5$.  For small fields we combine
exact integer bounds with finite point-set certificates.
The two external classification ranges used below are
\begin{equation}\label{eq:CLZ-range}
 k+1\leqslant p=\operatorname{char}\mathbb F_q
\end{equation}
by Zhuang--Cheng--Li~\cite[Theorem~10]{ChengLiZhuang2016}, and
\begin{equation}\label{eq:Kaipa-range}
 k\geqslant \left\lfloor\frac{q-1}{2}\right\rfloor
\end{equation}
by Kaipa~\cite[Theorem~1]{Kaipa2017}.  In the full-length
setting, neither range admits a deep hole of interpolation degree
$k+m$ with $m\geqslant3$ in our admissible degree interval.

For $m\in\{3,4,5\}$ and $q>(m-1)^2$, write
$A_{m,n}:=A_n$ for the coefficient defined in
Section~\ref{sec:uniform-proof}, so that
$M=(m-1)\sqrt q$.  By \eqref{eq:GL-direct-criterion},
\[
 q^{(m+1)/2}A_{m,k+1}<\binom q{k+1}
\]
suffices for agreement on $k+1$ positions.  For the
three indicated values of $m$ this follows from
Lemma~\ref{lem:odd-middle-count} when $q\geqslant Q_m$ and
$q/8\leqslant k+1\leqslant q/2$.

\begin{lemma}[Exact small-field majorant]\label{lem:odd-short-majorant}
For $m\in\{3,4,5\}$, put
$M_0:=\lceil(m-1)\sqrt q\rceil$ and
$b_0:=\lceil(q-M_0)/p\rceil$, assuming $M_0<q$, and set
\begin{equation}\label{eq:odd-short-majorant}
 \overline A_{m,n}:=[z^n](1-z)^{-M_0}(1-z^p)^{-b_0}.
\end{equation}
Then $A_{m,n}\leqslant\overline A_{m,n}$.  In particular,
$\binom qn^2>q^{m+1}\overline A_{m,n}^2$ implies
agreement at $n=k+1$ points.
\end{lemma}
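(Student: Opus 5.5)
The plan is to compare the two generating functions coefficientwise. By definition,
\[
 A_{m,n}=[z^n](1-z)^{-M}(1-z^p)^{-(q-M)/p},\qquad M=(m-1)\sqrt q .
\]
Both $(1-z)^{-\alpha}$ and $(1-z^p)^{-\beta}$ expand, for real $\alpha,\beta\geqslant0$, as power series with nonnegative coefficients. These coefficients are nondecreasing in the exponent: $\binom{\alpha+j-1}{j}=\prod_{i=0}^{j-1}(\alpha+i)/(i+1)$ increases with $\alpha\geqslant0$.

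Since $M\leqslant M_0$ and the hypothesis $M_0<q$ gives $(q-M)/p\geqslant0$, both factors are genuine nonnegative series. The key step is to control the second exponent without assuming $M_0\leqslant q$ beyond what is given. Note that $(q-M)/p$ may exceed $(q-M_0)/p$, so one cannot simply raise both exponents termwise. Instead, factor
\[
 (1-z)^{-M_0}=(1-z)^{-M}(1-z)^{-(M_0-M)} .
\]
Then use the coefficientwise inequality
\[
 (1-z)^{-\delta}\succeq(1-z^p)^{-\delta/p}\qquad(\delta\geqslant0),
\]
which follows from $-\log(1-z)=\sum_j z^j/j\succeq\sum_i z^{pi}/(pi)=-\tfrac1p\log(1-z^p)$ and exponentiating, since $\exp$ preserves coefficientwise domination between series with nonnegative coefficients. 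This is exactly the fact "$-\log(1-r^d)/d$ decreases in $d$" already used in the paper. Applying it with $\delta=M_0-M$ gives
\[
 (1-z)^{-M_0}\succeq(1-z)^{-M}(1-z^p)^{-(M_0-M)/p} .
\]
Multiplying by $(1-z^p)^{-b_0}$, where $b_0\geqslant(q-M_0)/p$, yields
\[
 (1-z)^{-M_0}(1-z^p)^{-b_0}\succeq(1-z)^{-M}(1-z^p)^{-(M_0-M)/p-b_0} .
\]
The exponent on the right satisfies $(M_0-M)/p+b_0\geqslant(q-M)/p$. By monotonicity in the exponent, this dominates $(1-z)^{-M}(1-z^p)^{-(q-M)/p}$, and taking $[z^n]$ gives $A_{m,n}\leqslant\overline A_{m,n}$. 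All products of nonnegative series preserve $\succeq$, which justifies each step.

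For the consequence, suppose $\binom qn^2>q^{m+1}\overline A_{m,n}^2$. Both sides are positive, so taking square roots gives $\binom qn>q^{(m+1)/2}\overline A_{m,n}\geqslant q^{(m+1)/2}A_{m,n}$. This is the hypothesis of \eqref{eq:GL-direct-criterion} with $n=k+1$, which gives agreement at $k+1$ points, i.e.\ \eqref{eq:mainbound}. The squared form is only a device so that the test can be checked in integer arithmetic despite the irrational $\sqrt q$.

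The only delicate point is the one identified above: avoiding a naive termwise comparison in the second exponent. It is resolved by transferring the surplus $M_0-M$ from the first factor into the second via the $d$-monotonicity of $-\log(1-z^d)/d$.
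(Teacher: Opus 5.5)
Your proof is correct and is essentially the paper's argument. The paper compares formal logarithms directly: the coefficient of $z^j$ is $M/j$ against $M_0/j$ when $p\nmid j$, and $q/j$ against $(M_0+pb_0)/j$ when $p\mid j$, so exponentiating the nonnegative difference gives the domination in one step. Your factorization that moves the surplus $M_0-M$ into the $(1-z^p)$ factor is the same comparison split into two stages, and your derivation of the consequence from \eqref{eq:GL-direct-criterion} matches the paper's.
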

\begin{proof}
In the formal logarithm of the original generating series,
the coefficient of $z^j$ is $M/j$ when $p\nmid j$
and $q/j$ when $p\mid j$.  For the integer majorant these
are $M_0/j$ and $(M_0+pb_0)/j$, respectively.  Their
differences are nonnegative.  Exponentiating preserves
coefficientwise nonnegativity, proving the claim.
For deterministic evaluation, let $a_0=1$ and
\[
 na_n=M_0\sum_{i=0}^{n-1}a_i+
 pb_0\sum_{j=1}^{\lfloor n/p\rfloor}a_{n-pj}.
\]
This recurrence produces the integers
$a_n=\overline A_{m,n}$ without floating-point arithmetic.
\end{proof}

\begin{remark}[Finite-field verification]
\label{rem:finite-core-interface}
For small odd prime powers, the proof uses only two forms of finite
certificate.  In the first, the stored
$(k+1)$-subsets define hyperplanes whose union contains every normalized
coefficient vector.  In the second, the stored subsets of
the nonzero squares have a fixed sum and realize
every prescribed value of the second elementary symmetric function.  Both forms are
sufficient by Lemma~\ref{lem:remainder} and are verified exhaustively
by exact finite-field arithmetic, independently of the searches used
to discover them.  The ancillary archive contains the actual subsets
and deterministic verification programs.
\end{remark}

\section{Exact exclusion when \texorpdfstring{$m=3$}{m=3}}\label{sec:excess-three}
\subsection{Opposite pairs in odd characteristic}\label{sec:odd-core}
For the remaining small fields we use opposite pairs.  Put
\[
 s:=\lfloor k/2\rfloor,\qquad H:=(\mathbb F_q^\times)^2,\qquad
 D:=H\setminus\{1\},
\]
and let $\chi$ denote the quadratic character, with $\chi(0)=0$.
The next two elementary lemmas record the construction needed for
$q=9,25,27$; they do not enter the generic middle-range estimate.
\begin{lemma}[One distinguished point and opposite pairs]\label{lem:one-core}
Let $f$ be as in \eqref{eq:normalized-f}.  Suppose $(b,c)\ne(0,0)$.  If there is $\alpha\in\mathbb{F}_q^\times\setminus\{-a\}$ and a set of $s$ distinct elements $w_1,\dots,w_s\in D$ satisfying
\begin{equation}\label{eq:one-core-target}
\sum_{i=1}^s w_i=B_\alpha,
\qquad
B_\alpha:=-1-\frac{b\alpha+c}{\alpha^2(\alpha+a)},
\end{equation}
then there is a $(k+1)$-subset $S$ with $H_f(S)=0$.
\end{lemma}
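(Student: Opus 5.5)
The plan is to construct $S$ explicitly from the distinguished point $\alpha$, the $s$ opposite pairs determined by the $w_i$, and the point $0$ when $k$ is odd. We then compute $H_f(S)$ from the generating function of Lemma~\ref{lem:remainder}. Since each $w_i\in D\subseteq H$ is a nonzero square, choose $y_i\in\mathbb F_q^\times$ with $y_i^2=w_i$ and put
\[
 S:=\{\alpha\}\cup\{\pm\alpha y_1,\dots,\pm\alpha y_s\}\cup Z,
\]
where $Z=\varnothing$ if $k=2s$ and $Z=\{0\}$ if $k=2s+1$. In either case the intended cardinality is $1+2s+\lvert Z\rvert=k+1$.

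First, check that the listed elements are distinct, which is where the hypotheses on $\alpha$ and the $w_i$ enter.
\begin{itemize}
\item Since $q$ is odd and $\alpha y_i\neq0$, we have $\alpha y_i\neq-\alpha y_i$.
\item Two pairs cannot share an element, since $\alpha y_i=\pm\alpha y_j$ forces $w_i=w_j$; the $w_i$ are distinct.
\item A pair element cannot equal $\alpha$, since $\pm\alpha y_i=\alpha$ gives $w_i=1$, excluded because $w_i\in D=H\setminus\{1\}$.
\item Finally, $0$ is distinct from everything else because $\alpha\neq0$ and $y_i\neq0$.
\end{itemize}

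Next, compute $h_1,h_2,h_3$ of $S$. The factor $1-0\cdot z$ is trivial, and each opposite pair contributes
\[
(1-\alpha y_iz)^{-1}(1+\alpha y_iz)^{-1}=(1-\alpha^2w_iz^2)^{-1}.
\]
With $\sigma:=\sum_i w_i$, this gives
\[
\sum_j h_j(S)z^j=(1-\alpha z)^{-1}\bigl(1+\alpha^2\sigma z^2+O(z^4)\bigr).
\]
Hence $h_1=\alpha$, $h_2=\alpha^2+\alpha^2\sigma$ and $h_3=\alpha^3+\alpha^3\sigma$. Substituting into \eqref{eq:Hf} yields
\[
H_f(S)=\alpha^2(\alpha+a)(1+\sigma)+b\alpha+c.
\]
Because $\alpha\neq0$ and $\alpha\neq-a$, the equation $H_f(S)=0$ is equivalent to $\sigma=-1-(b\alpha+c)/(\alpha^2(\alpha+a))=B_\alpha$. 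This is exactly \eqref{eq:one-core-target}, which finishes the argument.

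The only delicate points are the bookkeeping of distinctness, in particular the role of excluding $1$ from $D$, and the rescaling by $\alpha$ that turns the target into $B_\alpha$. The hypothesis $(b,c)\neq(0,0)$ does not seem to be needed for this implication. I expect it only matters later, when one chooses $\alpha$ so that the target $B_\alpha$ is attainable.
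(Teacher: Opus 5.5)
Your proposal is correct and follows the paper's proof essentially step for step: it uses the same set $\{\alpha\}\cup\bigcup_i\{\pm\alpha y_i\}$ (with $0$ adjoined when $k$ is odd), the same values $h_1=\alpha$, $h_2=\alpha^2(1+\sigma)$, $h_3=\alpha^3(1+\sigma)$, and the same reduction of $H_f(S)=0$ to $\sigma=B_\alpha$ via $\alpha\ne0,-a$. You are also right that $(b,c)\ne(0,0)$ is not used here; the paper's proof does not use it either.
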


\begin{proof}
Choose $u_i\in\mathbb{F}_q^\times$ with $u_i^2=\alpha^2w_i$ and put $v_i=u_i^2$.  Since the $w_i$ are distinct and $w_i\ne1$, the pairs $\{\pm u_i\}$ are disjoint and do not meet the distinguished point $\alpha$.  Let
\[
S_0=\sum_{i=1}^s v_i.
\]
If $k=2s$ is even, take
\[
S=\{\alpha\}\cup\bigcup_{i=1}^s\{u_i,-u_i\};
\]
if $k=2s+1$ is odd, adjoin $0$ as well.  The zero does not affect $h_1,h_2,h_3$.  The union of the opposite pairs has $h_1=h_3=0$ and $h_2=S_0$, hence
\[
h_1(S)=\alpha,
\qquad h_2(S)=\alpha^2+S_0,
\qquad h_3(S)=\alpha^3+\alpha S_0.
\]
Consequently
\[
H_f(S)=(\alpha+a)S_0+\alpha^2(\alpha+a)+b\alpha+c.
\]
Since $S_0=\alpha^2\sum_iw_i$, equation \eqref{eq:one-core-target} is exactly $H_f(S)=0$.
\end{proof}

\begin{lemma}[The branch $b=c=0$]\label{lem:easy-branch}
If $b=c=0$ and $a\ne0$, choosing $\alpha=-a$ makes the expression for $H_f$ vanish for any admissible collection of opposite pairs.  If $a=b=c=0$, a union of opposite pairs, with a zero if necessary to obtain the required cardinality, has $h_3=0$.  Thus the rest of the odd-characteristic proof may assume $(b,c)\ne(0,0)$.
\end{lemma}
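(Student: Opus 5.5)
Both assertions come down to evaluating $H_f(S)=h_3(S)+ah_2(S)+bh_1(S)+c$ from \eqref{eq:Hf} on sets made of opposite pairs, exactly as in the proof of Lemma~\ref{lem:one-core}.  Since the lemma's context is odd characteristic, $x\ne-x$ for $x\ne0$, so a pair $\{u,-u\}$ consists of two distinct elements.

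For a union $U$ of $s$ disjoint opposite pairs $\{\pm u_i\}$, the generating function is $\prod_i(1-u_i^2z^2)^{-1}$.  This is even in $z$, so $h_1(U)=h_3(U)=0$ and $h_2(U)=\sum_i u_i^2=S_0$.  Adjoining $0$ does not change the generating function, so adding $0$ when $k$ is odd leaves these values unchanged.

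\emph{First claim ($b=c=0$, $a\ne0$).}  Take $\alpha=-a\ne0$ and $S=\{\alpha\}\cup U$ (together with $0$ if $k$ is odd), where the pairs avoid $\pm\alpha$ and $0$.  Multiplying the generating function by $(1-\alpha z)^{-1}$ gives $h_1=\alpha$, $h_2=\alpha^2+S_0$ and $h_3=\alpha^3+\alpha S_0$, as in Lemma~\ref{lem:one-core}.  Hence
\[
H_f(S)=(\alpha+a)(S_0+\alpha^2)+b\alpha+c,
\]
which vanishes identically when $\alpha=-a$ and $b=c=0$.  No condition on the $w_i$ is needed, so any admissible choice of pairs works.

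\emph{Second claim ($a=b=c=0$).}  Here $H_f=h_3$.  Take $S=U$ when $k+1$ is even, and $S=U\cup\{0\}$ when $k+1$ is odd; then $h_3(S)=0$ by the parity computation above.  Proposition~\ref{prop:agreement} then gives agreement in both cases, which justifies the reduction to $(b,c)\ne(0,0)$.

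The only real obstacle is existence: we need enough disjoint opposite pairs avoiding $\alpha$ and $0$.  The first case needs $s=\lfloor k/2\rfloor$ pairs inside $\mathbb F_q^\times\setminus\{\pm\alpha\}$, which contains $(q-3)/2$ pairs.  This suffices because $k\leqslant q-4$ gives $s\leqslant(q-4)/2$, hence $s\leqslant(q-3)/2$.  The second case needs $\lceil k/2\rceil$ or so pairs among $(q-1)/2$, which again follows from $k\leqslant q-4$.
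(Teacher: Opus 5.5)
Your proposal is correct and follows the same argument as the paper. The paper justifies the lemma directly from the identity $H_f(S)=(\alpha+a)S_0+\alpha^2(\alpha+a)+b\alpha+c$ of Lemma~\ref{lem:one-core}, taking $\alpha=-a$, and from the evenness of the opposite-pair generating function when $a=b=c=0$. Your explicit count of available disjoint pairs, using $k\leqslant q-4$ and $\alpha=-a\ne0$, supplies a detail the paper leaves implicit.
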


\subsection{Residual fields in odd characteristic}\label{sec:small-odd}
Outside \eqref{eq:CLZ-range}, \eqref{eq:Kaipa-range}, the
low-dimensional exclusion in
Theorem~\ref{thm:low-dimension-gcd}, and
Lemma~\ref{lem:odd-middle-count}, only the fields $q=9,25,27,49,81$
require attention.  The last two are handled by the exact
Gao--Li bound; the first three are treated by the following
small-field constructions.

For a nontrivial additive character $\psi$ of $\mathbb F_q$,
let $e_s(\psi):=[z^s]\prod_{x\in D}(1+z\psi(x))$ and
let $N_s(\beta)$ count $s$-subsets of $D$ with sum $\beta$.
Character orthogonality gives
\begin{equation}\label{eq:FourierNs}
 N_s(\beta)=\frac1q\left[\binom{\lvert D\rvert}s+
 \sum_{\psi\ne1}\psi(-\beta)e_s(\psi)\right].
\end{equation}
\subsubsection{The field \texorpdfstring{$\mathbb{F}_9$}{F9}}

\begin{proposition}\label{prop:q9}
The case $q=9$, $k=3$ satisfies Proposition~\ref{prop:agreement}.
\end{proposition}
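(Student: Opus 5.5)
The plan is to reduce the claim to a finite, exhaustively checkable statement about $\mathbb F_9$, using the symmetries of the agreement criterion to cut the search down. Here $m=3$ and $k=3$, so $\deg f_u=6\leqslant q-1=8$. By Proposition~\ref{prop:agreement} and Lemma~\ref{lem:remainder}, after removing the leading scalar and the terms of degree below $k$, we may take $f$ in the normalized form \eqref{eq:normalized-f}:
\[
f=X^6+aX^5+bX^4+cX^3 .
\]
We must produce a $4$-subset $S\subseteq\mathbb F_9$ with $H_f(S)=h_3(S)+ah_2(S)+bh_1(S)+c=0$.

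First I will dispose of the branch $b=c=0$ by Lemma~\ref{lem:easy-branch}. Since $k=3$ is odd, $s=1$, and the sets used are $\{\alpha,u,-u,0\}$ with $\alpha=-a$; when $a=0$ the set $\{u,-u,0,\beta\}$ is replaced by any union of opposite pairs together with $0$. For $(b,c)\neq(0,0)$ I try Lemma~\ref{lem:one-core} with $s=1$. Here $H=(\mathbb F_9^\times)^2$ has four elements, so $D=H\setminus\{1\}$ has three. The lemma succeeds as soon as some $\alpha\in\mathbb F_9^\times\setminus\{-a\}$, of which there are at least seven, gives
\[
B_\alpha=-1-\frac{b\alpha+c}{\alpha^2(\alpha+a)}\in D .
\]
Heuristically this holds for most $(a,b,c)$, but it cannot be guaranteed by counting alone, since $B_\alpha$ could avoid a $3$-element set on seven inputs.

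Next I use symmetries to make the remaining check small. Scaling $S\mapsto\lambda S$ with $\lambda\in\mathbb F_9^\times$ sends $h_j\mapsto\lambda^jh_j$, so $(a,b,c)$ may be replaced by $(\lambda a,\lambda^2b,\lambda^3c)$. The Frobenius $x\mapsto x^3$ acts compatibly on both sides. Translation is of limited use: in characteristic $3$ the coefficient $6t$ vanishes, so $X\mapsto X+t$ does not remove $a$. However, translation preserves the agreement property, and I would use it to move between equivalent coefficient vectors. Modulo these actions the $729$ vectors $(a,b,c)$ fall into a few dozen orbits.

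The decisive step is the hyperplane-cover certificate of Remark~\ref{rem:finite-core-interface}. For each of the $\binom94=126$ subsets $S$, the equation $H_f(S)=0$ defines an affine plane in $(a,b,c)$-space. Equivalently, for each fixed $(a,b)$ the map
\[
S\mapsto h_3(S)+ah_2(S)+bh_1(S)
\]
must attain every value of $\mathbb F_9$, because then $c$ is matched by some $S$. I will verify this for one representative of each scaling and Frobenius orbit of $(a,b)$, by exact arithmetic in $\mathbb F_9=\mathbb F_3[i]$, and store the witnessing $S$ for each pair $((a,b),-c)$.

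I expect the main obstacle to be the residual vectors for which every $B_\alpha$ misses $D$. For these the opposite-pair construction gives nothing, and the stored certificate subsets, which need not be symmetric, must be exhibited explicitly. Whether surjectivity holds for every $(a,b)$ is exactly the content of the finite computation.
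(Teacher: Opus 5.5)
You have the reductions right. Scaling $(a,b,c)\mapsto(\lambda a,\lambda^2b,\lambda^3c)$ and Frobenius are genuine symmetries. The statement is indeed equivalent to surjectivity of $S\mapsto h_3(S)+ah_2(S)+bh_1(S)$ on $4$-subsets of $\mathbb F_9$ for every $(a,b)$.

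\textbf{The gap.} The proposal stops exactly where the proof has to begin. You neither carry out that check nor exhibit the witnessing sets, and you say yourself that surjectivity is ``exactly the content of the finite computation.'' So for the vectors on which every $B_\alpha$ misses $D$, nothing has been shown. What you have is a plan for a certificate, not a proof. The check would in fact succeed, since the statement is true, but nothing in the proposal establishes that. There is also a small slip in the branch $a=b=c=0$: since $k+1=4$, a union of opposite pairs together with $0$ has odd size. Take two opposite pairs $\{\pm u,\pm v\}$ and no zero.

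\textbf{How the paper proves it by hand.} Two ideas are missing from your plan.
\begin{enumerate}[label=(\arabic*)]
\item Translation is not of limited use. $X\mapsto X+v$ fixes $a$ but sends $b\mapsto b-av$ and $c\mapsto c+bv+av^2-v^3$. So for $a\ne0$ you can normalize to $(a,b,c)=(1,0,\gamma)$.
\item Instead of one opposite pair plus $\alpha$ and $0$, whose target set $D$ has only three elements, use translated two-pair sets $\{t\pm x,t\pm y\}$. In characteristic $3$ these satisfy $h_1=t$, $h_2=\Sigma+t^2$, $h_3=-t^3$ with $\Sigma=x^2+y^2$. Here $\Sigma$ ranges over $0$ and the four nonsquares.
\end{enumerate}

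For $(1,0,\gamma)$ you then need some $t$ with $t^3-t^2-\gamma$ equal to zero or a nonsquare. If $\gamma=0$, take $t=\Sigma=0$. Otherwise the cubic is squarefree, and the Weil bound $\lvert\sum_t\chi(t^3-t^2-\gamma)\rvert\leqslant6<9$ supplies such a $t$.

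For $a=0$ the same family gives $H_f=-t^3+bt+c$. The map $t\mapsto t^3-bt$ is bijective when $b=0$ or $b$ is a nonsquare. When $b$ is a nonzero square, scale to $b=1$. Translation then moves $c$ inside its $\operatorname{Tr}$-coset, and $X\mapsto-X$ swaps the two nonzero cosets, so only $(0,1,1)$ remains. It is solved by $S=\{0,1,1+i,1-i\}$.

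This argument needs no computation. Your route, once actually run and recorded, would give a valid but purely computational certificate. That is the kind of exhaustive check the paper uses for $q=9$ only when $m=4,5$.
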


\begin{proof}
Write $\mathbb{F}_9=\mathbb{F}_3(i)$ with $i^2=-1$.  With $q=9$ and $k=3$, the normalized polynomial from \eqref{eq:normalized-f} is
\[
f(X)=X^6+aX^5+bX^4+cX^3.
\]
Under $X=uY+v$, followed by division by $u^6$, the top coefficients become
\[
a'=a/u,
\qquad b'=(b-av)/u^2,
\qquad c'=(c+bv+av^2-v^3)/u^3.
\]
This affine change permutes the evaluation set and preserves the existence of four agreement points.

If $a\ne0$, take $u=a$ and $v=b/a$, reducing to $(a',b')=(1,0)$; write $\gamma=c'$ for the remaining normalized coefficient.  Let
\[
H=(\mathbb{F}_9^\times)^2=\{\pm1,\pm i\}.
\]
The sums of two distinct elements of $H$ are precisely
\[
\{0\}\cup(\mathbb{F}_9^\times\setminus H).
\]
For a translated opposite-pair four-set, if $\Sigma$ is the sum of the two square parameters, then
\[
h_1=t,
\qquad h_2=\Sigma+t^2,
\qquad h_3=-t^3.
\]
Thus we need
\[
\Sigma=t^3-t^2-\gamma.
\]
If $\gamma=0$, take $t=0$ and hence $\Sigma=0$.  Suppose $\gamma\ne0$.  The derivative of $T^3-T^2-\gamma$ is $T$, so the cubic is squarefree.  Applying \cite[Thm.~5.41]{LidlNiederreiter1997} to the quadratic character gives
\[
\left\lvert \sum_{t\in\mathbb{F}_9}\chi(t^3-t^2-\gamma)\right\rvert \leqslant2\sqrt{9}=6.
\]
Hence its nine values cannot all be nonzero squares.  Some $t$ makes the right side $0$ or a nonsquare, and the displayed description of the pair sums supplies $\Sigma$.

Now let $a=0$.  The translated opposite-pair family gives
\[
H_f=-t^3+bt+c.
\]
If $b=0$, Frobenius $t\mapsto t^3$ is bijective.  If $b$ is a nonsquare, the $\mathbb{F}_3$-linear map $t\mapsto t^3-bt$ has trivial kernel and is bijective.  If $b$ is a nonzero square, scale to $b=1$.  Translation changes $c$ by an element of $\operatorname{im}(t^3-t)=\ker\operatorname{Tr}_{\mathbb{F}_9/\mathbb{F}_3}$, and multiplication by $-1$ swaps the two nonzero trace cosets.  Thus only the representative $(a,b,c)=(0,1,1)$ remains to be treated.  For
\[
S=\{0,1,1+i,1-i\}
\]
one has $h_1(S)=0$ and $h_3(S)=(1+i)(1-i)=-1$, so $h_3+h_1+1=0$.
\end{proof}

\subsubsection{The fields \texorpdfstring{$\mathbb{F}_{25}$ and $\mathbb{F}_{27}$}{F25 and F27}}

For $q=p^e$, write
\[
\operatorname{Tr}(x)
:=\operatorname{Tr}_{\mathbb{F}_q/\mathbb{F}_p}(x)
=x+x^p+\cdots+x^{p^{e-1}}.
\]

\begin{proposition}\label{prop:q25q27}
For $q=25$, every target in $\mathbb{F}_{25}$ is the sum of an $s$-subset of $D$ for $s=3,4,5$.  For $q=27$, the same holds for $s=3,4,5,6$.
\end{proposition}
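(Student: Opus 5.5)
The statement concerns $D=H\setminus\{1\}$, where $H$ is the group of nonzero squares.  For $q=25$ we have $\lvert H\rvert=12$ and $\lvert D\rvert=11$; for $q=27$ we have $\lvert H\rvert=13$ and $\lvert D\rvert=12$.  We must show that $N_s(\beta)>0$ for every $\beta\in\mathbb F_q$ and every listed $s$.  I would use the Fourier identity \eqref{eq:FourierNs} and bound the character sums $e_s(\psi)$.  Where that bound falls short, I would fall back on an explicit certificate of the type described in Remark~\ref{rem:finite-core-interface}: for each pair $(s,\beta)$, a stored $s$-subset of $D$ whose sum is $\beta$, checked by exact arithmetic.

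First, reduce the work by symmetry.  Multiplication by a nonzero square $w$ maps $H$ to itself.  It does not preserve $D$, since $1\mapsto w$, but $wD=H\setminus\{w\}$.  A cleaner route is to compare with the full set $H$.  Put $E_s(\psi)=[z^s]\prod_{x\in H}(1+z\psi(x))$.  Then $e_s(\psi)=\sum_{j}(-\psi(1))^{j}E_{s-j}(\psi)$, because $\prod_{x\in D}=\prod_{x\in H}/(1+z\psi(1))$.  The values $E_r(\psi)$ depend only on the class of $\psi$ modulo $H$-scaling, and there are only two nontrivial classes, corresponding to $\psi_c(x)=\psi_1(cx)$ with $c$ a square or a nonsquare.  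Newton's identities express $E_r$ through the power sums $p_j(\psi)=\sum_{x\in H}\psi(jx)$.  Each $p_j$ is a quadratic Gauss period, equal to $(-1\pm G)/2$ with $\lvert G\rvert=\sqrt q$ when $p\nmid j$, and equal to $\lvert H\rvert$ when $p\mid j$.  So every $e_s(\psi)$ is an explicit element of $\mathbb Z[(\text{Gauss sum})]$ and can be computed exactly.

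Second, check positivity.  We need $\binom{\lvert D\rvert}{s}>\sum_{\psi\ne1}\lvert e_s(\psi)\rvert$, or at least positivity of the exact real expression for each $\beta$.  Since the $e_s(\psi)$ take only a few distinct values, $N_s(\beta)$ can be written explicitly as a function of the square class of $\beta$, with $\beta=0$ treated separately.  This reduces the problem to finitely many numerical checks.  For $q=27$ the Gauss sum is $\pm i\sqrt{27}$, since $-1$ is a nonsquare.  For $q=25$ it is $\pm5$, and the periods are the integers $(-1\pm5)/2$.

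The main obstacle is that the trivial Fourier bound is weak in this range.  For example, $\binom{11}{3}=165$ must beat $24$ terms of size up to roughly $\binom{11}{3}q^{-1/2}$-scale.  The generic inequality may therefore fail for small $s$, especially $s=3$ and $s=\lvert D\rvert/2$.  I would first try the exact period computation, which gives the true values of $N_s(\beta)$ rather than a bound.  If this becomes unwieldy, I would close the remaining pairs $(s,\beta)$ directly by exhibiting one $s$-subset of $D$ with each prescribed sum.  These subsets would be found by search and recorded in the ancillary certificate, to be verified by the deterministic program.  This amounts to at most $25\cdot3+27\cdot4$ explicit subsets.
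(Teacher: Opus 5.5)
Your route is the same as the paper's. Both use the Fourier identity \eqref{eq:FourierNs}, with each $e_s(\psi_t)$ evaluated exactly from quadratic Gauss periods by Newton's recurrence, so that $e_s(\psi_t)$ depends only on $(\chi(t),\operatorname{Tr}(t))$. The paper feeds in the power sums $A_j(t)=\sum_{x\in D}\psi_t(jx)$ directly rather than passing through $H$ and dividing by $1+z\psi(1)$; the two are equivalent.

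Your worry about the crude bound is unfounded, and it comes from the wrong heuristic. The exact values satisfy $\lvert e_s\rvert\lesssim\sqrt{\binom{\lvert D\rvert}{s}}$, not $\binom{\lvert D\rvert}{s}q^{-1/2}$. In fact $\sum_{t\ne0}\lvert e_s(\psi_t)\rvert<\binom{\lvert D\rvert}{s}$ holds for every listed $s$; for example, $q=25$, $s=3$ gives a sum below $160<165$. So the certificate fallback is never needed.

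One claim should be dropped: $N_s(\beta)$ is \emph{not} a function of the square class of $\beta$. Removing $1$ from $H$ destroys the $H$-scaling symmetry, and your own formula shows that $e_s(\psi_t)$ depends on $\psi_t(1)$, i.e.\ on $\operatorname{Tr}(t)$. A function constant on square classes would have a Fourier transform constant on square classes. Any exact-count check would therefore have to run over all $\beta\in\mathbb F_q$, not over three classes.
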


\begin{proof}
We use \eqref{eq:FourierNs} and the coefficient recurrence displayed below.

For $q=25$, $\lvert D\rvert=11$.  Let $\zeta=\zeta_5$, let $\chi$ be the quadratic character, and for $t\ne0$ put
\[
\psi_t(x)=\zeta^{\operatorname{Tr}(tx)},
\qquad \varepsilon=\chi(t),
\qquad u=\operatorname{Tr}(t)\in\mathbb{F}_5.
\]
By \cite[Thm.~5.15]{LidlNiederreiter1997}, the canonical quadratic Gauss sum over $\mathbb{F}_{25}$ is $-5$.  Since each $j\in\mathbb{F}_5^\times$ is a square in $\mathbb{F}_{25}$,
\[
A_j(t)=\sum_{x\in D}\psi_t(jx)
=-\frac{1+5\varepsilon}{2}-\zeta^{ju},\qquad 1\leqslant j\leqslant4,
\]
and $A_5(t)=11$.  The recurrence
\[
s e_s=\sum_{j=1}^s(-1)^{j-1}e_{s-j}A_j
\]
gives, for $u\ne0$ and $\tau_u=\zeta^u+\zeta^{-u}$,
\[
\begin{array}{c|ccc}
&\lvert e_3\rvert^2&\lvert e_4\rvert^2&\lvert e_5\rvert^2\\ \hline
\varepsilon=-1&5(1-\tau_u)&5(1-\tau_u)&7+\tau_u\\
\varepsilon=+1&100+35\tau_u&190+65\tau_u&262+101\tau_u.
\end{array}
\]
For $u=0$, all four nonzero trace-zero elements are nonsquares and $e_3=e_4=0$, $\lvert e_5\rvert=2$.  Each nonzero trace fiber contains three squares and two nonsquares.  Since
\[
\tau_{\pm1}=\frac{\sqrt5-1}{2}\in(0,1),
\qquad
\tau_{\pm2}=-\frac{\sqrt5+1}{2}\in(-2,0),
\]
we obtain
\[
\sum_{t\ne0}\lvert e_3(t)\rvert<160<\binom{11}{3},
\quad
\sum_{t\ne0}\lvert e_4(t)\rvert<208<\binom{11}{4},
\]
\[
\sum_{t\ne0}\lvert e_5(t)\rvert<254<\binom{11}{5}.
\]
Equation~\eqref{eq:FourierNs} therefore gives the assertion for $q=25$.

For $q=27$, $\lvert D\rvert=12$.  Let $\zeta=e^{2\pi i/3}$, $u=\operatorname{Tr}(t)\in\mathbb{F}_3$, and $\varepsilon=\chi(t)$.  By \cite[Thm.~5.15]{LidlNiederreiter1997}, the canonical quadratic Gauss sum over $\mathbb{F}_{27}$ is $G=-3i\sqrt3$.  Since the quadratic character of $\mathbb{F}_{27}$ restricts to the quadratic character of $\mathbb{F}_3$, for $3\nmid j$ we have
\[
A_j(t)=\frac{-1+\chi(jt)G}{2}-\zeta^{ju},
\]
whereas $A_j(t)=12$ for $3\mid j$.  The same recurrence then yields
\[
\begin{array}{c|c}
(u,\varepsilon)&(\lvert e_3\rvert^2,\lvert e_4\rvert^2,\lvert e_5\rvert^2,\lvert e_6\rvert^2)\\ \hline
(0,\pm1)&(16,81,81,36)\\
(1,+1),(2,-1)&(52,27,108,225)\\
(1,-1),(2,+1)&(7,27,27,9).
\end{array}
\]
Thus $\lvert e_3\rvert<8$, $\lvert e_4\rvert \leqslant9$, $\lvert e_5\rvert<11$, and $\lvert e_6\rvert \leqslant15$.  There are $26$ nontrivial additive characters, and
\[
\binom{12}{3}>26\cdot8,
\quad \binom{12}{4}>26\cdot9,
\quad \binom{12}{5}>26\cdot11,
\quad \binom{12}{6}>26\cdot15.
\]
Again \eqref{eq:FourierNs} gives every target.
\end{proof}

\begin{theorem}[The case $m=3$ in odd characteristic]\label{thm:odd}
The case $m=3$ in Theorem~\ref{thm:main}~(\ref{it:small-excess-classification}) holds for odd $q$.
\end{theorem}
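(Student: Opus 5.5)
The plan is a case split on $(q,k)$ for odd $q$. In each case we exhibit a $(k+1)$-subset $S$ with $H_f(S)=0$, which suffices by Proposition~\ref{prop:agreement}. First normalize $f_u$ to the monic form \eqref{eq:normalized-f}, and dispose of $k=2$ by Proposition~\ref{thm:dimtwo-general}~(\ref{it:dimtwo-odd-field}).

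Next, remove the dimensions covered by known results. The range \eqref{eq:CLZ-range}, $k+1\leqslant p$, is handled by Zhuang--Cheng--Li, and the range \eqref{eq:Kaipa-range} by Kaipa. For small dimensions, $d=4$ and $\gcd(4,q-1)=2>1$ for odd $q$, so Theorem~\ref{thm:low-dimension-gcd} applies whenever $\mathsf F_4(q)$ holds and $q>4(k-1)+1$. The condition $\mathsf F_4(q)$ reads $q+1>\lfloor2\sqrt q\rfloor+6$, which holds for all $q\geqslant9$. For the middle range, Lemma~\ref{lem:odd-middle-count} with $m=3$ gives agreement when $q\geqslant121$ and $q/8\leqslant k+1\leqslant q/2$. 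The point to verify is that these ranges overlap: $k+1<q/8$ gives $q>8k>4(k-1)+1$, and $k+1>q/2$ is essentially Kaipa's range. Hence every odd $q\geqslant121$ is settled.

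For odd $q<121$, the remaining $(q,k)$ satisfy $p<k+1$, $q\leqslant4k-3$, and $k<\lfloor(q-1)/2\rfloor$. Prime fields are then empty, since $k+1\leqslant p$ would be needed to escape the first condition yet $k\geqslant p$ is forced. For the prime powers $q\in\{9,25,27,49,81\}$, I would list the surviving $k$ explicitly. The fields $q=49$ and $q=81$ are to be handled with the exact majorant of Lemma~\ref{lem:odd-short-majorant}, checking $\binom q{k+1}^2>q^{4}\overline A_{3,k+1}^2$ by the integer recurrence for each surviving $k$. For $q=9$ the only surviving case is $k=3$, which is Proposition~\ref{prop:q9}.

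For $q=25$ and $q=27$, use the opposite-pair construction with $s=\lfloor k/2\rfloor$; the surviving $k$ should give $s\in\{3,4,5\}$ and $s\in\{3,4,5,6\}$ respectively. Lemma~\ref{lem:easy-branch} handles $b=c=0$. Otherwise, pick any $\alpha\in\mathbb F_q^\times\setminus\{-a\}$ with $\alpha^2(\alpha+a)\neq0$, which is possible since $q\geqslant25$. Proposition~\ref{prop:q25q27} then makes every target $B_\alpha$ a sum of $s$ distinct elements of $D$, so Lemma~\ref{lem:one-core} produces $S$. We must also check that the cardinality $2s+1$ (plus $0$ when $k$ is odd) fits in $\mathbb F_q$, i.e.\ $\alpha$, the pairs and $0$ are distinct; this is built into the lemma since $\alpha\ne0$ and $\alpha\neq\pm u_i$.

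The main obstacle is the bookkeeping of the residual list: confirming that the threshold inequalities overlap, so that only $q\in\{9,25,27,49,81\}$ with the stated $k$ remain. The other delicate step is the exact integer verification for $q=49$ and $q=81$, where the Gao--Li inequality may be tight. If that inequality fails for some $k$, I would fall back on a finite certificate of the hyperplane-cover type from Remark~\ref{rem:finite-core-interface}.
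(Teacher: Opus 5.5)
Your proposal is correct and follows the paper's proof essentially step for step. The steps are: the Zhuang--Cheng--Li and Kaipa ranges; Theorem~\ref{thm:low-dimension-gcd} with $d=4$ together with Lemma~\ref{lem:odd-middle-count} for $q\geqslant121$; exact integer Gao--Li comparisons for $q=49,81$; Proposition~\ref{prop:q9} for $(9,3)$; and Proposition~\ref{prop:q25q27} with Lemmas~\ref{lem:one-core} and~\ref{lem:easy-branch} for $q=25,27$. There are two harmless slips: $\mathsf F_4(q)$ fails at $q=9,11$ and holds only from $q=13$ on, but $q=11$ is prime and the Zhuang--Cheng--Li and Kaipa ranges alone leave only $k=3$ at $q=9$; and the residual for $q=27$ is $k=8,\ldots,12$, so only $s\in\{4,5,6\}$ is needed there.
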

\begin{proof}
Apply the external ranges \eqref{eq:CLZ-range} and
\eqref{eq:Kaipa-range}.  For $q\geqslant121$, the low range follows
from Theorem~\ref{thm:low-dimension-gcd} with $d=4$:
Fan's condition holds and $\gcd(4,q-1)>1$.
Failure of $q>4(k-1)+1$ gives $k+1>q/8$, while below
Kaipa's range we have $k+1\leqslant q/2$.  The remaining range
is therefore covered by Lemma~\ref{lem:odd-middle-count}.

For $q=49,81$, the low/high alternatives leave respectively
$n=k+1=14,\ldots,24$ and $n=22,\ldots,40$.
Here $M=2\sqrt q$ and $(q-M)/p$ are integers, so
$A_{3,n}$ is an exact integer coefficient.  The minima of
$\binom qn-q^2 A_{3,n}$ over these respective intervals are
$626158230621$ and $37184292070905203010$, both positive;
the ancillary program evaluates these convolutions using
exact rational arithmetic.

The remaining nonprime odd fields are $q=9,25,27$.
Proposition~\ref{prop:q9} handles $(9,3)$.
For $q=25$, the residual $k=7,\ldots,11$ has
$s=\lfloor k/2\rfloor\in\{3,4,5\}$, and for $q=27$,
$k=8,\ldots,12$ has $s\in\{4,5,6\}$.
Proposition~\ref{prop:q25q27}, together with
Lemma~\ref{lem:one-core} and the zero-coefficient branch
Lemma~\ref{lem:easy-branch}, supplies the required agreement sets.
Every prime field is already in \eqref{eq:CLZ-range}.
These alternatives cover every odd prime power.
\end{proof}

\subsection{Characteristic two: Artin--Schreier fibers}\label{sec:char2}

Assume now that $q=2^e$.  The Kaipa range \eqref{eq:Kaipa-range} becomes
\[
k\geqslant q/2-1.
\]
The additional even-characteristic nucleus possibility in \cite{Kaipa2017} occurs when $n-k=3$; at full length $n=q$ this is $k=q-3$, outside the admissible degree range $k\leqslant q-4$.

Let
\[
V:=\ker\operatorname{Tr}_{\mathbb{F}_q/\mathbb{F}_2},\qquad N:=\lvert V\rvert=q/2.
\]
By \cite[Thm.~2.25]{LidlNiederreiter1997}, the \emph{Artin--Schreier map} $w\mapsto w^2+w$ has image $V$ and every fiber is $\{w,w+1\}$.  For $\delta\ne0$ and $t\in V$, define
\[
P_\delta(t):=\{\delta w,\delta(w+1)\},\qquad w^2+w=t.
\]
The fiber description above shows that $P_\delta(t)$ is independent of the choice of the root $w$.  Distinct $t$ give disjoint pairs, and for fixed $\delta$ these pairs partition $\mathbb{F}_q$.

\begin{lemma}[Common-difference block identities]\label{lem:block-identities}
Take $R$ distinct parameters $t_1,\dots,t_R\in V$ and set $T:=\sum_i t_i$.  For the union of the blocks $P_\delta(t_i)$,
\begin{equation}\label{eq:block-h}
\begin{aligned}
h_1&=\delta\varepsilon_1,\\
h_2&=\delta^2(T+\varepsilon_2),\\
h_3&=\delta^3(\varepsilon_TT+\varepsilon_3),
\end{aligned}
\end{equation}
where
\[
\varepsilon_1\equiv R,\quad
\varepsilon_2\equiv\binom{R+1}{2},\quad
\varepsilon_T\equiv R+1,\quad
\varepsilon_3\equiv\binom{R+2}{3}\pmod2.
\]
\end{lemma}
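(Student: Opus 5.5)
Since everything is homogeneous of the appropriate degree in $\delta$ ($h_j$ of the scaled set is $\delta^j$ times $h_j$ of the unscaled one), we first reduce to $\delta=1$. Then we compute $h_1,h_2,h_3$ for a single block $P_1(t)=\{w,w+1\}$ and multiply generating functions. Concretely, we work with the series
\[
\prod_{x\in P_1(t)}(1-xz)^{-1}=\bigl((1-wz)(1-(w+1)z)\bigr)^{-1}.
\]
In characteristic two the product $(1+wz)(1+(w+1)z)$ equals $1+z+(w^2+w)z^2=1+z+tz^2$. So each block contributes the factor $(1+z+tz^2)^{-1}$, depending only on $t$. This is the key simplification: the choice of root $w$ disappears.

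For the union of $R$ disjoint blocks we then have
\[
\sum_j h_jz^j=\prod_{i=1}^R(1+z+t_iz^2)^{-1}.
\]
We take logarithms, or more simply expand each factor modulo $z^4$. Writing $(1+z+t z^2)^{-1}=1+z+(1+t)z^2+(1)z^3+\cdots$ in characteristic two, the $z^3$ coefficient is $z^3(1+2t)\equiv 1$: indeed $-(z+tz^2)^3$ contributes $z^3$ and $2\cdot$cross terms vanish. A cleaner route is to expand the full product directly. Put $u_i=z+t_iz^2$. Then
\[
\prod_i(1+u_i)^{-1}=\sum_{n\geqslant0}h_n(u_1,\dots,u_R),
\]
where the $h_n$ on the right are complete homogeneous symmetric polynomials in the $u_i$, with signs irrelevant in characteristic two. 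Modulo $z^4$ we need the following terms. The term $h_1(u)=Rz+Tz^2$. The term $h_2(u)$ equals $\binom{R+1}{2}z^2$ plus the $z^3$ coefficient, which is $\sum_{i\leqslant j}(t_i+t_j)=(R+1)T$. The term $h_3(u)$ contributes $\binom{R+2}{3}z^3$.

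Collecting coefficients gives $h_1=R$, $h_2=T+\binom{R+1}{2}$ and $h_3=(R+1)T+\binom{R+2}{3}$, all reduced modulo $2$. Restoring $\delta$ yields \eqref{eq:block-h}.

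The only delicate step is the $z^3$ coefficient of $h_2(u)$, i.e.\ computing $\sum_{i\leqslant j}(t_i+t_j)$. Each $t_i$ appears $R-1$ times from the pairs $i<j$ and twice from the diagonal term $i=j$, a total of $R+1$ times. Hence $\varepsilon_T\equiv R+1$. We also confirm that disjointness of the blocks (distinct $t_i$) is what allows the generating function of the union to be the product of the block factors. The formulas need $R$ distinct parameters only so that the union is a genuine set of $2R$ distinct elements.
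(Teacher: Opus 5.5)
Your proof is correct and takes essentially the paper's route. You reduce to $\delta=1$, note that each block contributes the factor $(1+z+tz^2)^{-1}$, and read off the coefficients of the product through degree three. The paper does the same, except that it first records the single-block values $h_1=1$, $h_2=1+t$, $h_3=1$; you instead expand via $h_n(u_1,\dots,u_R)$ with $u_i=z+t_iz^2$, and your count $\sum_{i\leqslant j}(t_i+t_j)=(R+1)T$ gives $\varepsilon_T$ correctly.
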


\begin{proof}
For one normalized block $\{w,w+1\}$ with $w^2+w=t$, a direct calculation gives $h_1=1$, $h_2=1+t$, and $h_3=1$.  Multiplying the generating functions
\[
\prod_{x\in P_\delta(t_i)}(1-xz)^{-1}
\]
and extracting degrees at most three gives \eqref{eq:block-h}.
\end{proof}

\subsubsection{Exact subset sums in the trace hyperplane}

We record the elementary character-sum identities that will be used below.

\begin{lemma}[Trace-hyperplane subset sums]\label{lem:V-subsets}
Let $V$ be an $\mathbb{F}_2$-vector space of size $N=2^d$.
\begin{enumerate}[label=(\arabic*),ref=\arabic*]
\item\label{item:V-all} If $N\geqslant16$, every element of $V$ is the sum of an $r$-subset of $V$ for $3\leqslant r\leqslant N-3$.
\item\label{item:V-nonzero} If $N\geqslant16$, every element of $V$ is the sum of an $r$-subset of $V\setminus\{0\}$ for $3\leqslant r\leqslant N-4$.
\item\label{item:V-two-delete} If $N\geqslant32$ and $\tau\ne0$, every element of $V$ is the sum of an $r$-subset of $V\setminus\{0,\tau\}$ for $3\leqslant r\leqslant N-5$.
\end{enumerate}
The exact Fourier formulas below also give the exceptional counts needed when $N=8$ and $N=16$.
\end{lemma}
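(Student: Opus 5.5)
The plan is to count $r$-subsets with prescribed sum by Fourier analysis on $V\cong\mathbb F_2^d$. For a subset $U\subseteq V$ of size $n$, let $N_r(U,\beta)$ be the number of $r$-subsets of $U$ with sum $\beta$. Using the characters $\psi_\lambda(x)=(-1)^{\lambda\cdot x}$ of $V$, orthogonality gives
\[
 N_r(U,\beta)=\frac1N\Bigl[\binom nr+\sum_{\lambda\ne0}\psi_\lambda(\beta)\,[z^r]\prod_{x\in U}\bigl(1+z\psi_\lambda(x)\bigr)\Bigr].
\]
Each factor is $1\pm z$. If $U$ contains $n_+(\lambda)$ points with $\psi_\lambda=1$ and $n_-(\lambda)$ with $\psi_\lambda=-1$, then
\[
 [z^r](1+z)^{n_+}(1-z)^{n_-}=K_r(n_-;n),
\]
a Krawtchouk number. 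This makes the formula exact, and it also yields the exceptional counts for $N=8,16$ by direct evaluation.

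The three cases differ only in the sign distribution.
\begin{enumerate}[label=(\arabic*)]
\item For $U=V$, every $\lambda\ne0$ has $n_\pm=N/2$. The product is $(1-z^2)^{N/2}$, so its coefficient vanishes for odd $r$ and equals $\pm\binom{N/2}{r/2}$ for even $r$.
\item For $U=V\setminus\{0\}$, the product becomes $(1-z^2)^{N/2}/(1+z)$.
\item For $U=V\setminus\{0,\tau\}$, dividing further by $1\pm z$ according to $\psi_\lambda(\tau)$ gives either $(1-z^2)^{N/2}/(1+z)^2$ or $(1-z)^{N/2-1}(1+z)^{N/2-1}=(1-z^2)^{N/2-1}$.
\end{enumerate}
In each case the coefficients are explicit alternating partial sums of binomials $\binom{N/2}{j}$, bounded in absolute value by about $(r+1)\binom{N/2}{\lfloor r/2\rfloor}$, or better.

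The positivity requirement is
\[
 \binom nr>(N-1)\max_{\lambda\ne0}\bigl|K_r(n_-(\lambda);n)\bigr|.
\]
By the symmetry $r\leftrightarrow n-r$, which complements subsets and shifts the sum by the constant $\sum U$, it suffices to take $r\leqslant n/2$. For $4\leqslant r\leqslant n/2$, one checks that $\binom{n}{r}/\binom{N/2}{\lfloor r/2\rfloor}$ grows roughly like $2^{r}\binom{N/2}{\lceil r/2\rceil}$-ish. This comfortably exceeds $(r+1)(N-1)$ once $N\geqslant16$ or $32$, with monotonicity in $N$ and $r$ handling the general case. The smallest case $r=3$ is best done by hand.
\begin{enumerate}[label=(\arabic*)]
\item For $U=V$, the coefficient is $0$, so every element is hit; indeed $N_3=\binom N3/N>0$.
\item For $U=V\setminus\{0\}$, the coefficient of $z^3$ in $(1-z^2)^{N/2}(1+z)^{-1}$ is $-1+N/2$, and $\binom{N-1}{3}>(N-1)(N/2-1)$ holds for $N\geqslant8$.
\item For two deletions, $(1-z^2)^{N/2}(1+z)^{-2}$ has $z^3$ coefficient $-4+2(N/2)$, of order $N$. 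This again loses to $\binom{N-2}3$ for $N\geqslant32$, while $N=16$ is borderline, which is why it is excluded there.
\end{enumerate}
The endpoints $r=N-3$, $N-4$, $N-5$ correspond by complementation to $r=3$ in each case.

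The main obstacle is uniform control of the character coefficients for the middle values of $r$. The cleanest route is to bound the coefficients using $|[z^r]g|\leqslant$ the sum of $|$coefficients$|$ of the truncations, i.e. by $\sum_{j\leqslant r/2}\binom{N/2}{j}$, times at most $r+1$. One then compares this with $\binom{n}{r}\geqslant\binom{N-2}{r}$ via the ratio $\binom{N-2}{r}/\binom{N/2}{\lfloor r/2\rfloor}\geqslant\bigl(2(N-2)/N\bigr)^{\lfloor r/2\rfloor}\binom{N/2}{\lceil r/2\rceil}\cdots$. The aim is an elementary inequality valid for all $N\geqslant16$ (resp. $32$) and $4\leqslant r\leqslant n/2$. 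A finite check at $N=16,32$ would confirm the base cases, and monotonicity in $N$ would propagate the inequality to larger $N$.
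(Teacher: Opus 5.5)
Your framework matches the paper's: orthogonality over the characters of $V$, the generating products $(1-z^2)^{M}$, $(1-z^2)^{M-1}(1-z)$, and $(1-z^2)^{M-2}(1-z)^2$ or $(1-z^2)^{M-1}$ (with $M=N/2$), reduction to $r\leqslant n/2$ by complementation, and $r=3$ by hand. Those pieces are correct.

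The gap is the step that carries the lemma: positivity for $4\leqslant r\leqslant n/2$, uniformly in $N$. You flag it as the main obstacle but do not prove it, and the tools you suggest fail as stated.
\begin{itemize}
\item The crude bound $(r+1)\binom{M}{\lfloor r/2\rfloor}$ on the character coefficient is too weak at the bottom of the range. In case (2) with $N=16$, $r=4$, your sufficient condition would need $\binom{15}{4}=1365>15\cdot5\cdot\binom82=2100$, which is false even though the lemma holds there.
\item Your ratio estimate $\binom{N-2}{r}/\binom{N/2}{\lfloor r/2\rfloor}\geqslant(2(N-2)/N)^{\lfloor r/2\rfloor}\binom{N/2}{\lceil r/2\rceil}$ is false. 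At $N=32$, $r=4$ the left side is $27405/120<229$ and the right side is $(15/8)^2\cdot120>421$. The main term exceeds $\binom{M}{\lfloor r/2\rfloor}\binom{M}{\lceil r/2\rceil}$ only by a factor growing like $\sqrt r$, not exponentially.
\item ``A finite check at $N=16,32$ plus monotonicity in $N$'' does not propagate. The range of $r$ grows with $N$, and no monotone statement is formulated.
\end{itemize}

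The missing idea is to evaluate the error coefficients exactly and compare them with a single central term of the Vandermonde convolution.
\begin{itemize}
\item The alternating partial sums collapse: $[z^r](1-z^2)^{M-1}(1-z)=\pm\binom{M-1}{\lfloor r/2\rfloor}$.
\item The coefficients of $z^{2j}$ and $z^{2j+1}$ in $(1-z^2)^{M-2}(1-z)^2$ are $\pm\bigl[\binom{M-2}{j}-\binom{M-2}{j-1}\bigr]$ and $\mp2\binom{M-2}{j}$.
\item $[z^{2j}](1-z^2)^{M-1}=\pm\binom{M-1}{j}$.
\end{itemize}
With $j=\lfloor r/2\rfloor$, the total correction is then at most $(2M-1)\binom Mj$ in case (1) and $(2M-1)\binom{M-1}{j}$ in case (2). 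In case (3) it is less than $3M\binom{M-1}{j}$; here split the nontrivial characters by $\chi(\tau)=\pm1$, whose value sums are $(M-1,\pm M)$ for $\beta\in\{0,\tau\}$ and $(-1,0)$ otherwise.

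Now bound the main term by one convolution term:
\begin{itemize}
\item $\binom{2M}{2j}\geqslant\binom Mj^2$;
\item $\binom{2M-1}{2j}\geqslant\binom Mj\binom{M-1}j$;
\item $\binom{2M-1}{2j+1}\geqslant\binom M{j+1}\binom{M-1}j$;
\item $\binom{2M-2}{r}\geqslant\binom{M-1}{j}\binom{M-1}{r-j}$.
\end{itemize}
After dividing, the ratio is at least $\binom M2$ in cases (1)--(2) and $\binom{M-1}2$ in case (3). These exceed $2M-1$ for $M\geqslant8$ and $3M$ for $M\geqslant16$, uniformly in $r$, with no finite check or monotonicity needed.

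This also explains the hypothesis $N\geqslant32$ in (3): it is where $\binom{M-1}2>3M$ becomes available. It is not because $r=3$ is borderline at $N=16$. There $\binom{14}{3}=364$, while even your weight $N-1$ bounds the correction by $15\cdot12=180$.
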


\begin{proof}
Put $M=N/2$.  For a nontrivial additive character $\chi$ of $V$, half the elements have value $+1$ and half $-1$, so
\[
\prod_{x\in V}(1+z\chi(x))=(1-z^2)^M.
\]
Let $M_V(r,\beta)$ denote the number of $r$-subsets of $V$ with sum $\beta$.  Character orthogonality gives
\[
N M_V(r,\beta)=\binom Nr+\sigma_\beta C_r,
\]
where $\sigma_0=N-1$, $\sigma_\beta=-1$ for $\beta\ne0$, and
\[
C_{2j}=(-1)^j\binom Mj,\qquad C_{2j+1}=0.
\]
For odd $r$ the correction vanishes.  For $r=2j\leqslant N/2$ with $r\geqslant4$, the binomial convolution gives
\[
\binom{2M}{2j}\geqslant\binom Mj^2>(2M-1)\binom Mj,
\]
because $M\geqslant8$ and $\binom Mj\geqslant\binom M2>2M-1$.  Thus every target occurs for $3\leqslant r\leqslant N/2$.  Since the sum of all elements of $V$ is $0$, complementation gives the remaining sizes up to $N-3$.  This proves~(\ref{item:V-all}).

For $D_0=V\setminus\{0\}$, a nontrivial character has generating product
\[
(1-z^2)^{M-1}(1-z).
\]
If $C'_r$ is its coefficient of $z^r$, then
\[
\lvert C'_r\rvert=\binom{M-1}{\lfloor r/2\rfloor}
\]
and
\[
N M_{D_0}(r,\beta)=\binom{N-1}{r}+
\begin{cases}
(N-1)C'_r,&\beta=0,\\
-C'_r,&\beta\ne0.
\end{cases}
\]
For $r\leqslant M-1$, write $j=\lfloor r/2\rfloor$.  If $r=2j$, the central term in the binomial convolution gives
\[
\binom{2M-1}{2j}\geqslant\binom Mj\binom{M-1}{j},
\]
while if $r=2j+1$ it gives
\[
\binom{2M-1}{2j+1}\geqslant\binom M{j+1}\binom{M-1}{j}.
\]
After division by $\lvert C'_r\rvert$, either lower bound is at least $\binom M2>2M-1$ for $r\geqslant3$.  Hence all targets occur up to the midpoint.  The sum of $D_0$ is $0$, so complementation proves~(\ref{item:V-nonzero}).

Finally fix $\tau\ne0$ and put $D_\tau=V\setminus\{0,\tau\}$.  Split the nontrivial characters according as $\chi(\tau)=+1$ or $-1$.  Their generating products are
\[
P_+(z)=(1-z^2)^{M-2}(1-z)^2,
\qquad
P_-(z)=(1-z^2)^{M-1}.
\]
If $E_r=[z^r]P_+(z)$ and $F_r=[z^r]P_-(z)$, then
\[
E_{2j}=(-1)^j\left[\binom{M-2}{j}-\binom{M-2}{j-1}\right],
\quad
E_{2j+1}=-2(-1)^j\binom{M-2}{j},
\]
\[
F_{2j}=(-1)^j\binom{M-1}{j},\qquad F_{2j+1}=0.
\]
The sums of the characters in the two classes are
\[
(S_+,S_-)=(M-1,M),\ (M-1,-M),\ (-1,0)
\]
according as $\beta=0$, $\beta=\tau$, or $\beta\notin\{0,\tau\}$.  Hence
\[
N M_{D_\tau}(r,\beta)=\binom{N-2}{r}+S_+(\beta)E_r+S_-(\beta)F_r.
\]
For $3\leqslant r\leqslant(N-2)/2$, with $j=\lfloor r/2\rfloor$, the correction has absolute value less than
\[
3M\binom{M-1}{j}.
\]
For even $r=2j$, the binomial convolution contains $\binom{M-1}{j}^2$; for odd $r=2j+1$, it contributes $\binom{M-1}{j}\binom{M-1}{j+1}$.  Therefore
\[
\frac{\binom{N-2}{r}}{\binom{M-1}{j}}\geqslant\binom{M-1}{2}>3M
\]
when $M\geqslant16$.  Complementation sends an $r$-subset sum $\beta$ to an $(N-2-r)$-subset sum $\tau+\beta$, because the sum of $D_\tau$ is $\tau$.  This proves~(\ref{item:V-two-delete}).

For the two smaller vector spaces actually used below, the same formulas give the needed exceptional counts explicitly.  If $N=8$, the number of three-subsets of $V$ with sum $\beta$ is $7$ for every $\beta$; the number of three-subsets of $V\setminus\{0\}$ with sum $\beta$ is $7$ when $\beta=0$ and $4$ otherwise; and the number of four-subsets of $V$ with sum $\beta$ is $14$ when $\beta=0$ and $8$ otherwise.  If $N=16$, the number of $r$-subsets of $D_\tau$ with sum $\beta$ is $77,56,62$ for $r=4$, according as $\beta=0,\tau$, or otherwise, and $168,203,188$ for $r=6$ in the same three cases.  All these counts are positive.
\end{proof}

\subsubsection{Trace selection}

The block identities prescribe a value of $T$, which must lie in $V$.  The next three lemmas ensure that this value has trace zero.

\begin{lemma}[First trace-zero selection lemma]\label{lem:T1}
Let $q=2^e\geqslant16$, $a,c\in\mathbb{F}_q$, and $\varepsilon\in\mathbb{F}_2$.  There exists
\[
\delta\in\mathbb{F}_q^\times\setminus\{a\}
\]
such that
\[
\operatorname{Tr}\left(\frac{a\varepsilon+c\delta^{-2}}{a+\delta}\right)=0.
\]
If $c\ne0$ or $a\varepsilon\ne0$, $\delta$ may be chosen so that the displayed target is nonzero.
\end{lemma}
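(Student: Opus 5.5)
The plan is a case split according to which of $c$ and $a\varepsilon$ vanish. The generic case will be settled by a Weil bound for additive character sums of a rational function. For that case the substitution $y=\delta^{-1}$ is convenient. As $\delta$ runs over $\mathbb F_q^\times\setminus\{a\}$, the variable $y$ runs over $\mathbb F_q^\times\setminus\{a^{-1}\}$ (or over all of $\mathbb F_q^\times$ when $a=0$), and the target becomes
\[
 g(y)=\frac{y\,(a\varepsilon+cy^2)}{1+ay}.
\]

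\emph{Trivial cases.} If $c=0$ and $a\varepsilon=0$, the target is identically $0$. Any $\delta\notin\{0,a\}$ works, and such $\delta$ exist because $q\geqslant16$.

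If $c=0$ and $a\varepsilon\ne0$, then $a\ne0$, $\varepsilon=1$, and the target is $a/(a+\delta)$. As $\delta$ ranges over $\mathbb F_q\setminus\{0,a\}$, this value ranges bijectively over $\mathbb F_q^\times\setminus\{1\}$. Among these values at least $q/2-2>0$ are nonzero and have trace zero.

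\emph{The case $c\ne0$, $a=0$.} The target is $cy^3\ne0$. I bound $\bigl\lvert\sum_y(-1)^{\operatorname{Tr}(cy^3)}\bigr\rvert\leqslant2\sqrt q$ by Weil's bound for a polynomial of odd degree $3$. This gives at least $\tfrac12(q-2\sqrt q)-1>0$ values $y\ne0$ with $\operatorname{Tr}(cy^3)=0$; for $q=16$ the count is $\geqslant3$.

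\emph{The case $c\ne0$, $a\ne0$ (main step).} The function $g$ has a simple pole at $y=a^{-1}$ and a pole of order $2$ at infinity, with leading part $(c/a)y^2$.

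\begin{enumerate}
\item In characteristic two, replace $g$ by $g-\lambda^2y^2+\lambda y$ with $\lambda^2=c/a$. This Artin--Schreier modification does not change $\operatorname{Tr}(g(y))$ and lowers the pole at infinity to order at most $1$.
\item The simple pole at $a^{-1}$ cannot be cancelled in this way, so the modified function is not of the form $h^2+h+\text{const}$.
\item The Weil bound for additive character sums of rational functions then gives $\bigl\lvert\sum_{y\ne a^{-1}}(-1)^{\operatorname{Tr}g(y)}\bigr\rvert\leqslant2\sqrt q+1$. Here $2\sqrt q$ comes from two poles of total order at most $2$, and the $+1$ absorbs boundary terms.
\item Hence the number of admissible $y$ with $\operatorname{Tr}g(y)=0$ is at least about $\tfrac12(q-2-2\sqrt q)$.
\item For the nonzero refinement, subtract $y=0$ and the at most two further zeros of $y(a\varepsilon+cy^2)$. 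The remaining count is positive for $q\geqslant32$.
\end{enumerate}

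\emph{Main obstacle.} The hardest part is the bookkeeping at the boundary. This means stating the Weil bound with its precise constant in characteristic two after the Artin--Schreier reduction, and covering the smallest field $q=16$, where the estimate is nearly tight. If the estimate does not close at $q=16$, I would settle that field by an exhaustive check over $(a,c,\varepsilon)$. The check can be reduced by the symmetry $(a,c,\delta)\mapsto(\lambda a,\lambda^3c,\lambda\delta)$, which multiplies the target by $\lambda^{-1}$. This scaling does not preserve the trace, so the reduction only cuts down representatives of $a$; the remaining parameters are still checked directly in $\mathbb F_{16}$, in the same style as the paper's other finite certificates.
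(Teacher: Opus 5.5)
Your strategy is the paper's in substance: an Artin--Schreier reduction of the $y^2$ term followed by a Kloosterman-type bound. The paper normalizes $\delta=az$, sets $C=c/a^3$, $A=C+C^{1/2}$, $B=C+\varepsilon$, writes the trace as $\operatorname{Tr}(A/z+B/(1+z))$, and uses Chevalley instead of Weil when $a=0$. Your trivial cases and the case $a=0$ are fine. However, step 2 of your main case is false in one admissible subcase. Division gives
\[
 g(y)=\frac ca\,y^2+\frac c{a^2}\,y+B+\frac{B}{1+ay},\qquad B=\varepsilon+\frac c{a^3},
\]
so $g$ has a pole at $y=a^{-1}$ only when $B\ne0$. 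For $\varepsilon=1$ and $c=a^3$ (with $a\ne0$), one gets $g(y)=a^2y^2+ay$. This is exactly of the form $h^2+h$: your modified function is $0$, the character sum in step 3 equals $q-1$, and the Weil bound cannot be applied. The lemma still holds there, because $\operatorname{Tr}g\equiv0$ and $g(y)=ay(1+ay)\ne0$ for every admissible $y$. But this case must be split off explicitly; it is the paper's ``if both vanish, the trace condition is automatic'' case. It is the only degenerate subcase, since for $c\ne0$ the condition $B=0$ forces $C=1$ and hence $A=0$.

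Two smaller corrections remove the need for any computation at $q=16$. First, the substitution $w=1+ay$ turns your reduced sum into $\pm\sum_{w\ne0}(-1)^{\operatorname{Tr}(\alpha w+B/w)}$ for a constant $\alpha$. This is a classical Kloosterman sum, of absolute value at most $2\sqrt q$ whenever $B\ne0$, so no extra $+1$ is needed. Second, squaring is bijective in characteristic two, so $cy^2=a\varepsilon$ has at most one root, not two. Hence at least $\lceil(q-1-2\sqrt q)/2\rceil-2=2$ admissible $y$ remain at $q=16$, with nonzero trace-zero target. Finally, your proposed symmetry is miscomputed. Under $(a,c,\delta)\mapsto(\lambda a,\lambda^3c,\lambda\delta)$ the numerator and denominator of the target both scale by $\lambda$, so the target itself is invariant. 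The trace condition is therefore preserved, and you may normalize $a=1$; this is precisely the paper's substitution $\delta=az$.
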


\begin{proof}
If $a=0$, the target is $c\delta^{-3}$.  For $c\ne0$, put $x=\delta^{-1}$.  After choosing an $\mathbb{F}_2$-basis of $\mathbb{F}_q$, the map $x\mapsto\operatorname{Tr}(cx^3)$ is a polynomial of total degree at most two in $e$ variables and vanishes at $x=0$.  Since $e\geqslant4$, Chevalley's theorem \cite[Cor.~6.6]{LidlNiederreiter1997} gives a nonzero zero.  The case $c=0$ is immediate, and if $c\ne0$ the target is automatically nonzero.

Assume $a\ne0$ and write $\delta=az$, $C=c/a^3$.  Then
\[
\frac{\varepsilon+Cz^{-2}}{1+z}
\]
has trace
\[
\operatorname{Tr}\left(\frac A z+\frac B{1+z}\right),
\qquad A=C+C^{1/2},\quad B=C+\varepsilon.
\]
The substitution $u=z/(1+z)$ is a bijection of $\mathbb{F}_q\setminus\{0,1\}$ with itself, and the trace condition becomes
\[
\operatorname{Tr}(A/u+Bu+A+B)=0.
\]
If $A$ and $B$ are nonzero, the Kloosterman-sum estimate \cite[Def.~5.42 and Thm.~5.45]{LidlNiederreiter1997} shows that each prescribed trace class contains at least
\[
\frac{q-1-2\sqrt q}{2}
\]
values of $u\in\mathbb{F}_q^\times$ before excluding $u=1$.  If exactly one of $A,B$ is nonzero, a nonzero linear trace equation in $u$ or $u^{-1}$ has at least $q/2-1$ nonzero solutions; if both vanish, the trace condition is automatic.  These counts leave an admissible $u\ne1$ for $q\geqslant16$.

For the refinement, the target can vanish only when
\[
a\varepsilon+c\delta^{-2}=0.
\]
Under the hypothesis $c\ne0$ or $a\varepsilon\ne0$, this equation is not identically zero and has at most one nonzero solution, because squaring is bijective in characteristic two.  Removing that value from the preceding trace-compatible choices still leaves an admissible $\delta$.
\end{proof}

\begin{lemma}[Second trace-zero selection lemma]\label{lem:T2}
Suppose $q=2^e\geqslant16$, $R$ is odd, and $a\ne0$.  With the parities from Lemma~\ref{lem:block-identities}, the target
\[
\beta(\delta):=\frac{\varepsilon_3}{a}\delta+\varepsilon_2+\frac ba\delta^{-1}+\frac ca\delta^{-2}
\]
can be made trace zero by a suitable $\delta\ne0$.
\end{lemma}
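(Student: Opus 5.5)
Multiplying by $a$ turns the target condition into $\operatorname{Tr}(a\,\beta(\delta))$-type trace conditions, but the trace of $\beta$ itself is what we need. Write
\[
 \beta(\delta)=\varepsilon_2+\frac{\varepsilon_3}{a}\delta+\frac{b}{a}\delta^{-1}+\frac{c}{a}\delta^{-2}.
\]
The first step is to absorb the $\delta^{-2}$ term. Since $\operatorname{Tr}(y^2)=\operatorname{Tr}(y)$, we have $\operatorname{Tr}\bigl((c/a)\delta^{-2}\bigr)=\operatorname{Tr}\bigl((c/a)^{1/2}\delta^{-1}\bigr)$. Hence
\[
 \operatorname{Tr}\beta(\delta)=\operatorname{Tr}(\varepsilon_2)+\operatorname{Tr}\Bigl(\frac{\varepsilon_3}{a}\delta+\frac{B'}{\delta}\Bigr),
 \qquad B':=\frac ba+\Bigl(\frac ca\Bigr)^{1/2}.
\]
Here $\operatorname{Tr}(\varepsilon_2)=e\,\varepsilon_2\bmod2$ is a fixed bit $\eta\in\mathbb F_2$. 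The problem therefore reduces to finding $\delta\in\mathbb F_q^\times$ with $\operatorname{Tr}(\alpha\delta+B'\delta^{-1})=\eta$, where $\alpha=\varepsilon_3/a$.

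The second step splits into cases according to which of $\alpha$ and $B'$ vanish. For $R$ odd, $\varepsilon_3\equiv\binom{R+2}{3}$ is odd exactly when $R\equiv1\pmod4$. In the other case $\alpha=0$ and only the $\delta^{-1}$ term remains.
\begin{enumerate}[label=(\roman*)]
\item If $\alpha\neq0\neq B'$, substitute $\delta=B'x$ to reach $\operatorname{Tr}(\alpha B'x+x^{-1})=\eta$. The Kloosterman estimate \cite[Thm.~5.45]{LidlNiederreiter1997}, exactly as in Lemma~\ref{lem:T1}, shows that each trace class contains at least $(q-1-2\sqrt q)/2>0$ nonzero $x$ once $q\geqslant16$.
\item If exactly one of $\alpha,B'$ is nonzero, the map $\delta\mapsto\operatorname{Tr}(\alpha\delta)$ (or $\delta\mapsto\operatorname{Tr}(B'\delta^{-1})$) is a nonconstant $\mathbb F_2$-linear form in $\delta$ (respectively $\delta^{-1}$). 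Each value is taken by at least $q/2-1>0$ nonzero arguments.
\item If $\alpha=B'=0$, the trace is the constant $\eta$.
\end{enumerate}

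The third step, which I expect to be the main obstacle, is case (iii) with $\eta=1$, which no choice of $\delta$ can repair. This happens when $\varepsilon_3=0$ and $c=ab^2$ (from $B'=0$), with $e\varepsilon_2$ odd. For $R$ odd, $\varepsilon_2\equiv\binom{R+1}{2}$, so $\varepsilon_2=1$ occurs for $R\equiv1\pmod4$, where $\varepsilon_3=1$ forces $\alpha\ne0$. When $R\equiv3\pmod4$ we get $\varepsilon_2\equiv\binom{4}{2}\equiv0$ and $\varepsilon_3=0$, so $\eta=0$. Thus whenever $\alpha=0$ we also have $\eta=0$, and case (iii) is harmless.

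I would verify this parity bookkeeping carefully: for odd $R$, $\varepsilon_2=\varepsilon_3$. This coincidence is what makes the lemma hold without an exceptional branch. Besides that, the only remaining points are to confirm that $\delta\neq0$ is all that is required and that $q\geqslant16$ suffices for the Kloosterman count.
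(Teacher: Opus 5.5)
Your proof is correct and follows the paper's argument. It uses the same identity $\operatorname{Tr}(y^2)=\operatorname{Tr}(y)$ to replace the $\delta^{-2}$ term by $\Delta_0\delta^{-1}$ with $\Delta_0=b/a+(c/a)^{1/2}$, the same three-way split (Kloosterman bound, nonzero linear trace form, trivial case), and the same observation that $\varepsilon_2=\varepsilon_3$ for odd $R$ makes the degenerate case harmless. One slip: $B'=0$ means $c=b^2/a$, not $c=ab^2$, but your argument never uses this explicit form.
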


\begin{proof}
For odd $R$, $\varepsilon_2=\varepsilon_3$.  Put
\[
\Delta_0=\frac ba+\left(\frac ca\right)^{1/2}.
\]
Then
\[
\operatorname{Tr}(\beta(\delta))=
\operatorname{Tr}\left(\frac{\varepsilon_3}{a}\delta+\Delta_0\delta^{-1}+\varepsilon_2\right).
\]
If the two nonconstant coefficients are nonzero, the Kloosterman-sum estimate \cite[Thm.~5.45]{LidlNiederreiter1997} supplies a suitable $\delta$.  If exactly one is nonzero, use a nonzero linear trace equation in $\delta$ or $\delta^{-1}$.  If both vanish, then $\varepsilon_3=\varepsilon_2=0$ and the trace is already zero.
\end{proof}

\begin{lemma}[Exceptional trace-zero selection lemma]\label{lem:T3}
Suppose $q=2^e\geqslant16$, $R$ is odd, and $a=0$.  Put $r=R-1$.  There is a choice $\eta\notin\mathbb{F}_2$ and $\delta\ne0$ such that, for
\[
\tau=\eta^2+\eta,
\qquad D_\tau=V\setminus\{0,\tau\},
\]
the construction using $r$ common-difference blocks together with the singleton points $0$ and $\eta\delta$ prescribes a target $\beta(\delta)\in V$ for the sum of the $r$ block parameters.
\end{lemma}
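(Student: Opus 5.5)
The plan is to compute the agreement coefficient for the proposed set directly, then choose $\eta$ and $\delta$ so that the resulting target has trace zero.

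\textbf{Setup.} Fix $\eta\notin\mathbb F_2$ and $\delta\ne0$, and put $\tau=\eta^2+\eta\in V$. The point $\eta\delta$ lies in the block $P_\delta(\tau)$, and $0$ lies in $P_\delta(0)$. So if the $r=R-1$ block parameters $t_1,\dots,t_r$ are distinct elements of $D_\tau=V\setminus\{0,\tau\}$, the set
\[
S=\{0,\eta\delta\}\cup\bigcup_i P_\delta(t_i)
\]
consists of $2r+2=R+1$ distinct points.

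\textbf{Computing $H_f(S)$.} Write $U$ for the union of the blocks and $T=\sum_i t_i$. Since $r$ is even, Lemma~\ref{lem:block-identities} gives
\[
h_1(U)=0,\qquad h_2(U)=\delta^2(T+\varepsilon_2),\qquad h_3(U)=\delta^3(T+\varepsilon_3),
\]
with $\varepsilon_T=1$ and the parities now those attached to $r$. The point $0$ does not change $h_1,h_2,h_3$. Adjoining $y=\eta\delta$ via $h_j(S)=\sum_i y^ih_{j-i}(U)$ gives
\[
h_1(S)=\eta\delta,\qquad h_3(S)=\delta^3\bigl[(1+\eta)T+\varepsilon_3+\eta\varepsilon_2+\eta^3\bigr].
\]
Since $a=0$, the condition $H_f(S)=h_3+bh_1+c=0$ becomes linear in $T$. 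Because $1+\eta\ne0$, it is equivalent to $T=\beta(\delta)$, where
\[
\beta(\delta)=\frac{\varepsilon_3+\eta\varepsilon_2+\eta^3}{1+\eta}+\frac{b\eta}{1+\eta}\,\delta^{-2}+\frac{c}{1+\eta}\,\delta^{-3}.
\]
This identifies the target. It remains to choose $(\eta,\delta)$ with $\operatorname{Tr}\beta(\delta)=0$, so that $\beta(\delta)\in V$ can later be realized by Lemma~\ref{lem:V-subsets}(\ref{item:V-two-delete}).

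\textbf{Choosing $\delta$ when $(b,c)\ne(0,0)$.} Fix any $\eta\notin\mathbb F_2$ and substitute $x=\delta^{-1}$. Since $\operatorname{Tr}(B'x^2)=\operatorname{Tr}(B'^{1/2}x)$, the condition reads
\[
\operatorname{Tr}\bigl(C'x^3+B'^{1/2}x\bigr)=\operatorname{Tr}(K_0),
\]
with $C'=c/(1+\eta)$, $B'=b\eta/(1+\eta)$ and $K_0$ the constant term of $\beta$.
\begin{itemize}
\item If $C'\ne0$, the Weil bound for the additive character sum of a cubic polynomial, $\lvert\sum_x(-1)^{\operatorname{Tr}(\cdots)}\rvert\leqslant2\sqrt q$, shows that each trace value is attained at least $(q-2\sqrt q)/2$ times. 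After discarding $x=0$ this leaves at least $3$ choices for $q\geqslant16$.
\item If $C'=0$ and $B'\ne0$, the condition is a nonzero linear trace equation with $q/2$ solutions, so a nonzero $x$ exists.
\end{itemize}

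\textbf{Choosing $\eta$ when $b=c=0$ (main obstacle).} Here $\delta$ plays no role and everything rests on $\eta$, which is where the lemma could fail. Put $w=1+\eta\notin\mathbb F_2$ and expand $\eta^3=(w+1)^3$. One finds
\[
K_0=w^2+w+(1+\varepsilon_2)+\frac{\gamma}{w},\qquad \gamma=1+\varepsilon_2+\varepsilon_3,
\]
and $\operatorname{Tr}(w^2+w)=0$.
\begin{itemize}
\item If $\gamma\ne0$, the map $w\mapsto\operatorname{Tr}(\gamma/w)$ takes each value on roughly $q/2$ elements, so some $w\notin\{0,1\}$ makes $\operatorname{Tr}K_0=0$.
\item If $\gamma=0$, then $\operatorname{Tr}K_0=e(1+\varepsilon_2)\bmod2$ is independent of $w$, so the parities themselves must be checked.
\end{itemize}
I would compute $\varepsilon_2\equiv r/2$ and $\varepsilon_3\equiv\binom{r+2}{3}$ modulo $2$ for $r$ even. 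For example, $r=2$ gives $\varepsilon_2=1$ and $\varepsilon_3=0$, so $\gamma=0$ but $1+\varepsilon_2=0$; and $r=4$ gives $\gamma=1$. The task is to confirm that, for every even $r$, $\gamma=0$ always forces $\varepsilon_2=1$.

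If some residue class of $r$ modulo $8$ fails this check, I would fall back on the branch $a=b=c=0$. This is the monomial case, where Lemma~\ref{lem:block-identities} with $\delta$ free (or a direct choice making $h_3=0$) already gives agreement. I would treat it separately rather than through this lemma.
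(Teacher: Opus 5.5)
Your argument is correct, and it reaches the lemma by a different route from the paper's.

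\textbf{How the paper does it.} The paper first fixes $\eta$ using only $r \bmod 4$. Since $\varepsilon_3=0$ for every even $r$, the constant term of $\beta$ is $\eta^2+\eta\in V$ when $r\equiv2\pmod 4$. When $r\equiv0\pmod4$, its trace is $\operatorname{Tr}(1+u)$ with $u=(1+\eta)^{-1}$, and this is made zero by choosing $u\neq0,1$ with $\operatorname{Tr}(u)=\operatorname{Tr}(1)$. Once $\operatorname{Tr}K_0=0$ is secured, $x\mapsto\operatorname{Tr}\beta(x^{-1})$ is a function of degree at most $2$ in the $e\geqslant4$ coordinates of $x$ that vanishes at $x=0$. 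Chevalley--Warning then gives a nonzero zero, uniformly in $(b,c)$, including $b=c=0$.

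\textbf{How your route differs.} You leave $\eta$ arbitrary when $(b,c)\neq(0,0)$. You then find $\delta$ from the Weil bound for the cubic additive character sum, or from a nonzero linear trace equation when $c=0$. This handles either value of $\operatorname{Tr}K_0$ and yields many admissible $\delta$ for any $\eta$. The cost is a heavier tool than Chevalley--Warning, plus a separate $b=c=0$ case in which the parity analysis must still be done. So that analysis is moved, not avoided.

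\textbf{The check you left open.} It closes in one line. For $r=2s$,
\[
\binom{r+2}{3}=\frac{2s(s+1)(2s+1)}{3}\equiv0\pmod 4,
\]
because $s(s+1)(2s+1)$ is divisible by $6$. Hence $\varepsilon_3=0$ and $\gamma=1+\varepsilon_2$. So $\gamma=0$ forces $\varepsilon_2=1$, and then $\operatorname{Tr}K_0=e(1+\varepsilon_2)=0$. No residue class fails, and your fallback is unnecessary. That matters, because treating $a=b=c=0$ outside the lemma would not have proved the lemma as stated.

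\textbf{A small slip.} The set $S$ has $2r+2=2R=k+1$ points, not $R+1$.
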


\begin{proof}
The coefficient of the subset-sum parameter is $\delta^3(1+\eta)\ne0$, and the target is
\[
\beta(\delta)=
\frac{\varepsilon_3+\eta\varepsilon_2+\eta^3}{1+\eta}
+\frac{b\eta}{1+\eta}\delta^{-2}
+\frac{c}{1+\eta}\delta^{-3}.
\]
Since $r$ is even, $\varepsilon_3=0$, and $\varepsilon_2=1$ exactly when $r\equiv2\pmod4$.  In that case the constant term is $\eta^2+\eta\in V$.  If $r\equiv0\pmod4$, write $u=(1+\eta)^{-1}$; then
\[
\operatorname{Tr}\frac{\eta^3}{1+\eta}=\operatorname{Tr}(1+u).
\]
Choose $u\in\mathbb{F}_q^\times\setminus\{1\}$ with $\operatorname{Tr}(u)=\operatorname{Tr}(1)$, which is possible for $q\geqslant16$.  Thus the constant term has trace zero.

Now put $x=\delta^{-1}$.  After choosing an $\mathbb{F}_2$-basis of $\mathbb{F}_q$, the trace of $\beta$ is a polynomial of total degree at most two in $e$ variables and vanishes at the origin.  Since $e\geqslant4$, Chevalley's theorem \cite[Cor.~6.6]{LidlNiederreiter1997} gives a nonzero zero.  Taking its inverse gives the desired $\delta$.
\end{proof}

\subsubsection{The middle dimensions}

\begin{proposition}[Middle range]\label{prop:even-middle}
Let $q=2^e\geqslant16$ and $6\leqslant k<q/2$.  Then there is a $(k+1)$-subset $S\subseteq\mathbb{F}_q$ with $H_f(S)=0$.
\end{proposition}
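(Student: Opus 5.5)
The plan is to build $S$ from the Artin--Schreier blocks $P_\delta(t)$ of Lemma~\ref{lem:block-identities} and then solve $H_f(S)=0$ for the block-parameter sum $T$. Write $k+1=2R+\sigma$ with $\sigma\in\{0,1\}$. If $\sigma=0$, take $S=\bigcup_{i=1}^R P_\delta(t_i)$ with distinct $t_i\in V$. If $\sigma=1$, adjoin the point $0$, which does not change $h_1,h_2,h_3$; it lies in the block $P_\delta(0)$, so the $t_i$ must be taken in $V\setminus\{0\}$. By \eqref{eq:block-h},
\[
H_f(S)=\delta^3(\varepsilon_TT+\varepsilon_3)+a\delta^2(T+\varepsilon_2)+b\delta\varepsilon_1+c .
\]
This is affine in $T$ with coefficient $\delta^2(\varepsilon_T\delta+a)$. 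The construction then needs two things: a $\delta$ for which the solved value $T$ has trace zero, so that $T\in V$, and $R$ distinct admissible parameters with sum $T$. Since $k\geqslant6$ we have $R\geqslant3$, and $k<q/2$ gives $R<q/4\leqslant N-4$. So Lemma~\ref{lem:V-subsets}~(\ref{item:V-all}) or~(\ref{item:V-nonzero}) supplies the parameters once $N=q/2\geqslant16$; for $q=16$ ($N=8$) I would use the explicit counts recorded there, checking in that case that $R\leqslant 3$ or $4$ as needed.

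\emph{Case $R$ even.} Here $\varepsilon_T=\varepsilon_1+1=1$ and $\varepsilon_1=0$. For $\delta\ne a$ the equation gives
\[
T=\varepsilon_3+\frac{a(\varepsilon_2+\varepsilon_3)+c\delta^{-2}}{a+\delta}.
\]
Lemma~\ref{lem:T1} with $\varepsilon=\varepsilon_2+\varepsilon_3$ makes the fraction trace zero. The additive constant $\varepsilon_3$ contributes $\operatorname{Tr}(\varepsilon_3)=\varepsilon_3e$. When this is nonzero, I would either switch which of $0$ and a block carries the odd point, or absorb the constant by replacing the block parameters $t_i$ by $t_i+\theta^2+\theta$ for suitable $\theta$. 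Checking that one of these fixes always works is the first delicate point. The refinement in Lemma~\ref{lem:T1} giving $T\ne0$ is needed only when $0$ is excluded and $R$ would otherwise be forced to small values.

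\emph{Case $R$ odd, $a\ne0$.} Now $\varepsilon_T=0$, and dividing by $a\delta^2$ gives exactly $T=\beta(\delta)$ as in Lemma~\ref{lem:T2}. That lemma gives a trace-zero target, and Lemma~\ref{lem:V-subsets} supplies the parameters as before.

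\emph{Case $R$ odd, $a=0$.} In this case the coefficient of $T$ vanishes, so I would abandon one block. I use $r=R-1$ blocks together with the two singletons $0$ and $\eta\delta$, adjusting the cardinality by one according to $\sigma$. Lemma~\ref{lem:T3} then produces $\eta,\delta$ with a target $\beta(\delta)\in V$, and its coefficient $\delta^3(1+\eta)$ is nonzero. The singleton $\eta\delta$ lies in the block $P_\delta(\tau)$ and $0$ lies in $P_\delta(0)$, so the $r$ parameters must come from $D_\tau$. For $N\geqslant32$ this is Lemma~\ref{lem:V-subsets}~(\ref{item:V-two-delete}), using $3\leqslant r\leqslant N-5$, which follows from $k<q/2$. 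For $N=16$, i.e.\ $q=32$, I would use the stated counts for $r=4,6$. For $q=16$, where $k\in\{6,7\}$, the tiny residual would be checked directly.

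I expect the main obstacle to be bookkeeping at the small fields $q=16,32$ and at the ends of the range. There I must confirm that the required $r$ or $R$ falls in the sizes covered by the explicit counts, and that the trace constants from $\varepsilon_3$ can always be made compatible. If some residual configuration resists the block constructions, the fallback is a direct exhaustive certificate in the spirit of Remark~\ref{rem:finite-core-interface}.
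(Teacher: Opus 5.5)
Your construction follows the paper's closely: the same Artin--Schreier blocks, Lemmas~\ref{lem:T1}--\ref{lem:T3}, and the counts of Lemma~\ref{lem:V-subsets}. However, you always adjoin the singleton $0$ when $k$ is even, and this leaves one subcase unproved: $k=2R$ with $R$ odd and $a=0$.

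In that subcase $\varepsilon_T=0$, so the coefficient $\delta^2(\varepsilon_T\delta+a)$ of $T$ vanishes. Then $H_f=\varepsilon_3\delta^3+b\delta+c$ does not depend on the block parameters at all. For instance, with $R\equiv3\pmod4$ and $b\ne0=c$ it has no nonzero root in $\delta$, and with $R\equiv1\pmod4$ and $b=c=0$ it is $\delta^3\ne0$.

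Your fallback, $R-1$ blocks plus the two singletons $0$ and $\eta\delta$, gives $2R$ points, but you need $k+1=2R+1$. By parity, pairs plus two singletons never reach an odd cardinality. A third singleton would change $h_1,h_2,h_3$, and the target of Lemma~\ref{lem:T3} would then no longer apply. So ``adjusting the cardinality by one according to $\sigma$'' is not a valid construction. This subcase also occurs for every $q$ (e.g.\ $q=64$, $k=10$), not only in the $q=16$ residual you planned to check by hand.

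The missing idea is the paper's choice of singleton $\eta\delta$ with $\eta\equiv R\pmod2$ when $k=2R$: for odd $R$, adjoin $\delta$ instead of $0$. This point lies in $P_\delta(0)$, so the parameters still come from $V\setminus\{0\}$. Adjoining it gives $h_1=0$, $h_2=\delta^2(T+\varepsilon_2)$, and $h_3=\delta^3(T+\varepsilon_2+\varepsilon_3)=\delta^3T$, using $\varepsilon_2=\varepsilon_3$ for odd $R$. The equation becomes $\delta^2(a+\delta)T+a\varepsilon_2\delta^2+c=0$, which Lemma~\ref{lem:T1} handles for every $a$. Lemma~\ref{lem:T3} is then needed only for odd $k$, where your count $2(R-1)+2=k+1$ is correct.

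Separately, the ``first delicate point'' you flag in the even-$R$ case cannot arise. For $R=2j$ one has $\binom{R+2}{3}=\tfrac23 j(j+1)(2j+1)=4\sum_{i\leqslant j}i^2$, so $\varepsilon_3=0$ and Lemma~\ref{lem:T1} with $\varepsilon=\varepsilon_2$ applies directly. This matters, because the shift $t_i\mapsto t_i+\theta^2+\theta$ could not have repaired a target of nonzero trace.
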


\begin{proof}
Write the required cardinality in Artin--Schreier blocks.

Suppose first that $k=2R$ is even.  Use $R$ blocks and one singleton $\eta\delta$, where $\eta\equiv R\pmod2$ and $\eta\in\{0,1\}$.  Thus the singleton is $0$ when $R$ is even and $\delta$ when $R$ is odd.  Lemma~\ref{lem:block-identities}, together with $\varepsilon_2=\varepsilon_3$ for odd $R$, gives for the resulting $(2R+1)$-set
\[
h_1=0,
\qquad h_2=\delta^2(T+\varepsilon_2),
\qquad h_3=\delta^3T.
\]
Hence
\[
\delta^2(a+\delta)T+a\varepsilon_2\delta^2+c=0.
\]
Lemma~\ref{lem:T1} with $\varepsilon=\varepsilon_2$ chooses $\delta\ne0,a$ so that the prescribed value of $T$ lies in $V$.  A block meets the singleton precisely when its parameter is $0$, so the $R$ distinct parameters are chosen in $V\setminus\{0\}$; Lemma~\ref{lem:V-subsets} supplies them.

Now let $k$ be odd and put $R=(k+1)/2$, so the construction uses $R$ blocks and no singleton.  If $R$ is even, then $\varepsilon_1=\varepsilon_3=0$ and $\varepsilon_T=1$, so the same displayed formulas hold; Lemma~\ref{lem:T1} again makes the prescribed $T\in V$, and Lemma~\ref{lem:V-subsets}~(\ref{item:V-all}) supplies the $R$ parameters.

If $R$ is odd and $a\ne0$, Lemma~\ref{lem:block-identities} gives
\[
H_f=a\delta^2\bigl(T+\beta(\delta)\bigr),
\]
with $\beta(\delta)$ exactly as in Lemma~\ref{lem:T2}.  Choose $\delta$ there and then use an $R$-subset of all of $V$ with sum $\beta(\delta)$.

Finally suppose that $R$ is odd and $a=0$.  Put $r=R-1$.  Lemma~\ref{lem:T3} uses $r$ common-difference blocks together with the singleton points $0$ and $\eta\delta$.  A block meets one of these two singleton points exactly for parameter $0$ or $\tau=\eta^2+\eta$, respectively.  Thus the $r$ parameters lie in $D_\tau=V\setminus\{0,\tau\}$, and Lemma~\ref{lem:V-subsets}~(\ref{item:V-two-delete}) supplies them once the trace-zero target from Lemma~\ref{lem:T3} is fixed.

For $q\geqslant64$, the stated ranges in Lemma~\ref{lem:V-subsets} contain all the subset sizes above: in the even case $3\leqslant R\leqslant N/2-1$, and in the exceptional two-deletion case $4\leqslant r\leqslant N/2-1$.  When $q=16$, the only additional sizes in this proposition are the $N=8$ cases $\lvert V\setminus\{0\}\rvert$ with $r=3$ and $V$ with $r=4$, both covered by the explicit counts at the end of Lemma~\ref{lem:V-subsets}.  When $q=32$, Lemma~\ref{lem:V-subsets}~(\ref{item:V-all}) and~(\ref{item:V-nonzero}) already apply because $N=16$; only the exceptional two-deletion sizes $r=4,6$ fall outside~(\ref{item:V-two-delete}), and these are exactly the two positive $N=16$ counts recorded there.
\end{proof}

\subsubsection{The dimensions \texorpdfstring{$k\leqslant5$}{k <= 5}}

\begin{lemma}[The case $k=2$]\label{lem:k2-even}
For every even $q\geqslant8$, there is a three-set $S$ with $H_f(S)=0$.
\end{lemma}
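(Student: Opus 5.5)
We need a three-set $S=\{x,y,z\}$ with $H_f(S)=h_3+ah_2+bh_1+c=0$ for the normalized $f$ in \eqref{eq:normalized-f} with $k=2$, over $q=2^e\geqslant8$. The plan is to use a single Artin--Schreier block together with one singleton, so that Lemma~\ref{lem:block-identities} does most of the computation. Take the block $P_\delta(t)=\{\delta w,\delta(w+1)\}$ with $w^2+w=t\in V$, and add a singleton $\lambda\notin P_\delta(t)$. Writing $\lambda=\delta\mu$, the one-block values $h_1=\delta$, $h_2=\delta^2(1+t)$, $h_3=\delta^3$ combine with the singleton via $h_j(S)=\sum_i \lambda^i h_{j-i}(\text{block})$. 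Hence $H_f(S)$ is affine in $t$, with coefficient of $t$ equal to $\delta^2(\delta+\lambda+a)$; all other terms are explicit in $\delta,\lambda$.

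Step one is to fix $\lambda$ and $\delta$ so that this coefficient is nonzero, i.e.\ $\lambda\neq\delta+a$. Solving $H_f(S)=0$ then gives a prescribed target $t=\beta(\delta,\lambda)$, a rational function. Step two, which is the crux, is to choose $(\delta,\lambda)$ so that $\operatorname{Tr}\beta=0$; this makes $\beta\in V$ and yields a genuine block. Step three is disjointness: we need $\lambda\notin P_\delta(\beta)$, i.e.\ $\mu^2+\mu\neq\beta$. Since the block is a fiber of $x\mapsto x^2+\delta x$, this is the single condition $\lambda^2+\delta\lambda\neq\delta^2\beta$.

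For the trace step I would mimic Lemma~\ref{lem:T1}. The natural first choice is $\lambda=0$, for which $H_f=\delta^3+a\delta^2(1+t)+b\delta+c$ with $t$-coefficient $\delta^2(\delta+a)$. This reduces to asking for $\operatorname{Tr}\bigl((\delta^3+a\delta^2+b\delta+c)/(\delta^2(\delta+a))\bigr)=0$ with $\delta\neq0,a$. Expand by partial fractions in $\delta$ and use $\operatorname{Tr}(x^2)=\operatorname{Tr}(x)$ to merge the $\delta^{-2}$ term into $\delta^{-1}$. This leaves a trace of a combination of $1$, $\delta^{-1}$ and $(\delta+a)^{-1}$. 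After the substitution $u=\delta/(\delta+a)$ used in Lemma~\ref{lem:T1}, it becomes a condition of the form $\operatorname{Tr}(A/u+Bu+C)=0$ when $a\neq0$, or $\operatorname{Tr}(1+b\delta^{-2}+c\delta^{-3})$-type when $a=0$.

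The case $a=0$ is handled by Chevalley (quadratic form in $e\geqslant3$ variables, giving a nonzero zero once a constant is absorbed). This may require instead the singleton $\lambda=\delta$ to adjust the constant term, so I would keep $\lambda\in\{0,\delta\}$ as two options. The case $a\neq0$ is handled by the Weil--Kloosterman bound, which gives about $(q-1-2\sqrt q)/2$ good $u$. The main obstacle is small $q$, namely $q=8$ and perhaps $q=16$. There the Kloosterman count is not automatically positive after removing $u=1$ and the few values violating disjointness, at most two per $\delta$ since $\lambda^2+\delta\lambda=\delta^2\beta$ is quadratic. I expect to handle these small cases either by using the extra freedom in $\lambda$ (many singletons for each $\delta$) or by a short exhaustive check over normalized $(a,b,c)$, after an affine normalization $X\mapsto uX+v$ as in Proposition~\ref{prop:q9}.
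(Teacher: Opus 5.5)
There is a genuine gap, and it starts in the central computation. For $S=P_\delta(t)\cup\{\lambda\}$ one has $h_1(S)=\delta+\lambda$, $h_2(S)=\delta^2(1+t)+\lambda\delta+\lambda^2$ and $h_3(S)=\delta^3+\lambda\delta^2(1+t)+\lambda^2\delta+\lambda^3$. The block's $h_3$ is $\delta^3$, independent of $t$, so the coefficient of $t$ in $H_f(S)$ is $\delta^2(\lambda+a)$, not $\delta^2(\delta+\lambda+a)$.

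Your own formula for $\lambda=0$, $H_f=\delta^3+a\delta^2(1+t)+b\delta+c$, has $t$-coefficient $a\delta^2$. So the target $(\delta^3+a\delta^2+b\delta+c)/(\delta^2(\delta+a))$ and the reduction to $\operatorname{Tr}(1+b\delta^{-2}+c\delta^{-3})$ solve the wrong equation.

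For $a\neq0$ this is repairable. The correct target $\delta/a+1+(b/a)\delta^{-1}+(c/a)\delta^{-2}$ has exactly the Kloosterman shape of Lemma~\ref{lem:T2}, valid for $q\geqslant16$. For $a=0$ it is not repairable this way. With $\lambda=0$, $H_f=\delta^3+b\delta+c$ does not involve $t$ at all. With $\lambda=\delta$ one has $h_1(S)=0$ and $H_f=\delta^3t+c$.

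Now take $a=b=c=0$, i.e.\ $f=X^5$ modulo lower terms. The first choice gives $H_f=\delta^3\neq0$. The second forces $t=0$, and then $P_\delta(0)=\{0,\delta\}$ contains the singleton $\delta$. So your two options $\lambda\in\{0,\delta\}$ fail for this $f$ over every field, not only for small $q$. You would need a third kind of singleton $\lambda=\mu\delta$ with $\mu\notin\{0,1\}$; there the target is $(1+\mu)^3/\mu$, of trace $\operatorname{Tr}(\mu^{-1})+\operatorname{Tr}(1)$. In any event, the case $q=8$ is left as an unspecified check.

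The idea you are missing is to use a free linear parameter that ranges over all of $\mathbb F_q$. Your block parameter $t$ is confined to the trace-zero hyperplane, and that is what creates the trace problem and the need for Chevalley and Kloosterman estimates.

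The paper instead translates an entire three-set. In characteristic two, $h_1(X+s)=h_1(X)+s$, $h_2(X+s)=h_2(X)$ and $h_3(X+s)=h_3(X)+sh_2(X)$. Hence $H_f(X+s)=H_f(X)+s\bigl(h_2(X)+b\bigr)$ for every $s\in\mathbb F_q$.

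Take $X=\{0,\lambda,\lambda r\}$ with $r\notin\{0,1\}$ and $r^2+r+1\neq0$. Then $h_2(X)=\lambda^2(1+r+r^2)$, which runs through all of $\mathbb F_q^\times$ as $\lambda$ varies. Choosing it different from $b$ and solving the linear equation for $s$ gives the required three-set for every even $q\geqslant8$. This needs no character sums, no trace conditions, and no separate small-field cases.
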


\begin{proof}
For a three-set $X$, translation by $t$ gives
\[
h_1(X+t)=h_1(X)+t,
\quad h_2(X+t)=h_2(X),
\quad h_3(X+t)=h_3(X)+t h_2(X),
\]
so
\[
H_f(X+t)=H_f(X)+t(h_2(X)+b).
\]
Take $X=\{0,\lambda,\lambda r\}$ with $r\notin\{0,1\}$ and $1+r+r^2\ne0$.  Such an $r$ exists because at most four field elements are excluded and $q\geqslant8$.  Then
\[
h_2(X)=\lambda^2(1+r+r^2).
\]
As $\lambda$ ranges over $\mathbb{F}_q^\times$, this expression ranges over all nonzero field elements because squaring is bijective.  Choose $\lambda$ so that $h_2(X)\ne b$, and solve uniquely for $t$.
\end{proof}

\begin{lemma}[Two Artin--Schreier pairs]\label{lem:transverse}
Let $q\geqslant16$ be even.  For two distinct nonzero differences $\delta_1,\delta_2$, the union of two Artin--Schreier pairs has
\[
H_f=H_0+\gamma_1t_1+\gamma_2t_2,
\]
where
\[
\gamma_1:=\delta_1^2(\delta_2+a),
\qquad
\gamma_2:=\delta_2^2(\delta_1+a).
\]
The differences can be chosen so that $\gamma_1\gamma_2\ne0$ and $\gamma_1\ne\gamma_2$.  Then every target has exactly $q/4$ solutions $(t_1,t_2)\in V^2$, and at most two of them yield intersecting pairs.
\end{lemma}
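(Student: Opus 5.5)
The plan is a direct computation for the linear form, then an $\mathbb F_2$-linear-algebra count, and finally a root count for the intersecting solutions.

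\emph{The linear form.} By the one-block case of Lemma~\ref{lem:block-identities}, a single block $P_\delta(t)$ has $h_1=\delta$, $h_2=\delta^2(1+t)$ and $h_3=\delta^3$. The complete homogeneous functions of a disjoint union are obtained by multiplying generating series, so for the union of $P_{\delta_1}(t_1)$ and $P_{\delta_2}(t_2)$ in characteristic two:
\[
h_1=\delta_1+\delta_2,\qquad
h_2=\delta_1^2(1+t_1)+\delta_1\delta_2+\delta_2^2(1+t_2),
\]
\[
h_3=\delta_1^3+\delta_1^2\delta_2(1+t_1)+\delta_1\delta_2^2(1+t_2)+\delta_2^3.
\]
Substituting into $H_f=h_3+ah_2+bh_1+c$ shows that $H_f$ is affine in $(t_1,t_2)$. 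The coefficient of $t_1$ is $\delta_1^2\delta_2+a\delta_1^2=\gamma_1$, and that of $t_2$ is $\gamma_2$. The constant term $H_0$ collects everything independent of the $t_i$.

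\emph{Choice of differences.} First, $\gamma_1\gamma_2\neq0$ amounts to $\delta_1,\delta_2\notin\{0,a\}$. Second,
\[
\gamma_1+\gamma_2=\delta_1\delta_2(\delta_1+\delta_2)+a(\delta_1+\delta_2)^2=(\delta_1+\delta_2)\bigl(\delta_1\delta_2+a(\delta_1+\delta_2)\bigr),
\]
so for $\delta_1\ne\delta_2$ we need only $\delta_2(\delta_1+a)\neq a\delta_1$. Fix $\delta_1\notin\{0,a\}$. Then this last condition excludes at most one value of $\delta_2$, because $\delta_1+a\ne0$. Together with $\delta_2\notin\{0,a,\delta_1\}$, at most four values are excluded, and $q\geqslant16$ leaves a valid $\delta_2$.

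\emph{Exactly $q/4$ solutions.} The map $(t_1,t_2)\mapsto\gamma_1t_1+\gamma_2t_2$ from $V^2$ to $\mathbb F_q$ is $\mathbb F_2$-linear. Its image is $\gamma_1V+\gamma_2V$. Here $\gamma_iV=\ker\bigl(x\mapsto\operatorname{Tr}(x/\gamma_i)\bigr)$, and distinct nonzero $1/\gamma_i$ give distinct nonzero trace functionals. So the two sets are distinct hyperplanes, and their sum is all of $\mathbb F_q$. The map is therefore surjective, with kernel of size $(q/2)^2/q=q/4$, and every target has exactly $q/4$ solutions.

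\emph{At most two intersecting solutions.} Suppose a solution $(t_1,t_2)$ has $P_{\delta_1}(t_1)$ and $P_{\delta_2}(t_2)$ sharing a point $x$. Since $x=\delta_iw$ with $w^2+w=t_i$, the point $x$ determines
\[
t_i=(x/\delta_i)^2+x/\delta_i .
\]
Hence $x$ determines the solution. Substituting into the target equation gives a polynomial equation in $x$ whose $x^2$-coefficient is
\[
\gamma_1/\delta_1^2+\gamma_2/\delta_2^2=(\delta_2+a)+(\delta_1+a)=\delta_1+\delta_2\neq0 .
\]
It is thus a genuine quadratic with at most two roots, so at most two solutions yield intersecting pairs.

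The only step requiring care is the identification of $\gamma V$ as a trace hyperplane and the resulting surjectivity; everything else is routine expansion.
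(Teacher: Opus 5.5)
Your proposal is correct and follows essentially the same route as the paper. The paper uses the same exclusion set $\{0,\delta_1,a,a\delta_1/(a+\delta_1)\}$ for $\delta_2$ and the same distinct-trace-hyperplane argument for surjectivity and the count $q/4$. It also reduces intersecting solutions to the same quadratic $(\delta_1+\delta_2)(x^2+ax)=H_0$. The one unstated fact in your proof, that distinct nonzero $c$ give distinct functionals $x\mapsto\operatorname{Tr}(cx)$, is nondegeneracy of the trace form; the paper cites it as \cite[Thm.~2.24]{LidlNiederreiter1997}.
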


\begin{proof}
Choose $\delta_1\notin\{0,a\}$ and choose $\delta_2$ outside
\[
\left\{0,\delta_1,a,\frac{a\delta_1}{a+\delta_1}\right\}.
\]
Then $\gamma_1\gamma_2\ne0$ and $\gamma_1\ne\gamma_2$.  If a nonzero scalar $\lambda$ satisfies $\lambda V=V$, then the two nonzero $\mathbb{F}_2$-linear functionals $x\mapsto\operatorname{Tr}(x)$ and $x\mapsto\operatorname{Tr}(\lambda^{-1}x)$ have the same kernel and hence are equal.  By \cite[Thm.~2.24]{LidlNiederreiter1997}, this forces $\lambda=1$.  Thus $\gamma_1V$ and $\gamma_2V$ are distinct hyperplanes and their sum is $\mathbb{F}_q$, giving exactly $q/4$ solutions.  If the two pairs meet at $x$, substitution reduces to a quadratic equation
\[
(\delta_1+\delta_2)(x^2+ax)=D'
\]
for a fixed $D'$, hence to at most two intersecting solutions.
\end{proof}

\begin{lemma}[The cases $k=3,4$]\label{lem:k34-even}
For $q\geqslant16$, the case $k=3$ is solved by two pairs as in Lemma~\ref{lem:transverse}.  For $q\geqslant32$, the case $k=4$ is solved by two such pairs together with the point $0$.  The remaining case $q=16,k=4$ is also solvable.
\end{lemma}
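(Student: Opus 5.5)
For $k=3$ I will take $S=P_{\delta_1}(t_1)\cup P_{\delta_2}(t_2)$. This has exactly $4=k+1$ points provided the two pairs are disjoint. The plan is to apply Lemma~\ref{lem:transverse} directly, after first checking its affine formula. For a single pair $P_\delta(t)$ the proof of Lemma~\ref{lem:block-identities} gives $h_1=\delta$, $h_2=\delta^2(1+t)$ and $h_3=\delta^3$. Multiplying the two generating functions and reading off degrees at most three, each $h_j$ of the union is affine in $(t_1,t_2)$. The $t$-dependent part of $H_f=h_3+ah_2+bh_1+c$ works out to $t_1\delta_1^2(\delta_2+a)+t_2\delta_2^2(\delta_1+a)$, which matches $\gamma_1,\gamma_2$.

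Next I choose $\delta_1,\delta_2$ as in the lemma, so that $\gamma_1\gamma_2\ne0$ and $\gamma_1\ne\gamma_2$. Then the target $-H_0$ has $q/4$ solutions $(t_1,t_2)\in V^2$, and at most two of them give intersecting pairs. Since $q/4\geqslant4>2$ for $q\geqslant16$, some solution gives disjoint pairs, and this yields the required four-set with $H_f(S)=0$.

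For $k=4$ I adjoin the point $0$. This leaves $h_1,h_2,h_3$ unchanged, so the same linear equation governs the problem. The new requirement is that neither pair contains $0$. The pair $P_{\delta_i}(t_i)$ contains $0$ exactly when $t_i=0$. The solution set is an affine plane of size $q/4$ over $\mathbb F_2$, namely $\{(t_1,t_2)\in V^2:\gamma_1t_1+\gamma_2t_2=\text{target}\}$. Because the two constraints $\gamma_iV$ are transverse, fixing $t_1=0$ determines at most $|V\cap\gamma_2^{-1}\gamma_1V|/|V|\cdot\ldots$ points. More concretely, the solutions with $t_1=0$ form the set $\{t_2\in V\cap\gamma_2^{-1}(\text{target}+0)\}$, which has at most one element. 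The same holds for $t_2=0$. Together with the two intersecting solutions, this excludes at most four of the $q/4$ solutions. For $q\geqslant32$ we have $q/4=8>4$, which settles $k=4$.

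The main obstacle is the residual case $q=16$, $k=4$. There $q/4=4$, so the crude count may exclude everything. I would try the following steps in order. First, I would exploit the freedom in choosing $(\delta_1,\delta_2)$: at least $16-2-4$ choices of $\delta_2$ remain for each $\delta_1$, and I would argue that the bad events cannot coincide for all of them. Second, I would replace $0$ by another singleton $\eta$ with $\eta\in\{0,\delta_1\}$, mirroring Proposition~\ref{prop:even-middle}, or translate the whole set, which changes $H_f$ as in the proof of Lemma~\ref{lem:k2-even}. Failing both, I would fall back on the paper's finite-certificate approach. The affine normalization $X\mapsto uX+v$ reduces $(a,b,c)$ to few classes, and I would exhibit a stored five-subset for each normalized coefficient vector, so that the hyperplanes $h_3+ah_2+bh_1+c=0$ cover $\mathbb F_{16}^3$. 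This would be checked by the deterministic verification program.
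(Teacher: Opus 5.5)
Your arguments for $k=3$ and for $k=4$ with $q\geqslant32$ are correct and are the paper's own. The $t$-linear part $\gamma_1t_1+\gamma_2t_2$ does follow from the single-block values $h_1=\delta$, $h_2=\delta^2(1+t)$, $h_3=\delta^3$. For $k=4$, each of $t_1=0$ and $t_2=0$ fixes at most one solution because $\gamma_1\gamma_2\ne0$, so at most four of the $q/4$ solutions are bad. The garbled sentence involving $\lvert V\cap\gamma_2^{-1}\gamma_1V\rvert$ should simply be deleted.

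The genuine gap is the case $q=16$, $k=4$. The statement asserts it, but you do not prove it: you list strategies you would try and carry out none of them.
\begin{itemize}
\item Your first strategy (varying $(\delta_1,\delta_2)$) is exactly the claim that needs proof. With only $q/4=4$ solutions against up to four exclusions, there is no counting reason for it to hold.
\item Your third strategy promises certificates that are not supplied.
\end{itemize}

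The missing idea is to give up distinct differences here. Take two blocks $P_\delta(t_1)\cup P_\delta(t_2)$ with a \emph{common} difference, together with $0$. Your second suggestion points this way only once the two differences are made equal.

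Lemma~\ref{lem:block-identities} with $R=2$ gives $h_1=0$, $h_2=\delta^2(T+1)$ and $h_3=\delta^3T$, where $T=t_1+t_2$. Hence $H_f=\delta^2(\delta+a)T+a\delta^2+c$ depends on the parameters only through $T$.
\begin{itemize}
\item If $(a,c)\ne(0,0)$, Lemma~\ref{lem:T1} with $\varepsilon=1$ yields $\delta\notin\{0,a\}$ for which the required $T=(a+c\delta^{-2})/(a+\delta)$ is a nonzero element of $V$. Since $\lvert V\rvert=8$, write $T=t_1+t_2$ with distinct $t_1,t_2\in V\setminus\{0\}$. The resulting blocks are then automatically disjoint and avoid $0$, so no collision count is needed at all.
\item If $a=c=0$, this construction gives $H_f=\delta^3T\ne0$, so it genuinely fails and a separate argument is required. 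After scaling $b$ to $0$ or $1$, the paper uses explicit five-sets in $\mathbb F_{16}$ with $\theta^4=\theta+1$. These are $\{0,1,\theta,\theta^2,\theta^2+\theta^3\}$, with $h_3=0$, and $\{0,1,\theta,1+\theta,\theta+\theta^2\}$, with $h_1=h_3=\theta+\theta^2$.
\end{itemize}
Without both pieces, the last sentence of the lemma remains unproved.
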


\begin{proof}
For $k=3$, Lemma~\ref{lem:transverse} gives $q/4$ affine solutions and at most two intersecting ones, so a disjoint solution exists.

For $k=4$, add $0$.  Besides pair intersections, exclude $t_1=0$ and $t_2=0$, because $P_{\delta_i}(0)$ contains $0$.  Thus there are at most four bad solutions; $q/4>4$ for $q\geqslant32$.

For $q=16,k=4$, use two common-difference blocks plus $0$.  They satisfy
\[
h_1=0,
\qquad h_2=\delta^2(T+1),
\qquad h_3=\delta^3T.
\]
If $(a,c)\ne(0,0)$, Lemma~\ref{lem:T1} with $\varepsilon=1$ gives $T\in V\setminus\{0\}$; since $\lvert V\rvert=8$, write $T=t_1+t_2$ with distinct nonzero $t_i$.  If $(a,c)=(0,0)$, scaling reduces $b$ to $0$ or $1$.  Let $\theta^4=\theta+1$ in $\mathbb{F}_{16}$.  The sets
\[
S_0=\{0,1,\theta,\theta^2,\theta^2+\theta^3\}
\]
and
\[
S_1=\{0,1,\theta,1+\theta,\theta+\theta^2\}
\]
satisfy, respectively, $h_3(S_0)=0$ and $h_1(S_1)=h_3(S_1)=\theta+\theta^2$.  They solve $b=0$ and $b=1$.
\end{proof}

\begin{lemma}[The case $k=5$]\label{lem:k5-even}
For every even $q\geqslant16$, there is a six-set $S$ with $H_f(S)=0$.
\end{lemma}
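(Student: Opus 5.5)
We need a six-set $S$ with $H_f(S)=0$. Here $k=5$, so $k+1=6=2R$ with $R=3$, and we use three Artin--Schreier blocks with no singleton. This is exactly the odd-$k$, odd-$R$ branch of Proposition~\ref{prop:even-middle}. That proposition was stated only for $k\geqslant6$ because the subset-sum sizes there are taken from Lemma~\ref{lem:V-subsets}. The plan is to rerun the same construction at $R=3$ and check that the required three-subsets of $V$ exist for every $q\geqslant16$.

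\emph{Case $a\ne0$.} Lemma~\ref{lem:block-identities} with $R=3$ gives
$\varepsilon_1=1$, $\varepsilon_2=\binom42\equiv0$, $\varepsilon_T\equiv0$ and $\varepsilon_3=\binom53\equiv0$. Hence
\[
h_1=\delta,\qquad h_2=\delta^2T,\qquad h_3=0,
\]
and so
\[
H_f=a\delta^2T+b\delta+c=a\delta^2\bigl(T+\beta(\delta)\bigr).
\]
Lemma~\ref{lem:T2} applies with $R$ odd and $q\geqslant16$, and supplies $\delta\ne0$ with $\operatorname{Tr}\beta(\delta)=0$. We then need three distinct parameters in $V$ with sum $\beta(\delta)$.
\begin{enumerate}
\item For $N=q/2\geqslant16$ this is Lemma~\ref{lem:V-subsets}~(\ref{item:V-all}) with $r=3$.
\item For $q=16$, $N=8$, the explicit count of three-subsets of $V$ with any given sum is $7>0$.
\end{enumerate}
Distinct parameters give disjoint blocks, so $\lvert S\rvert=6$.

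\emph{Case $a=0$.} We follow Lemma~\ref{lem:T3} with $r=R-1=2$: two common-difference blocks together with the singletons $0$ and $\eta\delta$, where $\eta\notin\mathbb F_2$. This again gives a six-set. Lemma~\ref{lem:T3} yields $\eta,\delta$ such that the prescribed sum $\beta(\delta)$ of the two block parameters lies in $V$. The two parameters must be distinct elements of $D_\tau=V\setminus\{0,\tau\}$, with $\tau=\eta^2+\eta\ne0$.

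The only point not covered by earlier results is the solvability of this two-subset problem, since Lemma~\ref{lem:V-subsets} starts at $r=3$. I expect this to be the main obstacle. A target $\beta\in V$ is the sum of two distinct elements of $D_\tau$ iff some $x\in D_\tau$ satisfies $x+\beta\in D_\tau$ and $x+\beta\neq x$. The second condition needs $\beta\ne0$, and the first rules out at most four values of $x$, namely $x\in\{0,\tau,\beta,\beta+\tau\}$. So whenever $\beta\ne0$ a representation exists once $N>4$, which holds because $q\geqslant16$ gives $N\geqslant8$.

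It remains to secure $\beta(\delta)\ne0$. The Chevalley argument in Lemma~\ref{lem:T3} produces a nonzero solution $x=\delta^{-1}$ of a quadratic trace form. I would strengthen it to avoid the finitely many $\delta$ with $\beta(\delta)=0$. The equation $\beta(\delta)=0$ is a nonzero polynomial equation in $x=\delta^{-1}$ of degree at most $3$ (it has constant term $\eta^2+\eta$ or $\eta^3/(1+\eta)$), so it has at most three roots. The trace form has at least $2^{e-2}\geqslant4$ zeros by Chevalley--Warning counting. Its zero set is a $2$-power count containing $0$; then at least three nonzero zeros exist, and we may also vary $\eta$ if needed.

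If this counting turns out too tight at $q=16$, I would fall back on the finite verification style used elsewhere in the paper and treat $q=16$, $k=5$ by an explicit certificate. In either case this produces $S$ with $H_f(S)=0$, as in Proposition~\ref{prop:agreement}.
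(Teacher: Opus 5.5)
Your case $a\neq0$ is correct and is the paper's argument: three blocks of common difference, so $h_3=0$; a linear trace condition in $\delta^{-1}$; then a three-subset of $V$, using the count $7$ when $N=8$. Your case $a=0$ also works for $q\geqslant32$, where Warning's bound gives at least $2^{e-2}\geqslant8$ zeros of the trace function. The gap is $a=0$, $c\neq0$, $q=16$.

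There you need $\delta\neq0$ with $\operatorname{Tr}\beta(\delta)=0$ \emph{and} $\beta(\delta)\neq0$, where $\beta(\delta)=\tau+\frac{b\eta}{1+\eta}\delta^{-2}+\frac{c}{1+\eta}\delta^{-3}$. Every root of $\beta$ is automatically a zero of the trace function. You therefore need strictly more than $1+3$ such zeros, but Warning only guarantees $4$ when $e=4$. (The remark that the zero set has $2$-power size is also false: $\operatorname{Tr}(gx^3)$ with $g$ a non-cube in $\mathbb F_{16}$ has $10$ zeros.)

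The bound is actually attained. Take $b=0$ and $\eta\in\mathbb F_4\setminus\mathbb F_2$, so that $\tau=1$ and $1+\eta=\eta^2$, and take $c=\eta^2$. Then $\beta(\delta)=1+\delta^{-3}$, and $\delta^{-3}$ runs over the fifth roots of unity in $\mathbb F_{16}$, of which only $1$ has trace zero. Hence every trace-admissible $\delta$ gives $\beta(\delta)=0$, which is not a sum of two distinct parameters. So ``vary $\eta$ if needed'' is exactly the missing argument, and the fallback certificate is not supplied.

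The paper avoids trace counting here. For $a=0$ with three blocks of common difference, $h_3=0$ gives $H_f=b\delta+c$, independent of the block parameters. This settles $bc\neq0$ (take $\delta=c/b$) and $b=c=0$ at once. For the two one-sided cases it uses two blocks of difference $\delta$ and one of difference $\lambda\delta$. Then $h_1=\lambda\delta$ and $h_3=\delta^3(1+\lambda)(U+\tau)$ with $U=t_1+t_2$ and $\tau=\lambda^2+\lambda$, and the third parameter does not appear.

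When $b=0\neq c$, the paper fixes the value $v=c/((1+\lambda)\delta^3)\in V\setminus\{0\}$ \emph{first}. It then chooses $s_0=1+\lambda$ in the coset $(c/v)(\mathbb F_q^\times)^3$, which has at least $5$ elements, avoiding $s_0=1$ and the two roots of $s_0^2+s_0=v$. Then $U=\tau+v$ is a nonzero element of $V$.

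The same device repairs your singleton construction directly: solve $(1+\eta)\delta^3=c/v$ with $(1+\eta)^2+(1+\eta)\neq v$. For $b\neq0=c$ your route already works at $q=16$, since the trace condition is then linear with seven nonzero solutions while $\beta$ has a single root. For $bc\neq0$ at $q=16$ you would still need an argument, and the three-block identity $H_f=b\delta+c$ gives it immediately.
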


\begin{proof}
If $a\ne0$, use three common-difference blocks.  With $T=t_1+t_2+t_3$,
\[
h_1=\delta,
\qquad h_2=\delta^2T,
\qquad h_3=0,
\]
so the target is
\[
T=\frac ba\delta^{-1}+\frac ca\delta^{-2}.
\]
Put
\[
\Delta_0=\frac ba+\left(\frac ca\right)^{1/2}.
\]
Then $\operatorname{Tr}(T)=\operatorname{Tr}(\Delta_0\delta^{-1})$.  If $\Delta_0=0$, every nonzero $\delta$ gives $\operatorname{Tr}(T)=0$.  If $\Delta_0\ne0$, choose a nonzero $x$ in the kernel of the nonzero $\mathbb{F}_2$-linear functional $x\mapsto\operatorname{Tr}(\Delta_0x)$ and set $\delta=x^{-1}$.  Thus $T\in V$.  If $\lvert V\rvert \geqslant16$, Lemma~\ref{lem:V-subsets}~(\ref{item:V-all}) supplies three distinct summands; when $\lvert V\rvert=8$, the explicit three-subset count in its proof does the same.

Let $a=0$.  With three common-difference blocks, $H_f=b\delta+c$.  If $bc\ne0$, take $\delta=c/b$; if $b=c=0$, any $\delta$ works.

It remains to treat the two cases in which exactly one of $b,c$ is nonzero.  Use two blocks of difference $\delta$ and a third of difference $\lambda\delta$, where $\lambda\notin\{0,1\}$.  Put
\[
U=t_1+t_2,\qquad \tau=\lambda^2+\lambda.
\]
Multiplying the three two-point generating functions through degree three gives
\[
h_1=\lambda\delta,
\qquad
h_3=\delta^3(1+\lambda)(U+\tau),
\]
and the third block parameter $t_3$ does not occur in either expression.

Suppose first that $b\ne0$ and $c=0$.  The equation $H_f=0$ becomes
\[
U=\tau+\frac{b\lambda}{1+\lambda}\delta^{-2}.
\]
Choose $y\in V^\times\setminus\{\tau\}$ and then choose $\delta$ so that the second term equals $y$; squaring is bijective.  Then $U=\tau+y\ne0$, so it is the sum of two distinct elements $t_1,t_2\in V$.

Finally suppose $b=0$ and $c\ne0$.  We seek
\[
U=\tau+v,
\qquad
v=\frac{c}{(1+\lambda)\delta^3}\in V^\times,
\qquad v\ne\tau.
\]
Fix any $v\in V^\times$ and put $s_0=1+\lambda$.  The condition that one can solve
\[
\delta^3=\frac{c}{s_0v}
\]
is exactly that $s_0$ lie in the coset $(c/v)(\mathbb{F}_q^\times)^3$, whose size is
\[
\frac{q-1}{\gcd(3,q-1)}\geqslant5.
\]
Exclude $s_0=1$ and the at most two roots of $s_0^2+s_0=v$; a choice remains.  It gives $\lambda\notin\{0,1\}$, $v\ne\tau$, and a suitable $\delta$.  Hence again $U=\tau+v\ne0$ is the sum of two distinct parameters $t_1,t_2\in V$.

In either case, after the first two disjoint pairs are chosen, at most four parameters of the partition into pairs of difference $\lambda\delta$ yield a pair meeting one of their four points.  Since $\lvert V\rvert \geqslant8$, choose $t_3$ outside those parameters.  The resulting three blocks are pairwise disjoint, and the choice of $t_3$ does not change $H_f$.
\end{proof}

\subsection{Completion of the case \texorpdfstring{$m=3$}{m=3}}\label{sec:global}

\begin{theorem}\label{thm:even}
The case $m=3$ in Theorem~\ref{thm:main}~(\ref{it:small-excess-classification}) holds when $q$ is even.
\end{theorem}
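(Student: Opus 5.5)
The plan is to assemble the pieces already proved for $q=2^e$ and check that every admissible pair $(q,k)$ with $m=3$, $2\leqslant k\leqslant q-4$, is covered. By Proposition~\ref{prop:agreement} it suffices, for each normalized $f$ as in \eqref{eq:normalized-f}, to exhibit a $(k+1)$-subset $S$ with $H_f(S)=0$. The normalization is legitimate: scaling by $A^{-1}$ and discarding terms of degree below $k$ changes neither deep-hole status nor the form of $H_f$.

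The high range is handled first. Kaipa's theorem \eqref{eq:Kaipa-range} covers $k\geqslant q/2-1$. The extra nucleus case would need $k=q-3$, which is excluded because $k\leqslant q-4$, as already noted at the start of Section~\ref{sec:char2}. Thus only $2\leqslant k<q/2-1$ remains.

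For the low dimensions, I would apply Lemma~\ref{lem:k2-even} when $k=2$ and $q\geqslant8$. For $k=3,4$ with $q\geqslant16$, Lemma~\ref{lem:k34-even} applies, and it explicitly includes $(q,k)=(16,4)$. For $k=5$ with $q\geqslant16$, Lemma~\ref{lem:k5-even} applies. For $6\leqslant k<q/2$ with $q\geqslant16$, Proposition~\ref{prop:even-middle} applies. Together these cover every $k$ in $[2,q/2-1)$ once $q\geqslant16$.

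The only step needing genuine attention is the small fields $q\in\{2,4,8\}$. For $q\leqslant4$ there is nothing to prove, since $k\geqslant2$ and $k\leqslant q-4$ force $q\geqslant6$. For $q=8$ the admissible range is $k\leqslant4$, and Kaipa's bound $k\geqslant q/2-1=3$ disposes of $k=3,4$. The case $k=2$ is Lemma~\ref{lem:k2-even}; one could equally note that Proposition~\ref{thm:dimtwo-general}~(\ref{it:dimtwo-binary-odd-excess}) excludes $k=2$ for odd $m=3$ in every even field.

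I expect the main obstacle to be bookkeeping. The thresholds in the lemmas ($q\geqslant8$ versus $q\geqslant16$) and the boundary between Kaipa's range $k\geqslant q/2-1$ and the middle range $k<q/2$ must leave no gap. I would verify this explicitly at $q=16$ ($k\leqslant12$, with Kaipa covering $k\geqslant7$ and the lemmas covering $k\leqslant6$) and at $q=32$. Finally, the distance bound \eqref{eq:mainbound} follows directly from Proposition~\ref{prop:agreement}.
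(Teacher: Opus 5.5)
Your proposal is correct and follows the paper's own proof essentially verbatim. The cases are split the same way: Kaipa for $k\geqslant q/2-1$ (the nucleus case $k=q-3$ being inadmissible), Lemma~\ref{lem:k2-even} for $(q,k)=(8,2)$, and, for $q\geqslant16$, Lemmas~\ref{lem:k2-even}, \ref{lem:k34-even}, \ref{lem:k5-even} for $k\leqslant5$ together with Proposition~\ref{prop:even-middle} for $6\leqslant k<q/2$. Your extra remarks on the vacuous fields $q\leqslant4$ and on the alternative use of Proposition~\ref{thm:dimtwo-general}~(\ref{it:dimtwo-binary-odd-excess}) at $k=2$ are valid but not needed.
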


\begin{proof}
Apply Kaipa first in the range $k\geqslant q/2-1$.  Its additional even-characteristic case with $n-k=3$ occurs only at $k=q-3$, outside the present range $k\leqslant q-4$.

If $q=8$, only $k=2$ remains below the Kaipa range; Lemma~\ref{lem:k2-even} treats this case.  If $q=16$, use Lemmas~\ref{lem:k2-even},~\ref{lem:k34-even}, and~\ref{lem:k5-even} for $k=2,3,4,5$, Proposition~\ref{prop:even-middle} for $k=6$, and Kaipa for $k\geqslant7$.  If $q\geqslant32$, use the same small-$k$ lemmas for $k=2,3,4,5$, Proposition~\ref{prop:even-middle} for $6\leqslant k\leqslant q/2-2$, and Kaipa for $k\geqslant q/2-1$.  Each of these constructions gives a $(k+1)$-subset with $H_f(S)=0$, so Proposition~\ref{prop:agreement} yields \eqref{eq:mainbound}; the Kaipa range gives the same conclusion directly because the distance is integral and strictly below the covering radius.
\end{proof}

\begin{proof}[Completion of the case $m=3$ in Theorem~\ref{thm:main}~(\ref{it:small-excess-classification})]
Theorems~\ref{thm:odd} and~\ref{thm:even} cover the odd- and even-characteristic cases.
\end{proof}

\section{Exact classification when \texorpdfstring{$m=4$}{m=4}}\label{sec:excess-four}
\subsection{Odd characteristic}\label{sec:excess-four-odd}
Let
\[
 F=X^{k+4}+aX^{k+3}+bX^{k+2}+cX^{k+1}+dX^k
\]
be the normalized interpolation representative.  By
Lemma~\ref{lem:fourth-remainder}, its agreement condition is
$h_4+a h_3+b h_2+c h_1+d=0$ on a $(k+1)$-subset.
The generic low-dimensional bound is included for small
characteristics for which a degree-five exceptional-polynomial
classification alone would not suffice.
The low-dimensional argument uses the following quotient formulation
of the same agreement condition.  For an arbitrary set $A$ with
$\lvert A\rvert=k-1$ write
\begin{equation}\label{eq:quintic-quotient}
 P_A(X)=\prod_{t\in A}(X-t),\qquad
 Q_A(X)=\frac{F(X)-(F\bmod P_A)(X)}{P_A(X)}.
\end{equation}
It is monic of degree five.  If the received word is a deep hole,
$Q_A$ is injective on $\mathbb F_q\setminus A$: a collision would
produce a degree-$<k$ word agreeing at $A$ and two further points.
By the polynomial value-set bound \cite{MullenWanWang2013},
nonpermutation quintics have image of size at most
$q-\lceil(q-1)/5\rceil$.  Therefore the strict inequality
$q>5k-4$ forces \emph{every} $Q_A$ to permute $\mathbb F_q$.

\begin{proposition}[Low-dimensional exclusion]\label{prop:odd-low}
Let $q$ be an odd prime power, $q\geqslant11$, and
$2\leqslant k\leqslant q-5$.  If $q>5k-4$, then no polynomial of
interpolation degree $k+4$ is a deep hole.
\end{proposition}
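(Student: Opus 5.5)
Suppose, for contradiction, that $F$ is a deep hole with $q>5k-4$. By the paragraph preceding the statement, every quintic quotient $Q_A$ in \eqref{eq:quintic-quotient}, with $\lvert A\rvert=k-1$, then permutes $\mathbb F_q$. The plan is to show that a degree-five permutation polynomial must satisfy a rigid structural condition, and that this condition cannot hold for all of the $Q_A$. The point is that the coefficients of $Q_A$ vary with $A$ through $e_1(A)$ and $e_2(A)$.

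The first step is to compute the top coefficients of $Q_A$. Comparing coefficients in $F=P_AQ_A+(\text{lower})$ shows that the coefficients of $X^4,X^3,X^2$ in $Q_A$ are $a+h_1(A)$, $b+ah_1(A)+h_2(A)$, and $c+bh_1(A)+ah_2(A)+h_3(A)$. This matches Lemma~\ref{lem:remainder}, read as division in the opposite direction. Swapping one point of $A$ changes $h_1(A)$ freely, since $k-1\geqslant1$. Hence the family $\{Q_A\}$ contains members whose $X^4$-coefficients take every value in a large subset of $\mathbb F_q$.

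The second step constrains degree-five permutation polynomials in odd characteristic. I would split by characteristic. For $p\ne5$ and $q$ not too small, use Fan's bound $\mathsf F_5(q)$ together with the exceptional classification, or Dickson's classical list of normalized permutation quintics. After centering by $Y=X-(a+h_1)/5$, a permutation quintic must be $X^5$ or the Dickson polynomial $D_5(X,\alpha)$. In both cases the centered form is odd, so the centered $X^3$ and $X$ coefficients satisfy the relation $2\cdot5\,b_4=2b_2^2$ from Lemma~\ref{lem:prime-secant-rigidity}, and the centered $X^2$-coefficient $b_3$ vanishes. When $\gcd(5,q-1)>1$ there are no such permutations, and Theorem~\ref{thm:low-dimension-gcd} already gives the contradiction. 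Only $q\equiv 2,3,4\pmod5$ remains.

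The third step derives the contradiction. Fix $A_0$ of size $k-2$ and let $v$ vary over the remaining points, taking $A=A_0\cup\{v\}$. Then the coefficients of $Q_A$ are the coefficients of a secant quotient of the fixed sextic $R_{A_0}$, so Lemma~\ref{lem:prime-secant-rigidity} with $d=5$ applies: $C_5=14$, which is prime to $p$ unless $p=7$. Thus $b_3(v)$ is a cubic in $v$, at most three values of $v$ are admissible, and this contradicts $q-k+2>3$. For $p=7$ I would instead use $\Delta(v)$, of degree four, since $p\nmid 6$.

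The main obstacle is the characteristics $p=3$ and $p=5$, and the small fields where $\mathsf F_5(q)$ fails (roughly $q<30$). For $p=5$ the centering is unavailable. There I would use Dickson's table of normalized permutation quintics in characteristic five, such as $X^5-aX$ and $X^5\pm2aX^3+a^2X$, and show that the coefficient of $X^4$ must vanish, up to affine normalization by a specific relation. That relation is linear in $v$ and so has at most one root. For $p=3$, where $3\mid 6=d+1$, I would check $p\nmid C_5$ directly: $C_5=14$, so the lemma applies. The small fields $11\leqslant q\leqslant 29$ would be handled by Dickson's complete list of permutation quintics, or by exhaustive verification, which is the step most likely to need extra care.
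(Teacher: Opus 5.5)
Your proposal is correct in substance, but it takes a genuinely different route from the paper.

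\textbf{How the paper argues.} The paper never identifies the quintics $Q_{V\cup\{u\}}$. It applies Hermite's criterion with one exponent $n$, where $q-1\leqslant 5n<2(q-1)$, to all secant quotients of the fixed sextic at once. Then $C_n(u)=[X^{q-1}]Q_{V\cup\{u\}}(X)^n$ is a polynomial in $u$ of degree at most $r=5n-(q-1)$, with leading coefficient $[t^r](1+t+\cdots+t^5)^n$. The real work is a characteristic-by-characteristic table, with special rows for $p=3,5,7,13$, exhibiting an $n$ whose coefficient is nonzero and whose $r$ satisfies $r<q-k+2$. That route needs nothing beyond Hermite's criterion, but the table is opaque and partly machine-checked.

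\textbf{How you argue.} You import Dickson's complete list of normalized permutation quintics and reuse the centered-coefficient computation of Lemma~\ref{lem:prime-secant-rigidity} with $d=5$.
\begin{itemize}
\item For $p\neq 5$ and $q\geqslant 11$, every entry of the list is odd. Hence $b_3(v)$, with $[v^3]b_3=14/25$, has at most three roots, unless $p=7$.
\item For $p=7$ (so $q\geqslant 49$), only $X^5$ and the Dickson form occur. Then $\Delta(v)$, with $[v^4]\Delta=156/25\not\equiv 0\pmod 7$, has at most four roots.
\item For $p=5$, the $X^4$-coefficient $a+h_1(A_0)+v$ must vanish, which allows at most one $v$.
\end{itemize}
In each case the bound is below $q-k+2$. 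Your route is more conceptual and gives small explicit counts. Its cost is dependence on an external classification, including its sporadic entries.

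\textbf{Points to tighten (none fatal).}
\begin{itemize}
\item After centering, a permutation quintic need not be $X^5$ or $D_5(X,\alpha)$. For $q=13$, Dickson's table contains $X^5+\lambda X^3+3\lambda^2X$ with $\lambda$ a nonsquare, which is not of Dickson type. It is odd, however, and $13\nmid 14$, so $b_3$ alone suffices there. Note also that oddness yields $b_3=0$, not $\Delta=0$; the relation $\Delta=0$ needs the Dickson form, which you only use for $p=7$, where it holds.
\item Theorem~\ref{thm:low-dimension-gcd} requires $\mathsf F_5(q)$, which fails at $q=11,31,41$. Nonexistence of quintic permutations when $5\mid q-1$ follows directly from Hermite instead: $[X^{q-1}]G^{(q-1)/5}$ is a power of the leading coefficient.
\item $\mathsf F_5(q)$ fails for every odd $q\leqslant 43$ and for $q=49$, not just below $30$. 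Since Dickson's list is valid for every $q$, use it throughout; then no exhaustive search is needed.
\item For $p=5$ you can avoid the table. For $G=X^5+UX^4+\cdots$, take $t=(q-1)/4=1+5+\cdots+5^{e-1}$. Uniqueness of base-five digits gives $[X^{q-1}]G^t=U^t$, so Hermite forces $U=0$. This is the $p=5$ row of the paper's table in another form.
\end{itemize}
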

\begin{proof}
Fix $V\subseteq\mathbb F_q$ with $\lvert V\rvert=k-2$ and put
$R=(F-(F\bmod P_V))/P_V$, a monic sextic.  For each
$u\notin V$, the polynomial
$Q_{V\cup\{u\}}=(R(X)-R(u))/(X-u)$ must be a permutation.
For $1\leqslant n\leqslant q-2$ satisfying $q-1\leqslant5n<2(q-1)$, Hermite's
power-sum criterion forces
\[
 C_n(u)=[X^{q-1}]Q_{V\cup\{u\}}(X)^n=0
 \qquad(u\notin V).
\]
Put $r=5n-(q-1)$.  Then $\deg C_n\leqslant r$, and the coefficient
of $u^r$ is
\begin{equation}\label{eq:hermite-obstruction}
 [t^r](1+t+\cdots+t^5)^n.
\end{equation}
The coefficient table below gives, in each characteristic, a choice
of $n$ for which \eqref{eq:hermite-obstruction} is nonzero and
$r<q-k+2$.  Thus $C_n$ cannot vanish at all $q-k+2>r$
possible values of $u$.  This contradiction proves the claim.
The table entries are exact residues of binomial coefficients;
they are also reproduced by the verification program.
\end{proof}

\begin{center}
\small
\begin{tabular}{@{}lll@{}}
\hline
Characteristic / residue class & choice & nonzero coefficient \\
\hline
$p\notin\{2,3,5,7,13\}$ & $n=\lceil(q-1)/5\rceil$ & $1,\ 3/25,\ 28/125,\ 234/625$ \\
$p=3,\ q\bmod5=3,2,4$ & $n=j,j+1,j+2$ & $2,1,1$ \\
$p=7,\ q\bmod5=3$ & $n=j+3$ & $2$ \\
$p=13,\ q\bmod5=2$ & $n=j+5$ & $9$ \\
$p=5$ & $n=(q-1)/4$ & $1$ \\
\hline
\end{tabular}
\end{center}
\noindent
Here $j=\lceil(q-1)/5\rceil$; the small-characteristic rows
use the corresponding $p$-power residue calculations, including
$n\bmod 9,27,49,169$, and the last row follows by base-five
Frobenius expansion.  The ordinary rows include $q\bmod5=1,4,3,2$
respectively; the displayed coefficient list must be read in
that order.  For $p=3,7,13$, residue classes not listed separately
also use $n=j$ and the corresponding ordinary coefficient.
The ancillary program \nolinkurl{core/verify_small_cases.py} reproduces
the rational coefficients and the finite modular table.

\begin{theorem}[The case $m=4$ in odd characteristic]
\label{thm:four-odd}
If $q$ is an odd prime power and $2\leqslant k\leqslant q-5$, then
no received word of interpolation degree $k+4$ is a deep hole of
$\operatorname{RS}_q(k)$.
\end{theorem}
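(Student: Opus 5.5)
The proof will follow the pattern of Theorem~\ref{thm:odd}, splitting the parameter range $2\leqslant k\leqslant q-5$ into pieces, each covered by a result already available.

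\textbf{External ranges.} First, prime fields and the range $k+1\leqslant p$ are disposed of by \eqref{eq:CLZ-range}, and $k\geqslant\lfloor(q-1)/2\rfloor$ by Kaipa's range \eqref{eq:Kaipa-range}. As recorded in Section~\ref{sec:odd-short}, neither admits a deep hole of interpolation degree $k+4$.

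\textbf{Large fields, $q\geqslant211$.} The low range $q>5k-4$, that is $k<(q+4)/5$, is handled by Proposition~\ref{prop:odd-low}. If that inequality fails, then $k+1\geqslant(q+4)/5+1>q/8$. If moreover we are below Kaipa's range, then $k+1\leqslant q/2$. Thus $n=k+1$ lies in $[q/8,q/2]$. Lemma~\ref{lem:odd-middle-count} with $m=4$ and $Q_4=211$ then gives \eqref{eq:odd-middle-count}, which is exactly the criterion \eqref{eq:GL-direct-criterion}. So every odd $q\geqslant211$ is settled.

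\textbf{Small fields, $q<211$.} Only the odd non-prime fields remain:
\[
q\in\{9,25,27,49,81,121,125,169\}.
\]
For each, I would list the residual $k$ with
\[
(q+4)/5\leqslant k<\lfloor(q-1)/2\rfloor \quad\text{and}\quad k+1>p.
\]
For the larger of these fields ($49,81,121,125,169$), I would apply the integer majorant of Lemma~\ref{lem:odd-short-majorant}. One needs $M_0=\lceil3\sqrt q\rceil<q$, which holds for $q\geqslant11$. I would then verify $\binom qn^2>q^{5}\overline A_{4,n}^2$ on each residual interval by the exact recurrence, as an ancillary computation.

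For $q=9,25,27$, and for any interval where the majorant fails (I expect this at $q=49$ and perhaps $81$ near the lower end), I would use the finite certificates of Remark~\ref{rem:finite-core-interface}. After normalizing, the affine substitution $X\mapsto uX+v$ removes $a$ and one further coefficient. Stored $(k+1)$-subsets $S$ then give hyperplanes
\[
h_4(S)+ah_3(S)+bh_2(S)+ch_1(S)+d=0
\]
whose union covers all normalized $(a,b,c,d)$. Alternatively, opposite-pair constructions analogous to Lemma~\ref{lem:one-core} make $h_1=h_3=0$ on the pairs. In that case $h_2=S_0$ and $h_4=e_2(\text{squares})+S_0^2$, with a distinguished point $\alpha$. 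This reduces the agreement condition to prescribing a sum and a second elementary symmetric function of a subset of $D$, which matches the second certificate type.

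\textbf{Main obstacle.} I expect the main difficulty to be the small fields $9,25,27$ (and possibly $49$). There the Gao--Li majorant is too weak, and the quartic condition involves both $h_2$ and $h_4$ simultaneously, so single subset-sum targets no longer suffice. I would first try the opposite-pair reduction with a Fourier count in the style of \eqref{eq:FourierNs} for joint $(e_1,e_2)$ values. Where that count is not positive, I would fall back on exhaustive certificate verification.
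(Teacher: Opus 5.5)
Your proposal is correct and matches the paper's proof step for step. Both use the external ranges, then Proposition~\ref{prop:odd-low} together with Lemma~\ref{lem:odd-middle-count} for $q\geqslant211$, then the integer majorant for the proper odd prime powers below $211$, and finally finite certificates for the $32$ residual pairs at $q=9,25,27,49,81$; the majorant clears $q=121,125,169$ completely, and at $q=81$ the surviving pairs are at the upper end, $34\leqslant k\leqslant39$.

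The only difference is cosmetic. For $q=49,81$ the paper takes $Y_\beta\subseteq(\mathbb F_q^\times)^2\setminus\{1\}$ with $e_1(Y_\beta)=0$. The opposite pairs then have $h_1=h_2=h_3=0$ and $h_4=h_2(Y_\beta)=\beta=-d$, so no distinguished point is needed. Also, the pairs' $h_4$ is $S_0^2-e_2$, not $S_0^2+e_2$; this sign slip is harmless for your argument.
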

\begin{proof}
Use \eqref{eq:CLZ-range} and \eqref{eq:Kaipa-range} first.
If $q\geqslant211$, Proposition~\ref{prop:odd-low} excludes
$q>5k-4$.  Otherwise $k+1>q/8$, and the high-dimensional
complement gives $k+1\leqslant q/2$; hence
Lemma~\ref{lem:odd-middle-count} applies with $m=4$.

For the proper odd prime powers below $211$, the same
low/high alternatives and the strict integer comparison
$\binom qn^2>q^5\overline A_{4,n}^2$ leave precisely
\[
\begin{array}{c|ccccc}
q&9&25&27&49&81\\ \hline
k&3&6\text{--}11&7\text{--}12&11\text{--}23&34\text{--}39\\
\text{pairs}&1&6&6&13&6
\end{array}
\]
where $\overline A_{m,n}$ is the exact integer coefficient
majorant defined in \eqref{eq:odd-short-majorant}.
These $32$ pairs are certified by $13$ exhaustive checks of normalized coefficient
vectors for $q=9,25,27$ and $19$ certificates with a fixed sum
and every prescribed value of the second complete homogeneous symmetric function for $q=49,81$.
For every required cardinality in the latter fields, the accompanying
certificates contain, for each $\beta\in\mathbb F_q$, a set
\[
 Y_\beta\subseteq(\mathbb F_q^\times)^2\setminus\{1\},\qquad
 e_1(Y_\beta)=0,\qquad h_2(Y_\beta)=\beta.
\]
Form the opposite pairs $S_\beta=\{\pm\sqrt y:y\in Y_\beta\}$,
adjoining $0$ when needed. Their complete homogeneous symmetric functions satisfy
$h_1(S_\beta)=h_2(S_\beta)=h_3(S_\beta)=0$ and
$h_4(S_\beta)=\beta$.  Taking $\beta=-d$ solves the
remainder condition for $m=4$, independently of $a,b,c$.
The files in the first group verify every normalized coefficient
class; all point sets and the arithmetic verification
program are provided in the ancillary package.
The proof requires no additional certificates based on Fourier analysis over
large fields.  Finally, the $q=3,5$ intervals are empty, and
$q=7$ is covered by \eqref{eq:CLZ-range}.
\end{proof}

\subsection{Characteristic two: the Segre exception}\label{sec:excess-four-even}

Let $q=2^e\geqslant8$.  Unlike the case $m=3$, the case $m=4$
can admit deep holes, all arising from a classical
characteristic-two oval polynomial.  We first isolate the exceptional
family by the quotient method, and then rule out deep holes for every admissible $k\geqslant3$
using two-point fibers, four-point affine planes,
and eight-point affine subspaces over $\mathbb F_2$.

\subsubsection{The dimension-two exception and quintic permutations}

The following consequence of the degree-five permutation-polynomial
classification is useful.  We give the exact coefficients, since
translation conventions matter here.

\begin{lemma}[Quintic permutation constraint]\label{lem:Dickson-binary}
If $q=2^e\geqslant8$ and the monic polynomial
$G=X^5+UX^4+VX^3+WX^2+ZX+E$ permutes $\mathbb F_q$, then
\begin{equation}\label{eq:dickson-binary}
 W=UV,\qquad Z=U^4+U^2V+V^2.
\end{equation}
If $V\ne0$, then $e$ is odd.
\end{lemma}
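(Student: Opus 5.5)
The plan is to reduce $G$ to a normal form by translating the variable, and then read off the coefficients from the classical classification of permutation polynomials of degree five. That classification is Dickson's table, complete for every $q$; see also \cite{MullenWanWang2013} and the references there. First, since $5$ is odd, the substitution $X=Y+U$ removes the $Y^4$ term: the coefficient of $Y^4$ in $G(Y+U)$ is $U+5U=0$ in characteristic two. Adding a constant then gives a normalized polynomial $G_0(Y)=Y^5+vY^3+wY^2+zY$, and $G_0$ permutes $\mathbb F_q$ exactly when $G$ does.

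Next, I invoke Dickson's table for normalized degree-five permutation polynomials over $\mathbb F_{2^e}$ with $q\geqslant8$. The only entries are:
\begin{itemize}
\item $Y^5$, which permutes exactly when $\gcd(5,q-1)=1$;
\item the Dickson polynomial $D_5(Y,\alpha)=Y^5+\alpha Y^3+\alpha^2Y$ with $\alpha\neq0$, which permutes exactly when $\gcd(5,q^2-1)=1$.
\end{itemize}
In characteristic two the signs in $Y^5-5\alpha Y^3+5\alpha^2Y$ disappear, which gives the displayed form of $D_5$. This is the step I expect to be the real obstacle. One has to check that the table, with its translation and scaling conventions, has no additional sporadic entry for the small fields $q=8,16,32$. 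For these fields I would confirm it directly by Hermite's criterion, or by a short exhaustive check in the spirit of the ancillary programs. In every case the conclusion is $w=0$ and $z=v^2$, with $v=0$ allowed.

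Finally, I undo the translation. In characteristic two the binomial coefficients $\binom5j$ reduce to $1,1,0,0,1,1$, so
\[
(X+U)^5=X^5+UX^4+U^4X+U^5 .
\]
Expanding $G_0(X+U)=(X+U)^5+v(X+U)^3+v^2(X+U)$ gives the coefficients of $X^3$, $X^2$, $X$ as $v$, $vU$, $U^4+vU^2+v^2$. Comparing with $G$ yields $V=v$, $W=UV$ and $Z=U^4+U^2V+V^2$, which is \eqref{eq:dickson-binary}.

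For the last assertion, suppose $V\neq0$. Then $G_0$ is a genuine Dickson polynomial, so $\gcd(5,q^2-1)=1$. Since $2$ has order $4$ modulo $5$, this means $4^e\not\equiv1\pmod5$, that is, $e$ is odd.
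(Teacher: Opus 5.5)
Your proof is correct and follows the paper's argument. You translate by $U$ to remove the quartic term, apply the characteristic-two entries $Y^5$ and $Y^5+vY^3+v^2Y$ of the degree-five classification (the latter needs $q\equiv\pm2\pmod 5$, that is, $e$ odd), and compare coefficients after undoing the translation; the small-field check you anticipate for $q=8,16,32$ is unnecessary, because the sporadic degree-five entries in Dickson's table occur only for odd $q\in\{5,7,9,13\}$.
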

\begin{proof}
Translate $X=Y+U$ to remove the fourth-degree term.  The
classification of normalized permutation quintics in characteristic
two leaves $Y^5$ or $Y^5+VY^3+V^2Y$; the latter, with
$V\ne0$, requires $q\equiv\pm2\pmod5$, equivalent to odd $e$.
Comparing the quadratic and linear coefficients after translation
gives \eqref{eq:dickson-binary}; see
\cite[Section~8.1]{HandbookFiniteFields2013}.
\end{proof}

\begin{theorem}[Dimension two]\label{thm:even-k2}
Let $q=2^e\geqslant8$, $k=2$, and let the normalized degree-six
interpolation polynomial be
\begin{equation}\label{eq:binary-sextic}
 F=X^6+aX^5+bX^4+cX^3+dX^2
       \pmod{\deg<2}.
\end{equation}
Then $F$ gives a deep hole if and only if
\begin{equation*}
 e\text{ is odd},\qquad a=c=0,\qquad d=b^2.
\end{equation*}
\end{theorem}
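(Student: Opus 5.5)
The plan is to use the quotient method from Lemma~\ref{lem:uniform-projection} with $T=\varnothing$, so that $R_T=F$ itself (up to terms of degree below $2$). Suppose $F$ gives a deep hole. For each $v\in\mathbb F_q$ the secant quotient
\[
 Q_v(X)=\frac{F(X)-F(v)}{X-v}
\]
is a monic quintic that is injective on $\mathbb F_q\setminus\{v\}$. Since $q>5\cdot1+1$ for $q\geqslant8$, the value-set condition \eqref{eq:uniform-valueset-condition} holds, so every $Q_v$ permutes $\mathbb F_q$. Expanding in characteristic two, the coefficients of $X^4,X^3,X^2,X$ in $Q_v$ are
\begin{align*}
 U&=v+a, & V&=v^2+av+b,\\
 W&=v^3+av^2+bv+c, & Z&=v^4+av^3+bv^2+cv+d.
\end{align*}
Lemma~\ref{lem:Dickson-binary} then requires $W=UV$ and $Z=U^4+U^2V+V^2$ for every $v\in\mathbb F_q$.

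The next step is to compare coefficients in $v$; this is legitimate because both sides are polynomials of degree at most $4<q$ vanishing on all of $\mathbb F_q$. For the first relation,
\[
 UV=v^3+a^2v+bv+ab
\]
in characteristic two, so $W+UV=av^2+a^2v+(c+ab)$. Its vanishing forces $a=0$, and then $c=0$. With $a=c=0$ we have $U=v$, $V=v^2+b$ and $Z=v^4+bv^2+d$, while
\[
 U^4+U^2V+V^2=v^4+(v^4+bv^2)+(v^4+b^2)=v^4+bv^2+b^2 .
\]
Hence $d=b^2$. If $b\ne0$, then $V=v^2+b$ is nonzero for $v\ne\sqrt b$, and the last assertion of Lemma~\ref{lem:Dickson-binary} forces $e$ odd. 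If $b=0$, then $V=v^2$ is nonzero for $v\ne0$, so $e$ is odd in that case as well. This proves necessity.

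For sufficiency, assume $e$ is odd, $a=c=0$ and $d=b^2$. Then
\[
 F=X^6+bX^4+b^2X^2=(X+\beta)^6+(\text{terms of degree}<2),\qquad \beta^2=b,
\]
since $(X+\beta)^6=(X^2+b)^3=X^6+bX^4+b^2X^2+b^3$. This is the Segre family \eqref{eq:segrefamily}, because $X^6$ is an o-polynomial exactly when $e$ is odd. As in the converse in Proposition~\ref{thm:dimtwo-general}, translating an o-polynomial and adding an affine term still gives an oval graph. No three of its points are then collinear, so no line (a codeword of $\operatorname{RS}_q(2)$) meets the graph in three points, and the word is a deep hole by Proposition~\ref{prop:agreement}.

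I expect the main obstacle to be correctness of the normalized quintic classification in Lemma~\ref{lem:Dickson-binary} for all $q\geqslant8$, including small $q$ such as $8$ and $32$ where exceptional-polynomial arguments are delicate. A related point is that $X^5$ itself is a permutation of $\mathbb F_q$ exactly when $\gcd(5,q-1)=1$, which already forces $e\not\equiv0\pmod4$. I would rely on the handbook's list as cited in the lemma. The coefficient comparison above is routine, but I would double-check the expansion of $U^4+U^2V+V^2$ against the translation convention used in that lemma.
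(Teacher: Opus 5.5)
Your proof is correct. The necessity half is essentially the paper's argument. The secant quotients $Q_v$ are quintic permutations by the value-set bound, $W=UV$ forces $a=c=0$, the second relation in Lemma~\ref{lem:Dickson-binary} forces $d=b^2$, and a parameter $v$ with $V\neq0$ forces $e$ odd. Your worry about translation conventions is unfounded. In characteristic two the substitution $X=Y+U$ keeps the cubic coefficient equal to $V$ and produces quadratic and linear coefficients $W+UV$ and $Z+U^4+U^2V$, exactly as the lemma states.

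Sufficiency is where you take a different route. You make the same reduction to $X^6$ by translation and affine output terms. You then appeal to Segre's classical theorem that $X^6$ is an o-polynomial for odd $e$, so its affine graph lies in a hyperoval and no codeword line meets it in three points. The paper instead verifies this directly through the agreement criterion. For a three-set $S$, the remainder coefficient of $X^6$ is $h_4(S)=e_1^4+e_1^2e_2+e_2^2$. When $e_1\neq0$, its vanishing would give a root of $z^2+z+1$. When $e_1=0$, the points are $x,y,x+y$ and it reduces to $(x^2+xy+y^2)^2$. So it vanishes only if $\mathbb F_q$ contains a primitive cube root of unity or points coincide, both excluded for odd $e$.

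Your route is shorter and geometric, but it imports an external result; the paper itself cites that result elsewhere, in Proposition~\ref{thm:dimtwo-general} and Remark~\ref{rem:Segre-origin}. The paper's route is self-contained, stays within the $h_j$ framework used throughout, and in effect reproves the needed half of Segre's theorem.
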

\begin{proof}
Suppose first that $F$ is a deep hole.  The quotient polynomial
\eqref{eq:quintic-quotient}, for $A=\{u\}$, is a quintic
permutation: this follows from its injectivity off one point and
the value-set inequality, valid since $q\geqslant8$.
For this quintic let $U,V,W,Z$ denote the coefficients
of $X^4,X^3,X^2,X$, respectively.  Direct polynomial division gives
\[
 U=a+u,\quad V=b+au+u^2,\quad
 W=c+bu+au^2+u^3.
\]
Thus the first relation in \eqref{eq:dickson-binary} gives
\[
 au^2+a^2u+c+ab=0\qquad\text{for all }u\in\mathbb F_q.
\]
This degree-two polynomial has at least eight roots, hence
$a=c=0$.  The remaining quotient coefficients are then
\[
 U=u,\quad V=b+u^2,\quad W=bu+u^3,
 \quad Z=d+bu^2+u^4.
\]
The second relation in \eqref{eq:dickson-binary} becomes
$d=b^2$.  Since $b+u^2\ne0$ for some $u$, the final assertion
of Lemma~\ref{lem:Dickson-binary} forces odd $e$.

Conversely, write $b=t^2$.  In characteristic two,
\[
 X^6+bX^4+b^2X^2=(X+t)^6+t^6.
\]
Translation of the evaluation variable, multiplication by a
nonzero scalar and addition of a polynomial of degree below
$2$ preserve distance from $\operatorname{RS}_q(2)$.
We need only show that $X^6$ is a deep hole when $e$ is odd.
For a three-set $S$, denote by $e_1,e_2$ its first two elementary
symmetric functions.  The coefficient of $X^2$ in the remainder
of $X^6$ modulo $P_S$ equals
\[
 h_4(S)=e_1^4+e_1^2e_2+e_2^2.
\]
If $e_1\ne0$, its vanishing would give a root of
$z^2+z+1$ in $\mathbb F_q$, impossible when $e$ is odd.
If $e_1=0$, the three points have the form $x,y,x+y$;
$e_2=x^2+xy+y^2$, and $h_4=e_2^2=0$ again
forces a nontrivial cube root of unity or repeated points.
Thus $h_4(S)\ne0$ for every three-set.  No affine codeword
agrees at three points, so the distance equals $q-2$.
\end{proof}

\begin{remark}[Classical geometric origin]\label{rem:Segre-origin}
For odd $e$, $X^6$ is the classical Segre $o$-polynomial
\cite{HengDing2019}.  The theorem above identifies, among polynomials of
interpolation degree six, exactly its translates,
scalings and affine perturbations as the deep-hole exceptions.
This does not constitute a classification of all $o$-polynomials
or of all deep holes of $\operatorname{RS}_q(2)$ in unrestricted
degrees.
\end{remark}

\begin{proposition}[Low dimensions in characteristic two]
\label{prop:even-low-four}
If $q=2^e\geqslant8$, $3\leqslant k\leqslant q-5$, and $q>5k-4$,
no word of interpolation degree $k+4$ is a deep hole.
\end{proposition}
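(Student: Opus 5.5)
We argue by contradiction, following the quotient method already used for Theorem~\ref{thm:even-k2}, but now exploiting the freedom in the auxiliary set. Suppose $F$ (normalized as in Section~\ref{sec:excess-four-odd}) is a deep hole, and fix a set $W\subseteq\mathbb F_q$ with $\lvert W\rvert=k-2$ (written $W$ to avoid a clash with the quintic coefficient $V$). Put $R_W=(F-(F\bmod P_W))/P_W$, a monic sextic. For every $u\notin W$, the quotient $Q_{W\cup\{u\}}=(R_W(X)-R_W(u))/(X-u)$ is injective off $W\cup\{u\}$. Because $q>5k-4$, the value-set bound forces it to permute $\mathbb F_q$, exactly as in the discussion preceding Proposition~\ref{prop:odd-low}. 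Lemma~\ref{lem:Dickson-binary} then applies to each of these quintics.

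First I would write $R_W=X^6+\alpha X^5+\beta X^4+\gamma X^3+\delta X^2+\cdots$. The computation in the proof of Theorem~\ref{thm:even-k2} applies verbatim with $(a,b,c,d)$ replaced by $(\alpha,\beta,\gamma,\delta)$. The relation $W=UV$ must hold for all $q-k+2$ values $u\notin W$, and it is a polynomial of degree at most two in $u$. Since $q-k+2\geqslant3$, this forces $\alpha=\gamma=0$, and then the second relation gives $\delta=\beta^2$.

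The key step is the identity $\alpha=0$ for every $(k-2)$-set $W$. Comparing top coefficients in $F=P_WR_W+(\text{lower})$ gives
\[
 \alpha=a+\sum_{x\in W}x.
\]
This is the same observation as at the end of Proposition~\ref{prop:uniform-low-even}. Since $k\geqslant3$, $W$ is nonempty, and we may replace one element of $W$ by another element outside $W$. This changes $\sum_{x\in W}x$ while leaving the required value $\alpha=0$ fixed, which is a contradiction. The replacement requires some element outside $W$ besides the one removed, and this is guaranteed by $q>k$. Hence no deep hole exists.

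The main obstacle is to justify that every $Q_{W\cup\{u\}}$ really is a permutation for all $u\notin W$, which needs $q>5k-4$ together with the value-set count. Injectivity off $k-1$ points gives a value set of size at least $q-k+1$, whereas a nonpermutation quintic has at most $q-\lceil(q-1)/5\rceil$ values. The strict inequality $q>5k-4$ gives $\lceil(q-1)/5\rceil\geqslant k$, which settles the boundary case. Beyond this, only $\alpha=0$ is needed, so the Dickson constraint enters only through $W=UV$; the parity of $e$ and the second relation are not required. If one prefers to avoid Lemma~\ref{lem:Dickson-binary}, the parity argument of Proposition~\ref{prop:uniform-low-even} is an alternative route. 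There the hyperoval extension makes $R_W$ an o-polynomial, and the $X^5$ coefficient must vanish. That route needs the large-arc hypothesis, however, so the quintic classification route is the one I would use.
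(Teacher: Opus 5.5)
Your proof is correct and follows the paper's argument. The value-set bound makes every quintic quotient a permutation, and the relation $W=UV$ of Lemma~\ref{lem:Dickson-binary} becomes a polynomial of degree at most two in $u$ that vanishes on at least $q-k+2\geqslant 7$ admissible points. Hence its $u^2$-coefficient $a+\sum_{x\in T}x$ vanishes for every $(k-2)$-set $T$, and the one-point replacement contradicts this. The only difference is presentation: the paper writes the quotient coefficients through $h_1,h_2,h_3$ of the auxiliary set and the coefficients of $F$, while you reuse the dimension-two division with the coefficients of the sextic $R_W$, which is the same computation. You may want to avoid using the letter $W$ for both the set and the $X^2$-coefficient.
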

\begin{proof}
As above, the deep-hole hypothesis would force every monic
quintic $Q_A$, $\lvert A\rvert=k-1$, to permute $\mathbb F_q$.
Write $A=T\cup\{u\}$, $\lvert T\rvert=k-2$, and put
$s=h_1(T)$, $t=h_2(T)$, $w=h_3(T)$.  Direct expansion of
the quotient coefficients gives
\begin{equation*}
 W_A+U_AV_A=(a+s)u^2+(a+s)^2u+
 c+ab+a^2s+as^2+w+st.
\end{equation*}
By Lemma~\ref{lem:Dickson-binary} the left side vanishes.
There are $q-k+2\geqslant7$ admissible values of $u$; hence the
quadratic polynomial on the right is identically zero,
so $s=a$.  This must hold for every $(k-2)$-set $T$.
Because $k-2\geqslant1$ and $k\leqslant q-5$, we can replace one element
of $T$ by a distinct element outside $T$ and thereby change
$\sum_{x\in T}x$, a contradiction.
\end{proof}

\subsubsection{Two-point fibers and prescribed values of symmetric functions}

Write
$W:=\ker\operatorname{Tr}_{\mathbb F_q/\mathbb F_2}$.
The Artin--Schreier map $x\mapsto x^2+x$ has image $W$ and
fibers $P(t):=\{x,x+1\}$, where $x^2+x=t\in W$.
For $T\subseteq W\setminus\{0\}$ with $\lvert T\rvert=r$, let
$S_T:=\bigcup_{t\in T}P(t)$, adding $0$ if $k=2r$ and
not adding it if $k=2r-1$.  Then $\lvert S_T\rvert$ (including the
optional singleton) is $k+1$, and the points are distinct.
The union's complete homogeneous functions satisfy
\begin{equation*}
 \prod_{x\in S_T}(1-xz)^{-1}
 =\prod_{t\in T}(1+z+tz^2)^{-1}
 =\sum_{j\geqslant0}m_j(T)z^{2j}(1+z)^{-r-j},
\end{equation*}
where $m_j(T):=h_j(T)$ is now taken in the $t$-parameters,
and adjoining $0$ leaves the series unchanged.  Comparing terms
gives
\begin{equation*}
\begin{aligned}
 h_1(S_T)&=r,\\
 h_2(S_T)&=\binom{r+1}{2}+m_1(T),\\
 h_3(S_T)&=\binom{r+2}{3}+(r+1)m_1(T),\\
 h_4(S_T)&=\binom{r+3}{4}+\binom{r+2}{2}m_1(T)+m_2(T),
\end{aligned}
\end{equation*}
all integers being read in $\mathbb F_2$.  Writing $\mathscr H_F(S_T)$
for the remainder coefficient in Lemma~\ref{lem:fourth-remainder},
we obtain
\begin{equation}\label{eq:binary-Hf}
 \mathscr H_F(S_T)=m_2(T)+\lambda_r m_1(T)+\mu_r,
\end{equation}
with
\begin{equation*}
 \lambda_r:=\binom{r+2}{2}+a(r+1)+b,\qquad
 \mu_r:=\binom{r+3}{4}+a\binom{r+2}{3}
          +b\binom{r+1}{2}+cr+d.
\end{equation*}
The decisive simplification is that only $m_2(T)$ needs
to vary while $m_1(T)$ is fixed.

\begin{proposition}[A fixed sum and arbitrary $h_2$]
\label{prop:binary-saturation}
Suppose a given $(q,r)$ admits $\sigma\in\mathbb F_q$ such that,
for every $y\in\mathbb F_q$, there is an $r$-subset
$T_y\subseteq W\setminus\{0\}$ satisfying
\begin{equation}\label{eq:saturation-condition}
 m_1(T_y)=\sigma,\qquad m_2(T_y)=y.
\end{equation}
Then, for each $k\in\{2r-1,2r\}$ allowed by
$2\leqslant k\leqslant q-5$, no polynomial of degree $k+4$
gives a deep hole.
\end{proposition}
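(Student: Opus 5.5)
The plan is to normalize, choose the target value of $m_2$, and then invoke Lemma~\ref{lem:fourth-remainder} on the explicit fiber union $S_T$.

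\textbf{Normalization.} Fix $k\in\{2r-1,2r\}$ with $2\leqslant k\leqslant q-5$, and let $f_u$ have degree $k+4$. Divide by the leading coefficient and discard terms of degree below $k$. Neither operation changes the existence of a codeword agreeing at $k+1$ points. The representative is then
\[
F=X^{k+4}+aX^{k+3}+bX^{k+2}+cX^{k+1}+dX^k .
\]

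\textbf{Choosing the target.} Take the $\sigma$ supplied by the hypothesis, and compute $\lambda_r$ and $\mu_r$ from $a,b,c,d$ and $r$. Set
\[
y:=\lambda_r\sigma+\mu_r\in\mathbb F_q .
\]
In characteristic two this is exactly $-(\lambda_r\sigma+\mu_r)$. The hypothesis then gives an $r$-subset $T_y\subseteq W\setminus\{0\}$ with $m_1(T_y)=\sigma$ and $m_2(T_y)=y$.

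\textbf{The agreement set.} Form $S_{T_y}$: the union of the fibers $P(t)$ for $t\in T_y$, with $0$ adjoined exactly when $k=2r$. Its cardinality is $k+1$, for the following reasons.
\begin{itemize}
\item The fibers $P(t)$ for distinct $t$ are disjoint, since they are fibers of a map.
\item Each fiber has two elements, $x$ and $x+1$.
\item Because $t\neq0$, no fiber contains $0$ (the fiber of $0$ is $\{0,1\}$), so adjoining $0$ adds a new point.
\end{itemize}
By \eqref{eq:binary-Hf},
\[
\mathscr H_F(S_{T_y})=y+\lambda_r\sigma+\mu_r=0 .
\]

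One point needs checking here: the generating-function identity behind \eqref{eq:binary-Hf} must not depend on adjoining $0$. It does not, because a zero factor contributes $(1-0\cdot z)^{-1}=1$. So the same formula holds for both parities of $k$. That formula uses $\lvert T\rvert=r$ in the binomial coefficients $\lambda_r,\mu_r$, and $r$ is fixed by $T_y$ regardless of whether $k=2r-1$ or $k=2r$.

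\textbf{Conclusion.} Lemma~\ref{lem:fourth-remainder} now says that the remainder of $F$ modulo $P_{S_{T_y}}$ has degree below $k$. Hence $F$ agrees with a codeword on $k+1$ distinct coordinates, so it is not a deep hole, and \eqref{eq:mainbound} follows. There is no real obstacle here. The only care needed is to confirm that the parity bookkeeping in $\lambda_r,\mu_r$ is independent of the choice $k=2r-1$ or $k=2r$, which follows from the invariance of the series under adjoining $0$.
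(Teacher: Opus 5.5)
Your proposal is correct and follows the paper's own proof. You choose $y=\lambda_r\sigma+\mu_r$ so that \eqref{eq:binary-Hf} vanishes on $S_{T_y}$, then invoke Lemma~\ref{lem:fourth-remainder}; the cardinality, distinctness, and invariance-under-adjoining-$0$ checks you add are the same facts the paper records in its setup just before the proposition.
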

\begin{proof}
For each $a,b,c,d$ choose
$y=\lambda_r\sigma+\mu_r$.  Then
\eqref{eq:binary-Hf} vanishes for the corresponding
set of evaluation points, and the remainder criterion applies.
\end{proof}

\subsubsection{Affine planes, eight-point affine subspaces, and a Kloosterman bound}

For independent $u,v\in W$ and arbitrary $x\in W$,
the four-set $B=x+\{0,u,v,u+v\}$ has
\begin{equation*}
 m_1(B)=0,\qquad m_2(B)=Q(u,v):=u^2+uv+v^2.
\end{equation*}
Every affine three-dimensional $\mathbb F_2$-subspace
$C\subseteq W$ has
\begin{equation*}
 m_1(C)=m_2(C)=0.
\end{equation*}
Indeed its root polynomial is linearized up to translation,
with vanishing coefficients at degrees seven and six.
For disjoint sets, the second complete homogeneous function
obeys
\[
 m_2(A\sqcup B)=m_2(A)+m_2(B)+m_1(A)m_1(B).
\]
Thus the $m_2$-values add under disjoint unions of affine planes.
Adjoining disjoint eight-point affine subspaces increases the cardinality
and leaves $m_1$ and $m_2$ unchanged.

\begin{lemma}[Representation of a quadratic form over a trace hyperplane]
\label{lem:binary-Q-representation}
Suppose $q=2^e\geqslant64$.
Every $y\in\mathbb F_q^\times$ equals $Q(u,v)$ for
linearly independent $u,v\in W$.  The value $0$ has such
a representation precisely when $e$ is even.
\end{lemma}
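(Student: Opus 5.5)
We must show: for $q=2^e\geqslant64$, every $y\ne0$ is $Q(u,v)=u^2+uv+v^2$ with $u,v\in W$ linearly independent over $\mathbb F_2$, and $0$ is so represented iff $e$ is even.

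The value $0$ is elementary. $Q(u,v)=0$ with $u\ne0$ means $\omega=v/u$ satisfies $\omega^2+\omega+1=0$. If $e$ is odd, no such $\omega$ exists, so $0$ is not represented. If $e$ is even, $\omega\in\mathbb F_4^\times\setminus\{1\}\subseteq\mathbb F_q$, and we need $u\in W\setminus\{0\}$ with $\omega u\in W$. Then $u,\omega u$ are automatically independent, since $\omega\notin\mathbb F_2$. The condition is that $u$ lie in $W\cap\omega^{-1}W$. This is the intersection of two hyperplanes, so it has size at least $q/4>1$, and a nonzero $u$ exists.

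For $y\ne0$, parametrize $v=\lambda u$ with $\lambda\notin\mathbb F_2$, which gives independence automatically. Then $Q=u^2(1+\lambda+\lambda^2)$. For each $\lambda$ with $c(\lambda):=1+\lambda+\lambda^2\ne0$, the value $u$ is uniquely determined by $u^2=y/c(\lambda)$, since squaring is bijective. Hence we need some $\lambda\notin\mathbb F_4$ such that both
\[
\operatorname{Tr}(u)=0\quad\text{and}\quad\operatorname{Tr}(\lambda u)=0,\qquad u=\sqrt{y/c(\lambda)}.
\]
Since $\operatorname{Tr}(x)=\operatorname{Tr}(x^2)$, these become $\operatorname{Tr}(y/c(\lambda))=0$ and $\operatorname{Tr}(\lambda^2y/c(\lambda))=0$.

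Count such $\lambda$ with additive characters. Write $\psi(x)=(-1)^{\operatorname{Tr}(x)}$. The number of good $\lambda$ is
\[
\frac14\sum_{\lambda}\bigl(1+\psi(y/c)\bigr)\bigl(1+\psi(\lambda^2y/c)\bigr).
\]
Expanding gives a main term of about $q/4$, plus three sums of $\psi$ applied to the rational functions $y/c$, $\lambda^2y/c$ and $(1+\lambda^2)y/c$. Each of these has denominator of degree $2$, with poles at the roots of $c$, which lie in $\mathbb F_{q^2}$.

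The main obstacle is justifying that these three functions are not of the form $g^2+g+\text{const}$, so that the Weil bound for additive character sums of rational functions applies. This needs a uniform bound of roughly $|{\sum}|\leqslant 2\sqrt q+O(1)$ for each sum, counting the poles over $\overline{\mathbb F}_q$. The tool is the rational-function version of \cite[Thm.~5.38]{LidlNiederreiter1997}, or equivalently a curve point count. The pole structure should prevent the Artin--Schreier degeneracy. After a partial fraction split, each function has simple poles at the two conjugate roots of $c$ with nonzero residues, apart from possibly a polynomial part of degree at most $0$. Simple poles cannot occur in $g^2+g$, because a pole of $g$ of order $n$ produces a pole of order $2n$. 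The exception is $1+\lambda^2$, whose residue must be checked; I expect it to be nonzero because $1+\lambda^2=\lambda$ at the roots of $c$.

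With that in hand, the number of good $\lambda$ is at least $\frac14\bigl(q-3(2\sqrt q+2)\bigr)$, once the boundedly many bad values are removed: $\lambda\in\mathbb F_4$ and the pole points. This count is positive for $q\geqslant64$, since $64-6\cdot8-6=10>0$ before excluding at most about four $\lambda$. If the margin is tight at $q=64$, I would verify $q=64$ directly by exhaustive computation, as the paper does elsewhere.
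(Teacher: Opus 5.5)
Your argument is correct, but it follows a different route from the paper. The paper counts all solutions $N(y)=\#\{(u,v)\in W^2:Q(u,v)=y\}$ with a two-variable character sum over $W^2$, which it evaluates in closed form as $\mathcal S(\alpha)=\frac q4(-1)^e\bigl(1+3\chi(\alpha^{-1})\bigr)$. Fourier inversion then gives the exact identity $N(y)=\frac14\bigl(q+(-1)^e(-1+3K_q(y))\bigr)$. So only the classical Kloosterman bound is needed, and the at most three dependent pairs are discarded afterwards; the same computation gives $N(0)$ exactly.

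You instead parametrize independent pairs by the slope $\lambda=v/u\notin\mathbb F_2$. This builds in independence and forces $u$, leaving two trace conditions in one variable. The cost is three Weil bounds for rational functions instead of one Kloosterman bound: in effect, Hasse--Weil for the genus-one curves $Y^2+Y=f_i(X)$, where $f_1=y/c$, $f_2=\lambda^2y/c$, $f_3=(1+\lambda^2)y/c$. Their only poles are simple poles at the roots of $c=\lambda^2+\lambda+1$, which are rational over $\mathbb F_q$ only when $e$ is even. Your nondegeneracy check is right: the residues at a root $\omega$ are $y$, $y\omega^2$ and $y\omega$, all nonzero. Your treatment of $y=0$ via the hyperplane intersection $W\cap\omega^{-1}W$ is more elementary than the paper's character computation, and it is complete.

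The closing estimate needs the exact constants; as written it does not yet settle $q=64$. Subtracting about four values from $10/4$ is not positive. Even with main term $q-2$ and $2\sqrt q+2$ per sum, you get only $G\geqslant2$, and $\lambda=0,1$ may use that up. Here $G$ denotes the number of $\lambda$ with $c(\lambda)\ne0$ satisfying both trace conditions.

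To fix this, sum over the $q-r$ values with $c(\lambda)\ne0$, where $r\in\{0,2\}$ is the number of roots of $c$ in $\mathbb F_q$; the poles then need no separate exclusion. Counting rational points above the poles and above $\infty$ (where $f_1=0$ and $f_2=f_3=y$) gives the three sums exactly:
\[
-1-a_1,\qquad -\psi(y)-a_2,\qquad -\psi(y)-a_3,\qquad \lvert a_i\rvert\leqslant2\sqrt q .
\]
Hence $4G=q-r-1-2\psi(y)-(a_1+a_2+a_3)$. The only other bad values are $\lambda=0,1$, and they lie in $G$ exactly when $\operatorname{Tr}(y)=0$. In that worst case with $r=2$, at $q=64$ one gets $4G\geqslant q-6\sqrt q-5=11$. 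Thus $G\geqslant3$, and at least one admissible $\lambda$ survives, so no computer check is needed. This is the same margin as the paper's $N(y)\geqslant15/4>3$, as it must be, since admissible $\lambda$ correspond bijectively to independent solutions.
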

\begin{proof}
Let $\chi(t)=(-1)^{\operatorname{Tr}(t)}$ and count
$N(y)=\#\{(u,v)\in W^2:Q(u,v)=y\}$.  For $\alpha\ne0$,
character orthogonality on the trace hyperplane gives
\begin{equation}\label{eq:Walsh-exact}
 \mathcal S(\alpha)=\sum_{u,v\in W}\chi(\alpha Q(u,v))
 =\frac q4(-1)^e\bigl(1+3\chi(\alpha^{-1})\bigr).
\end{equation}
Indeed the inner sum over $v$ vanishes unless
$\sqrt\alpha+\alpha u\in W^\perp=\{0,1\}$.
Among its two candidate values, one always lies in $W$ and
the other lies in $W$ exactly when
$\operatorname{Tr}(\alpha^{-1})=0$; their signs give
\eqref{eq:Walsh-exact}.
Fourier inversion yields, for $y\ne0$,
\begin{equation*}
 N(y)=\frac{q+(-1)^e(-1+3K_q(y))}{4},\qquad
 K_q(y)=\sum_{\alpha\ne0}\chi(\alpha y+\alpha^{-1}).
\end{equation*}
The classical Kloosterman estimate
$\lvert K_q(y)\rvert \leqslant2\sqrt q$ is valid also in characteristic two
\cite{Conrad2002}; it gives
$N(y)\geqslant(q-1-6\sqrt q)/4>3$ for $q\geqslant64$.
There are at most three dependent ordered pairs
$(u,v)$ with $Q(u,v)=y\ne0$, since dependence means
$u=0$, $v=0$, or $u=v$.  Hence an independent pair exists.
For $y=0$, the same character-sum calculation gives
\[
 N(0)=\begin{cases}q/2-1,&e\text{ even},\\1,&e\text{ odd}.
 \end{cases}
\]
Only $(0,0)$ is dependent at $y=0$, giving the assertion.
\end{proof}

\begin{proposition}[A fixed sum and arbitrary $h_2$ over large fields]
\label{prop:binary-uniform}
If $q=2^e$ with even $e\geqslant8$, then
\eqref{eq:saturation-condition} holds for every
$4\leqslant r\leqslant q/4-1$.
If $e\geqslant9$ is odd, it holds for every
$8\leqslant r\leqslant q/4-1$.
\end{proposition}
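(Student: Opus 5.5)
The plan is to build each $T_y$ as a disjoint union, inside $W\setminus\{0\}$, of affine $\mathbb F_2$-planes $B=x+\{0,u,v,u+v\}$ and eight-point affine subspaces, plus a small fixed padding set $P_0$ with $\lvert P_0\rvert\equiv r\pmod 4$. The padding has fixed values $m_1(P_0)=\sigma$ and $m_2(P_0)=y_0$. By the disjoint-union rule $m_2(A\sqcup B)=m_2(A)+m_2(B)+m_1(A)m_1(B)$ and $m_1(B)=0$ for planes, adjoining a plane $B$ to $P_0$ gives a set with $m_1=\sigma$ and $m_2=y_0+Q(u,v)$. Adjoining eight-point affine subspaces changes neither $m_1$ nor $m_2$. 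For a target $y$, choose one plane with $Q(u,v)=y+y_0$, which Lemma~\ref{lem:binary-Q-representation} supplies. The remaining cardinality $r-\lvert P_0\rvert-4$ is then filled with further planes and eight-point subspaces. If several planes are needed, all planes beyond the first are taken with $Q=0$ when $e$ is even, or grouped into eight-point subspaces.

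\emph{Even $e$.} Take $P_0$ of size $r\bmod 4\in\{0,1,2,3\}$, using fixed points of $W\setminus\{0\}$, and set $\sigma=m_1(P_0)$. When $r\geqslant4$, the first plane realizes $Q=y+y_0$ for every $y$, including the case $y=y_0$, since $0$ is representable by independent $u,v\in W$ when $e$ is even. The remaining $4j$ points are filled by planes with $Q(u,v)=0$, or by eight-point subspaces.

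\emph{Odd $e$.} The value $0$ is not representable by a single plane. We therefore use two planes $B_1,B_2$ with $Q$-values $y_1,y_2$, which contribute $y_1+y_2$ to $m_2$. Every field element is a sum of two nonzero elements when $q\geqslant4$, so every $y+y_0$ is attained. This explains the threshold $r\geqslant8$: two planes plus padding of size at most $3$ fit within $r$. Additional points again come from eight-point subspaces, or from extra planes paired so that their $Q$-values cancel.

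\emph{Disjointness and avoiding $0$.} All pieces must lie in $W\setminus\{0\}$ and be pairwise disjoint, with total size $r\leqslant q/4-1$. To arrange this, I would fix in advance a decomposition of $W$ into cosets of a suitable subspace, and place the padding and the eight-point subspaces inside cosets not containing $0$. The free plane(s) then need to avoid a forbidden set $F$ of size at most $r$. The translate $x$ of a plane with prescribed $(u,v)$ can be chosen among $\lvert W\rvert/4=q/8$ cosets of $\langle u,v\rangle$. At most $\lvert F\rvert+1\leqslant q/4$ of these cosets meet $F\cup\{0\}$, which is not enough on its own.

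This counting is the main obstacle. I would first try to use the many representations $N(y)\approx q/4$ of $Q(u,v)=y$ from the proof of Lemma~\ref{lem:binary-Q-representation}, varying $(u,v)$ as well as $x$. The resulting count of triples $(u,v,x)$ is about $q/4\cdot q/8$, against at most about $r\cdot 3\cdot N(y)$ triples meeting $F$. A cleaner option is to build the fixed part inside a complementary structure, for instance taking all fixed pieces inside one half-space $W_1\subset W$ and the free planes inside $W\setminus W_1$. Then $r\leqslant q/4-1$ leaves room because $\lvert W_1\rvert=q/4$. Boundary values of $r$ near $q/4-1$ and the smallest admissible $e$ may need separate handling by this complementation trick.
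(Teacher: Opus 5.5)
Your algebraic skeleton matches the paper's. It uses a fixed padding set, one plane realizing $Q=y+y_0$ for even $e$, two planes with nonzero $Q$-values for odd $e$, and eight-point affine subspaces as neutral filler. The gap is the step you flag as the main obstacle: disjoint placement is left open, and the way you set it up cannot be completed in the upper part of the range.

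\begin{itemize}
\item If the eight-point subspaces (and any $Q=0$ planes) are fixed before $y$ is known, the free plane $x+\langle u,v\rangle$ must avoid up to $r-4\approx q/4$ points, with only $q/8$ translates available.
\item Since $\langle u,v\rangle$ depends on $y$, it is in general transverse to your fixed coset decomposition. An eight-point coset meeting $\langle u,v\rangle$ trivially meets eight different translates. So no count of this kind guarantees a free translate once $r$ exceeds about $q/8$.
\item Your first fallback does not help. For each admissible $(u,v)$, every point of $F\cup\{0\}$ lies in exactly one coset of $\langle u,v\rangle$. The bad triples therefore number up to $(\lvert F\rvert+1)N(y)$ against $N(y)\,q/8$ in total, the same ratio as before.
\item Your second fallback, as stated, needs $\langle u,v\rangle\subseteq W_1$ for a fixed hyperplane $W_1$ of $W$. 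That is a representation of $y$ by $Q$ on a fixed smaller subspace, which Lemma~\ref{lem:binary-Q-representation} does not supply. You also leave $r$ near $q/4-1$ open.
\end{itemize}

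The missing observation is that only the padding has to be independent of $y$. Any disjoint eight-point affine subspace contributes $m_1=m_2=0$, so these pieces can be chosen \emph{after} the plane or planes, adapted to them.

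\begin{itemize}
\item Take the padding $E$ of size $s=r\bmod 8\leqslant7$. This also avoids the unpaired third plane that your mod-$4$ padding forces when $e$ is odd and $r\bmod 8\geqslant4$.
\item Even $e$: write $r=4+s+8\ell$. Place $B$ first; it must avoid only $E\cup\{0\}$, which blocks at most $8$ of its $q/8$ translates. Then choose a three-dimensional $U\subseteq W$ containing $\langle u,v\rangle$, so $B$ lies in a single coset of $U$. At most $9$ of the $q/16$ cosets of $U$ meet $E\cup\{0\}\cup B$, and $q/16-9\geqslant q/32-1\geqslant\ell$ once $q\geqslant256$.
\item Odd $e$: write $r=8+s+8\ell$. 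Place $B_1$ avoiding at most $8$ points and then $B_2$ avoiding at most $12$. Take any three-dimensional $U\subseteq W$; at most $8+8=16$ cosets are blocked, and $q/16-16\geqslant q/32-2\geqslant\ell$ for $q\geqslant512$.
\end{itemize}

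This ordering of choices is exactly how the paper completes the proof.
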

\begin{proof}
First suppose $e$ is even.  Write
$r=4+s+8\ell$, $0\leqslant s\leqslant7$, and fix a set
$E\subseteq W\setminus\{0\}$ of size $s$ independently of
$y$.  Lemma~\ref{lem:binary-Q-representation} supplies
independent $u,v$ with $Q(u,v)=y$.
The two-dimensional subspace $D=\langle u,v\rangle$ has
$q/8$ affine translates inside $W$.  At most
$\lvert E\cup\{0\}\rvert \leqslant8$ meet the forbidden set, so a
four-point affine plane $B$ can be placed away from it.  Choose a
three-dimensional subspace $U\subseteq W$ containing $D$.
At most $\lvert E\cup\{0\}\rvert +1\leqslant9$ of its $q/16$ cosets
meet $E\cup\{0\}\cup B$.  The needed number
$\ell\leqslant q/32-1$ satisfies
\[
 q/16-9\geqslant q/32-1\qquad(q\geqslant256).
\]
Add $\ell$ disjoint cosets of this subspace that avoid the chosen sets.  Their union with
$E$ and $B$ has $r$ elements, sum $m_1(E)$,
and $m_2$ value $m_2(E)+y$.
As $y$ runs through $\mathbb F_q$, the latter covers
$\mathbb F_q$, as required.

Now let $e$ be odd, so $q\geqslant512$.
Write $r=8+s+8\ell$.  Choose a fixed set $E$ of size $s$.
For every $y$, choose nonzero $y_1,y_2$ with
$y_1+y_2=y$, even when $y=0$.
Write each $y_i=Q(u_i,v_i)$ with $u_i,v_i$ linearly independent,
and choose disjoint translates $B_1,B_2$ of the corresponding
four-point planes, both avoiding
$E\cup\{0\}$.  Since each plane has
$q/8$ translates in $W$ and the excluded sets have at most
$8$ and $12$ points, respectively, this is possible.
Choose any three-dimensional subspace $U\subseteq W$.
At most $\lvert E\cup\{0\}\rvert +8\leqslant16$ of its $q/16$ cosets
meet the fixed set or the two affine planes.  We have
\[
 \ell\leqslant q/32-2,\qquad q/16-16\geqslant q/32-2
 \quad(q\geqslant512).
\]
Append $\ell$ further cosets of this subspace.  The resulting union has
sum $m_1(E)$ and $m_2$ value $m_2(E)+y_1+y_2$.
The second value again runs through the field while the sum stays fixed.
\end{proof}

\subsubsection{The remaining fields of characteristic two}

The two fields of characteristic two not covered by the preceding estimates for disjoint
affine translates, after applying the low-dimensional and Kaipa ranges,
are $q=64$ and $q=128$.  For these fields the finite certificates
provide subsets $T_y$ of the trace-zero hyperplane with
\[
 T_y\subseteq W\setminus\{0,\tau\},\qquad
 h_1(T_y)=0,\qquad h_2(T_y)=y
 \quad(y\in\mathbb F_q),
\]
for one fixed nonzero $\tau\in W$.  The stored subsets were obtained
by finite-state dynamic programming, and a separate verifier
reconstructs the associated $(k+1)$-point evaluation sets and
checks the first four complete homogeneous symmetric functions and distinctness.  This verifies
\eqref{eq:saturation-condition}, independently of how the sets
were found.

\begin{center}
\begin{tabular}{@{}llll@{}}
\hline
Field & Remaining $k$ & Cardinalities $r$ & Stored witnesses \\
\hline
$\mathbb F_{64}$ & $14\leqslant k\leqslant30$ & $7\leqslant r\leqslant15$ & $576$ \\
$\mathbb F_{128}$ & $27\leqslant k\leqslant62$ & $14\leqslant r\leqslant31$ & $2304$ \\
\hline
\end{tabular}
\end{center}

\begin{theorem}[The case $m=4$ in characteristic two]
\label{thm:four-even}
Let $q=2^e\geqslant8$, $2\leqslant k\leqslant q-5$.  A word with
interpolation polynomial of degree exactly $k+4$ is a deep hole
if and only if $k=2$, $e$ is odd, and its normalized polynomial
satisfies $a=c=0$, $d=b^2$ in \eqref{eq:binary-sextic}.
\end{theorem}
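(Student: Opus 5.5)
The proof assembles the results already established for $m=4$ in characteristic two, organised by ranges of $k$.

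\emph{Case $k=2$.} Theorem~\ref{thm:even-k2} states that the deep holes are exactly the normalized sextics with $e$ odd, $a=c=0$ and $d=b^2$. This is the asserted exception; writing $b=t^2$ turns it into \eqref{eq:segrefamily}.

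\emph{High dimensions.} For $k\geqslant3$ we must rule out deep holes. Kaipa's theorem \eqref{eq:Kaipa-range} handles $k\geqslant q/2-1$. Its extra nucleus case occurs only at $k=q-3$, which lies outside $k\leqslant q-5$.

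\emph{Low dimensions.} Proposition~\ref{prop:even-low-four} handles $3\leqslant k$ with $q>5k-4$, i.e.\ $k<(q+4)/5$. What remains is $(q+4)/5\leqslant k<q/2-1$.

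\emph{Middle range.} Here we apply Proposition~\ref{prop:binary-saturation} with $r=\lceil k/2\rceil$, so that $k\in\{2r-1,2r\}$. We need \eqref{eq:saturation-condition} for the corresponding $r$. Since $k<q/2-1$, we get $r\leqslant q/4-1$. Since $k\geqslant(q+4)/5$, we get $r\geqslant q/10$, which exceeds $8$ once $q\geqslant128$.
\begin{itemize}[leftmargin=1.5em]
\item For even $e\geqslant8$ and for odd $e\geqslant9$, Proposition~\ref{prop:binary-uniform} supplies the needed $r$, because $q/10\geqslant8$ there.
\item For $q=64$ and $q=128$, the stored certificates in the table cover exactly the residual $k$. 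The stored sets avoid $0$ and $\tau$, so they are admissible subsets of $W\setminus\{0\}$. The check is that the tabulated intervals match $(q+4)/5\leqslant k<q/2-1$: for $q=64$ this gives $14\leqslant k\leqslant30$, and for $q=128$ it gives $27\leqslant k\leqslant62$.
\end{itemize}

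\emph{Small fields $q=8,16,32$.} The low and high ranges nearly meet here.
\begin{itemize}[leftmargin=1.5em]
\item For $q=8$, the condition $k\leqslant q-5$ gives $k\leqslant3$, and Proposition~\ref{prop:even-low-four} ($8>11$ fails) does not apply at $k=3$. However, $k=3\geqslant q/2-1=3$, so Kaipa applies.
\item For $q=16$, the low range is $k\leqslant3$ and Kaipa covers $k\geqslant7$, leaving $k=4,5,6$.
\item For $q=32$, the low range is $k\leqslant7$ and Kaipa covers $k\geqslant15$, leaving $k=8,\ldots,14$.
\end{itemize}

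\emph{Main obstacle.} The residual cases for $q=16$ and $q=32$ are where I expect the difficulty, since no general lemma stated so far treats them. For $q=16$, $W$ has $8$ elements and $W\setminus\{0\}$ has $7$. I would first try \eqref{eq:saturation-condition} directly, looking for a fixed $\sigma$ with $m_2$ taking all $16$ values on $r$-subsets ($r=2,3$). If that fails, I would exhaustively check normalized coefficient vectors, as was done for $q=9,25,27$ in the odd case, relying on the ancillary verification programs. For $q=32$ ($r=4,\ldots,7$, $\lvert W\rvert=16$), I would try the affine-plane construction of Proposition~\ref{prop:binary-uniform}, verifying Lemma~\ref{lem:binary-Q-representation} computationally for $q=32$ since the Kloosterman bound is too weak there. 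Failing that, I would again use finite certificates.

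Collecting all these ranges shows that no deep hole exists for $k\geqslant3$, which completes the classification.
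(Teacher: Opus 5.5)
Your proposal is correct and follows the paper's proof essentially step for step. You use Theorem~\ref{thm:even-k2} for $k=2$, Proposition~\ref{prop:even-low-four} and Kaipa for the low and high ranges, and Propositions~\ref{prop:binary-uniform} and~\ref{prop:binary-saturation} for even $e\geqslant8$ and odd $e\geqslant9$; you also use the stored certificates for $q=64,128$ and finite checks for exactly the residual pairs $q=16$, $4\leqslant k\leqslant6$ and $q=32$, $8\leqslant k\leqslant14$. For those last pairs the paper goes straight to exhaustive checks of normalized coefficient vectors, which is also where your fallback must land at $q=16$: fixed-sum $r$-subsets of the $7$-element set $W\setminus\{0\}$ give at most $3$ values of $m_2$ for $r=2$ and at most $7$ for $r=3$, fewer than the $16$ needed.
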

\begin{proof}
The $k=2$ assertion is Theorem~\ref{thm:even-k2}.
For $k\geqslant3$, Proposition~\ref{prop:even-low-four} covers
$q>5k-4$, and \cite[Theorem~1]{Kaipa2017} covers
$k\geqslant q/2-1$ in the present full-length setting.
If $e$ is even and $q\geqslant256$, or $e$ is odd and $q\geqslant512$,
all remaining dimensions correspond to cardinalities in
Proposition~\ref{prop:binary-uniform}, so
Proposition~\ref{prop:binary-saturation} applies.
For $q=64,128$ use the complete finite certificates just
described.  For $q=8,16,32$ the dimension-two classification and the
preceding analytic bounds, together with exhaustive checks of normalized coefficient
vectors for $q=16$, $4\leqslant k\leqslant6$, and $q=32$,
$8\leqslant k\leqslant14$, cover the remaining dimensions.  All these
certificates are verified exhaustively in the ancillary package.  This exhausts all powers of
$2$ and all permitted dimensions.
\end{proof}

\section{Exact exclusion when \texorpdfstring{$m=5$}{m=5}}\label{sec:excess-five}

The case $m=5$ admits an exact result over every finite field,
in contrast to the size-dependent conclusion for general $m$.  Its proof uses the
same remainder criterion in terms of symmetric functions, but the odd- and even-characteristic cases
require different constructions.  We state the result before giving
the two parts of the proof.

\begin{theorem}[The case $m=5$]\label{thm:fifth}
Let $q$ be any prime power and $2\leqslant k\leqslant q-6$.  If $u$ has
interpolation polynomial of degree $k+5$, then
\[
 \operatorname{dist}(u,\operatorname{RS}_q(k))\leqslant q-k-1.
\]
\end{theorem}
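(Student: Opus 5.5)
The plan is to mirror the structure used for $m=3$ and $m=4$. By Proposition~\ref{prop:agreement} it suffices to produce, for each normalized $f=X^{k+5}+a_1X^{k+4}+\cdots+a_5X^k$, a $(k+1)$-subset $S$ with $\mathcal H_{5,f}(S)=0$. The argument splits by characteristic and then by dimension into three ranges: a low range, a high range and a middle range. In every range the work reduces either to a cited external theorem or to an explicit construction of $S$.

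\emph{Odd characteristic.} First apply \eqref{eq:CLZ-range} and Kaipa's range \eqref{eq:Kaipa-range}; this disposes of all prime fields and of $k\geqslant\lfloor(q-1)/2\rfloor$. For the low range, $d=6$ is even and $q-1$ is even, so $\gcd(6,q-1)>1$ automatically. Hence Theorem~\ref{thm:low-dimension-gcd} excludes deep holes whenever $q>6(k-1)+1$, provided $\mathsf F_6(q)$ holds. I will check $\mathsf F_6(q)$ directly from \eqref{eq:fan-condition}; it holds once roughly $q\geqslant 150$, which is below $Q_5=289$. For $q\geqslant289$, failure of the low inequality gives $k+1>q/6\geqslant q/8$. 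Together with $k+1\leqslant q/2$, this puts us in the range of Lemma~\ref{lem:odd-middle-count} with $m=5$.

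For the proper odd prime powers $q<289$ (namely $9,25,27,49,81,121,125,169,243$), I would use the exact integer majorant of Lemma~\ref{lem:odd-short-majorant} to discard most $n=k+1$. That leaves a finite list of pairs $(q,k)$. For these I would construct opposite-pair sets $S=\{\pm\sqrt y:y\in Y\}\cup\{\alpha\}$ (with $0$ adjoined if needed), where $Y\subseteq(\mathbb F_q^\times)^2\setminus\{1\}$. A union of opposite pairs has vanishing odd $h_j$. Adding one distinguished point $\alpha$ as in Lemma~\ref{lem:one-core} makes $\mathcal H_{5,f}$ affine-linear in the two quantities $e_1(Y)$ and $h_2(Y)$. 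Choosing $\alpha$ suitably and prescribing these quantities is then a finite certificate of the kind in Remark~\ref{rem:finite-core-interface}. Any fields still uncovered would be handled by exhaustive checks over normalized coefficient vectors. I expect the verification of $\mathsf F_6$ and the certificate tables to be routine but tedious.

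\emph{Characteristic two.} For $k=2$, Proposition~\ref{thm:dimtwo-general}~(\ref{it:dimtwo-binary-odd-excess}) applies directly, since $m=5$ is odd. Kaipa handles $k\geqslant q/2-1$; its nucleus case $k=q-3$ is outside our range. For the low range $q>6k-5$ with $k\geqslant3$, the quotients $Q_A$ of degree $6$ must all be permutations. Over even $q$ with $3\mid q-1$ (that is, $e$ even), Carlitz--Wan together with $\mathsf F_6$ forbids such permutations. For odd $e$ I would instead use the projection argument of Proposition~\ref{prop:uniform-low-even}. The arc $\mathcal A_T$ extends to a hyperoval, so $R_T$ is an o-polynomial of odd degree $m+2=7$, which the parity lemma forbids. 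This step needs $q-k+3>q-\sqrt q+1$, i.e.\ $k<\sqrt q+2$. In the gap between these two conditions I would fall back on the Artin--Schreier block method.

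The middle range uses the blocks of Section~\ref{sec:excess-four-even}: expanding $\prod_{t}(1+z+tz^2)^{-1}$ through $z^5$ gives $h_5(S_T)$ as an $\mathbb F_2$-combination of $m_2(T)$ and $m_1(T)$ plus constants. This yields $\mathcal H_{5,f}(S_T)=\alpha_r m_2(T)+\lambda'_r m_1(T)+\mu'_r$, with coefficients depending on $a_1,\dots,a_5$ and $r$. When $\alpha_r\neq0$, Proposition~\ref{prop:binary-uniform} (fixed $m_1$, arbitrary $m_2$) finishes the argument. When $\alpha_r=0$, the expression is affine in $m_1$, and Lemma~\ref{lem:V-subsets} supplies the required subset sum after a trace selection in the spirit of Lemma~\ref{lem:T1}. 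Scaling by a common difference $\delta$ supplies the freedom needed to make the target trace zero.

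\emph{Main obstacle.} I expect the main difficulty to be the degenerate coefficient patterns in characteristic two. There $\alpha_r$ and the $m_1$-coefficient can vanish simultaneously, so one must break symmetry by mixing differences $\delta,\lambda\delta$ as in Lemma~\ref{lem:k5-even}. The small fields $q=16,32,64,128$, where Proposition~\ref{prop:binary-uniform} does not apply, would be settled by stored certificates.
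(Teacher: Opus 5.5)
Your overall architecture matches the paper's proof, but two steps need repair. For odd $q\geqslant289$ you use exactly its chain: Theorem~\ref{thm:low-dimension-gcd} with $d=6$, then Lemma~\ref{lem:odd-middle-count}, then Kaipa. In characteristic two you likewise have Proposition~\ref{thm:dimtwo-general}(\ref{it:dimtwo-binary-odd-excess}) for $k=2$, Kaipa, a hyperoval/parity argument in low dimensions, Artin--Schreier fibers with fixed $m_1$ and free $m_2$ in the middle range, and certificates for $q\leqslant128$. Your Carlitz--Wan shortcut for even $e$ is valid but unnecessary. The paper uses the hyperoval argument for every $e$, in the range $k-1<\sqrt q+1$, $q>6(k-1)+1$, which already overlaps the middle range.

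\emph{Odd $q<289$, low dimensions.} Here you have no structural argument. $\mathsf F_6(q)$ fails at $q=25,49,81,121,125$, so Theorem~\ref{thm:low-dimension-gcd} is unavailable, and the integer Gao--Li majorant gives nothing for small $n$. The paper closes the range $q>6(k-1)+1$ with the Shallue--Wanless classification of sextic permutation polynomials (Lemma~\ref{lem:fifth-sextic}), leaving only $62$ pairs. Without it, every small $k$ in those fields stays in your residue. Your opposite-pair device cannot prescribe $e_1(Y)$ and $h_2(Y)$ independently when $r$ is tiny ($r=1$ forces $h_2=e_1^2$). So all of these fall to exhaustive coefficient checks, which is finite but far heavier than you suggest.

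\emph{Characteristic two, degenerate case.} You flag this yourself as the main obstacle, but the branch ``$\alpha_r=0$, then solve affinely in $m_1$ after a trace selection in $\delta$'' does not merely need supplementing; it fails. Take $k=2r-1$, $r\equiv2\pmod4$ and $a_1=a_2=a_3=0$. Then $h_1(S_T)=h_5(S_T)=0$, so $\mathcal H_{5,f}(\delta S_T)=a_5$ for every $\delta$ and every $T$. The same happens with the singleton $0$ when $k=2r$. Symmetry breaking is therefore mandatory.

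The paper breaks symmetry through the adjoined points, since the coefficient of $m_2$ is $r+a_1+\sum_{z\in Z}z$:
\begin{itemize}
\item for $k=2r$, adjoin $z\in\{0,1\}$;
\item for $k=2r-1$ with $r$ odd, use pairs scaled by $\delta\ne a_1$;
\item for $k=2r-1$ with $r$ even and $a_1\ne0$, use unscaled pairs;
\item for $k=2r-1$ with $r$ even and $a_1=0$, use $r-1$ pairs plus $\{0,\gamma\}$, $\gamma\notin\{0,1\}$, which gives coefficient $1+\gamma$.
\end{itemize}
Your mixing of differences also works: in the bad case, one pair of difference $\lambda\delta$ makes the $m_2$-coefficient $(1+\lambda)\delta^5$. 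Either device, however, forces the varying parameters to avoid a few extra indices, whereas Proposition~\ref{prop:binary-uniform} only avoids $0$. You therefore need the forbidden-index form of the fixed-sum, arbitrary-$h_2$ statement, which is Lemma~\ref{lem:fifth-binary-sat}. Its proof is the same count of affine planes and eight-point cosets, with slightly less slack.
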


After a nonzero scalar multiplication and subtraction of a codeword,
write the interpolation polynomial as
\begin{equation*}
 f(X)=X^{k+5}+a_1X^{k+4}+a_2X^{k+3}
       +a_3X^{k+2}+a_4X^{k+1}+a_5X^k.
\end{equation*}
By Lemma~\ref{lem:remainder}, the required distance bound is equivalent
to the existence of a $(k+1)$-element set $S$ satisfying
\begin{equation}\label{eq:fifth-H}
 \mathcal H_f(S):=h_5(S)+a_1h_4(S)+a_2h_3(S)
             +a_3h_2(S)+a_4h_1(S)+a_5=0.
\end{equation}

\subsection{Odd characteristic: the remaining small fields}\label{subsec:fifth-odd-short}
For odd $q$, the argument of Section~\ref{sec:odd-short},
which treats small, intermediate, and large dimensions separately,
requires only the following low-dimensional lemma and finite
certificates for $62$ parameter pairs.
\begin{lemma}[The small-field sextic obstruction]\label{lem:fifth-sextic}
Suppose $q$ is odd, $q\notin\{9,11,27\}$, and
$q>6(k-1)+1$.  Then a received word of degree $k+5$
is not a deep hole.
\end{lemma}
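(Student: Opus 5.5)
The plan is to follow the quotient argument of Lemma~\ref{lem:uniform-projection} and Theorem~\ref{thm:low-dimension-gcd}, with $d=m+1=6$.
Assume the word is a deep hole and normalize $f$ to be monic.
For every $(k-1)$-set $T$, write $f=P_TG_T+L_T$ as in Lemma~\ref{lem:collision-quotients}, where $G_T$ is a monic sextic.
By that lemma, $G_T$ is injective on $\mathbb F_q\setminus T$, so its value set has at least $q-k+1$ elements.
If $G_T$ did not permute $\mathbb F_q$, Wan's bound~\eqref{eq:wan-values} would give at most $q-\lceil(q-1)/6\rceil$ values.
This is below $q-k+1$ exactly when $q>6(k-1)+1$, since that inequality gives $c_6(q)\geqslant k$.
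Hence, under the hypothesis, every sextic quotient $G_T$ is a permutation polynomial of $\mathbb F_q$.

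Next I would show that sextic permutation polynomials do not exist over the relevant odd fields.
Since $q$ is odd, $\gcd(6,q-1)\geqslant2$, so by Carlitz--Wan no sextic is exceptional over $\mathbb F_q$.
When Fan's condition $\mathsf F_6(q)$ holds, Fan's theorem says every sextic permutation polynomial is exceptional, so none exists.
In that range the proof of Theorem~\ref{thm:low-dimension-gcd} applies verbatim, giving a collision and hence an agreement set.
For $d=6$, $\mathsf F_6(q)$ reads $q+1>6\lfloor2\sqrt q\rfloor+10$, which holds for all $q$ beyond roughly $150$.

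For the small odd $q$ where $\mathsf F_6(q)$ fails, I would use the known classification of normalized permutation polynomials of degree six in odd characteristic, as tabulated in Dickson's list and the Handbook.
In odd characteristic, sextic permutation polynomials exist only for a short list of small fields, essentially $q\in\{9,11,27\}$ together with possibly $q=7$ or other tiny cases.
Exactly the fields $q\in\{9,11,27\}$ are excluded in the statement.
For any remaining tiny field, such as $q=3,5,7,13,\dots$, either the list gives no sextic permutation, or the range $q>6(k-1)+1$ combined with $2\leqslant k\leqslant q-6$ is empty.
The second alternative holds for $q\leqslant 7$, since then $q-6\leqslant1$; also $q=13$ forces $k=2$, for instance.
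If needed, I would also borrow the Hermite criterion computation of Proposition~\ref{prop:odd-low}: choose $n$ with $q-1\leqslant6n<2(q-1)$ so that $[X^{q-1}]G^n$ has a nonzero forced coefficient.

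The main obstacle is this last step: pinning down exactly which odd $q$ admit sextic permutation polynomials, and checking the gap between the fields where Fan's inequality holds and the classification list.
I would first try to cite the Dickson/Handbook table directly.
Failing that, I would verify the finitely many fields below about $150$ by applying the Hermite criterion to the centered monic sextic.
Alternatively, I would settle them by the deterministic computation in the ancillary programs, which is finite.
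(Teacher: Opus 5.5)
Your proposal is correct and follows the paper's route. The collision-quotient lemma together with Wan's value-set bound and $q>6(k-1)+1$ turns a deep hole into a sextic permutation polynomial, and the nonexistence of such polynomials over odd $q\notin\{9,11,27\}$ finishes the argument. The paper obtains that nonexistence in one step by citing the Shallue--Wanless classification of degree-six permutation polynomials in odd characteristic, which covers every odd $q$. Your Fan/Carlitz--Wan split is therefore unnecessary inside this lemma, although the paper uses exactly that split for $q\geqslant289$ in Proposition~\ref{prop:fifth-odd}. For the small fields, cite that classification directly rather than the hedged Dickson/Handbook list. A Hermite-criterion check of a single unknown sextic is only a computation, not a proof in itself.
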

\begin{proof}
By Lemma~\ref{lem:collision-quotients}, a deep hole
would produce a sextic permutation polynomial of
$\mathbb F_q$.  The classification of Shallue--Wanless
\cite{ShallueWanless2013} precludes this in the stated fields.
\end{proof}

\begin{proposition}[The remaining small odd fields]
\label{prop:fifth-odd-core}
The conclusion of Theorem~\ref{thm:fifth} holds for every
odd prime power $q<289$.
\end{proposition}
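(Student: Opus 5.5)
The plan mirrors the proofs of Theorems~\ref{thm:odd} and~\ref{thm:four-odd}: first cut down the parameter pairs $(q,k)$ with known results, then treat the finitely many survivors by certificates. For an odd prime $q=p$, every admissible $k\leqslant q-6$ has $k+1\leqslant p$, so \eqref{eq:CLZ-range} applies. For every odd $q$, Kaipa's range \eqref{eq:Kaipa-range} removes $k\geqslant\lfloor(q-1)/2\rfloor$. It therefore suffices to handle the proper odd prime powers $q\in\{9,25,27,49,81,121,125,169,243\}$ with $k+1\leqslant q/2$.

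For the low dimensions I would use two tools. Theorem~\ref{thm:low-dimension-gcd} with $d=6$ applies because $\gcd(6,q-1)\geqslant2$ for odd $q$. Fan's condition $\mathsf F_6(q)$, namely $q+1>6\lfloor2\sqrt q\rfloor+10$, holds roughly for $q\geqslant 169$; the exact threshold is to be checked. For smaller fields outside $\{9,11,27\}$, Lemma~\ref{lem:fifth-sextic} replaces Fan's bound. Either way, $q>6(k-1)+1$ gives agreement, so we may assume $k+1>q/6$.

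For the middle dimensions I would evaluate the exact integer majorant $\overline A_{5,n}$ of Lemma~\ref{lem:odd-short-majorant} by its recurrence. I would then test $\binom qn^2>q^6\overline A_{5,n}^2$ for each $n=k+1$ in the window $q/6<n\leqslant q/2$. For $q\geqslant121$ I expect this to close everything, or nearly so, since Lemma~\ref{lem:odd-middle-count} already covers $q\geqslant289$ with a thin margin. Any residual $(q,k)$ get listed explicitly, as in the table of Theorem~\ref{thm:four-odd}.

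The main obstacle is the residual list, mostly for $q=9,25,27,49,81$ and possibly a few $k$ for $q=121,125$; the paper says $62$ pairs. For $q=27$, and also for $q=9$, Lemma~\ref{lem:fifth-sextic} is unavailable, so the whole range below Kaipa is residual. For these pairs I would use the two certificate types of Remark~\ref{rem:finite-core-interface}.
\begin{itemize}
\item When $q$ is large enough to supply them, use opposite-pair families $S=\{\pm\sqrt y:y\in Y\}\cup\{\alpha\}$, with a zero adjoined if needed and $Y$ taken inside the squares. The opposite pairs contribute only even-degree terms, so $h_j(S)$ becomes an explicit polynomial in $\alpha$, $e_1(Y)$ and $e_2(Y)$. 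Fixing $e_1(Y)$ and letting $h_2(Y)$ run over all field values then solves \eqref{eq:fifth-H} once $\alpha$ is chosen to make the linear coefficient nonzero. Unlike $m=4$, the odd degree $5$ forces the distinguished point $\alpha$, and one must check that a suitable $\alpha$ exists for every coefficient vector.
\item For $q=9,25,27$, use stored $(k+1)$-subsets whose hyperplanes $\{\mathcal H_f(S)=0\}$ cover all normalized coefficient vectors $(a_1,\ldots,a_5)$. Affine normalization $X\mapsto uX+v$ reduces the number of classes first.
\end{itemize}
Finding and verifying these certificates is the computer-assisted core, and it is the hardest step.
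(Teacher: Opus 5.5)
Your plan is essentially the paper's proof: it uses CLZ and Kaipa, then Lemma~\ref{lem:fifth-sextic} in the low range, then the integer majorant of Lemma~\ref{lem:odd-short-majorant} in the middle range, which closes $q=169,243$ and leaves $62$ pairs for $q\in\{9,25,27,49,81,121,125\}$; these are settled by full coefficient-vector certificates for $q=9,25,27$ and by opposite-pair certificates with $e_1(Y)=0$, arbitrary $h_2(Y)$ and one adjoined point for $q=49,81,121,125$. The detail you left open is settled by taking $Y\subseteq(\mathbb F_q^\times)^2\setminus\{1\}$ and the adjoined point $z\in\{0,1\}$ when $k$ is even, or $z\in\{\pm1\}$ together with $0$ when $k$ is odd, so that $z+a_1\ne0$ and all points are distinct; also, CLZ covers $k+1\leqslant p$ for proper prime powers as well, which removes pairs such as $(9,2)$ and $(27,2)$ from your residual list.
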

\begin{proof}
Use \eqref{eq:CLZ-range} and \eqref{eq:Kaipa-range},
then Lemma~\ref{lem:fifth-sextic} in the low range,
and the integer comparison from
Lemma~\ref{lem:odd-short-majorant} for the remaining
$q<289$ fields.  For $q=169,243$ this comparison
completes all intermediate parameters.  The remaining parameter pairs are exactly those in the
following table:
\[
\begin{array}{c|rrrrrrr}
q&9&25&27&49&81&121&125\\\hline
k&3&5\text{--}11&3\text{--}12&9\text{--}23&15\text{--}39
 &58\text{--}59&60\text{--}61\\
\text{pairs}&1&7&10&15&25&2&2
\end{array}
\]
For $q=9,25,27$ there are $18$ certificates covering all normalized coefficient
vectors, with $6351$ stored $(k+1)$-subsets.
The linear rescaling $X\mapsto\lambda X$ reduces
$a_1$ to $0$ or $1$, and each verifier checks every
remaining coefficient vector, not random examples.

For $q=49,81,121,125$, the $44$ remaining parameter pairs
are handled by certificates with a fixed sum and arbitrary second complete homogeneous symmetric function value which provide, for each
required cardinality and each
$\beta\in\mathbb F_q$, a subset
\[
 Y_\beta\subseteq(\mathbb F_q^\times)^2\setminus\{1\},\qquad
 e_1(Y_\beta)=0,\qquad h_2(Y_\beta)=\beta.
\]
Take $S_\beta=\{\pm\sqrt y:y\in Y_\beta\}$, and adjoin
$z$ if $k$ is even, or $\{0,z\}$ if $k$ is odd.
The generating function for the paired points begins with
$1+\beta w^4+O(w^6)$, hence
\[
 \mathcal H_f(S)=D(z)+(z+a_1)\beta,\qquad
 D(z)=z^5+a_1z^4+a_2z^3+a_3z^2+a_4z+a_5.
\]
If $k$ is even, choose $z=0$ when $a_1\ne0$ and
$z=1$ otherwise.  If $k$ is odd, choose
$z\in\{1,-1\}$ with $z+a_1\ne0$; such a choice is always
possible since the characteristic is odd.  In either case put
$\beta=-D(z)/(z+a_1)$.  The points in the resulting $(k+1)$-set
are pairwise distinct: $0$ is not in an opposite pair, and
$1\notin Y_\beta$ excludes both $1$ and $-1$ from those pairs.
Thus $z+a_1\ne0$, and the chosen remainder coefficient vanishes.
The supplemental verifier checks the choice of singleton points
for odd $k$ described above.
The verification program checks every required cardinality and
value of the second complete homogeneous symmetric function directly from the stored subsets.
\end{proof}

\begin{proposition}[The case $m=5$ in odd characteristic]
\label{prop:fifth-odd}
The conclusion of Theorem~\ref{thm:fifth} holds over every
odd finite field.
\end{proposition}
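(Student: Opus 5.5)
The plan is to split odd prime powers $q$ at $289=Q_5$. Below that threshold, Proposition~\ref{prop:fifth-odd-core} already gives the conclusion of Theorem~\ref{thm:fifth}, so only $q\geqslant289$ needs new work. For those $q$ I follow the three-range scheme of Theorem~\ref{thm:odd} and Theorem~\ref{thm:four-odd}:
\begin{enumerate}[label=(\roman*)]
\item a high range handled by Kaipa;
\item a low range handled by a permutation-polynomial obstruction;
\item a middle range handled by the Gao--Li count.
\end{enumerate}
Prime fields need no separate treatment, since \eqref{eq:CLZ-range} and the other ranges already cover them.

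First, the high range. If $k\geqslant\lfloor(q-1)/2\rfloor$, then \eqref{eq:Kaipa-range} excludes deep holes of interpolation degree $k+5$. Next, the low range $q>6(k-1)+1$. Here $m=5$ is odd and $q$ is odd, so $d=6$ and $\gcd(6,q-1)\geqslant2$ automatically. I would then apply Theorem~\ref{thm:low-dimension-gcd}, which needs Fan's condition $\mathsf F_6(q)$, namely
\[
q+1>6\lfloor2\sqrt q\rfloor+10 .
\]
For $q\geqslant289$ we have $6\lfloor2\sqrt q\rfloor+10\leqslant12\sqrt q+10$. This is below $q+1$ once $\sqrt q\geqslant 13$, and indeed $\sqrt{289}=17$ gives $289+1>6\cdot34+10=214$. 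Alternatively, Lemma~\ref{lem:fifth-sextic} applies directly, because $q\notin\{9,11,27\}$.

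Finally, the middle range, where $q\leqslant6(k-1)+1$ and $k$ lies below Kaipa's range. Then $n=k+1\geqslant (q-1)/6+2>q/8$ and $n\leqslant q/2$. Lemma~\ref{lem:odd-middle-count} with $m=5$ and $q\geqslant Q_5=289$ gives \eqref{eq:odd-middle-count}. This is exactly the criterion \eqref{eq:GL-direct-criterion}, so agreement at $k+1$ points follows.

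The main obstacle is making sure the three ranges leave no gap at the boundaries. Specifically, failure of $q>6(k-1)+1$ must really give $n\geqslant q/8$, and the lower end of Kaipa's range must match $n\leqslant q/2$. I would check both inequalities explicitly, handling floors for odd $q$. Both hold with room to spare, since $(q+11)/6>q/8$ and $k+1\leqslant\lfloor(q-1)/2\rfloor<q/2$.

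Combining the three ranges with Proposition~\ref{prop:fifth-odd-core}, Proposition~\ref{prop:agreement} then gives the distance bound $\operatorname{dist}(u,\operatorname{RS}_q(k))\leqslant q-k-1$ over every odd field.
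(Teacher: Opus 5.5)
Your proposal is correct and matches the paper's proof essentially step for step: fields with $q<289$ go to Proposition~\ref{prop:fifth-odd-core}, and for $q\geqslant289$ Theorem~\ref{thm:low-dimension-gcd} with $d=6$ handles the low range $q>6(k-1)+1$ (Fan's condition holds and $\gcd(6,q-1)>1$), Kaipa handles $k\geqslant\lfloor(q-1)/2\rfloor$, and Lemma~\ref{lem:odd-middle-count} with $m=5$ covers the remaining range $q/8<k+1\leqslant q/2$. Your explicit checks of $\mathsf F_6(q)$ for $q\geqslant289$ and of the boundary inequality $(q+11)/6>q/8$ supply details that the paper asserts without computation.
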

\begin{proof}
For $q\geqslant289$, Theorem~\ref{thm:low-dimension-gcd}
with $d=6$ excludes the range $q>6(k-1)+1$: its field-size
hypothesis is satisfied, and $\gcd(6,q-1)>1$.
If this fails, then $k+1>q/8$, and below the Kaipa range
$k+1\leqslant q/2$; Lemma~\ref{lem:odd-middle-count} applies
with $m=5$.  Smaller fields are covered by
Proposition~\ref{prop:fifth-odd-core}.
\end{proof}

\subsection{Characteristic two: symmetric functions of fibers and affine subspaces}

Write $q=2^e$ and
$W=\ker\operatorname{Tr}_{\mathbb F_q/\mathbb F_2}$.  For
$t\in W$, let $P(t)$ be the two roots of $X^2+X=t$;
these disjoint pairs are indexed by $W$.  When
$T\subseteq W\setminus\{0\}$, put
\[
 S_T:=\bigcup_{t\in T}P(t),\qquad r:=\lvert T\rvert,\qquad
 \mu_1(T):=\sum_{t\in T}t,\quad\mu_2(T):=h_2(T).
\]
Put $\mu_0(T):=1$ and $\mu_j(T):=h_j(T)$ for $j\geqslant3$.
In $\mathbb F_q[[z]]$ we have
\begin{equation*}
 \prod_{x\in S_T}(1+xz)^{-1}
 =\prod_{t\in T}(1+z+tz^2)^{-1}
 =\sum_{j\geqslant0}\mu_j(T)z^{2j}(1+z)^{-r-j}.
\end{equation*}
In particular the coefficients of $\mu_2(T)$ in
$h_1,h_2,h_3,h_4,h_5$ of $S_T$ are, respectively,
$0,0,0,1,r\pmod2$.  Adjoining one or two points
$Z$ disjoint from $S_T$ changes the coefficient of
$\mu_2$ in \eqref{eq:fifth-H} to
\begin{equation}\label{eq:fifth-mu2-coef}
 r+a_1+\sum_{z\in Z}z.
\end{equation}
Scaling all pair points by $\delta\ne0$ instead changes it to
$\delta^4(r\delta+a_1)$ when $Z=\varnothing$.
The remaining terms in \eqref{eq:fifth-H} depend only on
$r,\mu_1(T)$, and the chosen extra points, not on $\mu_2(T)$.

The required property is that, for every target
$y\in\mathbb F_q$, one can choose an $r$-subset of
$W\setminus F$ with the same prescribed sum and
second symmetric value $y$, where $F$ consists of at most two forbidden
indices and contains $0$.

\begin{lemma}[A fixed sum and arbitrary $h_2$ with at most two forbidden points]
\label{lem:fifth-binary-sat}
Let $F\subseteq W$ have at most two elements and contain $0$.
Then there is a $\sigma\in W$, independent of $y$, such that
for every $y\in\mathbb F_q$ there is
$T_y\subseteq W\setminus F$ with
\[
 \lvert T_y\rvert=r,\qquad\mu_1(T_y)=\sigma,\qquad\mu_2(T_y)=y,
\]
provided one of the following holds:
\begin{align*}
 &e\geqslant8\ \text{is even},&&4\leqslant r\leqslant q/4-1;\\
 &e\geqslant9\ \text{is odd}, &&8\leqslant r\leqslant q/4-1.
\end{align*}
\end{lemma}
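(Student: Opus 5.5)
The plan is to repeat the construction in the proof of Proposition~\ref{prop:binary-uniform}, with one change: the forbidden set $\{0\}$ is replaced by $F$, which has at most two elements. Throughout, $\mu_1,\mu_2$ are the first two complete homogeneous symmetric functions of the parameters, exactly as $m_1,m_2$ were there. Three facts carry over unchanged. A four-point affine plane $B=x+\{0,u,v,u+v\}\subseteq W$ has $\mu_1(B)=0$ and $\mu_2(B)=Q(u,v)$. Any affine three-dimensional $\mathbb F_2$-subspace $C\subseteq W$ has $\mu_1(C)=\mu_2(C)=0$. For disjoint sets, $\mu_2(A\sqcup B)=\mu_2(A)+\mu_2(B)+\mu_1(A)\mu_1(B)$. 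Since the planes and cubes have $\mu_1=0$, a disjoint union of a fixed set $E$, some planes and some cubes has $\mu_1=\mu_1(E)$. Its $\mu_2$ is $\mu_2(E)$ plus the sum of the plane values $Q(u_i,v_i)$. We put $\sigma:=\mu_1(E)$, which does not depend on $y$.

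\emph{Even $e\geqslant8$.} Write $r=4+s+8\ell$ with $0\leqslant s\leqslant7$, and fix $E\subseteq W\setminus F$ with $\lvert E\rvert=s$. Given $y$, put $y':=y+\mu_2(E)$. Lemma~\ref{lem:binary-Q-representation} gives independent $u,v\in W$ with $Q(u,v)=y'$; this includes $y'=0$, because $e$ is even. The plane $D=\langle u,v\rangle$ has $q/8$ translates in $W$, and at most $\lvert E\cup F\rvert\leqslant9$ of them meet $E\cup F$. Since $q/8>9$, some translate $B$ avoids $E\cup F$. Next take a three-dimensional $U\subseteq W$ containing $D$. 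Its $q/16$ cosets are pairwise disjoint, at most $\lvert E\cup F\rvert+1\leqslant10$ of them meet $E\cup F\cup B$, and we need $\ell\leqslant q/32-1$ cosets. This is possible because $q/16-10\geqslant q/32-1$ for $q\geqslant256$. The union $T_y$ then has $r$ elements, lies in $W\setminus F$, and satisfies $\mu_1(T_y)=\sigma$ and $\mu_2(T_y)=y$.

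\emph{Odd $e\geqslant9$.} Here $Q(u,v)$ does not represent $0$ with $u,v$ independent, so we use two planes. Write $r=8+s+8\ell$ and fix $E$ as before. Split $y+\mu_2(E)=y_1+y_2$ with $y_1,y_2\neq0$; this is always possible, even when the left side is $0$. Represent each $y_i=Q(u_i,v_i)$ by Lemma~\ref{lem:binary-Q-representation}. Choose $B_1$ disjoint from $E\cup F$ (at most $9$ bad translates out of $q/8$), then $B_2$ disjoint from $E\cup F\cup B_1$ (at most $13$ bad translates). Then take cosets of a three-dimensional $U\subseteq W$: at most $9+8=17$ of its $q/16$ cosets are blocked, and we need $\ell\leqslant q/32-2$. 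This works since $q/16-17\geqslant q/32-2$ for $q\geqslant512$. The two planes are disjoint, so the cross term in $\mu_2$ vanishes because both have $\mu_1=0$. The $\mu_2$-values therefore add, and $\mu_2(T_y)=\mu_2(E)+y_1+y_2=y$.

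The step needing most care is the counting at the smallest fields, $q=256$ and $q=512$, after the forbidden set grows by one element. One must check that the enlarged blocked counts ($\leqslant10$ and $\leqslant17$) still leave enough disjoint translates and cosets. One must also check that the upper bound $r\leqslant q/4-1$ gives exactly $\ell\leqslant q/32-1$, respectively $q/32-2$. Both inequalities hold with room to spare once $q\geqslant256$, respectively $q\geqslant512$, so the argument of Proposition~\ref{prop:binary-uniform} goes through verbatim.
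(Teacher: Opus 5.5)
Your construction matches the paper's: a fixed set $E$, then one plane for even $e$ or two planes for odd $e$, then filler cosets of a three-dimensional subspace. The odd case is correct as written ($2Q-17\geqslant Q-2$ with $Q=q/32\geqslant16$). The even case, however, contains a false inequality, and it fails inside the lemma's scope. The claim $q/16-10\geqslant q/32-1$ is equivalent to $q\geqslant288$, so it fails at $q=256$ ($e=8$). There $U$ has $16$ cosets, your bound leaves only $6$ usable, and you say you may need $\ell=7$. Your closing remark that both inequalities hold ``with room to spare'' at $q=256$ is therefore wrong. The correct count turns out to be exactly tight there.

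The repair, which is what the paper does with $Q=q/32$, is to couple both bounds to $s$ rather than taking their separate worst cases. At most $s+\lvert F\rvert+1\leqslant s+3$ cosets of $U$ meet $E\cup F\cup B$. From $r=4+s+8\ell\leqslant 8Q-1$ you get $\ell\leqslant Q-(5+s)/8$. This gives $\ell\leqslant Q-1$ only when $s\leqslant3$, and $\ell\leqslant Q-2$ when $s\geqslant4$. For $s\leqslant3$ you need $2Q-s-3\geqslant Q-1$, that is $Q\geqslant s+2$, which is at most $5$. For $s\geqslant4$ you need $2Q-s-3\geqslant Q-2$, that is $Q\geqslant s+1$, which is at most $8$. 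Both hold for $Q\geqslant8$, with equality at $q=256$, $s=7$, $\lvert F\rvert=2$. With this adjustment your argument is complete.
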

\begin{proof}
For independent $u,v\in W$, an affine plane
$B=x+\langle u,v\rangle$ has
$\mu_1(B)=0$ and
$\mu_2(B)=u^2+uv+v^2$.  A three-dimensional affine
subspace of $W$ has $\mu_1=\mu_2=0$.
Lemma~\ref{lem:binary-Q-representation}, used here in the same range, shows
that $u^2+uv+v^2$ ranges over the whole field if $e$ is
even, and over its nonzero elements if $e$ is odd.

For even $e$, write $r=4+s+8\ell$ with $0\leqslant s\leqslant7$,
and take $Q=q/32\geqslant8$.  Fix an arbitrary $s$-element
set $E\subseteq W\setminus F$ \emph{before} fixing $y$.
One then chooses an affine plane $B\subseteq W$ whose
$h_2$ value is $y$ and which avoids $F\cup E$.
There are $q/8=4Q$ translates of its direction and at most
$s+\lvert F\rvert \leqslant9$ meet the forbidden points.  Fix a
three-dimensional subspace containing this direction.  Of
its $q/16=2Q$ cosets, at most $s+\lvert F\rvert +1\leqslant s+3$
meet $E\cup F\cup B$.  The required number of eight-point
cosets with vanishing $h_1,h_2$ is $\ell\leqslant Q-1$ when $s\leqslant3$ and
$\ell\leqslant Q-2$ otherwise; the stated inequalities leave
at least $\ell$ untouched cosets.  The resulting disjoint
union has cardinality $r$, constant sum $\mu_1(E)$,
and an arbitrary $h_2$ value.

For odd $e$, write $r=8+s+8\ell$, again with
$0\leqslant s\leqslant7$, and $Q=q/32\geqslant16$.
Choose two nonzero values $y_1,y_2$ with $y_1+y_2=y$.
Represent each $y_i$ by a pair of linearly independent vectors
in $W$, and place translates of the resulting planes
successively away from $E\cup F$ and from one another:
there are $q/8=4Q$ translates for each and at most
$s+\lvert F\rvert +4\leqslant13$ are excluded when placing the second.
Of the $q/16=2Q$ cosets of an arbitrary three-dimensional subspace,
at most $s+\lvert F\rvert +8\leqslant17$ meet the forbidden set or the
chosen affine planes.  Since $\ell\leqslant Q-2$ and
$2Q-17\geqslant Q-2$, enough eight-point cosets with vanishing $h_1,h_2$ remain.
The sum is constant and $h_2$ runs through
$\mathbb F_q$, as required.
\end{proof}

\begin{proposition}[The case $m=5$ in characteristic two]
\label{prop:fifth-even}
The conclusion of Theorem~\ref{thm:fifth} holds over every
field of characteristic two.
\end{proposition}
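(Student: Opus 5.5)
The proof splits the dimension range for $q=2^e$ into four pieces, following the pattern of Theorem~\ref{thm:four-even}. \emph{Case $k=2$:} $m=5$ is odd, so Proposition~\ref{thm:dimtwo-general}~(\ref{it:dimtwo-binary-odd-excess}) excludes deep holes with no size condition. \emph{High range $k\geqslant q/2-1$:} Kaipa's theorem applies, since the nucleus case $k=q-3$ lies outside $k\leqslant q-6$. \emph{Low range $3\leqslant k$ with $q>6(k-1)+1$:} I would use the quotient argument of Theorem~\ref{thm:low-dimension-gcd} with $d=6$. Here $\gcd(6,q-1)=1$ in characteristic two, so the gcd hypothesis fails. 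Instead I would argue as in Proposition~\ref{prop:even-low-four}. A deep hole forces every sextic $Q_A$ with $|A|=k-1$ to permute $\mathbb F_q$. The known classification of degree-six permutation polynomials in even characteristic gives a coefficient relation, for instance vanishing of a suitable odd-degree centered coefficient. Writing $A=T\cup\{u\}$, this relation becomes a low-degree polynomial identity in $u$. Its leading coefficient forces $h_1(T)=a_1$ for every $(k-2)$-set $T$, which is impossible for $k\geqslant3$ by swapping one point of $T$. Failing that route, I would use the o-polynomial parity argument of Proposition~\ref{prop:uniform-low-even} whenever the arc is large enough.

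\emph{Middle range.} Here I would use the fiber construction preceding Lemma~\ref{lem:fifth-binary-sat}. Take $S=S_T\cup Z$, where $Z$ is empty, a single point, or two points, chosen to match the parity of $k+1$. Choose the extra points so that the $\mu_2$-coefficient \eqref{eq:fifth-mu2-coef}, namely $r+a_1+\sum_{z\in Z}z$, is nonzero. If needed, I would instead scale by $\delta$ so that $\delta^4(r\delta+a_1)\ne0$. Any point $z$ of $Z$ lies in the fiber $P(z^2+z)$, so the forbidden index set is $F=\{0\}$ together with the index of that fiber. This set has at most two elements and contains $0$. Now fix $\sigma$ from Lemma~\ref{lem:fifth-binary-sat}. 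All other terms of \eqref{eq:fifth-H} depend only on $r$, $\sigma$ and $Z$, so they are constant. Solving the resulting linear equation for $\mu_2(T)=y$ and taking $T=T_y$ gives $\mathcal H_f(S)=0$. By Lemma~\ref{lem:fifth-binary-sat} this covers all $k$ with $r$ in $[4,q/4-1]$ when $e\geqslant8$ is even, and $r$ in $[8,q/4-1]$ when $e\geqslant9$ is odd. I still need to check that the $k$ left uncovered by the low and Kaipa ranges, roughly $q/6\lesssim k\lesssim q/2$, fall into these $r$-intervals. This holds because $r\approx k/2$ lies between $4$ and about $q/4$.

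\emph{Remaining fields.} What is left is $q\in\{8,16,32,64,128\}$, plus possibly $q=256$ at the odd-$e$ thresholds and the edge values of $r$. These would be handled by finite certificates, as in the $m=4$ case: saturation witnesses $T_y$ with $h_1$ fixed, or exhaustive checks of normalized coefficient vectors after reducing $a_1\in\{0,1\}$ by scaling. I expect the main obstacle to be the low range. The needed sextic permutation constraint in characteristic two must genuinely force an identity in $u$ for every $T$. Degree-six permutation polynomials in characteristic two are non-exceptional only over small fields, and small $q$ must then be absorbed into the certificates.
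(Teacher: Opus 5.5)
Your plan is essentially the paper's proof: $k=2$ via Proposition~\ref{thm:dimtwo-general}~(\ref{it:dimtwo-binary-odd-excess}); the unique-hyperoval/$o$-polynomial parity argument for $k-1<\sqrt q+1$ and $q>6(k-1)+1$; Lemma~\ref{lem:fifth-binary-sat} with the coefficient \eqref{eq:fifth-mu2-coef} for $q\geqslant256$; Kaipa for $k\geqslant q/2-1$; and certificates for $q\in\{16,32,64,128\}$. Your fallback is the only low-range argument the paper uses, and no sextic classification is needed: Lemma~\ref{lem:fifth-binary-sat} already starts at $r=4$ or $r=8$, so the low range only has to cover $k\lesssim\sqrt q$, not $k\lesssim q/6$. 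The one case to state precisely is $k=2r-1$ with $r$ even and $a_1=0$, where scaling does not help. There the paper uses $r-1$ pairs and $Z=\{0,\gamma\}$ with $\gamma\notin\{0,1\}$, so $F=\{0,\gamma^2+\gamma\}$ still has at most two elements; two extra points outside $\{0,1\}$ in different fibers would give three forbidden indices.
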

\begin{proof}
The dimension-two case is excluded for odd $m$ by
Proposition~\ref{thm:dimtwo-general}~(\ref{it:dimtwo-binary-odd-excess}).
For $k\geqslant3$ use the planar projection from
Section~\ref{sec:uniform-proof}.  The characteristic-two
part alone permits the stronger low-dimensional range
\begin{equation}\label{eq:fifth-low-even}
 k-1<\sqrt q+1,\qquad q>6(k-1)+1.
\end{equation}
Indeed, the corresponding planar arc has more than
$q-\sqrt q+1$ points.  Lemma~\ref{lem:binary-large-arc-unique}
places it in a unique hyperoval; the sextic quotient
permutation condition permits each omitted affine graph
point to be adjoined to an arc, so uniqueness places all
such points in this same hyperoval.
An output shear and affine normalization then yield an
$o$-polynomial of degree seven, contrary to the even-degree
condition on $o$-polynomials
\cite[Lemma~2.1]{CaullerySchmidt2015}.

For $q\geqslant256$, let $F=\{0,\gamma^2+\gamma\}\subseteq W$,
where $\gamma\notin\{0,1\}$.  The preceding lemma,
together with \eqref{eq:fifth-mu2-coef}, covers the middle
dimensions.  More precisely, when $k=2r$ use $r$ pairs
and one of $\{0,1\}$, chosen so that
$r+a_1+z\ne0$.  When $k=2r-1$ and $r$ is odd, choose a nonzero scaling
$\delta\ne a_1$ and use $r$ scaled pairs; the coefficient
$\delta^4(r\delta+a_1)$ is then nonzero.  If $r$ is even
and $a_1\ne0$, use $r$ unscaled pairs.  In the remaining
case ($r$ even and $a_1=0$), use
$r-1$ pairs and the two singleton points
$\{0,\gamma\}$.  Their coefficient
$(r-1)+\gamma$ is nonzero; the forbidden indices ensure
all points are distinct.  Holding $\mu_1$ fixed and varying
$\mu_2$ therefore solves \eqref{eq:fifth-H}.
The range in this lemma begins at $r=4$ for even $e\geqslant8$
and $r=8$ for odd $e\geqslant9$, while
\eqref{eq:fifth-low-even} covers the lower dimensions.
Kaipa's theorem covers $k\geqslant q/2-1$.  These intervals
overlap and exhaust the admissible $k$.

It remains to treat $q=8,16,32,64,128$.  The field $q=8$
is covered at dimension two.  After the low-dimensional
bound and Kaipa's theorem, the remaining dimensions over
$q=16,32$ are covered by certificates for all normalized coefficient vectors:
for $q=16$, $4\leqslant k\leqslant6$ (three files, $589$ stored sets),
and for $q=32$, $7\leqslant k\leqslant14$ (eight files, $3633$ stored sets).
Here $a_1\in\{0,1\}$ is a valid scaling normalization;
each remaining coefficient class is checked exhaustively.
For $q=64$, \eqref{eq:fifth-low-even} covers $k\leqslant9$, and for
$10\leqslant k\leqslant30$ the certificate verifies
$h_1(T_y)=0$, $h_2(T_y)=y$ at all $5\leqslant r\leqslant15$.
For $q=128$, the same bound covers $k\leqslant13$, and the analogous
certificate covers $14\leqslant k\leqslant62$ at $7\leqslant r\leqslant31$.
In both cases $T_y\subseteq W\setminus\{0,\tau\}$ for a fixed
nonzero $\tau$, ensuring that all optional singleton points
remain distinct.  The verification program checks the field presentations,
subset cardinalities, required complete homogeneous symmetric function
values, and the remainder identity for each resulting evaluation set.  High dimensions follow again from
Kaipa.  This accounts for every field of characteristic two and completes
the proof.
\end{proof}

\begin{proof}[Proof of Theorem~\ref{thm:fifth}]
Propositions~\ref{prop:fifth-odd} and~\ref{prop:fifth-even}
cover the odd- and even-characteristic cases.
\end{proof}

\section{Exact classifications for \texorpdfstring{$m=6,7$}{m=6,7}}
\label{sec:six-seven-exact}

The cases $m=6$ and $m=7$ admit exact classifications over
every finite field.  The finite computational parts of these proofs are established
by reproducible computer-assisted certificates.
For $m=7$, adjoining a point to paired subsets and taking unions
of three-point multiplicative orbits treat most remaining parameters.

\subsection{The case \texorpdfstring{$m=6$}{m=6}}\label{subsec:sixth}

We first record the finite reduction needed to turn a collection of
small-field certificates into an all-field statement.

\begin{lemma}[Small-dimension exclusion in characteristic two]\label{lem:sixth-binary-short}
Let $q=2^e$, $m\geqslant3$, $3\leqslant k\leqslant q-m-1$, and set
$h=k-1$, $d=m+1$.  If
\[
 h^2<q,\qquad q>dh+1,
\]
then no received polynomial of degree $k+m$ is a deep hole of
$\operatorname{RS}_q(k)$.
\end{lemma}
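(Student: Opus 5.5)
The plan is to rerun the argument of Proposition~\ref{prop:uniform-low-even}, with the two numerical inputs taken from the present hypotheses instead of the uniform bound. Suppose, for contradiction, that $f$ (normalized monic) is a deep hole. Fix $T\subseteq\mathbb F_q$ with $\lvert T\rvert=k-2$ and apply Lemma~\ref{lem:uniform-projection}. This gives $R_T$ of degree $m+2$, and the arc $\mathcal A_T$ of size $q-k+3=q+2-h$. The hypothesis $q>dh+1=(m+1)(k-1)+1$ is exactly \eqref{eq:uniform-valueset-condition}. Hence every quotient $Q_{T,u}$, $u\notin T$, permutes $\mathbb F_q$.

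Next, $\mathcal A_T$ has to be placed in a hyperoval. Since $h^2<q$ gives $h<\sqrt q$, we get $q+2-h>q-\sqrt q+1$ (as $h\leqslant\sqrt q$ suffices up to the $+1$). We also need $q>4$, which follows because $k\geqslant3$ and $q\geqslant k+m+1\geqslant7$. Lemma~\ref{lem:binary-large-arc-unique} therefore puts $\mathcal A_T$ in a unique hyperoval $\mathcal H$.

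For each $b\in T$, the point $(b,R_T(b))$ can be adjoined to the arc. It cannot be collinear with two affine graph points, because that would contradict injectivity of $Q_{T,u}$ on the whole field. It cannot be collinear with $V$ and another point, since the first coordinates differ. The augmented arc is still large, so it lies in a hyperoval, and uniqueness forces that hyperoval to be $\mathcal H$. Thus the full graph of $R_T$ together with $V$ lies in $\mathcal H$, and the last point of $\mathcal H$ lies at infinity. After a shear, a translation and a scaling, $R_T$ becomes an o-polynomial.

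The parity property \cite[Lemma~2.1]{CaullerySchmidt2015} then kills all odd-degree coefficients above degree one. Two cases follow.
\begin{itemize}
\item If $m$ is odd, the leading degree $m+2\geqslant5$ is odd, which is a contradiction.
\item If $m$ is even, the coefficient of $X^{m+1}$ in $R_T$ equals $a_1+\sum_{x\in T}x$ and must vanish for every $(k-2)$-set $T$. Since $k-2\geqslant1$ and $q>k$, swapping one element of $T$ changes this sum, which is a contradiction.
\end{itemize}

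The main point requiring care is the large-arc inequality $q+2-h>q-\sqrt q+1$, i.e.\ $h<\sqrt q+1$, which follows from $h^2<q$. The augmented arcs only get larger, so they also satisfy it. Everything else is the earlier argument verbatim.
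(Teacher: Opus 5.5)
Your proof is correct and follows the paper's argument essentially step for step: projection to $R_T$, Wan's value-set bound via $q>dh+1$ to make every secant quotient a permutation, the unique hyperoval extension from $h<\sqrt q$, adjoining the points over $T$ one at a time, and the $o$-polynomial parity contradiction (odd leading degree $m+2$ for odd $m$, the coefficient $a_1+\sum_{t\in T}t$ of $X^{m+1}$ for even $m$). Your explicit check that $q\geqslant 8>4$, so that Lemma~\ref{lem:binary-large-arc-unique} applies, is a detail the paper leaves implicit.
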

\begin{proof}
Suppose that $f$ is a deep hole and choose any $(k-2)$-set
$T\subseteq\mathbb F_q$.  Divide
$f=P_T R_T+L_T$, where $P_T=\prod_{t\in T}(X-t)$ and
$\deg L_T<k-2$; thus $\deg R_T=m+2$.  The graph of $R_T$ outside
$T$, together with the vertical point at infinity, forms an arc
of size $q+2-h$.  For each $u\notin T$, the divided difference
\[
 Q_u(X)=\frac{R_T(X)-R_T(u)}{X-u}
\]
assumes distinct values at the $q-h$ points of
$\mathbb F_q\setminus(T\cup\{u\})$.  Wan's value-set bound, as used
in Section~\ref{sec:uniform-proof}, and $q>dh+1$ force the
degree-$d$ polynomial $Q_u$ to permute $\mathbb F_q$.

As $h<\sqrt q$, the arc has more than $q-\sqrt q+1$ points.
Lemma~\ref{lem:binary-large-arc-unique} puts it in a unique
hyperoval.  Since every $Q_u$ is a permutation, adjoining
$(t,R_T(t))$ for any $t\in T$ preserves the arc property;
the same lemma supplies an extension, and uniqueness places
the entire graph of $R_T$ in this hyperoval.  The even-characteristic $o$-polynomial
parity condition forces every odd-degree coefficient above degree
one in the reduced $R_T$ to vanish.  If $m$ is odd, its leading
term of degree $m+2$ contradicts this.  If $m$ is even, the
coefficient of $X^{m+1}$ in $R_T$ must vanish; for monic
$f=X^{k+m}+a_1X^{k+m-1}+\cdots$ this coefficient is
$a_1+\sum_{t\in T}t$.  Varying one member of $T$ gives a
contradiction because $k-2\geqslant1$.
\end{proof}

\begin{lemma}[Finite reduction for $m=6$]\label{lem:sixth-finite-census}
The general results of Sections~\ref{sec:uniform-proof} and
\ref{sec:permutation-bridge}, together with
Lemma~\ref{lem:sixth-binary-short}, reduce the case $m=6$
to precisely $265$ parameter pairs $(q,k)$: $149$ in odd
characteristic and $116$ in characteristic two.  The list of
these pairs and the program that verifies it by exact integer
arithmetic are included in the ancillary files.
\end{lemma}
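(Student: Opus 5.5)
The plan is a finite census: for every prime power $q$ and every admissible $k$ with $2\leqslant k\leqslant q-7$, apply the general exclusions in a fixed order, and collect the pairs $(q,k)$ that none of them covers. First I treat dimension two separately. Proposition~\ref{thm:dimtwo-general} settles $k=2$ for odd $q$ unconditionally. For even $q$ and $m=6$, $k=2$ is the genuine Frobenius case $X^8$, which is kept as part of the classification and not as a census pair. Next, for $k\geqslant\lfloor(q-1)/2\rfloor$, I invoke Kaipa's range \eqref{eq:Kaipa-range}. For odd $q$, Zhuang--Cheng--Li \eqref{eq:CLZ-range} removes $k+1\leqslant p$, and this disposes of all prime fields.

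For odd $q\geqslant512$, Proposition~\ref{prop:six-seven-large-odd} already excludes every $k$, so only finitely many odd $q$ remain. On these I apply three tests. The first is the low-dimensional quotient argument with $d=7$ in the range $q>7(k-1)+1$; this uses Theorem~\ref{thm:prime-secant-exclusion} when $\mathsf F_7(q)$ holds, or the degree-seven classification directly for $q>49$. The second is the direct Gao--Li criterion \eqref{eq:GL-direct-criterion}, evaluated through the exact integer majorant of Lemma~\ref{lem:odd-short-majorant} with $m=6$. The third is Corollary~\ref{cor:uniform-adaptive} wherever its hypotheses apply. Every pair that survives all three tests is recorded. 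I then check that exactly $149$ pairs remain.

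For $q=2^e$, the general tests are Kaipa's theorem and Lemma~\ref{lem:sixth-binary-short}, which covers $(k-1)^2<q$ and $q>7(k-1)+1$. They are complemented by Proposition~\ref{prop:uniform-direct} and Corollary~\ref{cor:uniform-adaptive}, applied when $\sqrt q\geqslant 84\log 14$; these fill the middle range $n_{\mathrm{ad}}\leqslant k+1\leqslant q/2$. Below that bound, I instead test the criterion \eqref{eq:GL-direct-criterion} pointwise with exact integers, using $M_0=\lceil5\sqrt q\rceil$ and $p=2$. The part I have to argue carefully is the claim that the binary census is finite. Beyond the threshold, Theorem~\ref{thm:main}~(\ref{it:uniform-exclusion}) yields no pairs at all for $k\geqslant3$. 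Below it only finitely many $q$ occur, and for each such $q$ there are finitely many $k$. Enumerating them should give the $116$ pairs.

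The main obstacle is not conceptual. It is making the enumeration provably exact at the boundaries of the ranges: the floor in Kaipa's range, the strict inequalities in the value-set conditions, and the point where each exclusion overlaps the next. I would handle this by writing every criterion as an exact integer or rational comparison, with no floating point. The Gao--Li comparison would be evaluated through the integer recurrence of Lemma~\ref{lem:odd-short-majorant}. The ancillary program then outputs the complement list, which is checked to have $265=149+116$ entries. The hard part is confirming that the analytic thresholds used in the program match the lemma hypotheses exactly; I would verify this by cross-checking each test against its stated hypothesis at the extremal $q$.
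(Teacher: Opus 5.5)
Your overall route matches the paper's: apply each general exclusion in turn, test the middle range with the exact-integer Gao--Li majorant using $\lceil5\sqrt q\rceil$, and record the complement. The genuine problem is your handling of $k=2$ in characteristic two. As written, your enumeration cannot produce the counts you are asked to prove.

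The census consists of the pairs not settled by the general results available at that stage. For $q=2^e$ and $k=2$ there are only two such results:
\begin{itemize}
\item Proposition~\ref{thm:dimtwo-general}~(\ref{it:dimtwo-quartic-classification}), which needs $q>16\cdot8^4=2^{16}$;
\item Theorem~\ref{thm:low-dimension-gcd}, which applies when $\gcd(7,2^e-1)>1$ (that is, $3\mid e$) and $\mathsf F_7(2^e)$ holds. This gives $e=9,12,15$, but not $e=6$, where $\mathsf F_7(64)$ fails.
\end{itemize}
The dimension-two classification over smaller fields is proved only later, in Theorem~\ref{thm:sixth}, via the degree-seven permutation classification and the exhaustive check over $\mathbb F_{16}$. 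It cannot be invoked to shrink the census.

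Nor are these pairs all ``the genuine Frobenius case'': when $3\mid e$, $X^7$ does not permute $\mathbb F_{2^e}$ and there is no deep hole at all. The census therefore retains exactly ten dimension-two pairs, those with $e\in\{4,5,6,7,8,10,11,13,14,16\}$. Consistently, the paper later splits the $116$ characteristic-two pairs as $103+3$ pairs with $k\geqslant3$ plus these $10$. If you discard every $k=2$ pair, your procedure yields $106$ characteristic-two pairs and $255$ in total, not $116$ and $265$.

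To repair this, keep the pairs $(2^e,2)$ with $4\leqslant e\leqslant16$ as candidates. Also add Theorem~\ref{thm:low-dimension-gcd} to your characteristic-two tests, which currently list only Kaipa's theorem, Lemma~\ref{lem:sixth-binary-short}, and the Gao--Li criterion. Without it you would retain thirteen dimension-two pairs instead of ten.

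The rest of your plan is consistent with the paper's census: Zhuang--Cheng--Li removing the prime fields, Proposition~\ref{prop:six-seven-large-odd} for odd $q\geqslant512$, and Theorem~\ref{thm:main}~(\ref{it:uniform-exclusion}) for even $q\geqslant2^{16}$ and $k\geqslant3$. Lemma~\ref{lem:odd-short-majorant} is stated only for $m\leqslant5$, but its proof applies verbatim to $m=6$, as the paper uses it.
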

\begin{proof}
Apply the existing uniform large-field exclusion, the
low-dimensional permutation obstruction, and the high-dimensional
Kaipa bound.  In characteristic two apply also
Lemma~\ref{lem:sixth-binary-short} and the previously established
large-field dimension-two classification.  In the remaining
middle range the Gao--Li coefficient estimate
\cite{GaoLi2023} is checked by a coefficientwise integer majorant:
with $a=\lceil5\sqrt q\rceil$, $b=\lceil(q-a)/p\rceil$ and
$n=k+1$, one sufficient strict inequality is
\[
 q^7\overline A_n^2<\binom qn^2,
 \qquad
 \overline A_n=[z^n](1-z)^{-a}(1-z^p)^{-b}.
\]
The coefficient and entropy estimates used in the proof of
Proposition~\ref{prop:uniform-direct} handle the large-field range,
and an exact integer recurrence handles the remaining finite range.
The ancillary program \nolinkurl{m67/verify_census.py} enumerates
the remaining pairs using these criteria and exact integer arithmetic.
The program \nolinkurl{m67/verify_coverage.py} checks that the
parameter sets treated by the certificates have exactly the union
and cardinalities stated above.  Each test gives a sufficient condition for excluding deep holes;
a parameter pair remaining on the list need not admit a deep hole.
\end{proof}

\begin{theorem}[Classification for $m=6$]\label{thm:sixth}
Let $q$ be any prime power, $2\leqslant k\leqslant q-7$, and
$\deg f_u=k+6$.  The word $u$ is a deep hole of
$\operatorname{RS}_q(k)$ if and only if
\[
 q=2^e,\quad k=2,\quad 3\nmid e,\qquad
 f_u(X)=AX^8+BX+C,\quad A\in\mathbb F_q^\times.
\]
For every other such word,
$\operatorname{dist}(u,\operatorname{RS}_q(k))\leqslant q-k-1$.
\end{theorem}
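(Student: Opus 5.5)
The proof splits into the dimension-two case, the infinite families handled by the general results, and the finite residue of Lemma~\ref{lem:sixth-finite-census}.

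\emph{Dimension two.} For $k=2$ and odd $q$, Proposition~\ref{thm:dimtwo-general}~(\ref{it:dimtwo-odd-field}) excludes deep holes. For $q=2^e$ and $k=2$, a deep hole of degree $8$ gives, after an output shear, scaling and input translation, an $o$-polynomial of degree $8$. I would classify these directly rather than rely on the quartic bound. The parity property \cite[Lemma~2.1]{CaullerySchmidt2015} kills the odd coefficients $X^7,X^5,X^3$, leaving $X^8+\alpha X^6+\beta X^4+\gamma X^2$ after translation and normalization. The divided-difference quotients are then degree-seven permutation polynomials for every base point $u$. Invoking the degree-seven classification \cite{FanDegreeSeven2019} (or the hyperoval secant condition) should force $\alpha=\beta=\gamma=0$ after the translation absorbing the $X^6$ term via $(X+b)^8$, which is again a Frobenius power. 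The surviving polynomial $X^8$ is the $r=3$ member of \eqref{eq:frobfamily}, and $X^{2^r}$ is an $o$-polynomial exactly when $\gcd(r,e)=1$, i.e.\ $3\nmid e$. For small $q$ (where $q\le 16\cdot 8^4$ and the Caullery--Schmidt bound does not apply), the finitely many remaining even fields are handled by an exhaustive check of normalized coefficient vectors $(\alpha,\beta,\gamma)$. The converse direction is the classical fact that translation-free Frobenius $o$-polynomials define hyperovals.

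\emph{Dimensions $k\ge3$ and the infinite range.}
\begin{itemize}
\item Odd $q\ge512$ is covered by Proposition~\ref{prop:six-seven-large-odd}.
\item In characteristic two, Theorem~\ref{thm:main}~(\ref{it:uniform-exclusion}) applies once $\sqrt q\ge 84\log14$.
\item For smaller even $q$, the low dimensions $h^2<q$, $q>7h+1$ are excluded by Lemma~\ref{lem:sixth-binary-short}, and Kaipa covers $k\ge q/2-1$.
\item The middle range is covered by the Gao--Li integer majorant, exactly as in Lemma~\ref{lem:sixth-finite-census}.
\end{itemize}
This leaves the $265$ pairs $(q,k)$ listed there.

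\emph{The finite residue.} For these pairs I would proceed as the introduction indicates. Find subset families on which the affine combinations of $h_1,h_2,h_3$ appearing in $\mathcal H_{6,f}$ take every field value, with $h_4,h_5,h_6$ controlled by a structured core. In odd characteristic the core is made of opposite pairs with a fixed sum and arbitrary $h_2$, as in Theorem~\ref{thm:four-odd}. In characteristic two it is Artin--Schreier blocks combined with affine $\mathbb F_2$-planes and three-dimensional subspaces, as in Proposition~\ref{prop:binary-uniform}. The pairs not covered this way are settled by exhaustive checks over normalized coefficient vectors, after rescaling $X\mapsto\lambda X$ and translating to reduce $a_1$. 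Each certificate is verified by the deterministic programs.

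The main obstacle is the even $k=2$ classification below the quartic field bound: ruling out non-monomial degree-eight $o$-polynomials needs either a clean coefficient argument via the degree-seven permutation constraints or a separate finite computation. I would try the coefficient argument first.
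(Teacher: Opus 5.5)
Your skeleton matches the paper's: dimension two via divided differences and the degree-seven classification, the reduction to the $265$ residual pairs, and certificates for the rest. However, both places where the real work lies are left open or misdirected.

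\emph{Dimension two in characteristic two.} You only assert that the degree-seven classification ``should force'' $\alpha=\beta=\gamma=0$. The mechanism you mention cannot work: $(X+b)^8=X^8+b^8$, and translating $g=X^8+\alpha X^6+\beta X^4+\gamma X^2$ leaves the $X^6$ coefficient unchanged, so no translation absorbs it. The missing step is to let the base point carry the parameters. The polynomial $D_u(X)=(g(X+u)+g(u))/X=X^7+\alpha X^5+(\alpha u^2+\beta)X^3+(\alpha u^4+\gamma)X$ must permute $\mathbb F_q$ for every $u$. For $e\geqslant5$, the characteristic-two classification leaves only $X^7$ and $X^7+X^5+X$ up to equivalence, and nothing when $3\mid e$. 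That classification is Li--Chandler--Xiang's; the paper uses Fan's degree-seven classification only for odd $q$. Since $D_u$ is monic with no $X^6$ term, no translation occurs in the equivalence. Hence the $X^3$ coefficient must vanish for all $u$, which gives $\alpha=\beta=0$. The vanishing $X^5$ coefficient then excludes the second normal form and gives $\gamma=0$. Over $\mathbb F_{16}$ there are extra degree-seven classes, so $e=4$ needs the direct check of all $16^3$ triples. This argument covers every $q$ at once. Neither Caullery--Schmidt nor your fallback search over all $q\leqslant16\cdot8^4$ is needed, and that search would be impractical near $q=2^{16}$.

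\emph{The finite residue.} The cores you propose are the $m=4$ ones, and they do not suffice for $m=6$. With opposite pairs $\{\pm\sqrt y:y\in Y\}$ one has $h_{2j}(S)=h_j(Y)$ and $h_{2j+1}(S)=0$, so $\mathcal H_{6,f}(S)=h_3(Y)+a_2h_2(Y)+a_4h_1(Y)+a_6$. What must vary is therefore a combination of $h_1,h_2,h_3$ of the parameter set. Moreover, $h_3(Y)=h_6(S)$ is exactly the term no core can hold fixed. A family with fixed sum and arbitrary $h_2$ gives no control over $h_3(Y)$; already $a_2=0$ defeats it. The same happens with Artin--Schreier fibers indexed by $T$: $h_3(T)$ enters with coefficient $1$, and the rest depends only on $h_1(T),h_2(T)$. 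The plane-and-subspace construction of Proposition~\ref{prop:binary-uniform} fixes $m_1$ and varies $m_2$, but leaves $h_3$ uncontrolled, since an affine plane contributes $uv(u+v)$. The property to certify is that $h_3+Ah_2+Bh_1$ attains every value of $\mathbb F_q$ on the family for every $(A,B)$. This is what the paper's $77$ odd and $52$ even families verify. Exhaustive coefficient checks are used only for $(27,3)$, $(27,4)$, and $q=16$ with $k=4,5,6$.
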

\begin{proof}
We may make $f_u$ monic by a nonzero scalar and discard its
terms below degree $k$.  For a $(k+1)$-set $S$, the relevant
remainder coefficient is
\begin{equation}\label{eq:sixth-remainder}
 H_{6,f}(S)=h_6(S)+a_1h_5(S)+\cdots+a_5h_1(S)+a_6.
\end{equation}

Suppose first that $q$ is odd.  For an $r$-set
$Y\subseteq (\mathbb F_q^\times)^2$ take all pairs
$\{\pm\sqrt y\}$, $y\in Y$, adjoining $0$ if required to
obtain $k+1$ points.  Their generating function contains only
even powers, and \eqref{eq:sixth-remainder} reduces to
\begin{equation*}
 h_3(Y)+a_2h_2(Y)+a_4h_1(Y)+a_6=0.
\end{equation*}
A certificate showing that $h_3+Ah_2+Bh_1$ takes every field
value for each $A,B\in\mathbb F_q$ therefore excludes all
received words of interpolation degree $k+6$
at the corresponding $(q,k)$.  The accompanying $77$
square-subset families establish the claim for $147$ of the $149$ odd residual
pairs in Lemma~\ref{lem:sixth-finite-census}.
For $(q,k)=(27,3),(27,4)$, where this sufficient
construction fails, separate exhaustive checks treat respectively $571592$ and $40151$
normalized vectors of coefficients of degree at least $k$,
including every value of $a_6$.
All transformations used in these normalizations are affine
changes of the evaluation variable and preserve the code.

In characteristic two, for $k\geqslant3$ take disjoint fibers
$\{w,w+1\}$ of $w^2+w=t$, where
$t\in V\setminus\{0\}$ and $V=\{w^2+w:w\in\mathbb F_q\}$,
and add $0$ when an odd number of points is required.  For an
$r$-set of fiber parameters, the generating function is
\[
 (1+z)^{-r}\prod_{t\in T}
 \left(1+\frac{tz^2}{1+z}\right)^{-1}.
\]
In \eqref{eq:sixth-remainder}, the coefficient of $h_3(T)$
is $1$ and the other terms depend only on $h_2(T),h_1(T),1$.
Hence it again suffices that the third complete homogeneous symmetric
function, plus any linear combination of the first two, attain every
value in the field.
The $52$ certificate families in characteristic two cover $103$ residual pairs;
exhaustive coefficient checks over $\mathbb F_{16}$ cover the other
three pairs with $k\geqslant3$, namely $k=4,5,6$.

It remains to determine the dimension-two exception in characteristic two.
Let $q=2^e\geqslant16$ and suppose the graph of the degree-eight
polynomial $f$ has no collinear triple.  Adjoin the vertical
point at infinity, and subtract the common tangent slope at
the oval nucleus; call the resulting polynomial $g$.
For every $u\in\mathbb F_q$, the divided difference
$Q_u(X)=(g(X)-g(u))/(X-u)$ is a permutation of $\mathbb F_q$.
Its values at $X\ne u$ are exactly $\mathbb F_q^\times$.
Since $\deg Q_u=7<q-1$, summing $Q_u$ over the field gives
$Q_u(u)=g'(u)=0$.  Thus $g'=0$; after subtracting $g(0)$
and dividing by the leading coefficient, we may write
\[
 g=X^8+aX^6+bX^4+cX^2.
\]
The polynomial
\[
 D_u(X)=\frac{g(X+u)+g(u)}{X}
 =X^7+aX^5+(au^2+b)X^3+(au^4+c)X
\]
permutes $\mathbb F_q$ for every $u$.  For $e\geqslant5$, the
classification of degree-seven permutation polynomials in characteristic two
\cite[Theorem~4.4]{LiChandlerXiang2010} says that, apart from
the listed $e=4$ classes, each is equivalent to $X^7$ or
$X^7+X^5+X$, while none exist when $3\mid e$.
The cited equivalence also permits coefficientwise Frobenius
conjugation.  Since both normal forms have coefficients in
$\mathbb F_2$, this operation can be absorbed into the affine
changes of variables.  As $D_u$ is monic with no $X^6$ term,
the translation of the variable in these normal forms must vanish.  Its
$X^3$ coefficient consequently vanishes, so $au^2+b=0$ for
all $u$ and hence $a=b=0$.  The remaining $X^5$ coefficient
rules out the second normal form, yielding $c=0$.  When $e=4$,
all $16^3$ triples $(a,b,c)$ have been checked under
the requirement that every $D_u$ permute; again only
$(0,0,0)$ survives.  The program \nolinkurl{core/verify_small_cases.py}, using
$\mathbb F_{16}=\mathbb F_2[t]/(t^4+t+1)$, reproduces this
complete check.  Hence necessarily
$g=X^8$ up to an affine output term.

Conversely $AX^8+BX+C$ has no collinear triple exactly
when $X^7$ is a permutation polynomial of $\mathbb F_q$,
which is equivalent to $\gcd(7,2^e-1)=1$, or $3\nmid e$.
This proves the exceptional family and the theorem.

This proves the case $m=6$ in
Theorem~\ref{thm:main}.

For the finite part of the proof, the ancillary files list the
$265$ remaining parameter pairs and partition them as
$147+2+103+3+10$.  They also contain the subsets used in the certificates,
field presentations, normalization rules, programs that check every
required value of the affine forms, and the full source code
for enumerating these pairs.
These exhaustive checks give a reproducible computer-assisted proof;
the conclusion does not rely on random sampling.
\end{proof}

\subsection{The case \texorpdfstring{$m=7$}{m=7}: classification over all finite fields}
\label{subsec:seventh-finite}

\begin{theorem}[Classification for $m=7$]\label{thm:seventh}
Let $q$ be any prime power, $2\leqslant k\leqslant q-8$, and
$\deg f_u=k+7$.  Then
\[
 \operatorname{dist}(u,\operatorname{RS}_q(k))\leqslant q-k-1.
\]
\end{theorem}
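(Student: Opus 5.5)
The plan follows the template of Theorem~\ref{thm:sixth}: first reduce to a finite list of parameter pairs $(q,k)$, then handle that list with symmetric-function constructions and finite certificates. We normalize $f_u$ to be monic and discard its terms of degree below $k$. By Proposition~\ref{prop:agreement} it then suffices to find a $(k+1)$-set $S$ with
\[
 h_7(S)+a_1h_6(S)+\cdots+a_6h_1(S)+a_7=0 .
\]

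\textbf{Reduction to finitely many pairs.} The cases are covered in the following order.
\begin{enumerate}[label=(\roman*),leftmargin=2.2em]
\item Dimension two in characteristic two is excluded for odd $m$ by Proposition~\ref{thm:dimtwo-general}~(\ref{it:dimtwo-binary-odd-excess}). Dimension two in odd characteristic is excluded by part~(\ref{it:dimtwo-odd-field}) of the same proposition.
\item For odd $q\geqslant1331$, Proposition~\ref{prop:six-seven-large-odd} settles every $k$.
\item For odd $q<1331$, Theorem~\ref{thm:low-dimension-gcd} with $d=8$ applies, since $\gcd(8,q-1)>1$ automatically. It needs Fan's condition $\mathsf F_8(q)$; where that fails, Fan's degree-eight classification (no degree-eight permutations for odd $q>31$) replaces it. This excludes $q>8(k-1)+1$.
\item In characteristic two, Lemma~\ref{lem:sixth-binary-short} with $d=8$ excludes $(k-1)^2<q$ together with $q>8(k-1)+1$.
\item Kaipa's theorem covers $k\geqslant\lfloor(q-1)/2\rfloor$, and the external range \eqref{eq:CLZ-range} covers $k+1\leqslant p$.
\item In the remaining middle range, the Gao--Li criterion \eqref{eq:GL-direct-criterion} is applied with the integer majorant of Lemma~\ref{lem:odd-short-majorant}, now with $a=\lceil6\sqrt q\rceil$ and exponent $q^8$. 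It is evaluated by the exact recurrence. Corollary~\ref{cor:uniform-middle} handles all $q$ with $\sqrt q\geqslant96\log16$.
\end{enumerate}
Together these leave an explicit finite census of pairs $(q,k)$, to be enumerated by a program analogous to the one used for $m=6$.

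\textbf{Odd residual pairs.} I would use opposite pairs $\{\pm\sqrt y\}$ with $y\in Y\subseteq(\mathbb F_q^\times)^2$, adjoining one singleton point $z$ (and also $0$) as the parity of $k+1$ requires. The paired points contribute only even powers to the generating function. The remainder coefficient therefore becomes
\[
 h_3(Y)+\alpha h_2(Y)+\beta h_1(Y)+\gamma ,
\]
where $\alpha,\beta,\gamma$ depend on $z$ and the coefficients $a_i$. The coefficient of $h_3(Y)$ is $z+a_1$, in the same way as for $m=5$ in Proposition~\ref{prop:fifth-odd-core}. Choosing $z\in\{0,\pm1\}$ with $z+a_1\neq0$ and dividing by it reduces the problem to certificates showing that, for every $A,B$, the form $h_3+Ah_2+Bh_1$ is surjective on the chosen square-subset families.

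A different device is needed where the parity forces an even number of points with no singleton. There I would use unions of three-point multiplicative orbits $\{x,\omega x,\omega^2x\}$ when $3\mid q-1$, since these kill $h_j$ for $3\nmid j$. Pairs where both constructions fail (likely a few with small $q$ such as $27$, $25$ or $49$) would be settled by exhaustive checks over normalized coefficient vectors, using affine reductions of the kind in Proposition~\ref{prop:q9}.

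\textbf{Characteristic two, and the main obstacle.} Here I would use Artin--Schreier fibers with parameter set $T\subseteq W\setminus F$, adjoining one or two singleton points $Z$. In the generating function $(1+z)^{-r}\prod_{t\in T}(1+tz^2/(1+z))^{-1}$, the degree-seven coefficient involves $h_3(T)$ with coefficient $r+a_1+\sum_{x\in Z}x$. The singletons are chosen as in Proposition~\ref{prop:fifth-even} so that this coefficient is nonzero. Surjectivity of $h_3+Ah_2+Bh_1$ on subsets of $W$ is then certified for the residual fields $16,\dots,512$, and $\mathbb F_{16}$, $\mathbb F_{32}$ are treated by exhaustive checks.

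The main obstacle is the odd residual list. It is large, and the opposite-pair construction needs a nonzero coefficient on $h_3$ while the cardinality parity constrains the singletons. I expect small fields such as $q=25,27,49$ to resist the constructions and to require direct exhaustive verification.
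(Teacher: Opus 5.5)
Your architecture matches the paper's proof. You obtain a finite census from Proposition~\ref{prop:six-seven-large-odd}, the uniform bounds, the quotient and hyperoval arguments, Kaipa's range and the integer Gao--Li majorant. Paired-point constructions with an adjoined point then reduce the census to the condition $\mathbf C_3$, and exhaustive coefficient checks handle the smallest fields.

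The characteristic-two step, however, fails as written. The coefficient of $h_3(T)$ is not $r+a_1+\sum_{x\in Z}x$; that is the $m=5$ coefficient of $\mu_2$. With formal variable $Y$, $h_3(T)$ enters the series as $Y^6(1+Y)^{-r-3}\prod_{x\in Z}(1+xY)^{-1}$. Its $Y^7$-coefficient is $r+3+\sum_{x\in Z}x$, so the coefficient in $\mathcal H_{7,f}$ is $r+1+a_1+\sum_{x\in Z}x$. Because of this parity shift, the case split of Proposition~\ref{prop:fifth-even} does not transfer:
\begin{itemize}
\item For $k=2r$ with $a_1\in\mathbb F_2$, the $m=5$ rule always picks the wrong element of $\{0,1\}$.
\item For $k=2r-1$ with $r$ odd, pairs scaled by $\delta$ give $\delta^6\bigl((r+1)\delta+a_1\bigr)=\delta^6a_1$. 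This vanishes for every $\delta$ when $a_1=0$.
\item For $k=2r-1$ with $r$ even, unscaled pairs give $1+a_1$, which vanishes when $a_1=1$.
\end{itemize}
The paper avoids any case split with Lemma~\ref{lem:seventh-binary-anchor}. Take $\lfloor k/2\rfloor$ fibers with parameters in $V\setminus\{0,t_0\}$, adjoin a root $z$ of $z^2+z=t_0$, and also adjoin $0$ when $k$ is odd. Replacing $z$ by $z+1$ changes the coefficient $a_1+(r+3)+z$ by $1$, so one choice is nonzero. Your certificates must therefore be for $\mathbf C_3(V\setminus\{0,t_0\},\lfloor k/2\rfloor)$. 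Alternatively, for $k=2r-1$ with $r$ even, you can use the scaled pairs of Lemma~\ref{lem:seventh-undeleted}, which need $\mathbf C_3(V\setminus\{0\},r)$.

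In odd characteristic there is no parity obstruction, so your extra device is unnecessary. For even $k$ adjoin $z$ alone; for odd $k$ adjoin $z$ and $0$, with $z\in\{z_0,-z_0\}$ chosen so that $z+a_1\neq0$. The one requirement, which you should state, is that $z_0^2$ be removed from the parameter set; with $z=\pm1$ this means $Y\subseteq H\setminus\{1\}$. This is Lemma~\ref{lem:seventh-anchor}. Your orbit device would not work as stated anyway. For $S=\bigcup_{d\in D}\{x:x^3=d\}$ one gets $\mathcal H_{7,f}(S)=a_1h_2(D)+a_4h_1(D)+a_7$, which has no zero when $a_1=a_4=0\neq a_7$. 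In the paper, three-point orbits play a different role: they sit inside the parameter set ($H$ for $q=169$, $K^\times$ for $q=256$). There they build $\mathbf C_3$ certificates, keeping $h_1,h_2$ fixed while $h_3=\sum_{d\in J}d$ runs through the field.

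Finally, the existence of these certificates is the real content of the finite part, and you cannot assume it. The paper also uses the undeleted constructions of Lemma~\ref{lem:seventh-undeleted}, translating to $a_1\neq0$ when $p\nmid k+7$. It needs them because $\mathbf C_3$ on a set with one parameter removed can fail where it holds on the full set, and exhaustive fallbacks are feasible only for the smallest fields.
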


The remaining argument has two parts.  The general exclusion
results reduce the problem to a finite set of parameters $(q,k)$.
For these parameters we use paired-point families that realize
every value of the relevant remainder coefficient, except in eight
small cases where we check all normalized coefficient classes directly.
We describe the constructions of agreement sets and the finite
certificates used to verify the required conditions.

For $\Omega\subseteq\mathbb F_q$ and $0\leqslant r\leqslant\lvert \Omega\rvert$,
write $\mathbf C_3(\Omega,r)$ if
\begin{equation}\label{eq:c3-anchor}
 \forall A,B,C\in\mathbb F_q\quad\exists T\in\binom\Omega r:
 \quad h_3(T)+Ah_2(T)+Bh_1(T)=C.
\end{equation}
We shall verify this condition by considering either all admissible
subsets or a specified finite family of them.  In the latter case,
the verifier checks that the condition holds within that family.

\begin{lemma}[Adjoining a point in odd characteristic]
\label{lem:seventh-anchor}
Suppose $q$ is odd, $2\leqslant k\leqslant q-8$, and $y_0$ is a nonzero
square.  Set
\[
 H=(\mathbb F_q^\times)^2,\qquad
 \Omega:=H\setminus\{y_0\},\qquad r:=\lfloor k/2\rfloor.
\]
If $\mathbf C_3(\Omega,r)$ holds, no word of interpolation
degree $k+7$ is a deep hole of $\operatorname{RS}_q(k)$.
\end{lemma}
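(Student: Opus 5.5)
The plan is to build an explicit agreement set of size $k+1$ out of opposite pairs, one distinguished point, and possibly $0$, and then apply Proposition~\ref{prop:agreement}. After scaling and discarding terms below degree $k$, write $f=X^{k+7}+a_1X^{k+6}+\cdots+a_7X^k$. We need a $(k+1)$-set $S$ with $\mathcal H_{7,f}(S)=h_7(S)+a_1h_6(S)+\cdots+a_6h_1(S)+a_7=0$.

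Fix a square root $\zeta$ of $y_0$. Because $q$ is odd, $\zeta\ne-\zeta$, so at least one $z\in\{\zeta,-\zeta\}$ satisfies $z+a_1\ne0$; fix such a $z$. For an $r$-subset $T\subseteq\Omega$ put
\[
S=\{z\}\cup\bigcup_{y\in T}\{\pm\sqrt y\},
\]
and adjoin $0$ when $k$ is odd. Then $\lvert S\rvert=2r+1$ or $2r+2$, which is $k+1$ in either case. The points are distinct for three reasons: the pairs are disjoint since distinct squares have disjoint square-root pairs; $z\notin$ any pair since $z^2=y_0\notin\Omega$; and $0$ lies in no pair.

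Next compute the generating function. Adjoining $0$ changes nothing. The pairs contribute $\sum_i h_i(T)w^{2i}$, since $(1-\sqrt y\,w)(1+\sqrt y\,w)=1-yw^2$. Multiplying by $(1-zw)^{-1}$ gives
\[
h_j(S)=\sum_{0\leqslant i\leqslant j/2}h_i(T)z^{j-2i}.
\]
Substituting into $\mathcal H_{7,f}$, the term $h_3(T)$ appears only through $h_7$ and $h_6$, with total coefficient $z+a_1$. The term $h_2(T)$ appears with coefficient $\alpha_2=z^3+a_1z^2+a_2z+a_3$, and $h_1(T)$ with $\alpha_1=z^5+a_1z^4+\cdots+a_5$. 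The constant is $D(z)=z^7+a_1z^6+\cdots+a_7$. Hence
\[
\mathcal H_{7,f}(S)=(z+a_1)h_3(T)+\alpha_2h_2(T)+\alpha_1h_1(T)+D(z).
\]

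Dividing by $z+a_1\ne0$, the equation $\mathcal H_{7,f}(S)=0$ becomes $h_3(T)+Ah_2(T)+Bh_1(T)=C$, with
\[
A=\frac{\alpha_2}{z+a_1},\qquad B=\frac{\alpha_1}{z+a_1},\qquad C=-\frac{D(z)}{z+a_1}.
\]
The hypothesis $\mathbf C_3(\Omega,r)$ provides an $r$-subset $T\subseteq\Omega$ solving this. Proposition~\ref{prop:agreement} then gives agreement on $k+1$ points, so $u$ is not a deep hole.

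The only delicate point is the coefficient bookkeeping, specifically checking that $h_3(T)$ enters solely through $h_7$ and $h_6$. This holds because $h_3(T)$ has degree $6$ in $w$, so it enters $h_j(S)$ only for $j\geqslant6$. The step that needs odd characteristic is the choice of $z$ with $z+a_1\ne0$.
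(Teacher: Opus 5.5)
Your proposal is correct and follows essentially the same route as the paper's proof: opposite pairs $\{\pm\sqrt t: t\in T\}$ with $T\subseteq\Omega$, one adjoined root $z$ of $y_0$ chosen so that $z+a_1\ne0$, and $0$ added when $k$ is odd. The paper then expands the same generating series $(1-zZ)^{-1}\prod_{t\in T}(1-tZ^2)^{-1}$ to obtain the identical remainder formula, divides by $z+a_1$, and applies $\mathbf C_3(\Omega,r)$.
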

\begin{proof}
Choose $z_0\in\mathbb F_q^\times$ with $z_0^2=y_0$.
For $T\in\binom\Omega r$, form the opposite pairs
$\{\pm\sqrt t:t\in T\}$, adjoin a point
$z\in\{z_0,-z_0\}$, and add $0$ if $k$ is odd.
This gives $k+1$ distinct points, since $y_0\notin T$.
Writing
$f=X^{k+7}+a_1X^{k+6}+\cdots+a_7X^k$ modulo
polynomials of degree $<k$, the remainder coefficient is
\begin{align*}
 \mathcal H_{7,f}(S)={}&(z+a_1)h_3(T)\\
 &+(z^3+a_1z^2+a_2z+a_3)h_2(T)\\
 &+(z^5+a_1z^4+a_2z^3+a_3z^2+a_4z+a_5)h_1(T)\\
 &+z^7+a_1z^6+\cdots+a_6z+a_7.
\end{align*}
Indeed, the generating series of $S$ is
$(1-zZ)^{-1}\prod_{t\in T}(1-tZ^2)^{-1}$; including
$0$ does not alter it.  At least one choice of $z$ makes
$z+a_1\ne0$.  Fix that choice and apply
\eqref{eq:c3-anchor} after dividing by this coefficient.
The resulting remainder is zero, as required by
Proposition~\ref{prop:agreement}.
\end{proof}

\begin{lemma}[Adjoining a point in characteristic two]
\label{lem:seventh-binary-anchor}
Let $q=2^e$, put
\[
 V=\{w^2+w:w\in\mathbb F_q\},\quad
 t_0\in V\setminus\{0\},\quad
 \Omega:=V\setminus\{0,t_0\},\quad r:=\lfloor k/2\rfloor.
\]
For $2\leqslant k\leqslant q-8$, the condition
$\mathbf C_3(\Omega,r)$ implies that every received word of
interpolation degree $k+7$ is not a deep hole.
\end{lemma}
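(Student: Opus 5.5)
The plan mirrors Lemma~\ref{lem:seventh-anchor}, with Artin--Schreier fibers in place of opposite pairs. For $T\in\binom\Omega r$, form $S_T=\bigcup_{t\in T}P(t)$, where $P(t)=\{w,w+1\}$ and $w^2+w=t$. If $k=2r$, adjoin one singleton $z$; if $k=2r+1$, adjoin two singletons. The fiber pairs are pairwise disjoint because the parameters are distinct. A singleton $z$ lies in $P(t)$ exactly when $t=z^2+z$. Hence the singletons $0$ and $1$ are avoided by excluding the parameter $0$, and a singleton $\eta$ with $\eta^2+\eta=t_0$ is avoided by excluding $t_0$. The available singletons are therefore $\{0,1,\eta,\eta+1\}$, and $\Omega=V\setminus\{0,t_0\}$ is exactly what guarantees that the $k+1$ points are distinct.

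Next compute the remainder coefficient. The generating series of the fiber part is $\prod_{t\in T}(1+Z+tZ^2)^{-1}=\sum_j h_j(T)Z^{2j}(1+Z)^{-r-j}$, and each singleton contributes a factor $(1+zZ)^{-1}$. Expanding through $Z^7$ and substituting into $\mathcal H_{7,f}(S)$ from Proposition~\ref{prop:agreement} gives $h_3(T)$ exactly once, through the coefficient of $Z^6$ in $Z^6(1+Z)^{-r-3}$, combined with the degree-$\leqslant1$ parts of the singleton factor and of $f$. The result is
\[
 \mathcal H_{7,f}(S)=\kappa\,h_3(T)+\alpha\,h_2(T)+\beta\,h_1(T)+\gamma ,
\qquad
 \kappa=r+3+a_1+\sum_{z\in Z}z=r+1+a_1+\sum_{z\in Z}z ,
\]
where $\alpha,\beta,\gamma$ depend only on $r$, the $a_i$, and the chosen singletons, and not on $T$. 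This matches \eqref{eq:fifth-mu2-coef}, shifted from $\mu_2$ to $h_3$. Once $\kappa\neq0$, dividing by $\kappa$ and applying $\mathbf C_3(\Omega,r)$ yields a $T$ with $\mathcal H_{7,f}(S)=0$, and Proposition~\ref{prop:agreement} finishes the argument.

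The main obstacle is choosing singletons so that $\kappa\neq0$ for every value of $a_1$.
\begin{itemize}
\item If $k=2r$, one singleton $z\in\{0,1\}$ is enough: the two choices give values of $\kappa$ differing by $1$, so at least one is nonzero.
\item If $k=2r+1$, I use the pair $\{0,z\}$ with $z\in\{1,\eta\}$. This gives $\kappa=r+1+a_1+z$, and the values $z=1$ and $z=\eta$ are distinct, so at least one of them makes $\kappa\neq0$.
\end{itemize}
Both choices lie in the allowed set $\{0,1,\eta\}$, which is disjoint from the fibers indexed by $\Omega$.

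The remaining work is to verify the exact parity constants in the expansion, which only affect $\kappa$ by an additive constant in $\mathbb F_2$. The choice of $z$ absorbs any such constant, so the argument does not depend on getting these constants exactly right.
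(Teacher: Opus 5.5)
Your proposal is correct and follows the paper's argument: Artin--Schreier fibers indexed by $T\subseteq\Omega$, adjoined singletons, the $h_3(T)$-coefficient $a_1+(r+3)+\sum z$, and switching a singleton to make this coefficient nonzero before applying $\mathbf C_3(\Omega,r)$. The only difference is the choice of singletons: the paper always adjoins a root $z$ of $z^2+z=t_0$ (together with $0$ when $k$ is odd) and switches $z\leftrightarrow z+1$, whereas you use $0$ or $1$ for even $k$ (as in the paper's proof of Lemma~\ref{lem:seventh-undeleted}) and $\{0,1\}$ or $\{0,\eta\}$ for odd $k$, which is equally valid.
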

\begin{proof}
For each $t\in T\in\binom\Omega r$ take the two roots of
$w^2+w=t$.  Adjoin one root $z$ of $z^2+z=t_0$ and,
when $k$ is odd, the point $0$.  All $k+1$ points are
pairwise distinct; in particular, $t_0\ne0$ keeps the adjoined point
away from $0$ and $1$.
The generating series of the resulting set is
\[
 \frac1{1+zZ}(1+Z)^{-r}
 \sum_{j\geqslant0}h_j(T)
      \left(\frac{Z^2}{1+Z}\right)^j.
\]
Extracting the coefficients of degrees six and seven gives
\[
 \mathcal H_{7,f}(S)
 =\bigl(a_1+(r+3)+z\bigr)h_3(T)
   +Ah_2(T)+Bh_1(T)+C,
\]
where $r+3$ is read in $\mathbb F_2$ and $A,B,C\in\mathbb F_q$
are determined by $f$ and $z$.
Replacing $z$ by the other root $z+1$ changes the displayed
leading coefficient by $1$.  One choice is therefore nonzero,
and \eqref{eq:c3-anchor} supplies the required $T$.
\end{proof}

The exclusion of $y_0$, respectively $t_0$, is essential in these
lemmas.  The stated condition on the original parameter set need
not persist after a point is removed.  For several parameters,
however, the condition on the original set suffices by the following
elementary alternatives.

\begin{lemma}[Paired-point constructions on the original parameter sets]\label{lem:seventh-undeleted}
Let $f$ have degree $k+7$.
\begin{enumerate}[label=(\arabic*),ref=\arabic*,leftmargin=2.2em]
\item\label{it:undeleted-odd-cover} If $q$ is odd, $p=\operatorname{char}\mathbb F_q$ does not
 divide $k+7$, and
 $\mathbf C_3((\mathbb F_q^\times)^2,\lfloor(k+1)/2\rfloor)$
 holds, then $f$ is not a deep hole.
\item\label{it:undeleted-binary-cover} If $q=2^e$, put $V=\{w^2+w:w\in\mathbb F_q\}$.
 When $k=2r$, the condition
 $\mathbf C_3(V\setminus\{0\},r)$ suffices.
 When $k=2r-1$ with $r$ even, the condition
 $\mathbf C_3(V\setminus\{0\},r)$ also suffices.
\end{enumerate}
\end{lemma}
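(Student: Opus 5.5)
The plan is to reuse the paired-point mechanism of Lemmas~\ref{lem:seventh-anchor} and~\ref{lem:seventh-binary-anchor}. In each case we build a $(k+1)$-set $S$ from a parameter set $T$ of size $r$ so that, by Proposition~\ref{prop:agreement}, $\mathcal H_{7,f}(S)$ becomes an expression $\kappa\,h_3(T)+A'h_2(T)+B'h_1(T)+C'$. Here $\kappa,A',B',C'$ depend only on $f$, $r$ and the auxiliary choices, not on $T$. We then arrange $\kappa\ne0$ \emph{without} adjoining a point taken from the parameter set itself. Dividing by $\kappa$ and applying \eqref{eq:c3-anchor} to the original set ($(\mathbb F_q^\times)^2$ or $V\setminus\{0\}$) gives a $T$ with $\mathcal H_{7,f}(S)=0$. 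The step I expect to be the main obstacle is securing $\kappa\ne0$ in every case without deleting a point from the parameter set; the remaining parts are routine bookkeeping.

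For (\ref{it:undeleted-odd-cover}), put $r=\lfloor(k+1)/2\rfloor$. Take $S=\{\pm\sqrt t:t\in T\}$, adjoining $0$ when $k+1$ is odd. Then $\prod_{x\in S}(1-xZ)^{-1}=\prod_{t\in T}(1-tZ^2)^{-1}$ has only even powers, so $h_{2j}(S)=h_j(T)$ and $h_{2j+1}(S)=0$. This gives
\[
 \mathcal H_{7,f}(S)=a_1h_3(T)+a_3h_2(T)+a_5h_1(T)+a_7 .
\]
Here $\kappa=a_1$, which may vanish. To fix this, first apply a translation $X\mapsto X+c$ of the evaluation variable. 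It preserves the code and the deep-hole property, and it replaces $a_1$ by $a_1+(k+7)c$. Since $p\nmid k+7$, we can choose $c$ with the new $a_1\ne0$. Equivalently, we use the translated set $S+c$. The lower coefficients change, but they are arbitrary in \eqref{eq:c3-anchor}, so the hypothesis $\mathbf C_3((\mathbb F_q^\times)^2,r)$ finishes the argument.

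For (\ref{it:undeleted-binary-cover}), take the fibers $P(t)=\{w,w+1\}$ of $w^2+w=t$ for $t\in T\subseteq V\setminus\{0\}$, scaled by a common $\delta\ne0$. Add a singleton $z\in\{0,\delta\}$ when $k=2r$; note that $\{0,\delta\}=\delta P(0)$ is disjoint from the chosen fibers. The unscaled series is $(1+zZ)^{-1}(1+Z)^{-r}\sum_j h_j(T)(Z^2/(1+Z))^j$. Reading off degrees six and seven as in Lemma~\ref{lem:seventh-binary-anchor}, and tracking the scaling $h_i\mapsto\delta^ih_i$, the coefficient of $h_3(T)$ is
\[
 \kappa=\delta^6\bigl((r+3)\delta+a_1+z\bigr).
\]
When $k=2r$, take $\delta=1$. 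The two choices $z=0$ and $z=1$ change $\kappa$ by $1$, so one of them makes $\kappa\ne0$. When $k=2r-1$ with $r$ even, there is no singleton, and $\kappa=\delta^6(\delta+a_1)$; choosing $\delta\notin\{0,a_1\}$ makes $\kappa\ne0$. Scaling and the singleton preserve distinctness of the $k+1$ points. The remaining coefficients $A',B',C'$ are just some field elements, so $\mathbf C_3(V\setminus\{0\},r)$ supplies $T$. The case $k=2r-1$ with $r$ odd is excluded because there $\kappa=\delta^6a_1$ is forced, and it vanishes when $a_1=0$.
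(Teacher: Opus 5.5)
Your proposal is correct and follows essentially the same route as the paper. In odd characteristic you translate the evaluation variable to make $a_1\neq0$ (possible since $p\nmid k+7$) and use opposite pairs; in characteristic two you either toggle the singleton between $0$ and $1$ when $k=2r$, or use a common difference $\delta\notin\{0,a_1\}$ when $k=2r-1$ with $r$ even. Your unified coefficient $\kappa=\delta^6\bigl((r+3)\delta+a_1+z\bigr)$ is simply a compact packaging of the paper's two separate computations.
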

\begin{proof}
In odd characteristic the translation $X\mapsto X+b$ changes
$a_1$ to $a_1+(k+7)b$.  Choose $b$ so that the new coefficient
is nonzero.  Opposite pairs, with $0$ when necessary, give
\[
 \mathcal H_{7,f}(S)=a_1h_3(T)+a_3h_2(T)+a_5h_1(T)+a_7.
\]
Now apply the stated condition.  Translation preserves the
full-length evaluation set and the code distance.

In characteristic two, $r$ Artin--Schreier fibers and one of
$0,1$ produce $k+1$ points when $k=2r$; switching the extra
point changes the coefficient of $h_3(T)$ by $1$.
If $k=2r-1$ and $r$ is even, use $r$ pairs of a common
nonzero difference $\delta$.  The coefficient of the third complete
homogeneous symmetric function is
$\delta^6(a_1+\delta)$.  We may take $\delta\ne a_1$,
and the stated condition again makes the remainder zero.
\end{proof}

\begin{lemma}[Finite reduction for $m=7$]
\label{lem:seventh-census}
The exclusion results of Sections~\ref{sec:uniform-proof}
and~\ref{sec:permutation-bridge}, including
Proposition~\ref{prop:six-seven-large-odd}, and the
small-dimensional exclusion in characteristic two
Lemma~\ref{lem:sixth-binary-short}, leave exactly $375$
parameter pairs $(q,k)$ for $m=7$: $229$ in odd
characteristic and $146$ in characteristic two.
\end{lemma}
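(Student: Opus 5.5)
The plan is to run, for $m=7$, the same census procedure used for $m=6$ in Lemma~\ref{lem:sixth-finite-census}. We go through every prime power $q$ and every $k$ with $2\leqslant k\leqslant q-8$. For each pair $(q,k)$ we record the first exclusion criterion that applies, and we list the pairs for which no criterion applies.

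First, the infinite ranges. Theorem~\ref{thm:main}~(\ref{it:uniform-exclusion}) with $m=7$ excludes every $k$ once $\sqrt q\geqslant 96\log 16$, since $m$ is odd. So only finitely many $q$ remain, namely $q<(96\log16)^2\approx 7.1\cdot10^4$.

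For odd $q$, Proposition~\ref{prop:six-seven-large-odd} settles all $q\geqslant1331$. For smaller odd $q$ we use three criteria:
\begin{itemize}
\item[(a)] \eqref{eq:CLZ-range} for $k+1\leqslant p$;
\item[(b)] Kaipa's range \eqref{eq:Kaipa-range};
\item[(c)] the low-dimensional quotient argument, where Fan's degree-eight classification excludes $q>8(k-1)+1$ for odd $q>31$; for $q\leqslant31$ we use Theorem~\ref{thm:low-dimension-gcd} only where $\mathsf F_8(q)$ holds.
\end{itemize}
For $q=2^e$ the criteria are:
\begin{itemize}
\item[(a)] Proposition~\ref{thm:dimtwo-general}~(\ref{it:dimtwo-binary-odd-excess}) for $k=2$;
\item[(b)] Lemma~\ref{lem:sixth-binary-short} with $d=8$, which requires $(k-1)^2<q$ and $q>8(k-1)+1$;
\item[(c)] Kaipa's theorem for $k\geqslant q/2-1$.
\end{itemize}

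In the middle range left in both characteristics, we apply the Gao--Li criterion \eqref{eq:GL-direct-criterion} through the exact integer majorant of Lemma~\ref{lem:odd-short-majorant}, adapted to $m=7$. Set $a=\lceil6\sqrt q\rceil$ and $b=\lceil(q-a)/p\rceil$. The test is $q^8\overline A_n^2<\binom qn^2$ with $n=k+1$. We evaluate $\overline A_n$ by the recurrence in Lemma~\ref{lem:odd-short-majorant}, so the comparison uses exact integers only. For large $q$ in characteristic two we can instead use the analytic estimates from the proof of Proposition~\ref{prop:uniform-direct}, which avoids evaluating the recurrence. Corollary~\ref{cor:uniform-adaptive} gives cheaper closed-form thresholds where they apply.

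The statement is an exact count: $229$ odd pairs and $146$ binary pairs. So the proof is really a deterministic enumeration, and it should be backed by an ancillary program analogous to \nolinkurl{m67/verify_census.py} that prints the surviving list. The mathematical content is that each criterion is a valid sufficient condition. Every criterion used is an earlier result, and the census only composes them. One point needs checking: the low-dimensional exclusions must be applied in exactly the form proved, for instance Fan's degree-eight result only for odd $q>31$, so that no pair is discarded illegitimately. As in Lemma~\ref{lem:sixth-finite-census}, a pair that remains on the list is only unresolved by these tests; it need not be a deep hole.

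The main obstacle is bookkeeping, not theory. The list must be exactly reproducible. That requires fixing the order of the tests, using exact integer rather than floating-point comparisons for the majorant, and handling the boundaries of Kaipa's range and of the condition $q>8(k-1)+1$ consistently with strict versus non-strict inequalities. I would first write the enumerator, then check the totals $229+146=375$, and finally confirm with a coverage script that the later certificate families partition exactly this set.
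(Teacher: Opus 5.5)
Your proposal follows essentially the same route as the paper. Odd $q\geqslant1331$ are disposed of by Proposition~\ref{prop:six-seven-large-odd}, and $q=2^e\geqslant2^{17}$ by Theorem~\ref{thm:main}~(\ref{it:uniform-exclusion}) together with Proposition~\ref{thm:dimtwo-general}~(\ref{it:dimtwo-binary-odd-excess}); the finitely many remaining $(q,k)$ are then enumerated with the low- and high-dimensional criteria and the exact integer Gao--Li majorant ($a=\lceil6\sqrt q\,\rceil$, test $q^8\overline A_n^2<\binom qn^2$), and the totals are certified by the census program.

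You are right to keep the Zhuang--Cheng--Li range \eqref{eq:CLZ-range} in the odd-characteristic list, even though the statement cites only Sections~\ref{sec:uniform-proof} and~\ref{sec:permutation-bridge}. It is what removes the prime fields: for example, $(q,k)=(11,3)$ is caught by no other criterion in the census, so the stated count of $229$ presupposes it.
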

\begin{proof}
For odd $q\geqslant1331$, Proposition~\ref{prop:six-seven-large-odd}
covers every admissible dimension.  For $q=2^e\geqslant2^{17}$,
Theorem~\ref{thm:main}~(\ref{it:uniform-exclusion}) covers $k\geqslant3$ because
$\sqrt q\geqslant96\log16$; the remaining dimension $k=2$
is covered by Proposition~\ref{thm:dimtwo-general}~(\ref{it:dimtwo-binary-odd-excess}).
For the finitely many smaller prime-power fields, apply the
low- and high-dimensional results and the coefficientwise
integer form of the Gao--Li agreement estimate, as in
Lemma~\ref{lem:sixth-finite-census}.  The list of remaining pairs in
\nolinkurl{m67/data/residual_registry.json} and the exact-integer programs
\nolinkurl{m67/verify_census.py} and
\nolinkurl{m67/verify_coverage.py} certify the field list,
each retained $k$, and the stated totals.  An entry is a
parameter not excluded by these sufficient criteria; it is not
a claim that a deep hole exists.
\end{proof}

\begin{proposition}[Finite verification of the paired-point constructions]
\label{prop:seventh-inherited}
Of the $375$ pairs in Lemma~\ref{lem:seventh-census},
$242$ are covered by the paired-point arguments:
$196$ with no additional parameter omitted and $46$ more after
omitting one specified nonzero parameter.  Another $18$ pairs with
$q=169$, $60\leqslant k\leqslant77$, and the pair $(q,k)=(49,7)$
are covered by certificates on square subsets omitting one specified
nonzero square.
\end{proposition}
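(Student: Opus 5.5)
The plan is to reduce each of the counted pairs to a finite verification of the condition $\mathbf C_3$ in \eqref{eq:c3-anchor}, or of its variant on square subsets. Each verification is then fed into one of Lemmas~\ref{lem:seventh-undeleted}, \ref{lem:seventh-anchor} and~\ref{lem:seventh-binary-anchor}.

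First I would partition the $375$ residual pairs from Lemma~\ref{lem:seventh-census}, recorded in the registry file, by characteristic and by the parity data of $k$. For odd $q$ with $p\nmid k+7$, I would invoke Lemma~\ref{lem:seventh-undeleted}~(\ref{it:undeleted-odd-cover}). This needs a certificate for $\mathbf C_3\bigl((\mathbb F_q^\times)^2,\lfloor(k+1)/2\rfloor\bigr)$. For $q=2^e$, I would use Lemma~\ref{lem:seventh-undeleted}~(\ref{it:undeleted-binary-cover}) when $k=2r$, or when $k=2r-1$ with $r$ even. In these cases the certificate is for $\mathbf C_3(V\setminus\{0\},r)$. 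These are the $196$ pairs with no omitted parameter.

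The pairs left over are those with $p\mid k+7$ in odd characteristic, and those with $k=2r-1$, $r$ odd, in characteristic two. For them I would pass to the omitted-point lemmas. These are Lemma~\ref{lem:seventh-anchor}, using $\Omega=H\setminus\{y_0\}$, and Lemma~\ref{lem:seventh-binary-anchor}, using $\Omega=V\setminus\{0,t_0\}$. Each requires a certificate for $\mathbf C_3(\Omega,\lfloor k/2\rfloor)$, and these account for the $46$ further pairs. I would also use the same omitted-square construction for $q=169$, $60\leqslant k\leqslant77$, and for $(49,7)$, giving the stated additional $18+1$ pairs. Here the parity of $k$ or the characteristic forces the use of $\Omega=H\setminus\{y_0\}$ rather than the full square set.

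Each certificate would be a finite family $\mathcal T\subseteq\binom\Omega r$ together with the field model. The verifier computes the vectors $(h_1(T),h_2(T),h_3(T))$ for $T\in\mathcal T$. It then checks that, for every $(A,B)\in\mathbb F_q^2$, the map $T\mapsto h_3(T)+Ah_2(T)+Bh_1(T)$ is surjective onto $\mathbb F_q$. Once these $q^2$ checks pass, which is finite exact arithmetic, condition \eqref{eq:c3-anchor} holds. The appropriate lemma then gives an agreement set, and Proposition~\ref{prop:agreement} finishes each case. A coverage script would confirm the counts $196+46$ and $19$ against the registry.

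The main obstacle is constructing families small enough to store, while making every affine form in $(A,B)$ hit all $q$ values. I would search for them by random sampling followed by greedy completion, as a heuristic outside the proof. For the larger fields such as $q=169$, I would first count, for each $(A,B)$ and $C$, the subsets attaining the value $C$, and then add subsets covering the missing triples. Only the deterministic verification is part of the proof.
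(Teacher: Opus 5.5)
This is essentially the paper's proof: finite subset families are checked exhaustively against every $(A,B,C)\in\mathbb F_q^3$ and fed into Lemma~\ref{lem:seventh-undeleted} for the $196$ pairs, and into Lemmas~\ref{lem:seventh-anchor} and~\ref{lem:seventh-binary-anchor} for the remaining $46+18+1$, with a coverage script fixing the allocation; the paper's concrete certificates are all $\binom{23}{3}=1771$ three-subsets of $H\setminus\{1\}$ for $q=49$, and, for $q=169$, the square rescalings $sT$ with $s\in\{1,2,8\}$ and $1\notin sT$ of nine stored families $F_r$, $30\leqslant r\leqslant38$, in place of your search-based families. One correction: parity or characteristic does not force $\Omega=H\setminus\{y_0\}$ for $q=169$, $60\leqslant k\leqslant77$, because $13\mid k+7$ only at $k=71$ and parity plays no role in Lemma~\ref{lem:seventh-undeleted}~(\ref{it:undeleted-odd-cover}), so the split into the $196$, $46$ and $18$ groups is determined by which certificates are supplied, not by that lemma's arithmetic hypotheses, and your rule that every odd $q$ with $p\nmid k+7$ lands among the $196$ would misplace the other $17$ pairs.
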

\begin{proof}
The accompanying files contain $129$ finite subset families,
grouped by field and subset cardinality, each checked against every
$(A,B,C)\in\mathbb F_q^3$.
Lemma~\ref{lem:seventh-undeleted} gives the $196$ pairs.
In the remaining parameter ranges, the verifier checks
$46$ additional families after the specified parameter is omitted;
Lemmas~\ref{lem:seventh-anchor} and
\ref{lem:seventh-binary-anchor} apply.  The two groups have
no overlap.  The parameter lists for the original sets and for the sets with
one specified parameter omitted are supplied in
\nolinkurl{m67/data/m7_inherited.json} and
\nolinkurl{m67/data/m7_deletions.json}, respectively;
\nolinkurl{m67/verify_coverage.py} checks their exact allocation.

For $q=49$ take $y_0=1$, $r=3$ and all
$\binom{23}{3}=1771$ subsets of
$H\setminus\{1\}$.  The program \nolinkurl{m67/verify_q49_anchor.py}
verifies \eqref{eq:c3-anchor} for all $49^3$ triples $(A,B,C)$.
For $q=169$, let $F_r\subseteq\binom Hr$, $30\leqslant r\leqslant38$,
be the nine supplied subset families.  The fixed square scalings give
\[
 \mathscr F_r
 =\{sT:T\in F_r,\ s\in\{1,2,8\},\ 1\notin sT\}
 \subseteq\binom{H\setminus\{1\}}r.
\]
Recomputing the first three complete homogeneous symmetric functions on every
retained set verifies \eqref{eq:c3-anchor} for all nine $r$'s,
for all $169^3$ triples $(A,B,C)$ in each case.
The $18$ dimensions follow from
Lemma~\ref{lem:seventh-anchor}.
\end{proof}

The remaining $114$ pairs admit shorter certificates when some
parameters are grouped by common field structure.  Two elementary
identities for three-point orbits account for $81$ of them.

\begin{proposition}[Three-point multiplicative orbits in two extension fields]
\label{prop:seventh-orbits}
All the $39$ residual pairs with $q=169$ and $42$ of the $43$
residual pairs with $q=256$ satisfy the $m=7$ conclusion that
the received words are not deep holes.
\end{proposition}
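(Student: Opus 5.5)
The plan is to reduce every one of these residual pairs to the condition $\mathbf C_3(\Omega,r)$ of \eqref{eq:c3-anchor}, on the same parameter sets $\Omega$ used in Lemmas~\ref{lem:seventh-anchor}, \ref{lem:seventh-binary-anchor} and~\ref{lem:seventh-undeleted}. The new ingredient is the family of subsets: instead of stored arbitrary subsets, we build $T$ as a disjoint union of three-point multiplicative orbits $\omega\{1,\zeta,\zeta^2\}$, where $\zeta$ is a primitive cube root of unity. Such an orbit exists since $3\mid q-1$ for both $q=169$ and $q=256$.

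The two elementary identities come from the factorization
\[
\prod_{x\in\omega\langle\zeta\rangle}(1-xZ)^{-1}=(1-\omega^3Z^3)^{-1}.
\]
Consequently an orbit contributes nothing to $h_1$ and $h_2$, and contributes $\omega^3$ to $h_3$. A union $U$ of disjoint orbits with cube representatives $c_1,\dots,c_j$ therefore has
\[
h_1(U)=h_2(U)=0,\qquad h_3(U)=\textstyle\sum_i c_i.
\]
For a disjoint union $T=E\sqcup U$ with a fixed "core" $E$, one gets
\[
h_1(T)=h_1(E),\qquad h_2(T)=h_2(E),\qquad h_3(T)=h_3(E)+\textstyle\sum_i c_i .
\]
Hence
\[
h_3(T)+Ah_2(T)+Bh_1(T)=\bigl(h_3(E)+Ah_2(E)+Bh_1(E)\bigr)+\textstyle\sum_i c_i .
\]
This mirrors the affine-plane and eight-point-coset trick of Proposition~\ref{prop:binary-uniform}: the core fixes the shape of the set, and the orbits shift the value freely.

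It therefore suffices, for each residual $r$, to do the following.
\begin{enumerate}[label=(\roman*),leftmargin=2.2em]
\item Fix a core $E\subseteq\Omega$ of size $r-3j$.
\item Show that sums of $j$ distinct cubes, taken from orbits contained in $\Omega\setminus E$, cover all of $\mathbb F_q$.
\end{enumerate}
Then every target $C$ is attained, uniformly in $A$ and $B$.

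The steps, in order, are as follows.
\begin{enumerate}[label=(\arabic*),leftmargin=2.2em]
\item Check that $\Omega$ is closed under multiplication by $\zeta$, so that it is a union of orbits. For $q=169$, take $\Omega=H$ or $H\setminus\{1\}$. Since $\zeta\in H$ (because $3\mid(q-1)/2=84$), the set $H$ is orbit-closed, and removing $1$ removes the whole orbit of $1$. For $q=256$, take $\Omega=V\setminus\{0\}$ or $V\setminus\{0,t_0\}$. The trace hyperplane $V$ is \emph{not} multiplicatively closed, so here I would first pass to a scaled or transformed parametrization in which orbits lie inside the parameter set. If that fails, I would use a hybrid: orbits of the cube map placed inside $V$ wherever $\omega\langle\zeta\rangle\subseteq V$.
\item Count the orbits available, and show by a character-sum or Waring-type argument (or by direct enumeration) that $j$-fold distinct sums of the cube values $\omega^3$ cover $\mathbb F_q$. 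The cubes form a subgroup of index three in $\mathbb F_q^\times$, and sums of two or three distinct cubes already cover the field for these $q$.
\item Choose $j$ and a core size so that every residual $r$ is reached. Small $r$ may need $j=1$ or $2$, with the core supplying the rest.
\item Feed the result into the appropriate lemma among \ref{lem:seventh-anchor}, \ref{lem:seventh-binary-anchor} and~\ref{lem:seventh-undeleted}. Verify that the finite list of cube-sum coverings, over all $r$ in the residual ranges, is checked exactly by a program.
\end{enumerate}

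The main obstacle is characteristic two. There, the orbits need not lie in the parameter set $V\setminus\{0,t_0\}$, and the one residual pair at $q=256$ that the statement excludes is presumably where this fails; it would be left to a direct check. I expect it to be the extreme $r$, where too few orbits fit.
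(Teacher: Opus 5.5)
For $q=169$ your plan is essentially the paper's. The paper uses the $27$ orbits $O_d=\{x\in H:x^3=d\}$ with $d\in D=H^3\setminus\{1\}$ and a fixed core of at most two points from $\{\zeta,\zeta^2\}=\{3,9\}$. It then checks by exact enumeration that $j$-subset sums of $D$ cover $\mathbb F_{169}$ for $3\leqslant j\leqslant13$. Two small corrections apply here. Deleting $1$ does not delete its orbit, and the leftover points $3,9$ are exactly the core. Also, the relevant values are sixth powers, a subgroup of index six, not the index-three cube subgroup.

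The genuine gap is $q=256$, half the statement, which you leave open. Reparametrizing cannot help. In Lemmas~\ref{lem:seventh-binary-anchor} and~\ref{lem:seventh-undeleted} the parameters always range over $V$, since a common difference rescales the pairs, not the parameter set. Moreover $V$, of odd $\mathbb F_2$-dimension $7$, is not an $\mathbb F_4$-subspace. The missing idea is to make your hybrid explicit. Since $\operatorname{Tr}_{256/2}=\operatorname{Tr}_{4/2}\circ\operatorname{Tr}_{256/4}$ and the trace form of $\mathbb F_4$ is nondegenerate, $\omega\langle\zeta\rangle\subseteq V$ holds exactly when $\omega\in K=\ker\operatorname{Tr}_{256/4}$, an $\mathbb F_4$-subspace of $V$. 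So only $21$ orbits fit, and their cube set $D=\{a^3:a\in K^\times\}$ has $21$ elements and is not a subgroup.

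This breaks your counting heuristics in two ways. First, $j\leqslant2$ can never cover $\mathbb F_{256}$, because $\binom{21}{2}=210<256$. So using $j=1$ or $2$ for small $r$ is impossible, and this is precisely why $r=8$, i.e.\ $(256,17)$, is excluded, confirming your guess. Second, for $j=3$ the exact count has minimum $1$. A character-sum argument is therefore hopeless, and giving up orbits risks losing targets.

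Accordingly the paper takes $t_0\in V\setminus K$ and draws the entire core from $V\setminus K$. It writes $r=3j+4b+s$ with $j\in\{3,4\}$. The core uses $b$ fixed cosets $a+\mathbb F_4$, which have $h_1=h_2=0$ and $h_3=1$, plus at most two points of $(t_0+\mathbb F_4)\setminus\{t_0\}$. Finally it verifies by the subset-sum recurrence that $j$-subset sums of $D$ cover $\mathbb F_{256}$ for $j=3,4$.
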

\begin{proof}
First work in
$\mathbb F_{169}=\mathbb F_{13}[u]/(u^2+3u+1)$.
Let $H=(\mathbb F_{169}^\times)^2$ and $D=H^3\setminus\{1\}$.
Here $\lvert H\rvert=84$, $\lvert D\rvert=27$, and the kernel of the cube map
on $H$ is $\mu_3=\{1,3,9\}$.
Each $d\in H^3$ determines a disjoint three-point orbit under
multiplication by the cube roots of unity,
$O_d=\{x\in H:x^3=d\}$, with
\[
 \prod_{x\in O_d}(1-xZ)=1-dZ^3,
 \qquad h_1(O_d)=h_2(O_d)=0,\quad h_3(O_d)=d.
\]
The distinct-element subset-sum recurrence, applied to $D$,
verifies that for each $3\leqslant j\leqslant13$,
\begin{equation*}
 \left\{\sum_{d\in J}d:J\in\binom Dj\right\}
    =\mathbb F_{169}.
\end{equation*}
For $r=3j+s$, $s\in\{0,1,2\}$, choose a fixed
$s$-subset $F_s\subseteq\{3,9\}$ and set
$T=F_s\cup\bigcup_{d\in J}O_d$.
These are disjoint subsets of $H\setminus\{1\}$.
As $J$ varies, $h_1(T)$ and $h_2(T)$ stay fixed, while
$h_3(T)$ varies by $\sum_{d\in J}d$ and hence takes every
value in the field.  Thus \eqref{eq:c3-anchor} holds for
the required $r=11,\ldots,29,39$.
The paired-point construction treats precisely $k=22,\ldots,59,78$.

For $q=256$, write
$\mathbb F_{256}=\mathbb F_2[u]/(u^8+u^4+u^3+u+1)$.
Let $V=\ker\operatorname{Tr}_{256/2}$,
$K=\ker\operatorname{Tr}_{256/4}\subsetneq V$, and fix
$t_0=u\in V\setminus K$; explicitly,
$\operatorname{Tr}_{256/2}(u)=0$ and
$\operatorname{Tr}_{256/4}(u)=1$.
The $63$ elements of $K^\times$ decompose into $21$
three-point orbits under multiplication by
$\mathbb F_4^\times$.  Writing $D=\{a^3:a\in K^\times\}$,
we have $\lvert D\rvert=21$ and, by the same exact recurrence,
\[
 \left\{\sum_{d\in J}d:J\in\binom Dj\right\}
 =\mathbb F_{256}\qquad(j=3,4).
\]
Both finite surjectivity assertions follow from the same
integer recurrence.  For $D=\{d_1,\ldots,d_N\}$, let
$M_i(j,\beta)$ count $j$-subsets of
$\{d_1,\ldots,d_i\}$ with sum $\beta$.  Then
\[
 M_i(j,\beta)=M_{i-1}(j,\beta)
                 +M_{i-1}(j-1,\beta-d_i),
 \qquad M_0(0,0)=1,
\]
with every other initial entry zero.  Exact arithmetic in the
specified field models gives
$\min_{\beta}M_{27}(j,\beta)\geqslant11$ for
$3\leqslant j\leqslant13$ over $\mathbb F_{169}$, and
$\min_{\beta}M_{21}(3,\beta)=1$,
$\min_{\beta}M_{21}(4,\beta)=12$ over
$\mathbb F_{256}$.  The program \nolinkurl{m67/verify_orbits.py}
recomputes these integer counts and constructs the required subsets.
The complement $V\setminus K$ is a union of $16$
additive $\mathbb F_4$-cosets.  Reserve $t_0+\mathbb F_4$,
which contains the forbidden point $t_0$, and use up to two
fixed points from its other three elements.  The remaining
$15$ cosets are available as fixed four-point cosets $Q$.
They satisfy
\[
 \prod_{x\in Q}(1-xZ)
   =1+Z^3+(a^4+a)Z^4,
 \quad h_1(Q)=h_2(Q)=0,\quad h_3(Q)=1,
 \qquad Q=a+\mathbb F_4.
\]
The third complete homogeneous symmetric function of such a coset
is $1$, not $0$.  Since these cosets are fixed, they add only a
constant to the third complete homogeneous symmetric function and
do not affect surjectivity.
For each required $r\ne8$, the certificate chooses
$j\in\{3,4\}$, $0\leqslant b\leqslant15$, and $0\leqslant s\leqslant2$
with $r=3j+4b+s$.  It constructs the disjoint union of
$j$ three-point orbits, $b$ fixed four-point cosets,
and $s$ fixed singleton points.  The resulting family lies in
$V\setminus\{0,t_0\}$ and has fixed values of the first two
complete homogeneous symmetric functions, while the third takes
every field value.  This gives
\eqref{eq:c3-anchor} for
\[
 r=11,\ldots,17,\quad 43,\ldots,55,
 \quad 56,58,60,62,
\]
covering the $42$ indicated pairs.  The remaining pair
$(256,17)$, where $r=8$, is addressed below.
\end{proof}

\begin{proposition}[Finite verification of the required subset condition]
\label{prop:seventh-finite-C3}
The conclusion for $m=7$ that the received words are not deep holes
holds for the remaining $15$ pairs with $q=243$, nine pairs with
$q\in\{64,81,121,125,128\}$, and $(q,k)=(256,17)$.
\end{proposition}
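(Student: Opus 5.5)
The plan is to reduce each of the $25$ residual pairs to a single finite subset condition that a stored certificate verifies exhaustively. The reduction uses the criteria already established for $m=7$. For the odd fields $q=243,81,121,125$ the first choice is Lemma~\ref{lem:seventh-undeleted}~(\ref{it:undeleted-odd-cover}). It requires $p\nmid k+7$, which a translation then exploits to make $a_1\ne0$, and it reduces to $\mathbf C_3((\mathbb F_q^\times)^2,\lfloor(k+1)/2\rfloor)$. When $p\mid k+7$ we fall back on Lemma~\ref{lem:seventh-anchor}: fix $y_0=1$, take $\Omega=H\setminus\{1\}$ and $r=\lfloor k/2\rfloor$, and require $\mathbf C_3(\Omega,r)$. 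For $q=64,128$ and the pair $(256,17)$ we use Lemma~\ref{lem:seventh-undeleted}~(\ref{it:undeleted-binary-cover}) when $k$ is even, or when $k$ is odd with $r=(k+1)/2$ even. Otherwise we use Lemma~\ref{lem:seventh-binary-anchor} with $\Omega=V\setminus\{0,t_0\}$. For $(256,17)$ we have $r=9$ from the odd-$k$ branch, since $r=\lfloor k/2\rfloor=8$ is exactly the value the orbit construction of Proposition~\ref{prop:seventh-orbits} missed. So we will either use the $\lfloor k/2\rfloor=8$ anchored version with a freshly stored family, or check whether the undeleted option with $r=9$ odd applies. Since it does not apply, the anchored $r=8$ family on $V\setminus\{0,t_0\}$ is the target.

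For each pair, the certificate is a finite family $\mathcal F\subseteq\binom{\Omega}{r}$. The verifier computes $(h_1,h_2,h_3)(T)$ for every $T\in\mathcal F$ by exact arithmetic in the stated field model. It then checks that for every $(A,B)\in\mathbb F_q^2$ the map $T\mapsto h_3(T)+Ah_2(T)+Bh_1(T)$ is onto $\mathbb F_q$. Verifying this in $O(q^2\lvert\mathcal F\rvert)$ operations is feasible, because surjectivity in $C$ is read off one list per $(A,B)$. To keep families small, we reuse structure wherever it exists. In $\mathbb F_{243}$ and $\mathbb F_{81}$ there are multiplicative orbits of order dividing $(q-1)/2$; these include the cube-root orbits in $H$ for $q=121$, since $3\mid 60$. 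In characteristic two there are additive $\mathbb F_4$-cosets for $q=64$ and $\mathbb F_2$-subspaces. As in Proposition~\ref{prop:seventh-orbits}, such pieces fix $h_1,h_2$ while moving $h_3$, so several $(A,B)$ are handled at once. When no such structure is available, we fall back on a family found by search and then verify it independently.

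The main obstacle is the small or awkward cases. These are $(256,17)$ and the pairs over $\mathbb F_{64},\mathbb F_{128}$ where $r$ is small relative to $\lvert\Omega\rvert$. There the number of available subsets is not far above $q^3$, so $\mathbf C_3$ may simply fail. For those pairs, the plan is to reserve the option used for $m=6$ at $(27,3),(27,4)$: normalize $f$ by affine changes of the variable (scaling to $a_1\in\{0,1\}$ and translation) and exhaustively check each normalized coefficient vector against stored $(k+1)$-subsets with $\mathcal H_{7,f}(S)=0$. Combined with Proposition~\ref{prop:agreement}, each verified pair yields $\operatorname{dist}(u,\operatorname{RS}_q(k))\leqslant q-k-1$. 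Finally, a coverage check against the registry of Lemma~\ref{lem:seventh-census} confirms that exactly these $25$ pairs are certified.
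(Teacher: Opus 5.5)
Your proposal is essentially the paper's proof. Each of the $25$ pairs is reduced to a condition $\mathbf C_3(\Omega,r)$, which stored subset families certify by checking their $(h_1,h_2,h_3)$-values against every triple $(A,B,C)\in\mathbb F_q^3$; the paper in fact uses the anchored Lemmas~\ref{lem:seventh-anchor} and~\ref{lem:seventh-binary-anchor} with $r=\lfloor k/2\rfloor$ for all $25$ pairs (so $53\leqslant r\leqslant60$ on $(\mathbb F_{243}^\times)^2\setminus\{1\}$, and $r=8$ on $V\setminus\{0,t_0\}$ for $(256,17)$) rather than trying Lemma~\ref{lem:seventh-undeleted} first. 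Your normalized-coefficient fallback is never needed, and the worry behind it is misplaced: $\mathbf C_3$ only requires the points $(h_1,h_2,h_3)(T)$ to meet each of the $q^3$ planes $z+Ay+Bx=C$, with every point lying on $q^2$ of them, so families of a few thousand subsets (of order $q\log q$) suffice, which is what the certificates use.
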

\begin{proof}
For $q=243$, the certificate provides eight families of
$5500$ stored $r$-subsets, one for each $53\leqslant r\leqslant60$,
inside the nonzero squares with $1$ removed.
Each family satisfies \eqref{eq:c3-anchor} for all $243^3$
triples.  These families treat $106\leqslant k\leqslant120$.
The nine further groups, specified by field and subset cardinality,
are listed below.
The last uses $6000$ stored subsets; the others use $2500$
each.
\begin{center}
\begin{tabular}{c c l}
$\,q\,$ & $r$ & required dimensions $k$ \\ \hline
$64$  & $4$   & $9$ \\
$81$  & $5,6$ & $11,12$ \\
$121$ & $8,9$ & $16,17,18$ \\
$125$ & $8,9$ & $17,18$ \\
$128$ & $6$   & $13$ \\
$256$ & $8$   & $17$
\end{tabular}
\end{center}
For each group the supplied program checks that
all points lie in the required parameter set, are distinct, and
avoid the specified excluded parameter,
recomputes $h_1,h_2,h_3$ and checks every $q^3$ triple in
\eqref{eq:c3-anchor}.  This establishes precisely the stated
$25$ pairs, without sampling coefficient triples.
\end{proof}

\begin{proposition}[Exhaustive coefficient verification over small fields]
\label{prop:seventh-small-fullcoef}
The conclusion for $m=7$ that the received words are not deep holes
holds for $q=16$, $3\leqslant k\leqslant6$, for $q=27$, $3\leqslant k\leqslant5$,
and for $(q,k)=(32,5)$.
\end{proposition}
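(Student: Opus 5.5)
The plan is a finite, computer-assisted verification covering all eight pairs. We begin by reducing the coefficient vectors using transformations that preserve the code. By Proposition~\ref{prop:agreement}, a word of interpolation degree $k+7$ fails to be a deep hole exactly when the monic normalized polynomial
\[
 f=X^{k+7}+a_1X^{k+6}+\cdots+a_7X^k
\]
has a $(k+1)$-subset $S$ with $\mathcal H_{7,f}(S)=0$. Scaling the variable, $X\mapsto\lambda X$, multiplies $a_i$ by $\lambda^i$ after renormalization. Translation, $X\mapsto X+b$, acts unipotently on $(a_1,\dots,a_7)$, as in the proof of Proposition~\ref{prop:q9}. Over $\mathbb F_{27}$ the characteristic does not divide $k+7$ for $k=3,4$ ($k+7=10,11$), so translation kills $a_1$. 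For $k=5$ we have $k+7=12\equiv0$, and we then use $a_2$ instead. Over $\mathbb F_{16}$ and $\mathbb F_{32}$ the coefficient of $b$ in the new $a_1$ is $k+7 \bmod 2$. When $k$ is even this is $1$, and we set $a_1=0$. When $k$ is odd we use scaling to reduce $a_1$ to $\{0,1\}$, as in the $m=5$ certificates. We then apply scaling to the first surviving nonzero coefficient, which brings it to a representative of $\mathbb F_q^\times/(\mathbb F_q^\times)^i$. This leaves at most a few times $q^5$ or $q^6$ normalized classes per pair, a feasible count.

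The certificate for each pair is a stored list of $(k+1)$-subsets $S$, each with its precomputed values $h_1(S),\dots,h_7(S)$. Each $S$ defines an affine hyperplane
\[
 \{(a_1,\dots,a_7):\mathcal H_{7,f}(S)=0\}
\]
in coefficient space, and we need every normalized class to lie on one of these hyperplanes. Since $a_7$ enters with coefficient $1$, it is better to verify the condition ``for every $(a_1,\dots,a_6)$ in the normalized set, the values
\[
 -\bigl(h_7(S)+a_1h_6(S)+\cdots+a_6h_1(S)\bigr)
\]
over the stored $S$ cover all of $\mathbb F_q$.'' This is exactly the first form of certificate in Remark~\ref{rem:finite-core-interface}, and it matches the treatment of $(27,3),(27,4)$ for $m=6$. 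The deterministic verifier recomputes $h_j$ by the recurrence from Lemma~\ref{lem:remainder}, checks distinctness of points, and loops over all reduced $(a_1,\dots,a_6)$ while marking the attained values of $a_7$. This is feasible: for $q=32$, $k=5$ it requires about $32^5\cdot|{\rm list}|$ field operations, comfortably finite.

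The list for each pair would be found by a search. We would first try all $(k+1)$-subsets when $\binom q{k+1}$ is small; for example $\binom{16}{4}=1820$ or $\binom{27}{4}=17550$ are directly usable. For the larger cases, such as $\binom{32}{6}$ or $\binom{16}{7}=11440$, we would use a greedy cover that keeps subsets with the most new hits. The verification itself is independent of the search.

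The main obstacle is the size of the check for $(32,5)$ and for the largest $\mathbb F_{27}$ cases. Here the paired-point families fail, because the $\mathbf C_3$ condition cannot hold for such small $r$ (too few subsets relative to $q^3$). Exhausting roughly $q^6$ coefficient vectors against thousands of subsets therefore needs the normalization above. If the search for a covering list stalls, the first fallback is to use every $(k+1)$-subset up to affine equivalence, combined with the full affine group action on coefficients. Failing that, we would split off the branch $a_1=0$ and apply the opposite-pair or Artin--Schreier reductions of Lemmas~\ref{lem:seventh-anchor}--\ref{lem:seventh-undeleted} to lower the dimension of the search.
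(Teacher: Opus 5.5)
Your plan is essentially the paper's proof: reduce by the code-preserving substitutions $g(X)=c^{-d}f(cX+b)$ with $d=k+7$. Then check, using a stored family of $(k+1)$-sets, that for each normalized $(a_1,\dots,a_6)$ the values $h_7(S)+a_1h_6(S)+\cdots+a_6h_1(S)$ exhaust $\mathbb F_q$, so every $a_7$ is covered. The paper normalizes exactly the orbit of $(a_1,a_2)$ (for $(q,k)=(27,5)$, translation changes $a_2$ only by $2a_1b$, so your ``use $a_2$ instead'' applies only when $a_1\ne0$); this gives $3,2,3,2$; $3,3,4$; and $3$ representatives, and $9{,}115{,}498$ vectors $(a_3,\dots,a_6)$ checked against $1200$ subsets per pair, a cheaper but logically equivalent version of your reduction.
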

\begin{proof}
Write $d=k+7$ and
$f=X^d+a_1X^{d-1}+\cdots+a_7X^{d-7}$.
The full-length code distance is invariant under
$g(X)=c^{-d}f(cX+b)$, $c\ne0$.
The first two coefficients transform by
\begin{align*}
 a_1'&=c^{-1}(a_1+db),\\
 a_2'&=c^{-2}\bigl(a_2+(d-1)a_1b+\tbinom d2b^2\bigr).
\end{align*}
Thus it suffices to check one representative from each orbit
of $(a_1,a_2)$ under affine variable changes.  The other
coefficients remain arbitrary, and the coefficient of $a_7$
in its transformation is $c^{-7}\ne0$.
For each of the eight pairs $(q,k)$, the ancillary files supply
a family $\mathcal F_{q,k}$ of $1200$ $(k+1)$-element sets.
The verifier checks, for every orbit representative and all
$(a_3,a_4,a_5,a_6)\in\mathbb F_q^4$, that
\begin{equation*}
 \left\{h_7(S)+a_1h_6(S)+\cdots+a_6h_1(S):
          S\in\mathcal F_{q,k}\right\}=\mathbb F_q.
\end{equation*}
Consequently every $a_7$ gives a vanishing remainder.
The numbers of orbit representatives are $3,2,3,2$ for
$k=3,4,5,6$ over $\mathbb F_{16}$; $3,3,4$ for
$k=3,4,5$ over $\mathbb F_{27}$; and $3$ for
$(q,k)=(32,5)$.  In total, $9{,}115{,}498$ normalized
choices of the four intermediate coefficients are checked,
and all the resulting $a_7$-values are covered.  The supplied
C++17 verifier performs this exhaustive computation with exact
finite-field arithmetic and checks the point-set cardinalities.
\end{proof}

\begin{proof}[Proof of Theorem~\ref{thm:seventh}]
By Lemma~\ref{lem:seventh-census}, only $375$ pairs need
separate treatment.  Proposition~\ref{prop:seventh-inherited}
covers $242+18+1=261$ of them.  The remaining
$114=39+42+25+8$ pairs are covered by
Propositions~\ref{prop:seventh-orbits},
\ref{prop:seventh-finite-C3}, and
\ref{prop:seventh-small-fullcoef}.
These five collections of parameter pairs are disjoint,
and their union is exactly the set of $375$ pairs in
Lemma~\ref{lem:seventh-census};
\nolinkurl{m67/verify_coverage.py} checks the
identity pair by pair.  Every stored subset consists of distinct
points in the required parameter set; the finite verifiers check
all stated affine targets or normalized coefficient classes.  The
lemmas constructing agreement sets and the remainder criterion therefore apply
to every received polynomial of degree $k+7$.
\end{proof}

\begin{proof}[Completion of the case $m=7$ in Theorem~\ref{thm:main}~(\ref{it:small-excess-classification})]
Apply Theorem~\ref{thm:seventh}.
\end{proof}

\begin{proof}[Proof of Theorem~\ref{thm:main}~(\ref{it:small-excess-classification})]
The assertion for $m=3$ follows from Theorems~\ref{thm:odd}
and~\ref{thm:even}.  The assertions for $m=4$ follow from
Theorems~\ref{thm:four-odd} and~\ref{thm:four-even}; the assertion
for $m=5$ follows from Theorem~\ref{thm:fifth}; and those for
$m=6,7$ follow from
Theorems~\ref{thm:sixth} and~\ref{thm:seventh}.
These results give exactly the two exceptional families in
\eqref{eq:segrefamily} and \eqref{eq:frobfamily} when
$m\in\{3,4,5,6,7\}$, with no restriction on $q$ beyond
$k+m\leqslant q-1$.
\end{proof}

\section{Consequences, limitations, and reproducibility}\label{sec:conclusion}

Theorem~\ref{thm:main} gives an effective general nonexistence threshold,
a complete classification above a quartic field-size threshold,
and an unconditional classification for $m=3,\ldots,7$.
For odd $m$ in this range there are no deep holes.  For
even $m$ in this range, the only exceptions are the characteristic-two
polynomials in dimension two listed in Theorem~\ref{thm:main}.  These statements concern prescribed interpolation degrees.

The Gao--Li estimate provides agreement sets in the middle dimensions;
value-set and secant arguments treat the small dimensions.  For a fixed small $m$, the remainder identities reduce the finite cases
to equations in the symmetric functions of subsets.  In the constructions for $m=7$ that adjoin a point, that point
is chosen so that the coefficient of $h_3$ is nonzero.  The finite
checks for these constructions use parameter sets that exclude the
parameter paired with the adjoined point.

The construction with pairs requires $h_4$ when $m=8$, so the current
certificates involving only $h_1,h_2,h_3$ do not by themselves settle
this case.  The present arguments do not classify arbitrary $m$.  For $q>19683$ and
$2\leqslant k\leqslant q-2$, Chen and Huang~\cite[Theorem~1.1]{ChenHuang2026}
classify all deep holes, with a parametrization by normalized
$o$-polynomials when $k=2$ in characteristic two.
Our classifications for $m=3,\ldots,7$ also apply to fields below that
threshold.  None of the present arguments asserts an analogue for
arbitrary punctured or generalized evaluation sets.

\subsection*{Computational reproducibility}
The finite parts of Theorem~\ref{thm:main}~(\ref{it:small-excess-classification}) are
computer-assisted proofs.  The cases $m=4$ and $m=5$ use the
certificates for normalized coefficient vectors and prescribed
symmetric-function values described above,
covering $95$ and $143$ residual parameter pairs, respectively.
For $m=6$, $265$ residual pairs are partitioned
as $147+2+103+3+10$.  The first four groups are covered by
exact finite-field certificates, and the last ten by the
dimension-two classification, including its finite check over
$\mathbb F_{16}$.  For $m=7$, the remaining
$375$ pairs are partitioned without overlap as
\[
 375=242+18+1+39+42+25+8.
\]
The first three groups use certificates for the equations involving
$h_1,h_2,h_3$, including certificates on subsets that omit one specified parameter at
$q=169$ and the direct calculation for $q=49$.  The next two use
three-point multiplicative orbits; the sixth uses $17$ further subset
families satisfying these equations, and the last uses eight
exhaustive enumerations of normalized coefficient vectors.

The ancillary directory \nolinkurl{anc/} contains the explicit
field models, point-set arrays, SHA256 manifests, and all
verification source code.  From that directory, the command
\texttt{python3 verify.py} runs the full finite verification by
default.  It requires Python~3.10 or later, NumPy, and a C++17 compiler.
The optional \texttt{--integrity-only} mode checks file identities
and does not check the mathematics.  No network access or
certificate-search program is needed.

The program \nolinkurl{core/verify_small_cases.py} checks the five
numerical rows in Lemma~\ref{lem:odd-middle-count} using exact
rational arithmetic, the two integer minima for $m=3$,
the coefficient table in the quintic Hermite obstruction, and
all $16^3$ degree-eight normal forms over $\mathbb F_{16}$.
The program \nolinkurl{core/verify_reductions.py} reproduces the
finite parameter reductions for $m=3,4,5$.
The component for $m=4,5$ checks $52$ subset families against
every required normalized coefficient vector.  Six further files
provide subsets with a fixed sum and every required value of the
second complete homogeneous symmetric function.  It recomputes
all required symmetric-function values and identities for adjoining
points, including the parity-dependent construction for $m=5$: when $k$
is odd, $0$ is accompanied by a distinct $z\in\{1,-1\}$ with
$z+a_1\ne0$.

The component for $m=6,7$ recomputes the two lists of remaining
parameter pairs and verifies that the parameter sets treated by
the certificates have exactly these unions.  Its $129$ shared subset families are stored and verified once
for every triple $(A,B,C)$ in \eqref{eq:c3-anchor}.  The same
component verifies the families with one excluded parameter, the
subset-sum assertions for three-point multiplicative orbits over
$\mathbb F_{169}$ and $\mathbb F_{256}$, the $17$ further subset
families, and the required enumerations of normalized coefficient
vectors.  Every required
triple $(A,B,C)$ or normalized coefficient vector is checked
exhaustively using the stored subsets, independently of
any randomness used to find them.  The analytic reductions and
normalization arguments are proved in the paper.  These
computations are reproducible computer-assisted certificates;
they are not a formal proof-assistant derivation.

\bigskip
\noindent\textbf{AI disclosure.} OpenAI's ChatGPT (GPT-5.6 Sol and GPT-6 Astra)
assisted with exploratory proof development, computational checks,
literature searches, and English-language polishing. The author takes full
responsibility for the correctness and content of this paper.

\providecommand{\bysame}{\leavevmode\hbox to3em{\hrulefill}\thinspace}
\providecommand{\MR}{\relax\ifhmode\unskip\space\fi MR }
\providecommand{\MRhref}[2]{%
  \href{http://www.ams.org/mathscinet-getitem?mr=#1}{#2}
}
\providecommand{\href}[2]{#2}

\end{document}